\algrenewcommand\algorithmicrequire{\textbf{Input:}}
\algrenewcommand\algorithmicensure{\textbf{Output:}}
\let\csname equation*\endcsname\relax
\let\csname endequation*\endcsname\relax
\DeclareMathOperator{\Det}{det}
\DeclareMathOperator{\spn}{span}
\newcommand*{\D}{\mathcal{D}}
\newcommand*{\e}{\mathrm{e}}
\newtheorem{thm}{Theorem}
\newtheorem{lem}[thm]{Lemma}
\newtheorem{defn}[thm]{Definition}
\newcommand{\BigO}[1]{\ensuremath{\operatorname{O}\left(#1\right)}}
\newcommand\defeq{\mathrel{\overset{\makebox[0pt]{\mbox{\normalfont\tiny\sffamily def}}}{=}}}
\newglossaryentry{USTC}{name = USTC,description = {University of Science and Technology of China}}
\newglossaryentry{HWS}{name = HWS,description = {Highest weight state}}
\newglossaryentry{loqc}{name = LOQC,description = {Linear optical quantum computation}}
\newglossaryentry{irrep}{name = irrep,description = {Irreducible representation}}
\newglossaryentry{qip}{name = QIP,description = {Quantum information processing}}
\newglossaryentry{sun}{name = \ensuremath{\mathfrak{su}(n)},description = {Generating algebra of special unitary group of degree $n$}}
\newglossaryentry{Sn}{name = \ensuremath{\mathrm{S}_{n}},description = {The symmetric group, whose elements are all the permutation operations on $n$ distinct objects, and whose group operation is the composition of permutation operations}}
\newglossaryentry{un}{name = \ensuremath{\mathfrak{u}(n)},description = {Generating algebra of unitary group of degree $n$}}
\newglossaryentry{SUN}{name = \ensuremath{\mathrm{SU}(n)},description = {Special unitary group of degree $n$}}
\newglossaryentry{UN}{name = \ensuremath{\mathrm{U}(n)},description = {Unitary group of degree $n$}}
\newglossaryentry{glnc}{name = \ensuremath{\mathrm{GL}(n,\mathds{C})},description = {General linear group of degree $n$ over the field $\mathds{C}$ of complex numbers}}
\newglossaryentry{gl2c}{name = \ensuremath{\mathrm{GL}(2,\mathds{C})},description = {General linear group of degree $2$ over the field $\mathds{C}$ of complex numbers}}
\newglossaryentry{dof}{name = DOF,description = {Degree of freedom}}
\newglossaryentry{SharpPC}{name = \ensuremath{\#\mathrm{P}}-complete,description = {$\#\mathrm{P}$ is the set of the counting problems associated with the decision problems in the set $\mathrm{NP}$ of nondeterministic polynomial time problems. A problem is $\#\mathrm{P}$-complete if and only if it is in $\#\mathrm{P}$, and every problem in $\#\mathrm{P}$ can be reduced to it by a polynomial-time counting reduction. In other words, $\#\mathrm{P}$-complete problems are the hardest problems in $\#\mathrm{P}$}}
\newglossaryentry{LI}{name = LI,description = {Linearly independent}}
\newglossaryentry{LD}{name = LD,description = {Linearly dependent}}
\newglossaryentry{csd}{name = CSD,description = {Cosine-sine decomposition}}
\newglossaryentry{DFun}{name = \ensuremath{\mathcal{D}}~functions,description = {Matrix elements of irreducible representations of group elements}}
\newglossaryentry{csm}{name = CS matrix, plural = CS, description = {Cosine-sine matrix. Defined in Equation~(\ref{Eq:CSMatrix})}}
\newglossaryentry{pdf}{name = pdf, description = {Probability density function}}
\newglossaryentry{B2}{name =\ensuremath{\mathcal{B}_{2}},description = {$2\times 2$ matrix representing a balanced beam splitter. Defined in Equation~\eqref{Eq:BVartheta}}}
\newglossaryentry{LieG}{name = Lie group, plural = Lie group, description = {A group that is also a finite-dimensional smooth manifold and in which the group operations of multiplication and inversion are smooth maps}}
\newglossaryentry{LieA}{name = Lie algebra, plural = Lie algebras, description = {A vector space over some field together with a binary operation called the Lie bracket, which satisfies the following bilinarity, alternatively, anticommutativity and the Jacobi identity}}
\newglossaryentry{Fock}{name = Fock state, description = {Fock states have well-defined particle numbers. Informally, the eigenstates of the number operator $\sum_{i}a_{i}^{\dagger}a_{i}$}. Linear optical implementations of QIP tasks such as BosonSampling and LOQC employ Fock state inputs and measurement in the Fock-state basis}
\newglossaryentry{FND}{name = Frobenius distance, description = {The Frobenius norm of a matrix $A_{m \times m}$ is $\|A\|_F=\sqrt{\sum_{i,j=1}^m |a_{ij}|^2}$.
The Frobenius distance between matrices $U$ and~$V$ is defined as
$\mathrm{dist}(U,V) \defeq \|U-V\|_F=\sqrt{\sum_{i,j=1}^m |U_{ij}-V_{ij}|^2}$,
and is a symmetric, positive-definite and subadditive distance function on the set of matrices}}
\newglossaryentry{QRD}{name = QR-decomposition, description = {The QR decomposition of a matrix $M$ is its decomposition into a product $M = QR$ of the orthogonal matrix $Q$ and the upper triangular matrix $R$.}}
\begin{document}
\pagenumbering{gobble}% Remove page numbers (and reset to 1)
\clearpage
\thispagestyle{empty}
\let\cleardoublepage\clearpage

\frontmatter
\begin{center}\renewcommand{\baselinestretch}{1.9} 

UNIVERSITY OF CALGARY \\
\vfill
   {Multi-Photon Multi-Channel Interferometry\\ for Quantum Information Processing \bigskip \par}        % Set title in size.
   \vfill                  % Vertical space after title.
by \\
\vfill
   {Ish Dhand \par }              % Set author 
\vfill
{A THESIS \par                % Allow the word dissertation to be used here
SUBMITTED TO THE FACULTY OF GRADUATE STUDIES \par
IN PARTIAL FULFILLMENT OF THE REQUIREMENTS FOR THE \par
DEGREE OF DOCTOR OF PHILOSOPHY}
\vfill
{GRADUATE PROGRAM IN PHYSICS AND ASTRONOMY \par}
\vfill
{CALGARY, ALBERTA \par}
{{December}, {2015} \par}
\vfill
\copyright {Ish Dhand}~~{2015} \par
\end{center} 
\renewcommand{\baselinestretch}{1} 

\chapter{Abstract}
This thesis reports advances in the theory of design, characterization and simulation of multi-photon multi-channel interferometers.
I advance the design of interferometers through an algorithm to realize an arbitrary discrete unitary transformation on the combined spatial and internal degrees of freedom of light.
This procedure effects an arbitrary $n_{s}n_{p}\times n_{s}n_{p}$~unitary matrix on the state of light in $n_{s}$~spatial and $n_{p}$~internal modes.
The number of beam splitters required to realize a unitary transformation is reduced as compared to existing realization by a factor $n_{p}^2/2$.
I thus enable the optical implementation of higher dimensional unitary transformations.

I devise an accurate and precise procedure for characterizing any multi-port linear optical interferometer using one- and two-photon interference.
Accuracy is achieved by estimating and correcting systematic errors that arise due to spatiotemporal and polarization mode mismatch.
Enhanced accuracy and precision are attained by fitting experimental coincidence data to a curve simulated using measured source spectra.
The efficacy of our characterization procedure is verified by numerical simulations.

I develop group-theoretic methods for the analysis and simulation of linear interferometers.
I devise a graph-theoretic algorithm to construct the boson realizations of the canonical SU$(n)$ basis states, which reduce the canonical subgroup chain, for arbitrary $n$.
The boson realizations are employed to construct $\D$-functions, which are the matrix elements of arbitrary irreducible representations, of SU$(n)$ in the canonical basis.
I show that immanants of principal submatrices of a unitary matrix $T$ are a sum $\sum_{t} \mathcal{D}^{(\lambda)}_{tt}(\Omega)$ of the diagonal $\mathcal{D}$-functions of group element $\Omega$,
with $t$ determined {by} the choice of submatrix, and the irrep $(\lambda)$ determined by the immanant under consideration.
The algorithm for $\mathrm{SU}(n)$ $\mathcal{D}$-function computation and the results connecting these functions with immanants open the possibility of group-theoretic analysis and simulation of linear optics.

\chapter{Acknowledgements}
I am indebted to my adviser Barry C.~Sanders for his expert scientific guidance, patient writing advice and generous support.
I thank my other collaborators and mentors Hubert de Guise, Sandeep K.~Goyal and He Lu whom I have learned much from and whose company I have enjoyed immensely.
I am grateful to
Dominic W.~Berry,
David Feder,
Gilad Gour,
David W.~Hobill,
Alexander I.~Lvovsky,
Christoph Simon and
Urbasi Sinha
for wise professional counsel.

My time at Calgary was pleasant thanks to my friends and my colleagues
Gabriel Aguilar,
Jobin George,
Joydip Ghosh,
Mark Girard,
Hannah Gordon,
Hon-Wai Lau,
Pascal Lefebvre,
Kady Lyons,
Jonathan Johannes,
Farokh Mivehvar,
Varun Narasimhachar,
Christopher O'Brien,
Pantita Palittapongarnpim,
Marcel.li Grimau Puigibert,
Rahul Raut,
Ambrish Raghoonundun,
Nafiseh Sang-Nourpour,
Arashdeep Singh,
Priyaa Varshinee Srinivasan,
Raju Valivarthi,
Lucile Veissier,
Navid Yousefabadi and
Ehsan Zahedinejad.
Thanks to
Jan Blume,
Luc Couturier,
Yaxiong Liu and
Ingo Nosske
for great company during my time in Shanghai.
I thank
Tracy Korsgaard,
Nancy Jing Lu,
Lucia Wang and
Gerri Zannet
for their patient support and kind words.
Above all, I thank my family for their unwavering love and support.

\newpage
\tableofcontents

\newpage
\listoffigures
\newpage
\listoftables

\printglossary[title={List of symbols, abbreviations and nomenclature},toctitle={List of symbols, abbreviations and nomenclature}]

\chapter{Thesis content previously published}

The results reported in this thesis are published in peer-reviewed journals or are under peer-review.
The content of these articles are used in my thesis either verbatim or with some modifications as required.
My publications are:
\begin{enumerate}
\item[\cite{Dhand2015b}]
Ish~Dhand and Sandeep~K.~Goyal.
Realization of arbitrary discrete unitary transformations using spatial and internal modes of light.
{\em Physical Review A}, 92:043813, 2015. 
(\href{http://arxiv.org/abs/1508.06259}{arXiv:1508.06259}).

\item[\cite{Dhand2015a}]
Ish~Dhand, Abdullah~Khalid, He~Lu, and Barry~C.~Sanders.
{A}ccurate and precise characterization of linear optical interferometers, 2015,
{\em Journal of Optics}, 18(3):035204, 2016. 
(\href{http://arxiv.org/abs/1508.00283}{arXiv:1508.00283}).

\item[\cite{Dhand2015d}]
Ish~Dhand, Barry~C.~Sanders, and Hubert~de~Guise.
{A}lgorithms for {SU}$(n)$ boson realizations and $\mathcal{D}$-functions, 
{\em Journal of Mathematical Physics}, 56(11):111705, 2015.
(\href{http://arxiv.org/abs/1507.06274}{arXiv:1507.06274}).

\item[\cite{deGuise2015}]
Hubert~de~Guise, Dylan~Spivak, Justin~Kulp, and Ish~Dhand.
 $\mathcal{D}$-functions and immanants of unitary matrices and submatrices.
\textit{Journal of Physics A: Mathematical and Theoretical}, 49(9):09LT01, 2016.
(\href{http://arxiv.org/abs/1511.01851}{arXiv:1511.01851}).
\end{enumerate}

\noindent
The following article reports the work done during my PhD but is different in scope from the main topic of my PhD. Hence, this article is not included in my PhD thesis.
\begin{enumerate}
\item[\cite{Dhand2014}] Ish Dhand and Barry C. Sanders. Stability of the Trotter-Suzuki decomposition. \textit{Journal of Physics A: Mathematical and Theoretical}, 47(26):265206, 2014.
(\href{http://arxiv.org/abs/1403.3469}{arXiv:1403.3469}).
\end{enumerate}

The following changes are made to the material taken from my published papers for use in this thesis:
\begin{itemize}
\item
Each section of Chapter~1 contains a few sentences verbatim (but not explicitly marked) from the respective introduction sections of my papers~\cite{Dhand2015b,Dhand2015a,Dhand2015d,deGuise2015}.
\item
Section~\ref{Sec:LinearOptics} contains material from the background section of my NJP Submission~\cite{Dhand2015a} in addition to new material on the definition of linear optics.
\item
The material in Section~\ref{Sec:OneTwoPhoton} is copied from the background section of my NJP Submission.
\item
Sections~\ref{Sec:SUnDefinitions} and~\ref{Sec:BosonRealizations} comprise material published in my JMP paper~\cite{Dhand2015d}. A paragraph with additional notational details~(\ref{Eq:CompressedNotation}) has been added.
\item
The definition of immanants in Section~\ref{Sec:Immanants} is copied verbatim from my JPA submission~\cite{deGuise2015}.
\item
Section~\ref{Sec:CSD} is copied verbatim from the background section of my PRA paper.
\item
Sections~\ref{Sec:Algorithm}, \ref{Sec:Cost} and~\ref{Sec:DesignConclusion} are copied from Sections~III,~IV and~V respectively of the PRA paper.
Certain phrases (beam splitter, phase shifter) were changed in the interest of consistency.
\item
Sections~\ref{Sec:Procedure}--\ref{Sec:Scattershot} contain material from the revised version of my NJP submission. The ordering of the sections has been changed for coherence.
\item
Section~\ref{Sec:Instability} is an edited version of Appendix~A of my revised NJP submission.
\item
Section~\ref{Sec:SunAlgorithms} comprises material from my JMP paper. The introduction has been changed for better context and the notation made consistent with that in the rest of this thesis.
\item
Section~\ref{Sec:SunImmanantResult} comprises material from my JPA submission. The notation made consistent with that in the rest of this thesis.
\item
Appendix~\ref{Appendix:Construction} is copied from Appendix~A my article in PRA.
\item
Appendix~\ref{Sec:CurveFitting} is copied from Appendix~B of the revised version of my NJP submission.
\item
Appendices~\ref{Appendix:SubAlgebraChoice} and~\ref{Appendix:Connection} are taken from my article in JMP.
\end{itemize}
\clearpage
\pagenumbering{arabic}

\mainmatter

%=================%
\chapter{Introduction}
%=================%

%------------------------------%
\section{Research problem and objectives}
\label{Sec:IntroProblems}
%------------------------------%
Here I define my research problem and objectives in multi-photon multi-channel interferometry for quantum information processing~(\gls{qip}).
QIP advances computation and communication by exploiting quantum mechanics.
Efficient quantum algorithms can solve problems for which no efficient classical algorithm is known~\cite{Shor1997,Lloyd1996,Harrow2009,Nielsen2010}.
Quantum communication protocols are computationally secure~\cite{Bennett1984,Gisin2002}.
Numerous physical systems are being investigated for implementing QIP;
nonclassical light is a strong candidate for QIP implementations because it promises long coherence times and ease of transmission.

QIP protocols that require nonlinear media have been proposed~\cite{Chuang1995,Clausen2002,Howell2000,Milburn1989} but existing natural or electromagnetically induced nonlinearities are too weak and too noisy to be useful~\cite{Kok2007}.
In contrast, linear optics is important for implementing QIP tasks because of its relative ease of implementation.

Numerous QIP tasks can be implemented on linear optics.
The problem of sampling the output coincidence distribution of a linear optical interferometer, i.e., the BosonSampling problem, is hard to simulate on a classical computer~\cite{Aaronson2013}.
BosonSampling involves sampling from the photon-coincidence distribution at the output of an interferometer when single photons are incident at each input port.
Sampling from this distribution is computationally hard classically but is easy with a linear-optical interferometer~\cite{Aaronson2013,Aaronson2013a}.
Single-photon detectors and linear optical interferometers allow for efficient universal quantum computation via linear optical quantum computing (\gls{loqc})~\cite{Knill2001}.
Linear optics can simulate the quantum quincunx~\cite{Do2005} and quantum random walks~\cite{Jeong2004}.
Linear optics coupled with laser-manipulated atomic ensembles enables long-distance quantum communication~\cite{Duan2001}.
A wide class of communication protocols can be realized with coherent states and linear optics~\cite{Arrazola2014}.

On the experimental front, recent advances in photonic technology including photonic circuits on silicon chips~\cite{Politi2008,Matthews2009,Crespi2011,Shadbolt2012,Metcalf2013}, noise-free high-efficiency photon number-resolving detectors~\cite{GolTsman2001,Rosenberg2005,Gansen2007,Lita2008,Najafi2015}, high-fidelity single-photon sources~\cite{Santori2002,URen2004,Faraon2008,Sipahigil2012} have engendered the experimental implementation of multi-photon multi-channel linear optical interferometry.
Reconfigurable interferometers that can perform arbitrary linear transformations on the spatial modes of light have been demonstrated~\cite{Mower2014,Harris2015,Carolan2015}.

Despite advances in the experimental implementations, the theory of design, characterization and simulation of linear optics is still in its nascent stages.
The scalability of implementing linear optics on spatial modes is limited because aligning and stabilizing beam splitters in multi-channel interferometers is challenging.
Characterization of optical processes using classical light requires that the device being characterized be stable with respect to the probing set-up on a sub-wavelength scale.
A linear optical interferometer can be characterized using one- and two-photon statistics but the current characterization procedure lacks the accuracy and precision required for application to QIP.
The classical simulation of a linear optical interferometer under indistinguishable single-photon inputs (i.e., photons with identical spectra arriving simultaneously at the interferometer) is well studied, but methods for simulating partially distinguishable photons (i.e., photons with different spectra or those with different arrival times) are oversimplified and sub-optimal.
I aim to make linear optics a viable system for QIP by overcoming these challenges in the theory of multi-photon multi-channel linear optics.

In my PhD thesis, I report advances to the theory of design, characterization and simulation of linear optics.
In collaboration with Sandeep K.~Goyal, I tackle the inadequacy of current interferometer design procedures by devising a realization of arbitrary discrete unitary transformations using spatial and internal modes of light, thereby reducing the beam splitter requirement and improving scalability.
My collaborators and I construct a procedure to characterize a linear optical interferometer accurately and with known precision and demonstrate the efficacy of the procedure by simulations and experiments.
Finally, I contribute to the simulation of interferometry under partially distinguishable single-photon inputs by means of two results that enable a deeper analysis and faster simulation of photon measurement probabilities.
The results include
(i)~an algorithm to compute matrix elements (\gls{DFun}) of the irreducible representations (\gls{irrep}) of the special unitary group (\gls{SUN}) in the canonical basis and
(ii)~results connecting $\mathcal{D}$-functions to immanants of the interferometer transformation matrix.

The remainder of this chapter summarizes the results reported in this thesis.
Section~\ref{Sec:IntroProblems} details my research problem and objective regarding the design, characterization and simulation of multi-photon multi-channel interferometry.
Section~\ref{Sec:IntroRealization} elucidates our results on the design of arbitrary linear optical interferometers.
We describe our procedure for the accurate and precise interferometer characterization in Section~\ref{Sec:IntroAccurate}.
Section~\ref{Sec:IntroSimulation} details our contribution to group-theoretic methods in the context of interferometer simulation.
The chapter concludes with an overview of the thesis in Section~\ref{Sec:IntroOverview}.

%------------------------------%
\section{Realization of linear optics in spatial and internal degrees of freedom}
\label{Sec:IntroRealization}
%------------------------------%
This section overviews our procedure for realizing arbitrary discrete unitary transformations in spatial and internal degrees of freedom~(\gls{dof}s) of light.
Linear optical transformations can be realized on various DOFs of light.
For instance, any $2\times 2$ unitary transformation on the polarization DOF can be decomposed into elementary operations that are implemented using quarter- and half-wave plates~\cite{Simon1989,Simon1990,Simon2012}.
Any unitary transformation on an arbitrary number of spatial modes can be realized as an arrangement of beam splitters, phase shifters and mirrors~\cite{Reck1994,Rowe1999,Guise2001} and of temporal modes using nested fiber loops or dispersion~\cite{Motes2014,Motes2015,Pant2015}.
Finally, unitary transformations on orbital-angular-momentum modes of light can be realized using beam splitters, phase shifters, holograms and extraction gates~\cite{Garcia-Escartin2011}.

Current experimental implementations choose the spatial DOF to perform quantum walks~\cite{Peruzzo2010,Crespi2013,Poulios2014}, BosonSampling~\cite{Broome2013,Spring2013,Metcalf2013,Crespi2013a,Bentivegna2015}, bosonic transport simulations~\cite{Harris2015} and photonic quantum gates~\cite{Politi2008,Pooley2012,Meany2015}.
Implementing linear optical transformations on $n$ spatial modes requires aligning $\BigO{n^{2}}$ (see footnote\footnote{
For functions $f$ and $g$ defined on some subset of the real numbers, I write~\cite{Knuth1976}
\begin{equation}
f(x)=\BigO{g(x)}
\end{equation}
if and only if there exist positive constant $M$ and real number $x_{0}$ such that
\begin{equation}
|f(x)| \le \; M |g(x)|\text{ for all }x \ge x_0,
\end{equation}
i.e., $g(x)$ grows faster than $f(x)$ for asymptotically large values of $x$.
} for a definition of big-O notation) beam splitters~\cite{Reck1994}; this requirement poses the key challenge to the scalability of linear optical implementation of QIP protocols.

One approach to realizing larger unitary transformations is to use internal DOFs, such as polarization, arrival time and orbital angular momentum in addition to the spatial DOF.
Specifically, any lossless transformation on $n_{s}$~spatial and $n_{p}$~internal modes is described by an $n_{s}n_{p}\times n_{s}n_{p}$~unitary transformation.
However, there was no known method to effect an arbitrary $n_{s}n_{p}\times n_{s}n_{p}$~unitary transformation on the state of light in $n_{s}$ spatial and $n_{p}$ internal modes.

We aimed to devise an efficient realization of an arbitrary unitary transformation using spatial and internal DOFs.
By efficient I mean that the cost of realizing the transformation, as quantified by the number of required spatial and internal optical elements, scales no faster than a polynomial in the dimension of the transformation.
Specifically, we construct an algorithm to decompose an arbitrary $n_{s}n_{p}\times n_{s}n_{p}$ unitary transformation into a sequence of $\BigO{n_{s}^{2}}$ beam splitters and $\BigO{n_{s}^{2}}$ internal transformations, each of which acts only on the $n_{p}$ internal modes of light in one spatial mode.

In contrast to the Reck \emph{et al.}~approach, which allows the realization of any discrete unitary transformation in spatial modes, our approach enables the realization into spatial and internal modes.
The Reck \emph{et al.}~procedure decomposes arbitrary $n\times n$ unitary matrices into a product of $2\times 2$ unitary matrices, which are realized as beam splitters and phase shifters.
On the other hand, incorporating $n_{p}$-dimensional internal DOFs requires decomposing into $2n_{p}\times 2n_{p}$ beam splitter matrices and $n_{p}\times n_{p}$ unitary matrices representing internal transformations.
Thus, the Reck \emph{et al.}~procedure cannot incorporate internal DOFs.

Although the Reck \emph{et al.}~approach cannot be used to design interferometers that transforms the spatial and internal modes of light, we can realize arbitrary $n_{s}n_{p}\times n_{s}n_{p}$ transformations exclusively on the spatial modes.
Such a realization requires $n_{s}n_{p}$ spatial modes and $\BigO{n_{s}^{2}n_{p}^{2}}$ beam splitters.
At the cost of increasing the required number of internal optical elements by a factor of two, we reduce the required number of beam splitters by a factor of $n_{p}^{2}/2$ as compared to the Reck \emph{et al.}~method.
Another difference between our method and the Reck \emph{et al.}~method is that our method requires only balanced beam splitters, which are easier to construct accurately~\cite{Huisman2014}.

Reducing the required number of beam splitters at the cost of increasing the number of optical elements is desirable both in free-space and in on-chip implementations of linear optical transformations.
Free-space implementations of linear optics require beam splitters to be stable with respect to each other at sub-wavelength length scales.
On-chip beam splitters rely on evanescent coupling~\cite{Szameit2007}, which requires overcoming the challenge of aligning different optical channels.
On the other hand, operations on internal elements do not require mutual stability and are typically easier to align.
For these reasons, operations on internal elements are preferred over beam splitters both in free-space and in on-chip implementations of linear optical transformations.

Moreover, our approach is advantageous experimentally because of its flexibility in the choice of $n_{p}$ and $n_{s}$.
For instance, consider the realization of a $6\times 6$ unitary matrix.
The Reck \emph{et al.}~approach allows for a realization of this transformation on an interferometer with six spatial modes.
Depending on experimental requirements, our procedure allows for a realization of the $6\times 6$ transformations ($n_{s}n_{p}=6$) using either
(i)~six spatial modes ($n_{s} = 6, n_{p}= 1$),
(ii)~three spatial and two internal modes, for instance polarization ($n_{s} =3, n_{p}= 2$),
(iii)~two spatial and three internal modes ($n_{s} =2, n_{p}= 3$) or
(iv)~one spatial and six internal modes ($n_{s} =1, n_{p}= 6$).

In summary, our procedure enables the realization of arbitrary $n_{s}n_{p}\times n_{s}n_{p}$ linear optical interferometers on $n_{s}$~spatial and $n_{p}$~internal DOFs thereby reducing the beam splitter requirement by a factor of $n_{p}^{2}/2$.
Chapter~\ref{Chap:Design} details this procedure for realizing arbitrary unitary transformation on the spatial and internal modes of light.

%------------------------------%
\section{Accurate and precise characterization of linear optics}
\label{Sec:IntroAccurate}
%------------------------------%

This section details our procedure for the accurate and precise characterization of linear optics and summarizes the numerical and experimental evidence of the superiority of the procedure over existing procedures.
The accurate and precise characterization of linear optics is important in quantum information processing tasks such as BosonSampling, LOQC and quantum walks.
The classical hardness of the BosonSampling problem crucially depends on bounds on the error in the implemented interferometer~\cite{Arkhipov2014}.
The proposed practical applications of BosonSampling, in quantum metrology and in the computation of molecular vibronic spectra, rely on the accurate implementation and characterization of linear optics~\cite{Motes2015,Huh2015}.
Accurate and precise characterization is important in LOQC because a high success probability of the employed non-deterministic linear-optical gates relies on implementing the desired gates with high fidelity~\cite{Kok2007}.
Furthermore, linear interferometers used in photonic quantum walks require accurate characterization especially if quantum walks are employed for solving classically hard problems~\cite{Gamble2010,Peruzzo2010,Sansoni2012}.
In other words, the accurate and precise characterization of interferometers enables a verifiable quantum speedup of linear-optical protocols over classical computers.

Classical-light procedures~\cite{Lobino2008,Keshari2013} for linear optics characterization are unsuitable for \gls{Fock} based experiments because the interferometer parameters change when classical light sources and homodyne detectors are coupled to and decoupled from the interferometer ports.
This change could result from a drift of interferometer parameters in the time required to couple (decouple) sources and detectors or as the result of the mechanical process of coupling (decoupling) itself.
Characterization procedures that rely on Fock-state (rather than classical-light) inputs enable interferometer characterization without altering the experimental setup if the implemented QIP task employs Fock states.
Thus, Fock-state characterization procedures would thus be accurate in BosonSampling and LOQC implementations

The Laing-O'Brien procedure~\cite{Laing2012} uses Fock states (one and two photons) for characterizing linear optical interferometers and does not require sub-wavelength stability.
This procedure assumes perfect matching in source field and large-number statistics on the detected photons.
Hence, implementations of this procedure are inaccurate due to spatiotemporal and polarization mode mismatch in the source field and imprecise due to shot noise.

We aimed to devise an accurate and precise procedure that uses one and two photons for the characterization of linear optical interferometers and to devise a rigorous method to estimate the standard deviation in the interferometer parameters~\cite{Altepeter2005}.
Furthermore, we aimed to provide a correct alternative to the $\chi^2$-test, which has been used to estimate the confidence in the characterized interferometer parameters in current BosonSampling implementations~\cite{Crespi2013,Metcalf2013,Tillmann2015}\footnote{The $\chi^2$-test~\cite{Pearson1900,Plackett1983,Greenwood1996} is used to quantify the goodness of fit between probability distribution functions of two categorical variables, which can take a fixed number of values.
Coincidence-count curves and visibilities are not probability distribution functions of categorical variables, but rather are collections of many categorical variables (variables that can take on one of a fixed finite number of possible values), one variable corresponding to each time-delay value chosen in the experiment.
Hence, quantifying the goodness of fit between two coincidence curves using the $\chi^2$-test is incorrect.
This incorrectness undermines the claim that the data are consistent with quantum predictions and disagree with classical theory~\cite{Metcalf2013,Tillmann2015} and leaves the choice of unitary matrices~\cite{Crespi2013} unjustified.
}.

The above aims were attained via a procedure to characterize a linear optical interferometer accurately and precisely using one- and two-photon interference.
Five strengths of our approach over the Laing-O'Brien procedure~\cite{Laing2012} are that our procedure
(i)~accounts for and corrects systematic error from spatial and polarization source-field mode mismatch via a calibration procedure;
(ii)~increases accuracy and precision by fitting experimental coincidence data to curve simulated using measured source spectra;
(iii)~accurately estimates the error bars on the characterized interferometer parameters via a bootstrapping procedure;
(iv)~employs maximum-likelihood estimation to determine the unitary transformation matrix that best represents the characterization data and
(v)~reduces the experimental time required to characterize interferometers using a scattershot procedure.
Chapter~\ref{Chap:Procedure} details the characterization procedure.

The efficacy of our characterization procedure has been verified both numerically and experimentally.
Experimentally, beam splitters (two-channel interferometers) were characterized using our procedure and using the Laing-O'Brien procedure, and these reflectivities were compared with the correct values obtained from single-photon measurement.
Reflectivities obtained from our procedure match those obtained from single-photon measurements within 95\% confidence intervals whereas those obtained from the Laing-O'Brien procedure do not.
Numerically, we simulated $1000$ characterization experiments using measured spectra with varying shot-noise and mode mismatch.
Our procedure yields one to two orders of magnitude improvement over existing procedures in the accuracy as measured by the trace distance between the expected and observed unitary.
Chapter~\ref{Chap:Verification} details the numerical and experimental verification of our accurate and precise characterization procedure.

%------------------------------%
\section{$\mathrm{SU}(n)$ and $\mathrm{S}_{n}$ group theory for simulation of linear optics}
\label{Sec:IntroSimulation}
%------------------------------%

This section elaborates on my contribution to the theory of the special unitary and the symmetric groups for application to linear optics.
I outline our algorithms for computing irreducible representations of $\mathrm{SU}(n)$ and our result on the connection between immanants and $\mathcal{D}$-functions of $\mathrm{SU}(n)$ matrices and submatrices.
The special unitary group $\mathrm{SU}(n)$, whose elements represent all $n$-channel interferometers and the permutation group~\gls{Sn}, which manifests the bosonic exchange symmetries, enable a deep analysis of photons interfering at a linear interferometer.

Motivated by recent progress in linear optics implementations, we aimed to develop group-theoretic methods for a realistic theory of photon coincidences that accommodates multimode photon pulses, multimode detection and non-simultaneous arrival of photons.
My contribution to $\mathrm{SU}(n)$ and $\mathrm{S}_{n}$ group theory for interferometer simulation comprises
(i)~an algorithm to compute $\mathrm{SU}(n)$ $\mathcal{D}$-functions in the canonical basis and
(ii)~results connecting $\mathrm{SU}(n)$ $\mathcal{D}$-functions to immanants of fundamental representations.
These two results pave the way for a complete group-theoretic analysis of multi-photon multi-channel interferometry for arbitrary numbers of photons and channels.

Recent application of $\mathrm{SU}(3)$ group theory to three-photon interferometry inspire our $\mathrm{SU}(n)$ $\mathcal{D}$-function calculation algorithm.
$\mathrm{SU}(3)$ $\mathcal{D}$-functions enable a symmetry-based interpretation of the action of a three-channel linear interferometer on partially distinguishable single-photon inputs~\cite{Tan2013,Tillmann2015}.
Exploiting the permutation symmetries manifest in multi-photon systems reduces the cost of computing interferometer outputs in comparison to brute-force techniques~\cite{Guise2014}.

The $\mathcal{D}$-function calculator relies on boson realizations, which map operators and states of groups to transformations and states of bosonic systems.
We devise a graph-theoretic algorithm%
\footnote{
A graph is a mathematical structure that captures pairwise relations between objects.
Specifically, a graph is defined as an ordered pair $G = (\mathcal{V}, \mathcal{E})$ comprising a finite set $\mathcal{V}$ of vertices or nodes or points together with a set $\mathcal{E}$ of edges or arcs or lines, which are $2$-element subsets of $\mathcal{V}$.
Graph theory is a branch of discrete mathematics the deals with the study of graphs.
}
to construct the boson realizations of the canonical SU$(n)$ basis states, which reduce the canonical subgroup chain, for arbitrary $n$.
The boson realizations are employed to construct $\mathcal{D}$-functions, which are the matrix elements of arbitrary irreducible representations, of SU$(n)$ in the canonical basis.
The algorithm, which I detail in Section~\ref{Chap:Sun}, comprises the following key steps
(i)~a mapping of the weights of an irrep to a graph and
(ii)~a graph-theoretic algorithm to compute boson realizations of the canonical basis states of SU$(n)$ for arbitrary $n$.
The algorithm offers a significant advantage over the two competing procedures, namely factorization and exponentiation as I demonstrate in Section~\ref{Sec:SunAlgorithms}.

My second result is a theorem relating $\mathrm{SU}(n)$ $\mathcal{D}$-functions with immanants of the fundamental representation of $\mathrm{SU}(n)$.
My collaborators and I expand a result of Kostant~\cite{Kostant1959} on immanants of an arbitrary $n\times n$ unitary matrix $T\in \mathrm{SU}(n)$ to the submatrices of $T$.
Specifically, we show that immanants of principal submatrices of a unitary matrix $T$ are a sum of the diagonal $\mathcal{D}$-functions of a group element $\Omega$, with $t$ determined by the choice of submatrix, and the irrep $(\lambda)$ determined by the immanant under consideration.
This result connects photon output probabilities, which depend on immanants of submatrices of the interferometer matrix, with $\mathrm{SU}(n)$ $\mathcal{D}$-functions.
The theorem is stated and proved in Section~\ref{Sec:SunImmanantResult}.

%------------------------------%
\section{Overview of chapters}
\label{Sec:IntroOverview}
%------------------------------%
Chapters~\ref{Chap:Background} and~\ref{Chap:Back2} present the relevant background for the results reported in this thesis.
I define linear optics and detail the action of a linear optical interferometer on one- and two-photon inputs in Chapter~\ref{Chap:Background}.
One- and two-photon inputs are employed in our procedure for the characterization of linear optics.
Chapter~\ref{Chap:Procedure} details the characterization procedure.
Chapter~\ref{Chap:Verification} presents the numerical and experimental evidence of the accuracy and precision of our characterization procedure.

Chapter~\ref{Chap:Back2} includes the group theory of $\mathrm{SU}(n)$ group and its algebra, and I define determinant, immanants and permanents, which are relevant to my results on the group theory of linear optics.
My results on $\mathrm{SU}(n)$ and $\mathrm{S}_{n}$ group-theoretic method for simulation of linear optics are presented in Chapter~\ref{Chap:Sun}.
Chapter~\ref{Chap:Back2} also presents the cosine-sine decomposition of unitary matrices.
The cosine-sine decomposing is the key building block for our procedure for the relations of linear optics in spatial and internal modes, which I detail in Chapter~\ref{Chap:Design}.
I conclude this thesis with a summary and a list of open problems in Chapter~\ref{Chap:Summary}.

%=================%
\chapter{Background: Linear Optics}
\label{Chap:Background}
%=================%
This chapter presents relevant definitions and background on linear optical transformations.
The action of a multi-mode linear optical interferometer on single photons entering one or two input ports and vacuum entering the other ports is detailed.

Section~\ref{Sec:LinearOptics} defines linear optics as transformations performed by materials in which the electric polarization is linearly dependent on the incoming electric field and describes linear optical transformations as unitary operations.
Section~\ref{Sec:OneTwoPhoton} presents expressions for the probability of detecting single photons at given output ports when single photons are incident at given input ports and of coincident photon detections when two controllably delayed photons are incident on the interferometer.

%------------------------------%
\section{Definition of linear optics}
\label{Sec:LinearOptics}
%------------------------------%
Here I define linear optical media by their linear response to light.
I parameterize the discrete unitary transformation effected by an interferometer and present a treatment of losses and dephasing at the interferometer ports.

\begin{defn}[Linear optics~\cite{Mandel1995}]
Linear optics is defined as the set of transformations effected by media whose response to electromagnetic fields is linear.
In other words, the electric polarization
\begin{equation}
\bm{P}\defeq \bm{D}-\epsilon_{0}\bm{E} = \epsilon_{0}\bm{\chi} \bm{E}
\end{equation}
is linear in the electric field $E$, where $\epsilon_{0}$ is the vacuum permittivity, $D$ the electric displacement and $\bm{\chi}$ is the electric susceptibility tensor.
\end{defn}
I consider the propagation of light in one-dimensional non-magnetic medium, which is a medium with zero magnetic susceptibility.
The Hamiltonian describing the energy of the system is~\cite{Mandel1995} %Pg1069
\begin{align}
H = & \iint \mathrm{d}\omega\,\mathrm{d}x\,\left[\frac{1}{2\mu_{0}}\bm{B}^{2}(x,\omega) + \frac{\epsilon_{0}}{2} \bm{E}^{2}(x,\omega)\right] \nonumber\\
&+ \iint \mathrm{d}\omega\,\mathrm{d}\omega'\,\mathrm{d}x\, \left[\frac{1}{2}\,\bm{\chi}(\omega,\omega')\bm{E}(x,\omega)\bm{E}(x,\omega')\right],
\end{align}
where $x$ is the spatial coordinate, $\omega$ refers to frequency, $\bm{B}$ is the magnetic field and $\mu_{0}$ is the vacuum permeability.
I consider electromagnetic fields with finite spatial and temporal extent, i.e.,
\begin{align}
\hat{\bm{E}}(x,t)= \hat{\bm{E}}_{0}(x,t)\mathrm{e}^{\mathrm{i}kx-\mathrm{i}\omega t} + \mathrm{c.c.}
\end{align}
for complex valued envelope $\hat{\bm{E}}_{0}$, wavenumber $k$, $\mathrm{c.c.}$~representing complex conjugate and the quantities with caret denoting the Fourier transform of the respected quantities without caret.
Assuming bandwidth (standard deviation of $\bm{E}_{0}(\omega)$) narrow as compared to the central frequency (mean frequency of $\bm{E}_{0}(\omega)$ ) and performing canonical quantization by replacing the~$\bm{E}$ and~$\bm{B}$ fields with the corresponding free-field Hilbert space operators gives us~\cite{Kok2007}
\begin{equation}
H =\int\mathrm{d}\omega\, \sum_{jk}A_{jk} a_{j}^{\dagger}(\omega) a_{k}(\omega),
\label{Eq:Hamiltonian}
\end{equation}
which is bilinear in the creation and annihilation operators for complex $\{A_{jk}\}$.

The Hamiltonian~(\ref{Eq:Hamiltonian}) effects photon-number preserving transformations on the state of the incoming light.
The interferometer transforms the photonic creation and annihilation operators according to
\begin{equation}
a_{j}^\dagger(\omega) \rightarrow \sum_{i=1}^m V_{ij}(\omega) a_{i}^\dagger(\omega)
\label{Eq:interferometeraction}
\end{equation}
and its complex conjugate, where $V(\omega)$ is the transformation matrix of the interferometer.
In general, the elements $\left\{V_{ij}(\omega)\right\}$ of the transformation matrix depend on the frequency of transmitted light.
I assume that the spectral functions $\bm{E}_{0}$ of the incoming light are narrow compared to frequencies over which the entries $\{V_{ij}\}$ change noticeably and thus treat $V$ to be frequency-independent.
Under this assumption, photon-number conservation imposes unitarity
\begin{equation}
V^\dagger(\omega) V(\omega) = \mathds{1}
\end{equation}
of the transformation matrix $V(\omega)$ for all real $\omega$.
Thus, linear optical interferometers effect unitary transformations on the incoming state of light.

Following~\cite{Laing2012}, we parameterize the unitary matrix $V$ to aid the clarity of our characterization procedure (Chapter~\ref{Chap:Procedure}).
If only Fock states are incident at the interferometer and only photon-number-counting detection is performed on the outgoing light, then the measurement outcomes are invariant under phase shifts at each input and output port.
That is, interferometer $\hat{V}=D_1VD_2^\dagger$ produces the same measurement outcome as $V$ for any diagonal unitary matrices $D_1$ and $D_2$.
%, i.e., diagonal matrices whose non-zero matrix elements are all complex numbers of unit magnitude,
Mathematically, if $D_1, D_2$ are diagonal unitary matrices, then
\begin{equation}
V\sim\hat{V}\iff \hat{V}= D_1VD_2^\dagger
\end{equation}
is an equivalence relation.
Members of the same equivalence class defined by this equivalence relation produce the same number-counting measurement outcomes on receiving Fock-state inputs.

\begin{figure}[h]
\begin{centering}
\includegraphics[width=0.9\textwidth]{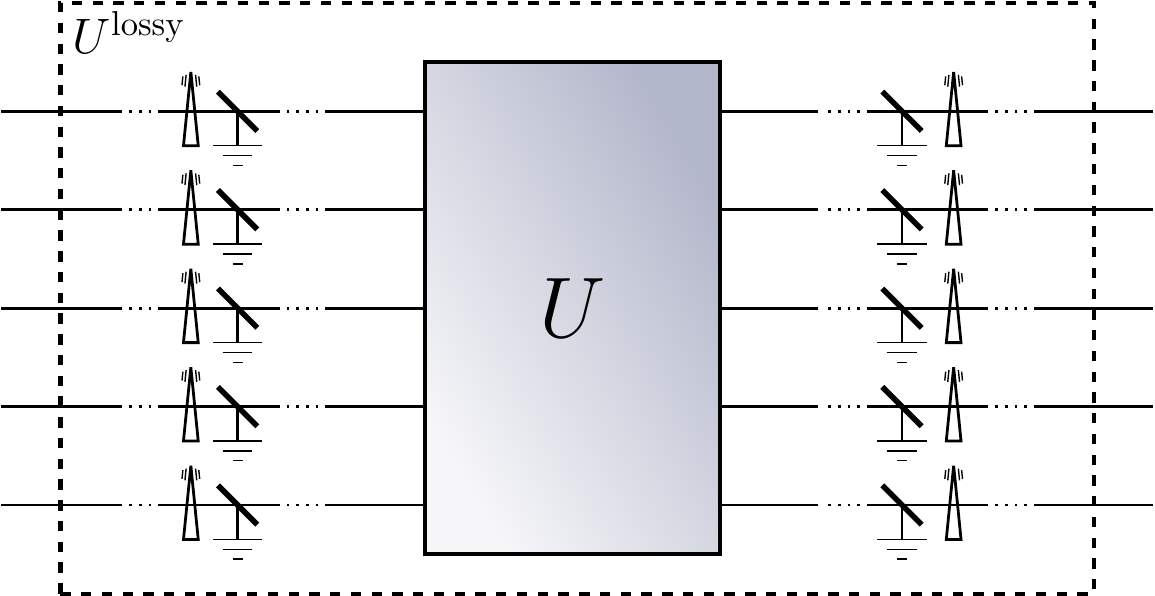}
\caption{Schematic diagram of the interferometer.
$U$ effects a unitary transformation on a multimode state of light.
The dotted lines represent the couplings of the interferometer with light sources and detectors.
The beam splitters at the input and output modes model the linear losses because of imperfect coupling and detector inefficiency.
The vacuum input to these beam splitters is not shown.
One of the beam splitter outputs enters the interferometer whereas the other one is lost.
The triangles represent the random dephasing at the input and output ports.
The dashed box labelled $U^{\mathrm{lossy}}$ represents the combined effect of the dephasing, the losses and the unitary interferometer.
}
\label{Figure:Interferometer}
\end{centering}
\end{figure}

Each equivalence class can be represented by a unique matrix~$U$ whose first row and first column consist of real elements.
The complex matrix entries of the {class representative} $U\sim V$ are
\begin{equation}
U_{ij} = t_{ij}\e^{\mathrm{i} \theta_{ij}}:\,
t_{ij}\in \mathds{R}^+,\,
\theta_{ij}\in(-\pi,\pi],\,
\theta_{i1} \equiv 0, \theta_{1j} = 0\,
\forall i,j\in\{1,2,\dots,m\}.
\label{Eq:Utt}
\end{equation}
The constraints $\theta_{i1} \equiv 0, \theta_{1i} \equiv 0\, \forall i\in\{1,2,\dots,m\}$ on the input and output phases of the transformation matrix are obeyed in the following parameterization of $U$
\begin{align}
U &=
L\times A \times M, \nonumber\\
&\defeq \begin{pmatrix} 1 & 0 & \cdots& 0 \\
0 & \sqrt{\lambda_2} & \cdots & 0\ \\
\vdots & \vdots & \ddots & 0 \\
0 & 0 & \cdots & \sqrt{\lambda_m} \\
\end{pmatrix} \begin{pmatrix}
1 & \cdots & 1 \\
1 & \cdots & \alpha_{2m} \mathrm{e}^{\mathrm{i} \theta_{2m}} \\
\vdots & \ddots & \vdots \\
1 & \cdots & \alpha_{mm} \mathrm{e}^{\mathrm{i} \theta_{mm}}
\end{pmatrix}
\begin{pmatrix} \sqrt{\mu_1} & 0 & \cdots& 0 \\
0 & \sqrt{\mu_2} & \cdots & 0\ \\
\vdots & \vdots & \ddots & 0 \\
0 & 0 & \cdots & \sqrt{\mu_m} \\
\end{pmatrix}. \label{Eq:RMS0}
\end{align}
Thus, the values~$\{\lambda_i\}, \{\alpha_{ij}\},\{\theta_{ij}\}, \{\mu_{j}\}$ completely parametrize the class representative matrix~$U$.
%In other words, the amplitudes $t_{ij}$ are related to $\alpha_{ij}$ according to
%\begin{equation}
%t_{ij} = \sqrt{\lambda_i}\alpha_{ij}\mathrm{e}^{\mathrm{i}\theta_{ij}}\sqrt{\mu_j},
%\end{equation}
%where $\alpha_{k1},\alpha_{1k}=1~\forall k\in\{1,2,\dots,m\}$.
%Our characterization procedure calculates the values of $\{\alpha_{ij}\}$ and $\{\theta_{ij}\}$ and the respective standard errors in these values.

Next, I model the losses at the input and output ports of the interferometer.
I assume time-dependent linear loss and dephasing at each interferometer port.
I model losses using parameters $\nu_{j}$ and $\kappa_{i}$, which are the respective probabilities of transmission at the input mode $j$ and output mode $i$.
%The loss probabilities at the input and output ports $j$ and $i$ are $1-s_{j}$ and $1-r_{i}$ respectively.
Dephasing is modelled using parameters $\xi_j$ and $\phi_i$, which are the arbitrary multiplicative phases at the input and output ports.
Hence, the actual transformation effected by the interferometer is given by the matrix $U^\mathrm{lossy}$, which has matrix elements
\begin{align}
U^\mathrm{lossy}_{ij} &= \e^{\mathrm{i} \phi_i}\sqrt{\kappa_{i}}\, U_{ij} \sqrt{\nu_{j}}\e^{\mathrm{i} \xi_j}\nonumber\\
&= \e^{\mathrm{i} \phi_i}\sqrt{\kappa_{i}} \sqrt{\lambda_{i}}\, \alpha_{ij}\e^{\mathrm{i} \theta_{ij}} \sqrt{\mu_{j}}\sqrt{\nu_{j}}\e^{\mathrm{i} \xi_j}.
\label{Eq:Probability}
\end{align}
Figure~\ref{Figure:Interferometer} depicts the relation between the representative matrix $U$ and the actual transformation $U^\mathrm{lossy}_{ij}$ that is effected by the interferometer.

This completes the definition and parameterization of the linear optical interferometer.
Our characterization procedure (Chapter~\ref{Chap:Procedure}) employs one- and two-photon inputs to estimate the values of parameters $\{\lambda_{i}\},\{\alpha_{ij}\},\{\theta_{ij}\},\{\mu_{j}\}$ of~(\ref{Eq:Probability}).
In the next section, I recall the expectation values of measurements performed on interferometer outputs when one- and two-photon states are incident at the input ports.

%------------------------------%
\section{One- and two-photon inputs to linear optical interferometer}
\label{Sec:OneTwoPhoton}
%------------------------------%
The section details the action of an $m$-mode interferometer on one- and two-photon inputs.
Our characterization procedure employs single-photon counting to estimate the complex amplitudes~$\{\alpha_{ij}\}$ of the representative matrix~$U$ entries.
The complex arguments~$\{\theta_{ij}\}$ of $U$ are estimated using two-photon coincidence counts.

Consider a single photon entering the $i$-th mode of an $m$-mode interferometer.
The monochromatic\footnote{Two monochromatic photons are distinguishable based on the ports that they occupy and on their respective frequencies $\omega_1$ and $\omega_2$.} photonic creation and annihilation operators acting on the $i$-th and the $j$-th ports obey the canonical commutation relation
\begin{equation}
\left[a_i(\omega_1),a_j^\dagger(\omega_2)\right] = \delta_{ij}\delta(\omega_1-\omega_2)\mathds{1}
\end{equation}
for positive real frequencies $\omega_{1},\omega_{2}$.
\begin{defn}[State of single photon]
The state of a single photon entering the $i$-th mode is
\begin{equation}
\ket{1}_i = \int_{-\infty}^{\infty} \mathrm{d}\omega f_{i}(\omega) a_i^\dagger(\omega)\ket{0},
\label{Eq:Single}
\end{equation}
where $f_{i}(\omega)$ is the normalized square integrable {spectral function}, $\ket{0}$ is the $m$-mode vacuum state.
\end{defn}

The state of two photons entering modes $i$ and $j\ne i$ of the interferometer is
\begin{equation}
|11\rangle_{ij} = \int_{-\infty}^{\infty} \mathrm{d}\omega_1\int_{-\infty}^{\infty}\mathrm{d}\omega_2\,f_i(\omega_1) f_j(\omega_2) a_i^\dagger(\omega_1)a_j^\dagger(\omega_2)|0\rangle
\label{Eq:twophotonstate}
\end{equation}
with exchange symmetry holding if $f_{i}(\omega) = f_{j}(\omega)$.
One- and two-photon states are transformed into superpositions of one- and of two-photon states respectively under the action of the linear interferometer.

\begin{figure}[h]
\includegraphics[width=\textwidth]{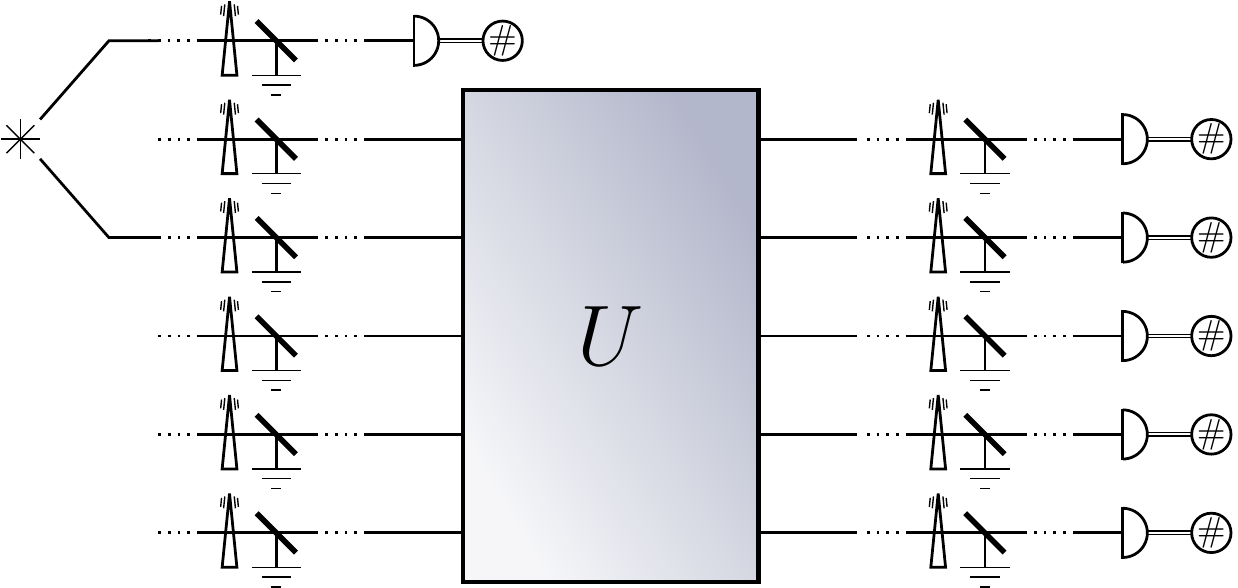}
\caption{Schematic diagram of single-photon counting at the output of an interferometer when single photons are incident at one input port.
The star symbol represents a source of single-photon pairs.
Single photons are incident at one of the input ports whereas vacuum state is input to the remaining input ports (not shown in figure).
The semicircles at the output ports represent single-photon detectors and the circles with the included $\#$ represent the photon-counting logic connected to the detectors.
}\label{Figure:1Photon}
\end{figure}

I consider the case of single-photon transmission.
The interferometer transforms the single-photon input state~(\ref{Eq:Single}) to the state at the output ports according to~(\ref{Eq:Probability}).
A photon is detected at the $i$-th output port with a probability
\begin{equation}
P_{ij} = \left|U^\mathrm{lossy}_{ij} \right|^2 = \kappa_{i} \lambda_{i} \alpha_{ij}^2 \mu_{j} \nu_{j}
 \label{Eq:PSinglePhotons}
\end{equation}
when a single-photon is incident on the $j$-th input port.

%Instead of single photons~(\ref{Eq:Single}), current experimental realizations employ two-mode squeezed light generated by pumping nonlinear crystals~\cite{Ling1986,Lvovsky2014}.
%The non-vacuum component of this state is dominated by single-photon pairs; one photon of each pair is impinged at the input ports.
%The state also includes contribution from two- and more-photon pairs.
%The expression for $P_{ij}$, if two-mode squeezed light is incident at the input ports, is presented in~\ref{Sec:SinglePhoton}.

Whereas the values of $\{\alpha_{ij}\}$ are estimated using single photon counting, $\{\theta_{ij}\}$ values are estimated using two-photon coincidence measurement.
Now I present probabilities of detecting two-photon coincidence at the interferometer outputs when controllably delayed pairs of photon are incident at the input ports.
If a controllably delayed photon pair is incident at input ports $j$ and $j'$, then the probability $C_{ii'jj'}(\tau)$ of coincidence measurement at detectors placed at output ports $i$ and~$i'$ is
\begin{align}
C_{ii'jj'}(\tau) =&\,
\kappa_{i}\kappa_{i'}\nu_{j}\nu_{j'}\Big[\left(t_{ij}^2t_{i'j'}^2+t_{ij'}^2t_{i'j}^2\right)\int \mathrm{d}\omega_1\mathrm{d}\omega_2 \left|f_{j}(\omega_1)f_{j'}(\omega_2)\right|^2 \nonumber
\\ &+2 t_{ij}t_{ij'}t_{i'j}t_{i'j'} \int \mathrm{d}\omega_1\mathrm{d}\omega_2 f_{j}(\omega_1)f_{j'}(\omega_2)f_{j}(\omega_2)f_{j'}(\omega_1)
\nonumber
\\&\times \cos\left(\omega_2\tau-\omega_1\tau+\theta_{ij}-\theta_{ij'}-\theta_{i'j}+\theta_{i'j'}\right)\Big].
\end{align}
On substituting according to~(\ref{Eq:Utt}), we obtain~\cite{Rohde2006}
\begin{align}
C_{ii'jj'}(\tau) =&\,\kappa_{i}\kappa_{i'}\lambda_{i}\lambda_{i'}\mu_{j}\mu_{j'}\nu_{j}\nu_{j'}\Big[\left(\alpha_{ij}^2 \alpha_{i'j'}^2+ \alpha_{ij'}^2 \alpha_{i'j}^2\right)\int \mathrm{d}\omega_1\mathrm{d}\omega_2 |f_{j}(\omega_1)f_{j'}(\omega_2)|^2 \nonumber
\\ &+2 \gamma \alpha_{ij} \alpha_{ij'} \alpha_{i'j} \alpha_{i'j'} \int \mathrm{d}\omega_1\mathrm{d}\omega_2 f_{j}(\omega_1)f_{j'}(\omega_2)f_{j}(\omega_2)f_{j'}(\omega_1)
\nonumber
\\&\times \cos\left(\omega_2\tau-\omega_1\tau+\theta_{ij}-\theta_{ij'}-\theta_{i'j}+\theta_{i'j'}\right)\Big].
\label{Eq:CoincidenceRate}
\end{align}
 where $\tau$ is the time delay between the two photons, $f_j(\omega), f_{j'}(\omega)$ describe source-light spectrum and~$\gamma$ is the mode-matching parameter, which I describe in the remainder of this section.

\begin{figure}[h]
\includegraphics[width=\textwidth]{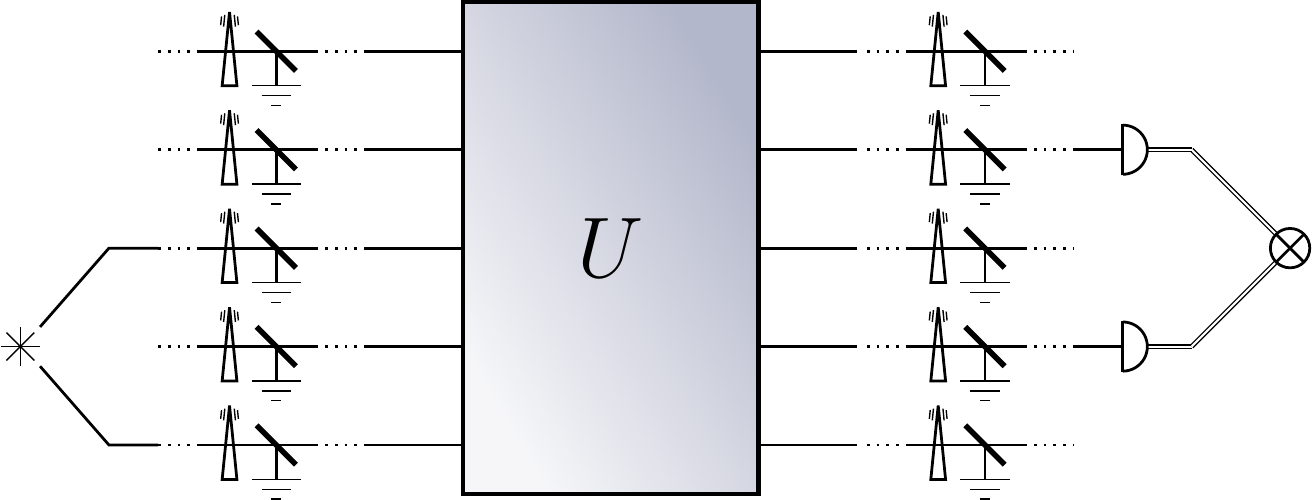}
\caption{Schematic diagram for coincidence measurement the interferometer output when single-photon pairs are incident on two different input ports of an interferometer.
The star symbol represents a source of single-photon pairs and the semicircles at the output ports represent single-photon detectors.
The coincidence logic, which is depicted by $\otimes$, counts two-photon coincidence events at the detectors.
}
\label{Figure:2Photons}
\end{figure}

Two-photon coincidence probabilities~(\ref{Eq:CoincidenceRate}) depend on the mode matching in the source field.
Spatial and polarization mode mismatch is quantified by the mode-matching parameter~$\gamma$~\cite{Rohde2006}.
Perfectly indistinguishable light sources, such as light from a single-mode fibre, have relative mode matching $\gamma=1$ whereas $\gamma=0$ indicates that the sources are completely distinguishable.

%----------------------------------------%
\section{Summary}
\label{Sec:Back1Summary}
%----------------------------------------%

In summary, I have defined linear optics and have presented a description of linear optical transformations as unitary transformations~(\ref{Eq:interferometeraction}) acting on the incoming states of light.
I have defined the representative matrix~(\ref{Eq:Utt}) of the equivalence class of unitary transformations that produce identical number-counting measurements on Fock-state inputs and have presented a treatment~(\ref{Eq:Probability}) of linear loss and dephasing at the input and output ports of the interferometer.

I have defined the state~(\ref{Eq:Single}) of a photon and detailed the probability~(\ref{Eq:PSinglePhotons}) of single-photon transmission from an input port to an output port of a linear interferometer.
Finally, I have presented expressions for the coincidence probability~(\ref{Eq:CoincidenceRate}) of obtaining coincidence measurement when two-controllably delayed photons of a given mode mismatch are incident at a linear interferometer.

%=================%
\chapter{Background: $\mathrm{SU}(n)$ and $\mathrm{S}_{n}$ methods in linear optics}
\label{Chap:Back2}
%=================%
This chapter presents relevant background in $\mathrm{SU}(n)$ and $\mathrm{S}_{n}$ group theory.
I introduce relevant $\mathrm{SU}(n)$ and $\mathrm{S}_{n}$ methods before presenting their connection with three-photon interferometry.
The chapter is structured as follows.
Section~\ref{Sec:SUnDefinitions} comprises definitions regarding the special unitary group $\mathrm{SU}(n)$ and its algebra $\mathfrak{su}(n)$.
In Section~\ref{Sec:BosonRealizations}, I define and provide expressions of the boson realizations of $\mathfrak{su}(n)$ operators.
Section~\ref{Sec:Immanants} includes definitions of determinants, immanants and permanents of matrices and examples for two- and three-dimensional matrices.
Section~\ref{Sec:ThreePhotons} presents the relevant
Section~\ref{Sec:CSD} presents the relevant background of the cosine-sine decomposition, which is a key building block of the realization procedure that is presented in Chapter~\ref{Chap:Design}.

%----------------------------------------%
\section{The special unitary group and its algebra}
\label{Sec:SUnDefinitions}
%----------------------------------------%

Here I recall the relevant properties of special-unitary group $\mathrm{SU}(n)$ and its algebra \gls{sun}.
I explain how the $\mathfrak{su}(n) \supset \mathfrak{su}(n-1) \supset \dots \supset \mathfrak{su}(2)$ subalgebra chain is used to label the basis states of the unitary irreps of $\mathrm{SU}(n)$.
I present the background for $n =2$ before dealing with $\mathrm{SU}(n)$ for arbitrary $n$.

Consider the special unitary group
\begin{equation}
\mathrm{SU}(2) = \{V: V\in \gls{gl2c},\,V^\dagger V = \mathds{1},\, \Det{V} = 1\}
\end{equation}
of $2\times 2$ special unitary matrices.
Each element of $\mathrm{SU}(2)$ can be parametrized by three angles $\Omega=(\alpha,\beta,\gamma)$.
The defining $2\times 2$ representation of an element $V(\Omega)$ of $\mathrm{SU}(2)$ is given by
\begin{align}
\mathrm{V}(\Omega)
&=
\begin{pmatrix}
	\mathrm{e}^{-\frac{1}{2}\mathrm{i}(\alpha +\gamma )} \cos \frac{\beta}{2} & -\mathrm{e}^{-\frac{1}{2}\mathrm{i}(\alpha -\gamma )} \sin \frac{\beta}{2} \\
	\mathrm{e}^{\frac{1}{2}\mathrm{i}(\alpha -\gamma )} \sin \frac{\beta}{2} & \mathrm{e}^{\frac{1}{2}\mathrm{i}(\alpha +\gamma )} \cos \frac{\beta}{2}
\end{pmatrix}.
\label{Eq:2x2Rmatrix}
\end{align}
The \gls{LieA} corresponding to \gls{LieG} $\mathrm{SU}(2)$ is denoted by $\mathfrak{su}(2)$ and is spanned by the operators $J_x,J_y,J_z$, which satisfy the angular momentum commutation relations
\begin{equation}
[J_x,J_y] = iJ_z\, ,\quad
[J_y,J_z] = iJ_x\, ,\quad
[J_z,J_x] = iJ_y.
\label{Eq:Jxyz}
\end{equation}

I transform the basis~(\ref{Eq:Jxyz}) of $\mathfrak{su}(2)$ to the complex combinations
\begin{equation}
C_{1,2} = {J_x + \mathrm{i} J_y}\, ,\qquad
C_{2,1} = {J_x - \mathrm{i} J_y}\, ,\qquad
H_1 = 2J_z,
\label{Eq:PlusMinusZ}
\end{equation}
which satisfy the commutation relations
\begin{equation}
[H_1,C_{1,2}] = 2C_{1,2} \, ,\quad
[H_1,C_{2,1}] = -2C_{2,1}\, ,\quad
[C_{1,2},C_{2,1}] = H_1.
\label{Eq:SU2RaisingLoweringCommutations}
\end{equation}
These commutation relations~(\ref{Eq:SU2RaisingLoweringCommutations}) facilitate the construction of a $(2J+1)$-dimensional irrep with carrier space spanned by basis states $\{\ket{J,M}:\, -J\le M\le J\}$~\cite{Littlewood1950}.
The integer $2M$ is the weight of the eigenstate $\ket{J,M}$ for
\begin{equation}
H_1\ket{J,M} = 2M\ket{J,M}.
\end{equation}
The operators $C_{1,2}$ and $C_{2,1}$ act on eigenstates of $H_1$ by raising or lowering the weight $2M$ of the states
\begin{align}
C_{1,2} \ket{J,M} = \sqrt{J(J+1)-M(M+1)}\ket{J,M+1},\\
C_{2,1} \ket{J,M} = \sqrt{J(J+1)-M(M-1)}\ket{J,M-1},
\end{align}
where $2J$ is the highest eigenvalue of $H_1$.

Each basis state of a finite-dimensional irrep of $\mathrm{SU}(2)$ is labelled by integral weight $2M\in\{-2J,-2J+2,\dots,2J-2,2J\}$.
The unique basis state $\ket{J,J}$ is called the highest-weight state (\gls{HWS}) and is annihilated by the action of the raising operator $C_{1,2}$.
The representation is labelled by the largest eigenvalue $2J$ of $H_1$.
Basis states of an $\mathrm{SU}(2)$ irrep are visualized as collections of points on a line with the location of each point related to the weight of the state.
Figure~\ref{Figure:SU2} gives a geometrical representation of the action of $\mathfrak{su}(2)$ operators and illustrative examples of $\mathrm{SU}(2)$ irreps.

\begin{figure}	
\centering
\subfloat[]{\label{Figure:SU2Algebra}\includegraphics[width=0.31\textwidth]{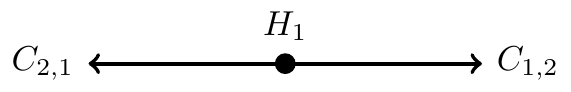}}
\qquad
\subfloat[]{\label{Figure:SU2Irrep2}\includegraphics[width=0.54\textwidth]{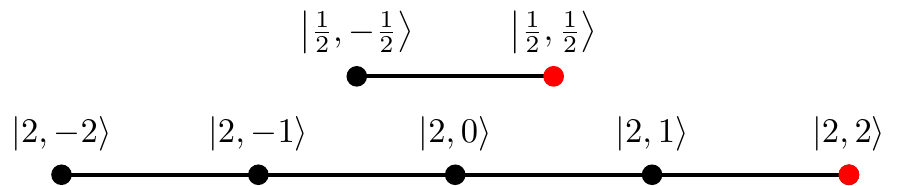}}
\caption{(a) Generators of the $\mathfrak{su}(2)$ Algebra.
The action of the raising and lowering operators $C_{1,2}, C_{2,1}$ on the basis states is represented by the directed lines.
The basis states are invariant under the action of the Cartan operator $H_1$, which is represented by the dot at the centre.
(b) $\mathrm{SU}(2)$ irreps labelled by highest weights $2M=1$ and $2M=4$ respectively.
The dots represent the basis states whereas the lines connecting the dots represent the transformation from one basis state to another by the action of the $\mathfrak{su}(2)$ raising and lowering operators.
The red dot represents the HWS, which is annihilated by the action of the raising operator $C_{1,2}$.}\label{Figure:SU2}
\end{figure}

Next I consider the case of arbitrary $n$.
The unitary group \gls{UN} is the Lie group of $n\times n$ unitary matrices
\begin{equation}
\mathrm{U}(n) \defeq \{V\colon \,V\in \gls{glnc}, V^\dagger V = \mathds{1}\}.
\label{Eq:UnitaryGroupDefinition}
\end{equation}
The corresponding Lie algebra is denoted by \gls{un}.
The complex extension of $\mathfrak{u}(n)$ is spanned by $n^2$ operators $\{C_{i,j} \colon i,j \in 1,2, \dots n\}$
satisfying the canonical commutation relations
\begin{equation}
 [C_{i,j},C_{k,l}] = \delta_{j,k}C_{i,l} - \delta_{i,l}C_{k,j}.
 \label{Eq:Commutation}
\end{equation}
The group $\mathrm{SU}(n)$ is the subgroup of those U$(n)$ transformations that satisfy the additional property $\Det V=1$; i.e.,
\begin{equation}
\label{Eq:Determinant}
\mathrm{SU}(n) \defeq \{V\colon V \in \mathrm{U}(n),\Det V = 1\}.
\end{equation}
The U$(n)$ $\mathcal{D}$-functions differ from the $\mathrm{SU}(n)$
$\mathcal{D}$-functions by at most
a phase, and I concentrate here on the $\mathrm{SU}(n)$ case.

The operator $N= C_{1,1}+C_{2,2}+\dots + C_{n,n}$ is in the centre%
\footnote{The centre of an algebra $\mathfrak{u}$ comprises those elements $x$ of $\mathfrak{u}$ such that $xu = ux$ for all $u\in\mathfrak{u}$.}
of $\mathfrak{u}(n)$.
The Lie algebra $\mathfrak{su}(n)$ is obtained from $\mathfrak{u}(n)$ by eliminating the operator $N$.
The $n-1$ operators
\begin{equation}
H_i = C_{i,i} - C_{i+1,i+1} \quad \forall i \in \{1,2,\dots,n-1\}
\label{Eq:Cartan}
\end{equation}
commute with each other and span the Cartan subalgebra of $\mathfrak{su}(n)$.
Hence, I have the following definition of the $\mathfrak{su}(n)$ algebra.
\begin{defn}[$\mathfrak{su}(n)$ algebra~\cite{Littlewood1950}]
The algebra $\mathfrak{su}(n)$ is the span of the operators $\{C_{i,j} \colon i,j \in \{1,2, \dots, n\},\, i\ne j\}$ and $\{H_i: H_i = C_{i,i} - C_{i+1,i+1},\, i \in \{1,2,\dots,n-1\}\}$ where the operators $\{C_{i,j}\}$ obey the commutation relations
\begin{equation}
[C_{i,j},C_{k,l}] = \delta_{j,k}C_{i,l} - \delta_{i,l}C_{k,j}.
\end{equation}
\label{Defn:Algebra}
\end{defn}
\noindent The linearly independent (\gls{LI}) $\mathfrak{su}(n)$ basis states span the carrier space of $\mathfrak{su}(n)$ representations.
Each basis state is associated with a weight, which is the set of integral eigenvalues of the Cartan operators.
\begin{defn}[Weight of $\mathfrak{su}(n)$ basis states~\cite{Littlewood1950}]
The weight of a basis state is the set $\Lambda = (\lambda_1,\lambda_2,\dots,\lambda_{n-1})$ of $n-1$ integral eigenvalues of the Cartan operators $\{H_1,H_2,\dots,H_{n-1}\}$.
$\mathfrak{su}(n)$ basis states have well defined weights.
\end{defn}
%\noindent Thus, the weights of basis states of $\mathfrak{su}(n)$ can be viewed as geometric structures in $n-1$ dimensions.

Of the $n^2-1$ elements, $n-1$ Cartan operators generate the maximal Abelian subalgebra of $\mathfrak{su}(n)$.
The remaining operators satisfy the commutation relation
\begin{equation}
[H_{i},C_{j,k}] =
\begin{cases}
\beta_{i,jk} C_{j,k}, &\forall j < k,\\
-\beta_{i,jk} C_{j,k}, &\forall j > k,
\end{cases}
\end{equation}
for Cartan operators $H_i$ of Definition~\ref{Defn:Algebra} and for positive integral roots $\beta_{i,jk}$.
The operators $\{C_{j,k}\colon j<k\}$ define a set of raising operators.
The remaining off-diagonal operators $\{C_{j,k}\colon j>k\}$ are the $\mathfrak{su}(n)$ lowering operators.
Each irrep contains a unique state that has nonnegative integral weights $K = (\kappa_1,\dots,\kappa_{n-1})$ and is annihilated by all raising operators.
This state is the HWS of the irrep.
\begin{defn}[Highest-weight state]
The HWS of an $\mathrm{SU}(n)$ irrep is the unique state that is annihilated according to
\begin{equation}
C_{i,j}\ket{\psi^{K}_\mathrm{hws}} = 0 \quad \forall i<j,\, i,j \in \{1,2,\dots,n\}
\end{equation}
by the action of all the raising operators.
\end{defn}
\noindent
The weight of the HWS also labels the irrep; i.e., two irreps with the same highest weight are equivalent and two equivalent representations have the same highest weight.
Hence, I label an irrep by $K = (\kappa_1,\kappa_2,\dots,\kappa_{n-1})$ if the HWS of the irrep has weight $\Lambda = K$.

Whereas in $\mathrm{SU}(2)$ the weight $2M$ and the representation label $J$ are enough to uniquely identify a state in the representation, this is not so for $\mathrm{SU}(n)$ representations.
In general, more than one $\mathrm{SU}(n)$ basis state of an irrep could share the same weight.
For example, certain states of the $K= (2,2)$ irrep of $\mathrm{SU}(3)$ irrep have the same weight~(Fig.~\ref{Figure:SU3}).
The number of basis states that share the same $\mathrm{SU}(n)$ weight $\Lambda = (\lambda_1,\lambda_2,\dots,\lambda_{n-1})$ is the multiplicity $M({\Lambda})$ of the weight~\cite{Kostant1959}.
Hence, uniquely labelling the $\mathrm{SU}(n)$ basis states requires a scheme to lift the possible degeneracy of weights.

\begin{figure}[h]
\centering
\subfloat{\includegraphics[width=0.31\textwidth]{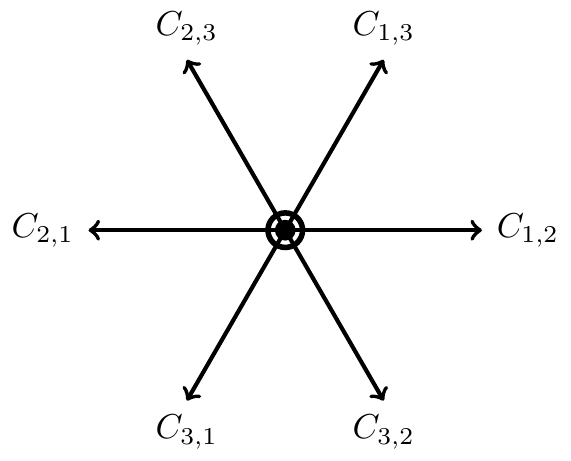} \label{Figure:SU3Algebra}}
\qquad
\subfloat{\includegraphics[width=0.54\textwidth]{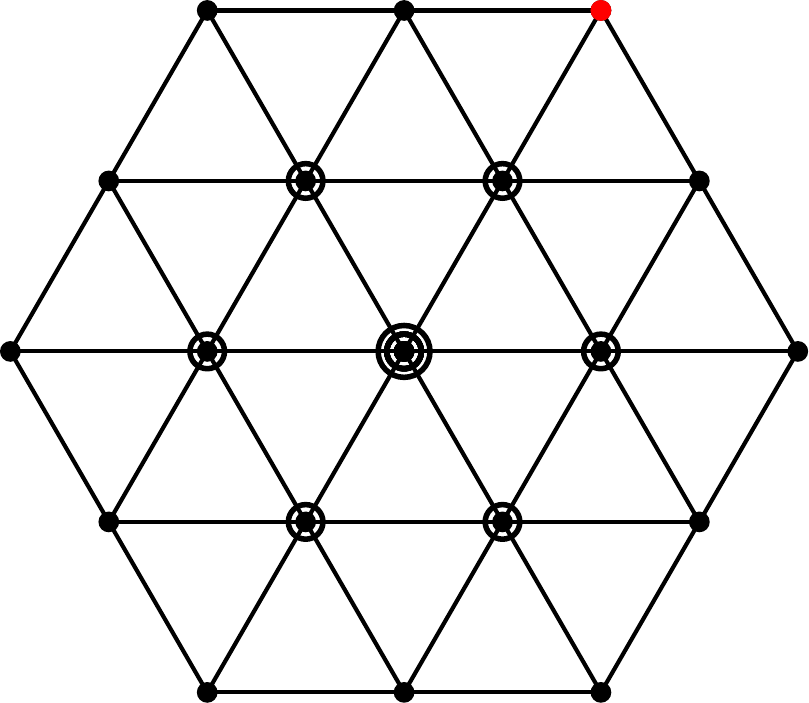} \label{Figure:SU3Irrep}}
\caption{(a) Generators of the $\mathfrak{su}(3)$ algebra.
The action of the raising operators $\{C_{1,2}, C_{1,3}, C_{2,3}\}$ and lowering operators $\{C_{2,1},C_{1,3},C_{2,3}\}$ on the canonical basis states and their linear combinations is represented by the directed lines.
%The basis states are invariant under the action of the Cartan operators $H_1, H_2$, which are represented by the dot at the centre.
(b) The $\mathrm{SU}(3)$ irrep labelled by its highest weight $(\kappa_1,\kappa_2) = (2,2)$.
The dots and circles represent the canonical basis states.
The dimension of the space of states at a given vertex is the sum of the number of dots and the number of circles at the vertex, for instance weights associated with dimension two are represented by one dot and one circle.
The lines connecting the dots represent the transformation from states of one weight to those of another by the action of $\mathrm{SU}(3)$ raising and lowering operators.
The red dot represents the highest weight of the irrep.
A unique HWS occupying this weight is annihilated by the action of each of the raising operator.}
\label{Figure:SU3}
\end{figure}

One approach to labelling the $\mathrm{SU}(n)$ basis states involves specifying the transformation properties under the action of the subalgebras of $\mathfrak{su}(n)$.
We restrict our attention to the canonical subalgebra chain
\begin{equation}
\mathfrak{su}_{1,2,\dots, n}(n)\supset \mathfrak{su}_{1,2,\dots, n-1}(n-1) \supset \dots \supset \mathfrak{su}_{1,2}(2),
\label{Eq:SubalgebraChain}
\end{equation}
where $\mathfrak{su}_{1,2,\dots, m}(m)$ is the subalgebra generated by the operators $\{C_{i,j}\colon i,j\in\{1,2,\dots,m\}\,,i\ne j\}$ and $\{H_k\colon k\in\{1,2,\dots,m-1\}\}$.
Details about the choice of subalgebra chain are presented in~\ref{Appendix:SubAlgebraChoice}.
Henceforth, I drop the subscript and denote $\mathfrak{su}_{1,2,\dots, m}(m)$ by $\mathfrak{su}(m)$.

The canonical basis comprises the eigenstates of the $\mathfrak{su}(m)$ generators for all $m\le n$ according to the following definition.
\begin{defn}(Canonical basis states)
\label{Definition:CanonicalBasisStates}
The canonical basis states of $\mathrm{SU}(n)$ irrep $K^{(n)}$ are those states
\begin{equation}
\Big|\tensor*{\psi}{*^{K^{(n)}}_{\Lambda^{(n)}}^{,\dots,}_{,\dots,}^{K^{(3)},}_{\Lambda^{(3)},}^{K^{(2)}}_{\Lambda^{(2)}}}\Big\rangle
\end{equation}
that have well defined values of
\begin{enumerate}
\item
irrep labels $K^{(m)}$ for $\mathfrak{su}(m)$ algebras for all $\{m:2\le m\le n\}$ and
\item
$\mathfrak{su}(m)$ weights $\Lambda^{(m)}$, i.e., eigenvalues of the Cartan operators of $\mathfrak{su}(m)$ algebras for all $\{m:2\le m\le n\}$.
\end{enumerate}
\end{defn}
\noindent
Consider the example of the $(\kappa_1,\kappa_2) = (1,1)$ irrep of $\mathrm{SU}(3)$.
There are two basis states with the weight $(\lambda_1,\lambda_2) = (0,0)$.
We can identify these two states by specifying
\begin{enumerate}
\item
the $\mathfrak{su}(3)$ irrep label $K^{(3)} = (\kappa_1,\kappa_2) = (1,1)$ and the $\mathfrak{su}(2)$ irrep label $K^{(2)} = (\kappa_1) = (0)$ or $K^{(2)} = (\kappa_1) = (1)$.
\item
the $\mathfrak{su}(3)$ weights $\Lambda^{(3)} = (\lambda_1,\lambda_2) = (0,0)$ and $ \mathfrak{su}(2)$ weight $\Lambda^{(2)} = (\lambda_1) = (0)$.
\end{enumerate}
The connection between our labelling of canonical basis states of Definition~\ref{Definition:CanonicalBasisStates} and the Gelfand-Tsetlin patterns~\cite{Gelfand1988} is detailed in~\ref{Appendix:Connection}.
The canonical basis state $\Big|\tensor*{\psi}{*^{K^{(n)}}_{\Lambda^{(n)}}^{,\dots,}_{,\dots,}^{K^{(3)},}_{\Lambda^{(3)},}^{K^{(2)}}_{\Lambda^{(2)}}}\Big\rangle
$ for which $K^{(m)} = \Lambda^{(m)}$ for all $m\in\{2,\dots,n\}$ is the highest weight of the irrep $K^{(n)}$.

The relative phases between the canonical basis states are fixed by comparing with the phase of the HWS~\cite{Gelfand1988}.
Matrix elements of the simple raising operators $C_{\ell,\ell+1},\, \ell \in\{1,\dots,n-1\}$ are set as positive~\cite{Barut1986}.
Thus, I impose the following additional constraint on the canonical basis states
\begin{equation}
\mel{\psi_\mathrm{hws}}{c_{1,2}^{p_{1,2}} c_{2,3}^{p_{2,3}}\cdots c_{n-1,n}^{p_{n-1,n}}}{\tensor*{\psi}{*^{K^{(n)}}_{\Lambda^{(n)}}^{,\dots,}_{,\dots,}^{K^{(3)},}_{\Lambda^{(3)},}^{K^{(2)}}_{\Lambda^{(2)}}}} \ge 0,
\label{Eq:PhaseConvention}
\end{equation}
for all canonical basis states, for positive integers $p_{\ell,\ell+1}$.
%If depend on the weight~$\Lambda^{(n)}$ of the irrep.

%To finalize the standardization of our construction, I need to fix the relative phases between states in different $\mathfrak{su}(m-1)$ irreps of $\mathfrak{su}(m)$.
%The sign of various terms in the polynomial expression of the highest weight state is fixed by the ordering of rows and columns in the determinental expression of this state, and I use the highest weight as reference.
%We fix the overall phase of a state relative to the highest weight so the matrix elements of the simple raising operators between any two states is positive.
%This is a recursive procedure since, in general, the simple raising operators do not connect arbitrary states with the highest weight state.
%In practice, this phase convention implies that, for any state $\ket{\psi}$, I have
%$\bra{\psi_\mathrm{hws}} c_{1,2}^{p_{1,2}} c_{2,3}^{p_{2,3}}... c_{n-1,n}^{p_{n-1,n}}\ket{\psi} > 0$ for those values of the exponents $p_{k,k+1}$ (which are determined completely by the weight of $\ket{\psi}$ ) which produce a non-zero matrix elements.

$\mathcal{D}$-functions are the matrix elements of $\mathrm{SU}(n)$ irreps.
The rows and columns of $\mathrm{SU}(n)$ matrix representations are labelled by $\mathrm{SU}(n)$ basis states.
The expression for $\mathrm{SU}(n)$ $\mathcal{D}$-functions generalize those of the $\mathrm{SU}(2)$ $\mathcal{D}$-functions~(\ref{Eq:D}) with $M,M^\prime$ replaced by suitable labels for weights and $J$ replaced by suitable subalgebra labels.
\begin{defn}[$\mathcal{D}$-functions]
$\mathcal{D}$-functions of an $\mathrm{SU}(n)$ transformation $V(\Omega)$ are
\begin{equation}
\tensor*{\D}{*^{K^{(n)}}_{\Lambda^{(n)}}^{,\dots,}_{,\dots,}^{K^{(3)},}_{\Lambda^{(3)},}^{K^{(2)}}_{\Lambda^{(2)}}^;_;^{K^{\prime(n)}}_{\Lambda^{\prime(n)}}^{,\dots,}_{,\dots,}^{K^{\prime(3)},}_{\Lambda^{\prime(3)},}^{K^{\prime(2)}}_{\Lambda^{\prime(2)}}}
(\Omega)
\defeq \Big\langle{\tensor*{\psi}{*^{K^{(n)}}_{\Lambda^{(n)}}^{,\dots,}_{,\dots,}^{K^{(3)},}_{\Lambda^{(3)},}^{K^{(2)}}_{\Lambda^{(2)}}}}\Big|V(\Omega)\Big|{\tensor*{\psi}{*^{K^{\prime(n)}}_{\Lambda^{\prime(n)}}^{,\dots,}_{,\dots,}^{K^{\prime(3)},}_{\Lambda^{\prime(3)},}^{K^{\prime(2)}}_{\Lambda^{\prime(2)}}}\Big\rangle
},
\label{Eq:Defined}
\end{equation}
where $\Omega = \{\omega_1,\omega_2,\dots,\omega_{n^2-1}\}$ is the set of $n^2-1$ independent angles that parameterize an $\mathrm{SU}(n)$ transformation~\cite{Reck1994}.
\end{defn}
\noindent Note that $\mathrm{SU}(n)$ $\mathcal{D}$-functions~(\ref{Eq:Defined}) are non-zero only if the left and the right states belong to the same $\mathrm{SU}(n)$ irrep, i.e.,
\begin{equation}
K^{(n)} \ne K^{\prime(n)}\implies \,\tensor*{\D}{*^{K^{(n)}}_{\Lambda^{(n)}}^{,\dots,}_{,\dots,}^{K^{(3)},}_{\Lambda^{(3)},}^{K^{(2)}}_{\Lambda^{(2)}}^;_;^{K^{\prime(n)}}_{\Lambda^{\prime(n)}}^{,\dots,}_{,\dots,}^{K^{\prime(3)},}_{\Lambda^{\prime(3)},}^{K^{\prime(2)}}_{\Lambda^{\prime(2)}}}
(\Omega) = 0.
\label{Eq:OrthogonalityDs}
\end{equation}
$\mathcal{D}$-functions of an irrep $K$ refer to those $\mathcal{D}$-functions for which $K^{(n)} = K^{\prime(n)} = K$.
The notation~(\ref{Eq:Defined}) is employed in our $\mathcal{D}$-function algorithm presented in Section~\ref{Sec:SunAlgorithms}.

%In principle the $\mathcal{D}$-functions of $\mathrm{SU}(n)$ can obtained by exponentiating the representation of the algebra.
%Multiple techniques have been devised to obtain matrix elements of the algebra elements because observables and Hamiltonians are often expressed as elements of an algebra
%However, exponentiation is rarely practical: for one matrix realizations typically grow exponentially in the size of the problem; for another one might require only specific matrix elements of matrix representing a transformation rather than the entire matrix itself.
We approach the task of constructing $\mathrm{SU}(n)$ $\mathcal{D}$-functions by using boson realizations of $\mathrm{SU}(n)$ states.
In the next section, I define boson realizations and illustrate the construction of $\mathrm{SU}(2)$ $\mathcal{D}$-functions using $\mathrm{SU}(2)$ boson realizations.

%----------------------------------------%
\section{Boson realizations of $\mathrm{SU}(n)$}
\label{Sec:BosonRealizations}
%----------------------------------------%
In this section, I describe boson realizations, which map $\mathfrak{su}(n)$ operators and carrier-space states to operators and states of a system of $n-1$ species of bosons on $n$ sites respectively.
We first present the mapping for $n=2$ and illustrate $\mathrm{SU}(2)$ $\mathcal{D}$-functions calculation using the $\mathrm{SU}(2)$ boson realization.
The section concludes with a discussion of boson realizations of $\mathrm{SU}(n)$ for arbitrary $n$.

The commutation relations~(\ref{Eq:SU2RaisingLoweringCommutations}) of $\{C_{1,2},C_{2,1},H_1\}$ are reproduced by number-preserving bilinear products of creation and annihilation operators that act on a two-site bosonic system.
Specifically, the $\mathfrak{su}(2)$ operators have the boson realization
\begin{equation}
 C_{1,2} \mapsto c_{1,2} \defeq a_1^\dagger a_2\, ,\quad
C_{2,1} \mapsto c_{2,1} \defeq a_2^\dagger a_1 \, ,\quad
H_1 \mapsto h_1 \defeq a_1^\dagger a_1 - a_2^\dagger a_2,
\label{Eq:su2bosonmap}
\end{equation}
where the bosonic creation and annihilation operators obey the commutation relations
\begin{equation}
\left[a_i,a_j^\dagger\right] = \delta_{ij}\mathds{1}, \quad \left[a_i,a_j\right] = \left[a_i^\dagger,a_j^\dagger\right] = 0.
\label{Eq:ccr}
\end{equation}
Here and henceforth, I use lower-case symbols for boson realizations of the respective upper-case symbols.
Explicitly,
\begin{equation}
[h_1,c_{1,2}] = 2c_{1,2}\,\qquad
[h_1,c_{2,1}] = - 2c_{2,1}\,\qquad
[c_{1,2},c_{2,1}] = h_1.
\end{equation}
The operators $\{c_{1,2},c_{2,1},h_1\}$ also span the complex extension of the $\mathfrak{su}(2)$ Lie algebra.

Boson realizations map the states in the carrier space of $\mathrm{SU}(2)$ to the states of a two-site bosonic system.
Specifically, each basis state of the $(2J+1)$-dimensional $\mathrm{SU}(2)$ irrep maps
\begin{equation}
\ket{J,M} \mapsto \frac{(a_1^\dagger)^{J+M} (a_2^\dagger)^{J-M} }{\sqrt{(J+M)!(J-M)!}}
\ket{0}
\label{Eq:su2polynomialstate}
\end{equation}
to the state of a two-site system with $J+M$ and $J-M$ bosons in the two sites respectively.

The $(2J+1)$-dimensional irreps of $\mathrm{SU}(2)$ map to number-preserving transformations on a two-site system of $2J$ bosons in the basis of Equation~(\ref{Eq:su2polynomialstate}).
The elements of these $(2J+1)\times(2J+1)$ matrices are the $\mathrm{SU}(2)$ $\mathcal{D}$-functions
\begin{equation}
\mathcal{D}^J_{M^\prime M}(\Omega)
		\defeq \bra{J,M^\prime}V(\Omega)\ket{J,M}
\label{Eq:D}
\end{equation}
for irrep $J$ and row and column indices $M^\prime, M$.
The expression for $\mathcal{D}$-functions~(\ref{Eq:D}) of $\mathrm{SU}(2)$ element $V(\Omega)$ can be calculated by noting that the creation operators transform under the action of $V$ of Equation~(\ref{Eq:2x2Rmatrix}) according to
\begin{align}
a_1^\dagger\to V_{11}a_1^\dagger + V_{12}a_2^\dagger, \nonumber \\
a_2^\dagger\to V_{21}a_1^\dagger + V_{22}a_2^\dagger,
\end{align}
where $V$ is the $2\times 2$ fundamental representation of $V(\Omega)$.
The state $\ket{J,M}$~(\ref{Eq:su2polynomialstate}) thus transforms to
\begin{equation}
\ket{J,M}\to \frac{\left(V_{11}a_1^\dagger + V_{12}a_2^\dagger\right)^{J+M}\left(V_{21}a_1^\dagger + V_{22}a_2^\dagger\right)^{J-M}}
{\sqrt{(J+M)!(J-M)!}}\ket{0}\,
\label{Eq:transformedsu2polynomialstate}
\end{equation}
as the vacuum state $\ket{0}$ is invariant under the action $V$.
Using Equations~(\ref{Eq:su2polynomialstate}) and (\ref{Eq:transformedsu2polynomialstate}), I obtain
\begin{equation}
\mathcal{D}^J_{M^\prime M}(\Omega) =\Bigg\langle0\Bigg|\frac{a_1^{J+M'}a_2^{J-M'}\left(V_{11}a_1^\dagger + V_{12}a_2^\dagger\right)^{J+M}\left(V_{21}a_1^\dagger + V_{22}a_2^\dagger\right)^{J-M}}{\sqrt{(J+M')!(J-M')!}\sqrt{(J+M)!(J-M)!}}\Bigg|0\Bigg\rangle,
\label{Eq:DFunction}
\end{equation}
which can be evaluated using the commutation relations of the creation and annihilation operators~(\ref{Eq:ccr}).%
\footnote{
A useful computational shortcut involves the map
\begin{equation}
a_k^\dagger\to x_k\, ,\quad a_\ell\to \frac{\partial}{\partial x_\ell}\,,\quad k,\ell \in\{1,2\},
\label{Eq:PolynomialMap}
\end{equation}
which preserves the boson commutation relations.
The map~(\ref{Eq:PolynomialMap}) transforms the vector~$\ket{J,M}$~(\ref{Eq:su2polynomialstate}) into a formal polynomial and the corresponding dual vector~$\bra{J,M}$ into a linear differential operator in the dummy variables $x_1,x_2$.
The $\mathcal{D}$-function~(\ref{Eq:DFunction}) is thus evaluated as the action of a linear differential operator on a polynomial in $x_1,x_2$.
}

In Chapter~\ref{Chap:Sun}, our objective is to generalize Equations~(\ref{Eq:su2bosonmap}) and (\ref{Eq:su2polynomialstate}) systematically from $n = 2$ to arbitrary $n$.
In the remainder of this section, I define boson realizations of operators and carrier-space states of $\mathfrak{su}(n)$.
Furthermore, I construct the boson realization for the HWS of arbitrary $\mathrm{SU}(n)$ irreps.

$\mathrm{SU}(n)$ boson realizations map $\mathrm{SU}(n)$ states $\Big|{\tensor*{\psi}{*^{K^{(n)}}_{\Lambda^{(n)}}^{,\dots,}_{,\dots,}^{K^{(3)},}_{\Lambda^{(3)},}^{K^{(2)}}_{\Lambda^{(2)}}}}\Big\rangle$ and $\mathfrak{su}(n)$ operators to states and operators of a system of bosons on $n$ sites.
Bosons are labelled based on the site $i\in \{1,2,\dots,n\}$ at which they are situated and by an internal DOF, which is denoted by an additional subscript on the bosonic operators.
The bosonic creation and annihilation operators on this system are
\begin{align}
a^\dagger_{i,j}\colon i\in \{1,2,\dots,n\}, j \in \{1,2,\dots,n-1\}&, \quad \text{(Creation)}\\
a_{k,l}\colon k\in \{1,2,\dots,n\}, l\in \{1,2,\dots,n-1\}&,\quad \text{(Annihilation)},
\end{align}
where the first label in the subscript is the usual index of the site occupied by the boson.
The second index refers to the internal degrees of freedom of the boson.
Each boson can have at most $n-1$ possible internal states to ensure that basis states can be constructed for arbitrary irreps.
In photonic experiments, this internal DOF could correspond to the polarization, frequency, orbital angular momementum or the time of arrival of photons.

The $\mathfrak{su}(n)$ operators are mapped to number-preserving bilinear products of boson creation and annihilation operators.
Specifically, raising and lowering operators $C_{i,j}$ of $\mathfrak{su}(n)$ map to bosonic operators $c_{i,j}$ according to
\begin{equation}
C_{i,j} \mapsto c_{i,j} \defeq \sum_{k=1}^{n-1} a^\dagger_{i,k} a_{j,k}.
\end{equation}
Operators $\{c_{i,j}\}$ make bosons hop from site $j$ to site $i$.
The operators $h_i$ are the image of the Cartan operators $H_i$:
\begin{equation}
H_{i} \mapsto h_{i} \defeq a^\dagger_i a_i - a^\dagger_{i+1} a_{i+1}.
\end{equation}
Operators $\{h_i\}$ count the difference in the total number of bosons at two sites and commute among themselves.
As usual, I used the upper-case symbols to denote the $\mathfrak{su}(n)$ elements and the corresponding lower-case symbols for the respective boson operators.
%one can decompose any irreducible representation of the unitary group {U(n)} into a canonical basis (the Gelfand-Tsetlin basis), indexed by integer-valued Gelfand-Tsetlin patterns, by first decomposing this representation into irreducible representations of {U(n-1)}, then {U(n-2)} and so forth.

The boson realizations of the basis states of $\mathrm{SU}(n)$ are obtained by the action of polynomials in creation operators $\left\{a^\dagger_{i,j}\colon i\in \{1,2,\dots,n\},j\in \{1,2,\dots,n-1\}\right\}$ on the $n$-site vacuum state $\ket{0}$.
Each term in the polynomial is a product of
\begin{equation}
N_K= \kappa_1 + 2\kappa_2 + \dots + (n-1)\kappa_{n-1}
\label{Eq:N}
\end{equation}
boson creation operators for basis states in irreps $K = (\kappa_1,\kappa_2,\dots,\kappa_{n-1})$.
Therefore, an $\mathrm{SU}(n)$ basis state is specified by the coefficient of a polynomial consisting of terms that are products of $N_K$ creation operators.

Now I introduce a compressed notation for $\mathcal{D}$-functions.
The compressed notation is based on the boson realization of $\mathrm{SU}(n)$ states and on the orthogonality~(\ref{Eq:OrthogonalityDs}) of $\mathcal{D}$-functions.
In this notation, the $\mathcal{D}$-function
\begin{equation}
\tensor*{\D}{*^{K^{(n)}}_{\Lambda^{(n)}}^{,\dots,}_{,\dots,}^{K^{(3)},}_{\Lambda^{(3)},}^{K^{(2)}}_{\Lambda^{(2)}}^;_;^{K^{\prime(n)}}_{\Lambda^{\prime(n)}}^{,\dots,}_{,\dots,}^{K^{\prime(3)},}_{\Lambda^{\prime(3)},}^{K^{\prime(2)}}_{\Lambda^{\prime(2)}}}
(\Omega); \quad K^{(n)} = K^{\prime(n)}
\end{equation}
is represented by
\begin{equation}
\mathcal{D}^{K^{(n)}}_{\nu_{1}\dots \nu_{n},K^{(n-1)},\dots,K^{(2)};\nu^{\prime }_{1}\dots \nu^{\prime }_{n},K^{\prime (n-1)},\dots,K^{\prime (2)}}
\label{Eq:CompressedNotation}
\end{equation}
where $\nu_{1}\dots \nu_{n}$ are the bosonic occupation numbers of the respective states, suffice to uniquely identuify each of the weights $\{\lambda^{(m)}; 1<m\le n\}$.
Sections~\ref{Sec:ThreePhotons} and~\ref{Sec:SunImmanantResult} employ the compressed notation~(\ref{Eq:CompressedNotation}).

The HWS of a given $\mathrm{SU}(n)$ irrep can be explicitly constructed in the boson realization (as polynomials in creation and annihilation operators) according to the following lemma.
\ignorespaces
\begin{lem}[Boson realization of HWS~\cite{Moshinsky1962a,Moshinsky1962}]
\label{Lemma:HWS}
The bosonic state
\begin{equation}
\ket{\psi^K_\mathrm{HWS}}
=
\Det \begin{pmatrix}a_{1,1}^\dagger&\dots&a_{1,n-1}^\dagger\\
\vdots&\ddots&\vdots\\
a_{n-1,1}^\dagger&\dots &a_{n-1,n-1}^\dagger\end{pmatrix}^{\kappa_{n-1}}\cdots\,
\Det \begin{pmatrix}a_{1,1}^\dagger&a_{1,2}^\dagger\\a_{2,1}^\dagger&a_{2,2}^\dagger\end{pmatrix}^{\kappa_2}
\Det \left({a_{1,1}^\dagger}\right)^{\kappa_1}\ket{0}
\label{Eq:Hws}
\end{equation}
is a HWS for a given $\mathrm{SU}(n)$ irrep $K = (\kappa_1,\kappa_2,\dots,\kappa_{n-1})$.
\end{lem}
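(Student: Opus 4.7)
The plan is to verify the two defining properties of a highest-weight state directly from the explicit boson expression: (a) annihilation by every raising operator $c_{i,j}$ with $i<j$, and (b) the eigenvalue relations $h_p\ket{\psi^K_{\mathrm{HWS}}}=\kappa_p\ket{\psi^K_{\mathrm{HWS}}}$, so that the weight equals $K=(\kappa_1,\ldots,\kappa_{n-1})$. Uniqueness of the HWS for each dominant weight then identifies the irrep as the one labelled by $K$.

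The key reduction for (a) is the operator identity $[c_{i,j},D_m]=0$ for every minor $D_m\defeq\Det(a^\dagger_{p,\ell})_{p,\ell=1,\ldots,m}$ and every $i<j$. Since $c_{i,j}\ket{0}=0$, for any polynomial $P$ in the creation operators one has $c_{i,j}P\ket{0}=[c_{i,j},P]\ket{0}$. The commutation rule $[c_{i,j},a^\dagger_{p,\ell}]=\delta_{j,p}a^\dagger_{i,\ell}$, combined with the Leibniz rule, shows that $[c_{i,j},P]$ is again purely a polynomial in the $a^\dagger$'s; since distinct creation-operator monomials produce linearly independent vectors when applied to $\ket{0}$, the vanishing of $[c_{i,j},D_m]\ket{0}$ upgrades to $[c_{i,j},D_m]=0$ as an operator identity. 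A second application of Leibniz then yields $c_{i,j}D_{n-1}^{\kappa_{n-1}}\cdots D_1^{\kappa_1}\ket{0}=0$ once each $D_m$ is handled.

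To establish $[c_{i,j},D_m]=0$, I split into two cases. If $j>m$, no row index appearing in the expansion of $D_m$ equals $j$, hence $[c_{i,j},a^\dagger_{p,\ell}]=0$ term by term. If $i<j\le m$, the derivation $[c_{i,j},\cdot]$ applied to $D_m=\sum_\sigma\mathrm{sgn}(\sigma)\,a^\dagger_{1,\sigma(1)}\cdots a^\dagger_{m,\sigma(m)}$ replaces each factor $a^\dagger_{j,\sigma(j)}$ by $a^\dagger_{i,\sigma(j)}$; resumming, the result is the determinant of the same $m\times m$ matrix of creation operators with row $j$ overwritten by row $i$, so two rows coincide and the determinant vanishes. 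This completes (a).

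For (b), each $D_m$ deposits exactly one boson at each of the sites $1,\ldots,m$ and none at any other site, so $D_m^{\kappa_m}$ contributes $\kappa_m$ bosons at every site $p\le m$. Summing over the factors, the total occupancy at site $p$ is $n_p=\sum_{m\ge p}\kappa_m$, and the Cartan operator $h_p=\sum_{k}\bigl(a^\dagger_{p,k}a_{p,k}-a^\dagger_{p+1,k}a_{p+1,k}\bigr)$ acts as $n_p-n_{p+1}=\kappa_p$. Together with (a) this identifies $\ket{\psi^K_{\mathrm{HWS}}}$ as a highest-weight vector of weight $K$. The most delicate step is the upgrade from the vacuum-annihilation identity $[c_{i,j},D_m]\ket{0}=0$ to the operator identity $[c_{i,j},D_m]=0$; this is what enables the Leibniz reduction to the single-determinant case, and once in place the identical-rows vanishing of a determinant finishes the proof.
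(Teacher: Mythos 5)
Your proof is correct and carries out precisely the verification that the paper (citing Moshinsky) leaves to the reader: annihilation by every raising operator $c_{i,j}$, $i<j$, via the commutator $[c_{i,j},a^\dagger_{p,\ell}]=\delta_{j,p}a^\dagger_{i,\ell}$ and the vanishing of a determinant with a repeated row, together with the occupation-number count $n_p=\sum_{m\ge p}\kappa_m$ giving weight $K$. This is the standard and intended argument; the only (harmless) redundancy is the ``upgrade from vacuum'' step, since $[c_{i,j},D_m]$ is already the zero polynomial in the commuting creation operators once the two rows coincide.
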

\noindent \ignorespaces
One can verify that the state $\ket{\psi^K_\mathrm{hws}}$~(\ref{Eq:Hws}) is annihilated
\begin{equation}
c_{j,k} \ket{\psi^K_\mathrm{hws}} = 0 \quad \forall j<k
\end{equation}
by the action of any of the raising operators.

Thus, the HWS of any irrep can be constructed analytically using Lemma~\ref{Lemma:HWS}.
In the Section~\ref{Sec:SunAlgorithms}, I present an algorithm to construct each of the basis states of arbitrary $\mathrm{SU}(n)$ irreps.
Furthermore, I present an algorithm to compute expressions for $\mathrm{SU}(n)$ $\mathcal{D}$-functions in terms of the entries of the fundamental representation.
The next section shows how $\mathcal{D}$-functions and immanants are connected to outputs of linear interferometry.
%----------------------------------------%
\section{Determinants, immanants and permanents of a matrix}
\label{Sec:Immanants}
%----------------------------------------%
This section presents relevant definitions and background on the immanants of matrices, which include determinants and permanents as special cases.
The immanants of the interferometer transformation matrix are important in multi-photon interferometry because they manifest the permutation symmetries of the interfering photons~\cite{Guise2014}.
I define determinants, permanents and immanants before detailing the connection between immanants and interferometer output probabilities in the next section.

The determinant and permanent of a matrix are defined respectively as follows.
\begin{defn}[Determinant and permanent of a matrix]
The determinant of an $n\times n$ matrix $T$ is the antisymmetric sum
\begin{equation}
\operatorname{det}(T)\defeq \sum_{\sigma}\operatorname{sgn}(\sigma)T_{1\sigma(1)}T_{2\sigma(2)}\ldots T_{m\sigma(m)}\, ,
\end{equation}
where the sum is over all permutations $\sigma$ over $\{1,2,\dots, n\}$ and $\operatorname{sgn}(\sigma)$ represents the parity of the permutation.
Likewise, the permanent of $T$ is the symmetric sum
\begin{equation}
\operatorname{per}(T)\defeq \sum_{\sigma}T_{1\sigma(1)}T_{2\sigma(2)}\ldots T_{m\sigma(m)}
\label{Eq:Permanent}
\end{equation}
over all permutations $\sigma$ over $\{1,2,\dots, n\}$.
\end{defn}
\noindent Although the determinant of an $n\times n$ complex matrix can be computed or approximated in time polynomial in $n$, the approximation of a matrix permanent is a~\gls{SharpPC} problem~\cite{Bunch1974,Valiant1979}.
In fact, the proof~\cite{Aaronson2013} of hardness of the BosonSampling problem relies on the hardness of approximating the permanent.

Immanants are matrix functions whose hardness is intermediate between determinants and permanents~\cite{Burgisser2000}.
\begin{defn}[Immanant of a matrix]
The immanant~$\operatorname{imm}^{\{\tau\}}(T)$ of the $m\times m$ matrix
$T$, associated with the partition $\{\tau\}$, is defined as~\cite{Littlewood1950}
\begin{equation}
\operatorname{imm}^{\{\tau\}}(T)\defeq \sum_{\sigma}\chi^{\{\tau\}}(\sigma)\,T_{1\sigma(1)}T_{2\sigma(2)}\ldots T_{m\sigma(m)}\, ,
\label{defineimmanant}
\end{equation}
where
$\sigma\in \mathrm{S}_m$ permutes $k$ to $\sigma(k)$, and $\chi^{\{\tau\}}(\sigma)$ is the character%
\footnote{The character of an element $\sigma$ of an $\mathrm{S}_{n}$ representation $\{\lambda\}$ is the trace of the matrix representing~$\sigma$.
For instance, the trivial representation $\Yboxdim{6pt}\Yvcentermath1 \yng(3)$ of $\mathrm{S}_{3}$ represents each element of $\mathrm{S}_{3}$ as a $1\times 1$ matrix with entry unity. Thus, $\chi^{\Yboxdim{3pt}\Yvcentermath1 \yng(3)}(\sigma)= 1$ for all $\sigma\in\mathrm{S}_{3}$. The character functions for $\mathrm{S}_{2}$ and $\mathrm{S}_{3}$ are presented in Tables~\ref{Tab:S2} and~\ref{Tab:S3}~\cite{Littlewood1950}.
}
 of $\sigma$ in the irrep~$\{\tau\}$ of~$S_m$.
\end{defn}
\noindent The determinant and permanent of a matrix are the immanants associated with the alternating ($\chi^{\{\tau\}}(\sigma) = \operatorname{sgn}(\sigma)$) and the trivial $(\chi^{\{\tau\}}(\sigma) = 1)$ characters of $\mathrm{S}_{n}$ respectively.

\begin{table}
\centering
\begin{tabular}{|c|c|c|c|c|}
\hline
&$\mathds{1}$	& $\sigma_{ab} \defeq\{P_{12}\}$\\
\hline
$\lambda$ & $\chi^\lambda(\mathds{1})$ & $\chi^\lambda(\sigma_{ab})$\phantom {\Yvcentermath1\Yboxdim{8pt}$\yng(1,1)$} \\
\hline
\Yboxdim{6pt}\yng(2) & 1 & 1 \\
\Yboxdim{6pt}\yng(1,1)& 1 &-1 \\
\hline
\end{tabular}
\caption{Character table for $\mathrm{S}_{2}$. The first row labels the different elements of the permutation group and the first column comprises the $\mathrm{S}_{2}$ irreps.}
\label{Tab:S2}
\end{table}

\begin{table}
\centering
\begin{tabular}{|c|c|c|c|c|}
\hline
&$\mathds{1}$	& $\sigma_{ab} \defeq\{P_{12},P_{13},P_{23}\}$&$\sigma_{abc}\defeq\{P_{123},P_{132}\}$\\
\hline
$\lambda$ & $\chi^\lambda(\mathds{1})$ & $\chi^\lambda(\sigma_{ab})$\phantom {\Yvcentermath1\Yboxdim{8pt}$\yng(1,1)$}
&$\chi^\lambda(\sigma_{abc})$ \\
\hline
\Yboxdim{6pt}\yng(3) & 1 & 1 &1 \\
\Yboxdim{6pt}\yng(2,1) & 2 & 0 & -1 \\
\Yboxdim{6pt}\yng(1,1,1)& 1 &-1 &1 \\
\hline
\end{tabular}
\caption{Character table for $\mathrm{S}_{3}$. The first row labels the different elements of the permutation group. The first column comprises the three $\mathrm{S}_{3}$ irreps, which are identified with the permanent, immanant and determinant respectively}
\label{Tab:S3}
\end{table}
For $\mathrm{S}_{2}$, the determinant, which is labelled by \Yboxdim{6pt}\Yvcentermath1 \yng(1,1), and the permanent $\Yboxdim{6pt}\Yvcentermath1 \yng(2)$ are the only two immanants
\begin{align}
\Yboxdim{6pt}\Yvcentermath1 \yng(1,1) =&\,U_{11} U_{22} -U_{12} U_{21} \\
\Yboxdim{6pt}\Yvcentermath1 \yng(2) =&\,U_{11} U_{22} +U_{12} U_{21}.
\end{align}
In the case of $\mathrm{S}_{3}$, there are three immanants, including the permanent $\Yboxdim{6pt}\Yvcentermath1 \yng(3)$, determinant $\Yboxdim{6pt}\Yvcentermath1 \yng(1,1,1) $ and another immanants labelled by the partially symmetric representation~$\Yboxdim{6pt}\Yvcentermath1\yng(2,1)$ of the permutation group.
The three immanants of $3\times 3$ matrix~$U$ can be expressed in terms of matrix elements as
\begin{align}
\Yboxdim{6pt}\Yvcentermath1 \yng(1,1,1) =&\,U_{11} U_{22} U_{33}-U_{11} U_{23} U_{32}-U_{12} U_{21} U_{33}+U_{12} U_{23} U_{31}+U_{13}U_{21} U_{32}-U_{13} U_{22} U_{31}\\
\Yboxdim{6pt}\Yvcentermath1 \yng(2,1) =&\,2U_{11} U_{22} U_{33}-U_{12}U_{23}U_{31}-U_{13}U_{21}U_{32}\\
\Yboxdim{6pt}\Yvcentermath1 \yng(3) =&\,U_{11} x_{22} U_{33}+U_{11} U_{23} U_{32}-U_{12} U_{21} U_{33}+U_{12} U_{23} U_{31}+U_{13}U_{21} U_{32}+U_{13} U_{22} U_{31}.
\end{align}
In summary, I have defined the immanants of an $n\times n$ matrix and have given examples of immanants for two- and three-channel interferometers.

The probability of detecting photon coincidence permanent value of the transformation submatrix if indistinguishable single-photons are incident~\cite{Troyansky1996}.
If mutual time delay is introduced between the photons, other immanants also arise in the coincidence expression, as I detail in the next section for the three-photon case.

%----------------------------------------%
\section{$\mathrm{SU}(3)$ and $\mathrm{S}_{3}$ methods for three-photon interferometry}
\label{Sec:ThreePhotons}
%----------------------------------------%
Here I present the connection between three-photon coincidence probabilities and the immanants and $\mathcal{D}$-functions of the transformation matrix.
Specifically, I present expressions for three-fold coincidence probabilities when three controllably-delayed single photons are incident at an interferometer.

Consider three photons with square-integrable spectra $f(\omega)$ incident at a three-channel interferometer $U$ at arrival times $\tau_{1},\tau_{2}$ and $\tau_{3}$ respectively.
The probability of obtaining a three-fold coincidence at the output of the interferometer is
\begin{align}
	\wp
		=&\int \text{d}\omega_{1}\,\text{d}\omega_{2}\,\text{d}\omega_{3}\,
		\left|f(\omega_1)\right|^2
		\left|f(\omega_2)\right|^2
		\left|f(\omega_3)\right|^2\nonumber\\&
		\Big|\,U_{11}U_{22}U_{33}\mathrm{e}^{\mathrm{i}(\omega_1\tau_1+\omega_2\tau_2+\omega_3\tau_3)}
		+U_{11}U_{23}U_{32} \mathrm{e}^{\mathrm{i}(\omega_1\tau_1+\omega_3\tau_2+\omega_2\tau_3)}
		+U_{12}U_{21}U_{33} \mathrm{e}^{\mathrm{i}(\omega_2\tau_1+\omega_1\tau_2+\omega_3\tau_3)}\nonumber\\&
		+U_{12}U_{23}U_{31} \mathrm{e}^{\mathrm{i}(\omega_2\tau_1+\omega_3\tau_2+\omega_1\tau_3)}
		+U_{13}U_{21}U_{32} \mathrm{e}^{\mathrm{i}(\omega_3\tau_1+\omega_1\tau_2+\omega_2\tau_3)}
		+U_{13}U_{22}U_{31} \mathrm{e}^{\mathrm{i}(\omega_3\tau_1+\omega_2\tau_2+\omega_1\tau_3)}\Big|^2\nonumber.
\end{align}
This coincidence probability can be expressed in terms of sums of specific $\mathrm{SU}(3)$ $\mathcal{D}$-functions using the Schur-Weyl duality~\cite{Guise2014}
\begin{align}
	U_{1i}U_{1j}U_{1k}\equiv&\, \mathcal{D}^{(1,0)}_{i,(100)}\mathcal{D}^{(1,0)}_{j,(010)}\mathcal{D}^{(1,0)}_{k,(001)}\nonumber\\
		=&\,c^{\Yboxdim{4pt}\yng(3)}_{ijk}\mathcal{D}^{\Yboxdim{4pt}\yng(3)}_{(111)1;(111)1}
		+c^{\Yboxdim{4pt}\yng(2,1)}_{ijk,(11)}\mathcal{D}^{\Yboxdim{4pt}\yng(2,1)}_{(111)1;(111)1}+c^{\Yboxdim{4pt}\yng(2,1)}_{ijk,(00)}\mathcal{D}^{\Yboxdim{4pt}\yng(2,1)}_{(111)0;(111)0}
						\nonumber\\&
		+c^{\Yboxdim{4pt}\yng(2,1)}_{ijk,(10)}\mathcal{D}^{\Yboxdim{4pt}\yng(2,1)}_{(111)1;(111)0}
		+c^{\Yboxdim{4pt}\yng(2,1)}_{ijk,(01)}\mathcal{D}^{\Yboxdim{4pt}\yng(2,1)}_{(111)0;(111)1}
		\label{Eq:SWD}
		+c^{\Yboxdim{4pt}\yng(1,1,1)}_{ijk}\mathcal{D}^{\Yboxdim{4pt}\yng(1,1,1)}_{(111)0;(111)0}
\end{align}
for constants~$c^{\{\lambda\}}_{ijk}$ where I have used the compressed $\mathcal{D}$-function notation~(\ref{Eq:CompressedNotation}).
Similarly immanants.
Although expressions for the case of arbitrary time delay $\bm{\tau} \defeq (\tau_{1},\tau_{2},\tau_{3})$ depend on each of the six $\mathcal{D}$-functions of Equation~(\ref{Eq:SWD}), certain time-delay values contain fewer $\mathcal{D}$-functions.

For instance, consider the case of two photon arriving simultaneously and one photon delayed by time $\tau$ with respect to the other two and assuming Gaussian spectra for simplicity.
The coincidence probability is
\begin{equation}
\label{eq:2indenticalphotons}
	\wp=\vert A\vert^2 + \vert B\vert^2+ \vert C\vert^2+\text{e}^{-\sigma^2\tau^2}\big[(A^*+B^*)C+(A^*+C^*)B+(B^*+C^*)A \big],
\end{equation}
where
the functions $A$, $B$ and $C$ can be expressed in terms of $\mathcal{D}^{\{\lambda\}}$-functions by
\begin{align}
\label{eq:23symmetry}
	A=&\,\frac{1}{3}\left(\mathcal{D}^{\Yboxdim{4pt}\Yvcentermath1\yng(3)}_{(111)1;(111)1}
	+2 \mathcal{D}^{\Yboxdim{4pt}\Yvcentermath1\yng(2,1)}_{(111)1;(111)1}\right) \nonumber\\
	B=&\,\frac{1}{3}\left(\mathcal{D}^{\Yboxdim{4pt}\Yvcentermath1\yng(3)}_{(111)1;(111)1}- \mathcal{D}^{\Yboxdim{4pt}\Yvcentermath1\yng(2,1)}_{(111)1;(111)1}+\sqrt{3}\mathcal{D}^{\Yboxdim{4pt}\Yvcentermath1\yng(2,1)}_{(111)0;(111)1}\right)\\
	C=&\,\frac{1}{3}\left(\mathcal{D}^{\Yboxdim{4pt}\Yvcentermath1\yng(3)}_{(111)1;(111)1}
	- \mathcal{D}^{\Yboxdim{4pt}\Yvcentermath1\yng(2,1)}_{(111)1;(111)1}-\sqrt{3}\mathcal{D}^{\Yboxdim{4pt}\Yvcentermath1\yng(2,1)}_{(111)0;(111)1}\right).\nonumber
\end{align}
Alternatively, $A$, $B$ and $C$ can be expressed in terms of immanants by using the following identity~\cite{Kostant1995}
\begin{align}
{\Yboxdim{6pt}\Yvcentermath1 \yng(3)}&=D^{(3)}_{111(1);111(1)}(\Omega)\, ,\nonumber \\
{\Yboxdim{6pt}\Yvcentermath1 \yng(2,1)}&=D^{(11)}_{111(1);111(1)}(\Omega)
+D^{(11)}_{111(0);111(0)}(\Omega)\, ,\\
{\Yboxdim{6pt}\Yvcentermath1 \yng(1,1,1)}&=D^{(0)}_{000(0);000(0)}(\Omega)=1\, ,\nonumber
\end{align}
which connects the $\mathcal{D}$-functions and immanants of the fundamental matrix representation of the interferometer transformation.
Permuting the columns of the interferometer matrix, the functions $A$, $B$ and $C$ are related to immanants by~\cite{Guise2014}
\begin{align}
\label{eq:ABC}
A&=U_{11}U_{22}U_{33} +U_{11}U_{23}U_{32}=\frac{1}{3}\left({\Yboxdim{6pt}\Yvcentermath1\yng(3)}
	+{\Yboxdim{6pt}\Yvcentermath1 \yng(2,1)}_{123}
	+{\Yboxdim{6pt}\Yvcentermath1 \yng(2,1)}_{132} \right),\nonumber\\
B&=U_{12}U_{21}U_{33}+U_{12}U_{23}U_{31}=\frac{1}{3}\left({\Yboxdim{6pt}\Yvcentermath1\yng(3)}
	+{\Yboxdim{6pt}\Yvcentermath1 \yng(2,1)}_{213}
	+{\Yboxdim{6pt}\Yvcentermath1 \yng(2,1)}_{231} \right),\\
C&=U_{13}U_{22}U_{31}+U_{13}U_{21}U_{32}=\frac{1}{3}\left({\Yboxdim{6pt}\Yvcentermath1\yng(3)}
	+{\Yboxdim{6pt}\Yvcentermath1 \yng(2,1)}_{312}
	+{\Yboxdim{6pt}\Yvcentermath1 \yng(2,1)}_{321}\right),\nonumber
\end{align}
where the subscripts refer to the permutation operation acted upon the columns of interferometer matrix.

Note that the immanants of the form $\Yboxdim{6pt}\Yvcentermath1 \yng(1,1,1)$ or the $\mathcal{D}^{\Yboxdim{4pt}\Yvcentermath1\yng(1,1,1)}$~functions do not arise in these expressions.
Symmetry arguments justify the only certain immanants and certain $\mathcal{D}$-functions arise in the expression.

In conclusion, I have presented a treatment of three-photon three-channel interferometry in terms of immanants and $\mathcal{D}$-functions of the interferometer transformation.
Chapter~\ref{Chap:Sun} presents methods that enable the generalization of these group-theoretic methods to the multi-photon multi-channel case.

%----------------------------------------%
\section{Cosine-sine decomposition}
\label{Sec:CSD}
%----------------------------------------%
This section presents the relevant background for our procedure to realize arbitrary discrete unitary transformation on the spatial and internal modes of light.
The procedure, which is presented in Chapter~\ref{Chap:Design}, relies on iteratively performing the cosine-sine decomposition~\gls{csd}~\cite{Stewart1977,Stewart1982,Sutton2009}.
In this section, I detail the CSD and present a CSD-based realization of a $4\times 4$ unitary matrix using two spatial and two polarization modes of light.

The CSD factorizes an arbitrary unitary $(m+n)\times (m+n)$ unitary matrix into a product of three block-diagonal matrices as follows.
\begin{thm}\label{Thm:CSD}
For each $(m+n)\times (m+n)$ unitary matrix $U_{m+n}$, there exist unitary matrices $\mathds{L}_{m+n},\mathds{S}_{m+n},\mathds{R}_{m+n}$, such that
\begin{equation}
U_{m+n}= \mathds{L}_{m+n} \left(\mathds{S}_{2m}\oplus \mathds{1}_{n-m}\right)\mathds{R}_{m+n},
\label{Eq:csd}
\end{equation}
where $\mathds{L}_{m+n}$ and $\mathds{R}_{m+n}$ are block-diagonal
\begin{equation}
\mathds{L}_{m+n} =
\left(\begin{array}{c|c}
L_{m}& {0} \\
\hline
0 & L_{n}'
\end{array}\right),~
\mathds{R}_{m+n} =
\left(\begin{array}{c|c}
R^{\dagger}_{m}& {0} \\
\hline
0 & R^{\prime\dagger}_{n}
\end{array} \right)
\end{equation}
and $\mathds{S}_{2m}$ is an orthogonal cosine-sine (\glspl{csm}) matrix
\begin{align}
\mathds{S}_{2m} &\equiv \mathds{S}_{2m}(\theta_{1},\dots,\theta_{m})\nonumber\\
&\defeq\left(\arraycolsep=1pt\def\arraystretch{0.9}\begin{array}{ccc|ccc}
\cos\theta_1 & & &\sin\theta_1 & \\
%&\cos\theta_2 & & && &\sin\theta_2 & &\\
& \ddots & && \ddots &\\
& & \cos\theta_m &&&\sin\theta_m \\
\hline
-\sin\theta_1 & & & \cos\theta_1 & & \\
%&-\sin\theta_2 & & & &\cos\theta_2 & & &\\
& \ddots & && \ddots & \\
& & -\sin\theta_m &&&\cos\theta_m
\end{array}\right).
\label{Eq:CSMatrix}
\end{align}\end{thm}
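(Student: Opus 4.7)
The plan is to prove the cosine-sine decomposition constructively by applying the singular value decomposition (SVD) to the upper-left block of $U_{m+n}$ and then extending it to a factorization of the full matrix via unitarity. Without loss of generality, I assume $m \le n$.

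First I partition
\begin{equation}
U_{m+n} = \begin{pmatrix} U_{11} & U_{12} \\ U_{21} & U_{22} \end{pmatrix},
\end{equation}
with $U_{11}$ of size $m \times m$ and $U_{22}$ of size $n \times n$, and take the SVD $U_{11} = L_m C R_m^\dagger$, where $L_m, R_m$ are unitary and $C = \mathrm{diag}(c_1, \ldots, c_m)$ has non-negative entries. Column unitarity $U_{11}^\dagger U_{11} + U_{21}^\dagger U_{21} = \mathds{1}_m$ yields $U_{21}^\dagger U_{21} = R_m(\mathds{1}_m - C^2) R_m^\dagger$, which forces $c_i \in [0, 1]$; I therefore write $c_i = \cos\theta_i$ and $s_i = \sin\theta_i$ for $\theta_i \in [0, \pi/2]$, producing the angles needed in the CS matrix~(\ref{Eq:CSMatrix}).

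Next, I construct $L_n'$. The relation above implies $(U_{21} R_m)^\dagger (U_{21} R_m) = S^2$, where $S = \mathrm{diag}(s_1, \ldots, s_m)$, so the columns of the $n \times m$ matrix $U_{21} R_m$ are mutually orthogonal with norms $s_1, \ldots, s_m$. When $s_i > 0$, normalizing the $i$-th column (with a sign adjustment to match the CS convention) produces a family of orthonormal vectors in $\mathds{C}^n$, which I extend to a full $n \times n$ unitary $L_n'$ satisfying
\begin{equation}
U_{21} R_m = L_n' \begin{pmatrix} -S \\ 0 \end{pmatrix},
\end{equation}
where the zero block has size $(n-m) \times m$. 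Columns of $L_n'$ corresponding to $s_i = 0$, together with the final $n - m$ columns, are fixed by orthogonal completion. A parallel argument using row unitarity $U_{11} U_{11}^\dagger + U_{12} U_{12}^\dagger = \mathds{1}_m$ applied to $L_m^\dagger U_{12}$ determines the top $m$ rows of $R_n'^\dagger$ so that $L_m^\dagger U_{12} = \begin{pmatrix} S & 0 \end{pmatrix} R_n'^\dagger$.

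Finally, I verify that the $L_n'$ and $R_n'$ constructed independently from $U_{21}$ and $U_{12}$ also diagonalize the residual block as $L_n'^\dagger U_{22} R_n' = \mathrm{diag}(C, \mathds{1}_{n-m})$; together with the previous steps this delivers the claimed factorization. The main obstacle is the degenerate case where some $\theta_i$ coincide (so the SVD of $U_{11}$ is not unique) or where some $s_i$ or $c_i$ vanish (so part of $L_n'$ or $R_n'$ is underdetermined by the earlier steps). I plan to resolve this either by exploiting the residual rotational freedom within degenerate singular-value subspaces to force compatibility on $U_{22}$, or, more cleanly, by establishing the decomposition first on the open dense subset of unitaries with simple $\theta_i \in (0, \pi/2)$ and then extending to all of $\mathrm{U}(m+n)$ by continuity and compactness.
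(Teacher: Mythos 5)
Your construction is correct, and it reaches the factorization by a recognizably different constructive route than the paper's Appendix~\ref{Appendix:Construction}. The paper also starts from the block partition and the unitarity relations, but its key lemma is algebraic: since $AA^\dagger+BB^\dagger=\mathds{1}_m$ and $A^\dagger A+C^\dagger C=\mathds{1}_m$, the relevant Gram matrices are commuting normal matrices, so $A$ and $B$ share left-singular vectors and $A$ and $C$ share right-singular vectors; all four blocks are then diagonalized simultaneously using the SVDs of $A$ \emph{and} $D$, after which a diagonal phase matrix $P$ is absorbed into $L_m$ and $R_m$ and orthogonality of the resulting real middle factor forces the CS form. You instead take the SVD of the top-left block only and build $L_n'$ and $R_n'$ explicitly, by normalizing the automatically orthogonal columns of $U_{21}R_m$ and rows of $L_m^\dagger U_{12}$ and completing to unitaries, with $U_{22}$ never decomposed but forced into $\operatorname{diag}(C,\mathds{1}_{n-m})$ (up to a residual unitary on the last $n-m$ coordinates, absorbable into $L_n'$) purely by orthonormality of the columns of the middle factor. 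What each buys: the paper gets the four diagonalizations from one lemma but is silent about choosing the simultaneous diagonalizers consistently when singular values are degenerate, and needs the extra phase-absorption step; your route costs more bookkeeping but isolates the degeneracies honestly, and your density-plus-compactness closing argument disposes of them cleanly for an existence statement. One simplification: distinctness of the $\theta_i$ is never actually used in your generic-case computation --- the identity $(U_{21}R_m)^\dagger(U_{21}R_m)=S^2$ and the subsequent orthogonality argument hold verbatim for repeated singular values --- so the only genuinely degenerate case you need the limiting argument for is $s_i=0$ (or, symmetrically, $c_i=0$).
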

The decomposition of $U_{m+n}$ into $\mathds{L}_{m+n}$, $\mathds{S}_{2m}$ and $\mathds{R}_{m+n}$ is depicted in Fig.~\ref{Fig:CSD}.
Here and in the remainder of this section, the subscripts of the matrix symbols denote the respective dimensions of the matrix.

\begin{figure}[h]\begin{center}
 \includegraphics[width=\textwidth]{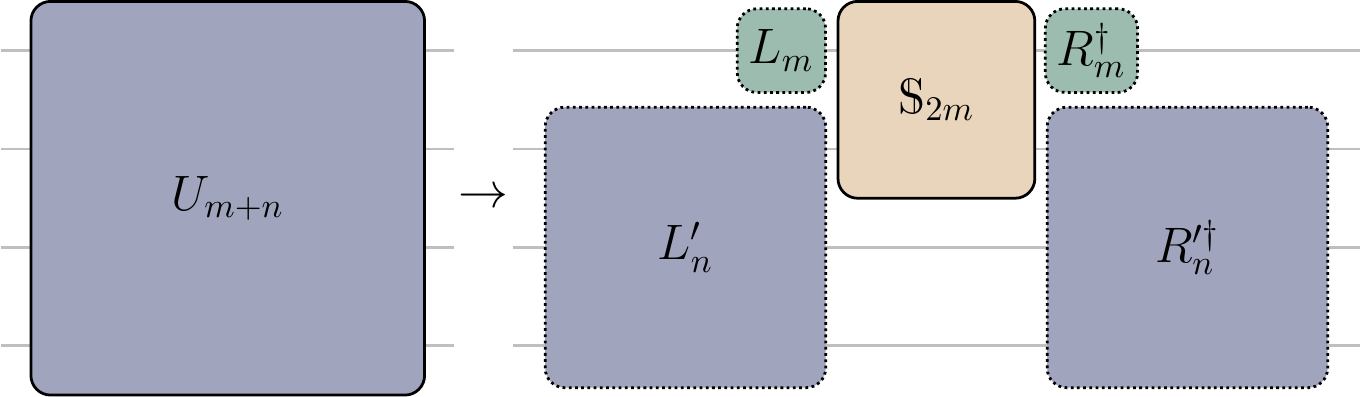}
\end{center}
 \caption{Depiction of the CSD.
 $U_{m+n}$ is an $(m+n)\times (m+n)$ unitary matrix.
 The CSD factorizes $U_{m+n}$ into the block diagonal matrices $\mathds{L}_{m+n}$, $\mathds{S}_{2m}$ and $\mathds{R}^{\dagger}_{m+n}$.
 The boxes labelled $L_{m}$ and $L'_{n}$ represent the block diagonal matrix $\mathds{L}_{m+n} = L_{m}\oplus L'_{n}$. Likewise for $\mathds{R}^{\prime\dagger}_{m+n} = R^{\prime\dagger}_{m}\oplus R^{\prime\dagger}_{n} $ and $\mathds{S}_{m+n} = \mathds{S}_{m+n}\oplus \mathds{1}_{n-m}$.
 }
 \label{Fig:CSD}
\end{figure}

A constructive proof of Theorem~\ref{Thm:CSD} is presented in Appendix~\ref{Appendix:Construction}.
The matrices $\mathds{L}_{m+n}$, $\mathds{S}_{2m}$ and $\mathds{R}_{m+n}$ can be constructed using the singular value decomposition as follows.
In order to perform CSD on $U_{m+n}$, I express it as a $2\times 2$ block matrix
\begin{equation}
 U_{m+n} \equiv\left(\begin{array}{c|c}
 A&B\\\hline
 C&D
 \end{array}\right),
\end{equation}
where $A$ and $D$ are square complex matrices of dimension $m\times m$ and $n\times n$ respectively, and $B$ and $C$ are rectangular with respective dimensions $m\times n$ and $n\times m$.
Each row of the matrix $L_{m}$ ($R_{m}$) is a left-singular (right-singular) vector of $A$, as is proved in Appendix~\ref{Appendix:Construction}.
Similarly, $L^{\prime}_{n}$ and $R_{n}^{\prime}$ are the left- and right-singular vectors of $D$.
Finally, $\{\cos\theta_{i}\}$ is the set of singular values of $A$.
The singular vectors and values of any complex matrix can be computed efficiently using established numerical techniques~\cite{Golub1965,Klema1980,Anderson1992,Press1996}.

Now I illustrate the realization of an arbitrary $4\times 4$ unitary matrix as a linear optical transformation on two spatial and two polarization modes~\cite{Goyal2015}.
The realization is enabled by the CSD, which decomposes the given matrix $U_{4}$ according to
\begin{equation}
 U_4 = \left(
\begin{array}{c|c}
 L_2 & \\
\hline
& L'_2
\end{array}\right)
\mathds{S}_4\left(
\begin{array}{c|c}
 R_2^\dagger & \\
\hline
& R^{\prime\dagger}_2
\end{array}\right)
\label{Eq:CSD4}
\end{equation}
for $m = n = 2$.
The decomposition of unitary matrix $U_{4}$ is depicted in Fig.~\ref{Fig:ns2np2}(a).
By definition, $U_{4}$ acts on the four-dimensional space $\mathcal{H}_{4}$, which I identify with the combined space
\begin{equation}
\mathcal{H}_{4} = \mathcal{H}^{(s)}_{2}\otimes \mathcal{H}^{(p)}_{2}
\end{equation}
of spatial and polarization modes.
Thus, the $2\times 2$~matrices $L_{2}$ and $R^{\dagger}_{2}$ are identified with transformations acting on the two polarization modes of light in the first spatial mode.
Likewise, $L^{\prime}_{2}$ and $R^{\prime\dagger}_{2}$ correspond to transformations on polarization in the second spatial mode.
Each of these operators $L_{2}, L^{\prime}_{2}, R^{\dagger}_{2},R^{\prime\dagger}_{2}$ can be realized with two quarter-wave plates, one half-wave plate and one phase shifter~\cite{Simon1989,Simon1990}.

\begin{figure}[h]
\centering
 \subfloat{\includegraphics[width=0.8\textwidth]{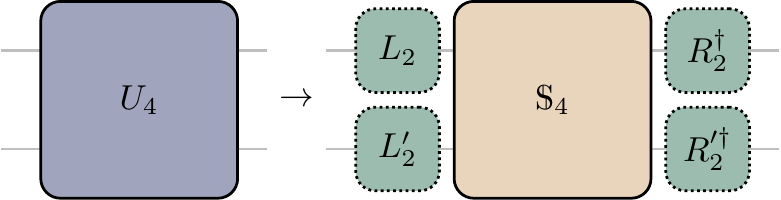}}\\
 \subfloat{\includegraphics[width=0.8\textwidth]{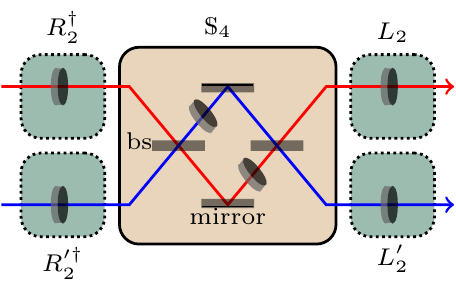}}
 \caption{Realization of a $4\times 4$ unitary matrix $U_{4}$ as a transformation on two spatial and two polarization modes of light.
(a) The CSD factorizes $U_{4}$ into the left and right matrices $L_{2}, L'_{2},R^{\dagger}_{2},R^{\prime\dagger}_{2}$ and the CS matrix $\mathds{S}_{4}$.
(b) The left and right matrices are realized as combinations of quarter- and half-wave plates, and the CS matrix is realized using two beam splitters and a half-wave plate.
 }
 \label{Fig:ns2np2}
\end{figure}

The matrix $\mathds{S}_4$ in Equation~\eqref{Eq:CSD4} is a CS matrix of the form
\begin{equation}
 \mathds{S}_4(\theta_{1},\theta_{2}) = \left(\begin{array}{cc|cc}
\cos\theta_1 & & ~\sin\theta_1 &\\
&\cos\theta_2 & &~\sin\theta_2\\
\hline
-\sin\theta_1 & & ~\cos\theta_1 & \\
&-\sin\theta_2 & & ~\cos\theta_2
\end{array}\right).
\label{Eq:CSMatrix22}
\end{equation}
This matrix can be decomposed further according to
\begin{equation}
 \mathds{S}_4(\theta_{1},\theta_{2}) = (\mathcal{B}_{2}\otimes \mathds{1}_2)(\Theta_{2}\oplus \Theta_{2}^\dagger)(\mathcal{B}_{2}^\dagger\otimes \mathds{1}_2),
\label{Eq:SineCosineDecomp}
\end{equation}
where
\begin{align}
\gls{B2} & \defeq\frac{1}{\sqrt{2}}\begin{pmatrix}
1 & \mathrm{i}\\
\mathrm{i} & 1\\
\end{pmatrix},\label{Eq:BVartheta}\\
 \Theta_{2} &\defeq \begin{pmatrix}
\mathrm{e}^{\mathrm{i}\theta_1} & 0\\
0 & \mathrm{e}^{\mathrm{i}\theta_2}
\end{pmatrix}.
\end{align}
The transformation $(\mathcal{B}_{2}\otimes \mathds{1}_2)$ in Equation~\eqref{Eq:SineCosineDecomp} represents balanced beam splitters, % of reflectivity $\sin^{2}\theta_{\pm}$, which is independent of the polarization of the incoming light.
whereas, the transformations $\Theta_{2}\oplus \Theta_{2}^\dagger$ can be realized using wave plates acting separately on the polarization of light in the two spatial mode.
Figure~\ref{Fig:ns2np2}(b) depicts the optical circuit for the realization of $U_{4}$ using beam splitters, phase shifters and wave-plates.

Although the realization of arbitrary $4\times 4$ transformations on two spatial and two polarization modes was known~\cite{Goyal2015}, there was no known realization of an arbitrary $n_{s}n_{p}\times n_{s}n_{p}$ transformation on $n_{s}$ spatial and $n_{p}$ internal modes.
Such a decomposition is presented in the next chapter.

%----------------------------------------%
%----------------------------------------%
\chapter{Realization of arbitrary discrete unitary transformations on spatial and internal modes of light}
\label{Chap:Design}
%----------------------------------------%
%----------------------------------------%

%Our algorithm is based on the iterative use of the cosine-sine decomposition (CSD).
%The relevant background of the CSD is presented in Section~\ref{Sec:Background}.
This chapter details our procedure for the realization of arbitrary discrete unitary transformations.
I present the decomposition algorithm in Section~\ref{Sec:Algorithm}.
The cost of realizing an arbitrary unitary matrix is discussed in Section~\ref{Sec:Cost}.
I conclude with a discussion of our decomposition algorithm in Section~\ref{Sec:DesignConclusion}.

The majority of the material in this chapter is taken from my article published in Physical~Review~A~\cite{Dhand2015b}.
New material is added or existing material is shifted or eliminated to improve presentation.
Those parts that are reproduced verbatim from our journal paper are listed in ``Thesis content previously published''.

%------------------------------%
\section{Algorithm to design efficient realization}
\label{Sec:Algorithm}
%------------------------------%

Here I describe the algorithm to decompose an arbitrary unitary matrix into beam-splitter and internal transformations.
This section is structured as follows.
Subsection~\ref{Subsec:Inputs} details the inputs and outputs of the decomposition algorithm.
The first of the two stages of the algorithm is a step-by-step decomposition of the unitary into internal and CS matrices and is presented in Subsection~\ref{Subsec:Algorithm}.
The next stage involves factorization of the CS matrices into beam-splitter and internal transformations as described in Subsection~\ref{Subsec:Realization}.

%------------------------------%
\subsection{Inputs and outputs of algorithm}
\label{Subsec:Inputs}
%------------------------------%
In this section, I present the inputs and outputs of our decomposition algorithm.
The algorithm receives an $n_{s}n_{p}\times n_{s}n_{p}$ unitary matrix as an input.
The algorithm returns a sequence of matrices, each of which describes either a beam splitter acting on two-spatial modes or an internal unitary operation, which acts on the internal DOF in one spatial modes whereas leaving the other modes unchanged.
The remainder of this subsection describes the basis and the form of the matrices yielded by our algorithm.

The operators returned by the algorithm act on the combined space
\begin{equation}
\mathcal{H} = \mathcal{H}_{s}\otimes \mathcal{H}_{p},
\end{equation}
where $\mathcal{H}_{s}$ and $\mathcal{H}_{p}$ are spanned
\begin{align}
\mathcal{H}_{s} &= \operatorname{span}\{\ket{s_{1}},\ket{s_{2}},\dots,\ket{s_{n_{s}}}\}, \\
\mathcal{H}_{p} &=\operatorname{span}\{\ket{p_{1}},\ket{p_{2}},\dots,\ket{p_{n_{p}}}\}
\end{align}
by the $n_{s}$ spatial modes and the $n_{p}$ internal modes respectively for positive integers $n_{s}$ and $n_{p}$.
Each operator acting on the combined state of light can be represented by an $n_{s}n_{p}\times n_{s}n_{p}$ matrix in the combined basis
\begin{equation}
\{\ket{c_{k\ell}}\defeq \ket{s_{k}}\otimes\ket{p_{\ell}}: k \in \left\{1,\dots,n_{s}\right\},~\ell \in \{1,\dots,n_{p}\}\}
\end{equation}
of the spatial and the internal modes.
Our algorithm returns the matrix representations of the operators in this combined basis $\{\ket{c_{k\ell}}\}$.

The matrices returned by the algorithm represent either internal or beam-splitter transformations.
Each internal transformation acts on the internal state of light in a spatial mode but not on the light in the other spatial modes.
In the composite basis, the internal transformations acting on the $k$-th spatial mode are represented as
\begin{equation}
U^{(k)}_{n_{p}} \defeq \mathds{1}_{n_{p}(k-1)}\oplus U_{n_{p}}\oplus\mathds{1}_{n_{p}(n_{s}-k)}
\label{Eq:InternalMatrix}
\end{equation}
for $n_{p}\times n_{p}$ unitary matrix $U_{n_{p}}$.

The algorithm also returns beam-splitter matrices, which mix each of the corresponding internal modes of light in two spatial modes.
The matrix representation of this operator in the composite basis is given by
\begin{equation}
\mathcal{B}^{(k)}_{2n_{p}}\defeq \mathds{1}_{n_{p}(k-1)}\oplus \left(\mathcal{B}_{2}\otimes\mathds{1}_{n_{p}}\right)\oplus
 \mathds{1}_{n_{p}(n_{s}-k-1)}
\label{Eq:BSMatrix1}
\end{equation}
for $\mathcal{B}_{2}$ as defined in Equation~\eqref{Eq:BVartheta} representing a balanced beam splitter.
To summarize, the algorithm returns a sequence of matrices, each of which is an internal transformation in the form~of Equation~\eqref{Eq:InternalMatrix} or is a balanced beam-splitter transformation in the form of Equation~\eqref{Eq:BSMatrix1}.

%------------------------------%
\subsection{Decomposition of unitry matrix into internal and CS matrices}
\label{Subsec:Algorithm}
%------------------------------%
In this subsection, I present the first stage of our algorithm.
This stage decomposes the given unitary matrix into matrices representing internal transformations~\eqref{Eq:InternalMatrix} and CS transformations
 \begin{equation}
\begin{split}
 \mathds{S}^{(k)}_{2n_{p}}(\theta_{1},\dots,\theta_{n_{p}})\defeq & \mathds{1}_{n_{p}(k-1)}\oplus \mathds{S}_{2n_{p}}(\theta_{1},\dots,\theta_{n_{p}})\\&\oplus
 \mathds{1}_{n_{p}(n_{s}-k-1)},
 \label{Eq:CSMatrixx}
 \end{split}
 \end{equation}
 which enact the CS matrix $\mathds{S}_{2n_{p}}\equiv\mathds{S}_{2n_{p}}(\theta_{1},\dots,\theta_{n_{p}})$~\eqref{Eq:CSMatrix} on the internal degrees of light in two spatial modes without affecting the light in other modes.

The first stage comprises $n_{s}-1$ iterations.
Of these, the first iteration factorizes the given $n_{s}n_{p}\times n_{s}n_{p}$ unitary matrix into a sequence of internal and CS matrices and one $(n_{s}-1)n_{p}\times (n_{s}-1)n_{p}$ unitary matrix.
This smaller unitary matrix is factorized in the next iteration.
Figure~\ref{Fig:FirstStep} depicts the first of the $n_{s}-1$ iterations that comprise the first stage.

In general, the $j$-th iteration receives an $(n_{s}+1-j)n_{p}\times (n_{s}+1-j)n_{p}$ unitary matrix.
This iteration decomposes the received unitary matrix into a sequence of internal and CS matrices and a smaller $(n_{s}-j)n_{p}\times (n_{s}-j)n_{p}$ unitary matrix which is decomposed in the next iteration.

Now I describe the $j$-th iteration of the decomposition algorithm in detail.
First, the given unitary matrix $U_{(n_{s}+1-j)n_{p}}$ is CS decomposed by setting $m = n_{p}$ and $n = (n_{s}-j)n_{p}$ in the CSD.
This CSD yields the following sequence of matrices
\begin{align}
U_{ (n_{s}+1-j)n_{p}}=&\, \mathds{L}_{n_{p} + (n_{s}-j)n_{p}} \left(\mathds{S}_{2n_{p}}\oplus \mathds{1}_{(n_{s}-1-j)n_{p}}\right)\nonumber\\
&\times\mathds{R}_{n_{p}+ (n_{s}-j)n_{p}},
\label{Eq:csd2}
\end{align}
for block diagonal unitary matrices
\begin{align}
\mathds{L}_{n_{p} + (n_{s}-j)n_{p}} &=
\left(\begin{array}{c|cc}
L_{n_{p}}& \multicolumn{2}{c}{0} \\
\hline
\multirow{2}{*}{0} & \multicolumn{2}{c}{\multirow{2}{*}{$~L^{\prime}_{(n_{s}-j)n_{p}}$}} \\
& \multicolumn{2}{c}{} \\
\end{array}\right),\nonumber\\
\mathds{R}_{n_{p} + (n_{s}-j)n_{p}} &=
\left(\begin{array}{c|cc}
R^{\dagger}_{n_{p}}& \multicolumn{2}{c}{0} \\
\hline
\multirow{2}{*}{0} & \multicolumn{2}{c}{\multirow{2}{*}{$~R^{\prime\dagger}_{(n_{s}-j)n_{p}}$}} \\
& \multicolumn{2}{c}{}\\
\end{array}\right),
\end{align}
and orthogonal CS matrix~$\mathds{S}_{2n_{p}}$.

In other words, the first CSD of the $j$-th iteration factorizes the received unitary transformation acting on $n_{s}+1-j$ spatial modes into
(i)~a $2n_{p}\times 2n_{p}$ CS matrix $\mathds{S}_{2n_{p}}$ acting on the $j$-th and $(j+1)$-th spatial modes,
(ii)~internal unitary matrices $L_{n_{p}}$ and $R_{n_{p}}^{\dagger}$, each of which act on the internal degrees of the $j$-th spatial mode and
(iii)~left and right unitary matrices $L_{(n_{s}-j)n_{p}}^{\prime}$ and $R_{(n_{s}-j)n_{p}}^{\prime\dagger}$ acting on the remaining $n_{s}-j$ spatial modes.
Figure~\ref{Fig:FirstStep}(a) depicts this first CSD for the first iteration.

Next the matrix $L_{(n_{s}-j)n_{p}}^{\prime}$ is CS decomposed.
The resultant $R_{(n_{s}-j-1)n_{p}}^{\prime\dagger}$ from this second CSD commutes with CS matrix $\mathds{S}_{2n_{p}}$ yielded by the first CSD\footnote{The transformations $R_{(n_{s}-k-1)n_{p}}^{\prime\dagger}$ and $\mathds{S}_{2n_{p}}$ act on mutually exclusive spatial modes so their action is independent of the order of enacting the transformations.}.
Hence, the operators $R_{(n_{s}-j-1)n_{p}}^{\prime\dagger}$ and $\mathds{S}_{2n_{p}}$ can be swapped, following which I multiply $R_{(n_{s}-j-1)n_{p}}^{\prime\dagger}$ by $R_{(n_{s}-j)n_{p}}^{\prime\dagger}$.
Figure~\ref{Fig:FirstStep}(b) depicts this second round of CSD and of the multiplication of the two right matrices.

The left unitary matrices thus obtained are repeatedly factorized using the CSD.
The resultant right unitary matrices are absorbed into the initial right unitary matrix $R_{(n_{s}-1)n_{p}}^{\prime\dagger}$.
Thus, we are left with internal and CS matrices and with a unitary matrix
\begin{equation}
U_{(n_{s}-j)n_{p}}= \prod_{\ell = 0}^{n_{s}-j-1} R_{(n_{s}-j-\ell)n_{p}}^{\prime\dagger}
\end{equation} obtained by multiplying each of the right unitary matrices.
This completes a description of the $j$-th iteration of the algorithm.

In summary, at the end of the $j$-th iteration, the algorithm decomposes the received $U_{(n_{s}+1-j)n_{p}}$ transformation into internal and CS matrices and $U_{(n_{s}-j)n_{p}}$ as depicted in Fig.~\ref{Fig:FirstStep}(c).
The $(j+1)$-th iteration of the algorithm receives this smaller $U_{(n_{s}-j)n_{p}}$ unitary matrix and decomposes it into internal and CS matrices and an even smaller unitary matrix.
The algorithm iterates over integral values of $j$ ranging from $1$ to $n_{s}-1$.
Figure~\ref{Fig:Final} depicts the output of the algorithm at the end of the final, i.e., $(n_{s}-1)$-th, iteration.
This completes a description of the first stage of the algorithm.

At the end of the first stage, the given unitary matrix has been factorized into a sequence of internal~\eqref{Eq:InternalMatrix} and CS matrices~\eqref{Eq:CSMatrix}.
The internal matrices can be implemented using optical elements if a suitable realization is known for the internal DOF;
such realizations are known for polarization~\cite{Simon1989,Simon1990}, temporal~\cite{Motes2014} and orbital-angular-momentum~\cite{Garcia-Escartin2011} DOFs.
In the next subsection, I present a realization of the CS matrix using beam splitters acting on spatial modes and internal transformations.

\begin{figure}[p]
\centering
\subfloat[]{\includegraphics[width=0.8\textwidth]{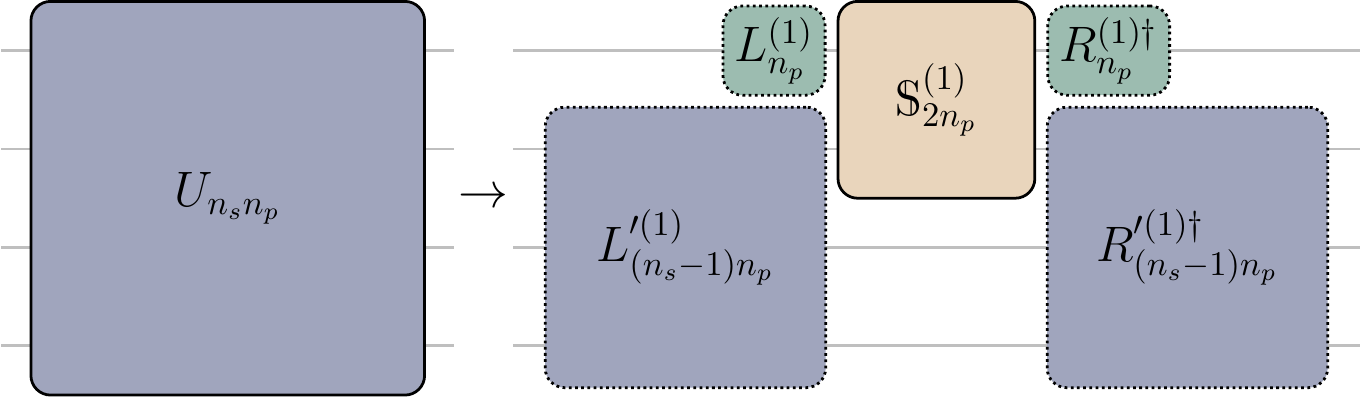}}\\
\subfloat[]{\includegraphics[width=0.8\textwidth]{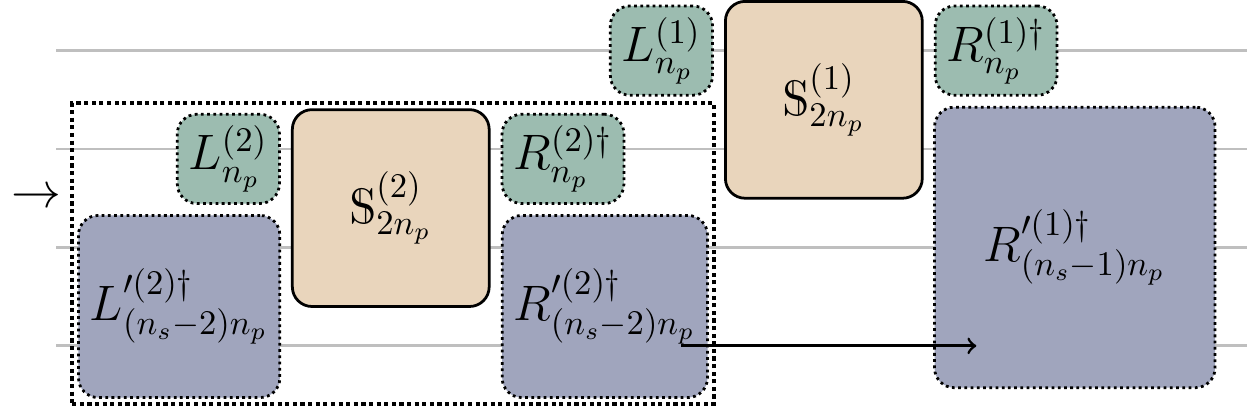}}\\
\subfloat[]{\includegraphics[width=0.8\textwidth]{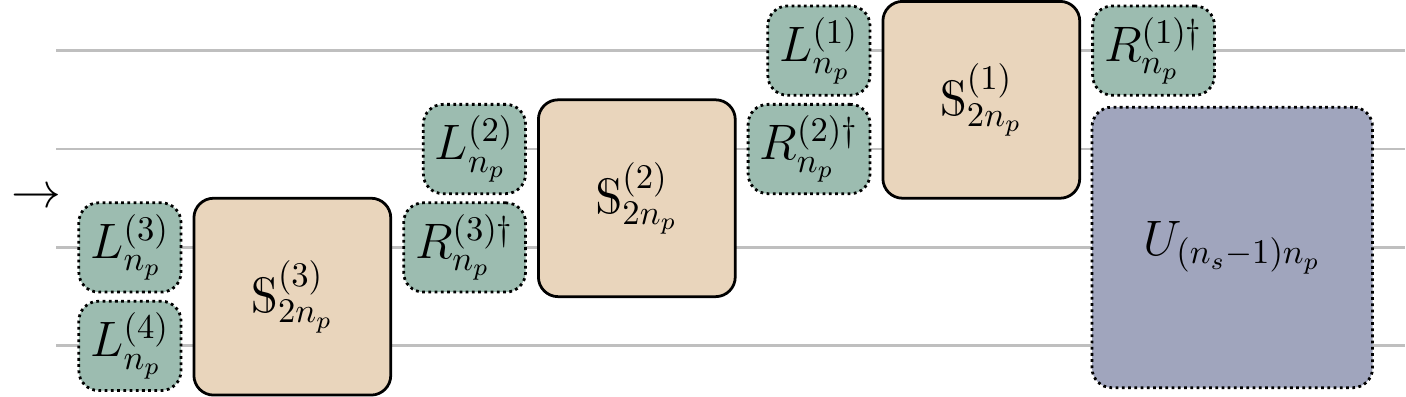}}
\caption{A depiction of the first iteration of the algorithm for the decomposition of a given unitary $U_{n_{s}n_{p}}$ into internal (green) and CS (brown) matrices.
(a) First, the $U_{n_{s}n_{p}}$ unitary matrix is CS decomposed into (i)~a $2n_{p}\times 2n_{p}$ CS matrix $\mathds{S}^{(1)}_{2n_{p}}$ acting on the first two spatial modes,
(ii)~internal unitary matrices $L_{n_{p}}^{(1)}$ and $R_{n_{p}}^{(1)\dagger}$, each of which act on the internal degrees of the first spatial mode and
(iii)~left and right unitary matrices $L_{n_{p}(n_{s}-1)}^{\prime(1)}$ and $R_{n_{p}(n_{s}-1)}^{\prime(1)\dagger}$ acting on the remaining $n_{s}-1$ spatial modes.
 (b) The matrix $L_{n_{p}(n_{s}-1)}^{\prime(1)}$ is further CS decomposed.
 The resultant $R_{n_{p}(n_{s}-2)}^{\prime(2)\dagger}$ from the second decomposition commutes with CS matrix $\mathds{S}^{(1)}_{2n_{p}}$ and can thus be absorbed into $R_{n_{p}(n_{s}-1)}^{\prime(1)\dagger}$.
(c) The algorithm repeatedly decomposes the left unitary matrices.
The resultant right unitary matrices are absorbed into the initial right unitary matrix.
 At the end of one iteration, the algorithm decomposes $U_{n_{s}n_{p}}$ unitary operation into CS matrices, internal unitary matrices and the matrix $U_{n_{p}(n_{s}-1)}$.
 The next iteration of the algorithm decomposes the smaller $U_{n_{p}(n_{s}-1)}$ unitary matrix.
 }
 \label{Fig:FirstStep}
\end{figure}

\begin{figure}[h]
\includegraphics[width = \textwidth]{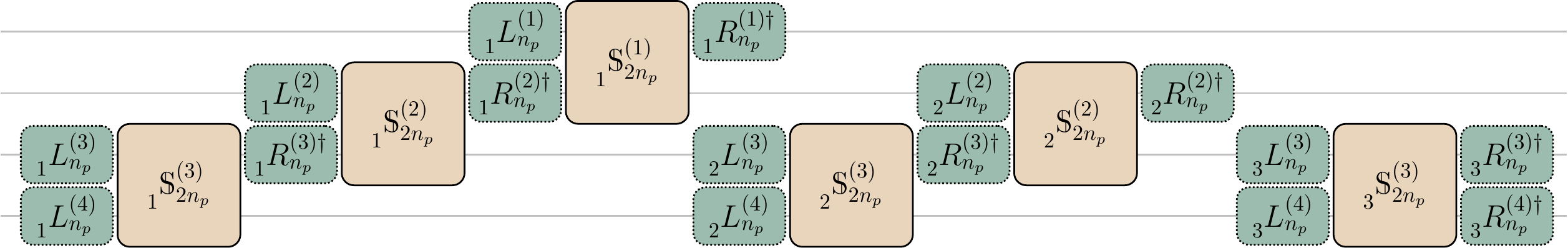}
\caption{A depiction of the output of the first stage of our decomposition algorithm (Subsection~\ref{Subsec:Algorithm}) for the case of $n_{s} = 4$~spatial modes and $n_{p}$~internal modes.
The given $4n_{p}\times 4n_{p}$~unitary matrix is decomposed into $4^{2} = 16$~internal matrices (green) and $n_{s}(n_{s}-1)/2 = 6$~CS matrices (brown). As usual, the right subscript of the matrices is the dimension of the space that the respective operators act on. The right superscript represents the spatial mode that the operators act on. The left subscript specifies the index of iteration that constructed the respective matrices.
}
\label{Fig:Final}
\end{figure}

%------------------------------%
\subsection{Decomposition of CS unitary matrix into elementary operators}
\label{Subsec:Realization}
%------------------------------%
Here I show how the CS matrices can be decomposed into a sequence of beam-splitter transformations and internal unitary matrices.
Specifically, we construct a factorization of any $2n_{p}\times 2n_{p}$ CS matrix $\mathds{S}_{2n_{p}}$, which is in the form of Equation~\eqref{Eq:CSMatrix}, into a sequence of two balanced beam-splitter matrices and two internal-transformation matrices.

Our decomposition of the CS matrix relies on the following identity
\begin{equation}
\mathds{S}_{2n_{p}}(\theta_{1},\dots,\theta_{n_{p}}) = \left(\mathcal{B}_{2}\otimes \mathds{1}_{n_{p}}\right)\left(\Theta_{n_{p}}\oplus \Theta_{n_{p}}^\dagger\right)\left(\mathcal{B}_{2}^\dagger\otimes \mathds{1}_{n_{p}}\right),
\label{Eq:SineCosineDecompNp}
\end{equation}
where $\mathcal{B}_{2}\otimes \mathds{1}_{n_{p}}$ represents a balanced beam splitter~\eqref{Eq:BVartheta}
and
\begin{equation}
 \Theta_{n_{p}} \defeq \begin{pmatrix}
\mathrm{e}^{\mathrm{i}\theta_1} & &\\
& \ddots &\\
& & \mathrm{e}^{\mathrm{i}\theta_{n_{p}}}
\end{pmatrix}.
\label{Eq:ThetaDefine}
\end{equation}
is a transformation on the internal modes.
Thus, any CS matrix can be realized using two balanced beam splitters and two internal transformations.

To summarize, the first stage of the algorithm decomposes the given unitary matrix into internal~\eqref{Eq:InternalMatrix} and CS matrices~\eqref{Eq:CSMatrixx}.
The next stage factorizes the CS matrices returned by the first stage into internal and beam splitter~\eqref{Eq:BSMatrix1} transformations, thereby completing the algorithm.
\textsc{matlab} code for the CSD and for the decomposition algorithm is available online~\cite{Dhand2015}.

%------------------------------%
\section{Cost Analysis: Number of optical elements in realization}
\label{Sec:Cost}
%------------------------------%
Here I discuss the cost of realizing an arbitrary $n_{s}n_{p}\times n_{s}n_{p}$ unitary matrix using our procedure, where the cost is quantified by the number of optical elements required to implement the matrix.
Optical elements required by the decomposition algorithm include balanced beam splitters, phase shifters and elements acting on internal modes.
We conclude this section with a specific example of decomposing a $2n\times 2n$ transformation into spatial and polarization DOFs.
In this case, this decomposition reduces the required number of beam splitters to half with the additional requirement of wave plates as compared to using only spatial modes.

Consider the decomposition of an arbitrary $n_{s}n_{p}\times n_{s}n_{p}$ unitary transformation.
Realization of this transformation using the Reck~\emph{et~al.} method requires $n_{s}n_{p}$ spatial modes and $n_{s}n_{p}(n_{s}n_{p}-1)/2$ biased beam splitters~\cite{Reck1994}.
In comparison, our decomposition requires $n_{s}(n_{s}-1)$ beam splitters.
Thus, we reduce the number of beam splitters required to realize an $n_{s}n_{p}\times n_{s}n_{p}$ transformation by a factor of
\begin{equation}
\eta = \frac{n_{s}n_{p}(n_{s}n_{p}-1)/2}{n_{s}(n_{s}-1)} > n_{p}^{2}/2.
\end{equation}
%which is half the square of the dimension of the internal DOF.

Although our decomposition reduces the required number of beam splitters, the number of optical elements required for internal transformations increases by a factor of $2$.
The Reck~\emph{et al.}~approach requires $n_{s}n_{p}(n_{s}n_{p}+1)/2$ phase shifters to effect an $n_{s}n_{p}\times n_{s}n_{p}$ unitary transformation on spatial modes.

Our approach relies on decomposing to beam splitter and internal unitary transformations, so we count the number of internal optical elements required in the transformation.
Realizing an $n_{p}\times n_{p}$ internal transformation typically requires $n_{p}^{2}$ internal optical elements~\cite{Simon1990,Garcia-Escartin2011,Motes2014}.
Our decomposition requires $n_{s}^{2}$ arbitrary internal transformations, which are represented by matrices $\{L_{n_{p}},L^{\prime}_{n_{p}},R_{n_{p}},R^{\prime}_{n_{p}}\}$ in the output.
These arbitrary transformations can be realized using a total of $n_{s}^{2}n_{p}^{2}$~internal optical elements.
Furthermore, our decomposition also requires $n_{s}(n_{s}-1)$~internal transformations in the form of $\Theta_{n_{p}}$~\eqref{Eq:ThetaDefine}.
Each of these transformations can be realized using $n_{p}$ optical elements for the polarization, temporal and orbital angular momentum modes%%%%
%%%%
\footnote{%
For the polarization DOF the $\Theta_{n_{p}=2}$ matrix can be constructed using two elements: a quarter-wave plate and a phase shifter.
Similarly, for the temporal DOF, the matrix $\Theta_{n_{p}}$ can be realized by setting the reflectivity of the variable beam splitter to zero and the transmission amplitude to $\mathrm{e}^{\mathrm{i}\theta_{j}}$ at an appropriate time~\cite{Motes2014}.
The matrix $\Theta_{n_{p}}$ for the orbital-angular-momentum DOF of light can be constructed using a spatial light modulator (hologram)~\cite{Flamm2013}.
In all these realizations of the matrix $\Theta_{n_{p}}$ no more than $n_{p}$ optical components are required.}%%
%%%%
.
In summary, our decomposition requires a total of $n_{s}n_{p}(n_{s}n_{p}+ n_{s}-1)$, which is an increase by a factor
\begin{equation}
\xi = \frac{n_{s}n_{p}(n_{s}n_{p}+ n_{s}-1)}{n_{s}n_{p}(n_{s}n_{p}+1)/2} = 2 + \BigO{1/n_{p}}
\end{equation}
over the cost of the Reck~\emph{et~al.} approach.

Now we consider the example of using polarization as the internal DOF.
Specifically, we compare the cost of realizing an arbitrary $2n\times 2n$ transformation using (i)~the Reck~\emph{et~al.} approach on only spatial modes and (ii)~our decomposition on the spatial and polarization modes of light, i.e., $n_{s} = n$ and $n_{p} = 2$.
The Reck~\emph{et~al.} decomposition requires $2n$ spatial modes, $n(2n-1)$ beam splitters and $n(2n+1)$ phase shifters.
In comparison,my approach requires $n(n-1)$ balanced beam splitters, $n^{2}$ phase shifters and $3n(n-1)/2$ wave plates.
Thus, our decomposition reduces the required number of beam splitters and phase shifter by a factor of $2$ each at the expense of an additional $3n(n-1)/2$ wave plates.

To summarize this section, our realization of an arbitrary $n_{s}n_{p}\times n_{s}n_{p}$ unitary matrix reduces the number of beam splitters required by a factor of $n_{p}^{2}/2$.
This completes the analysis of the cost of our decomposition.

%------------------------------%
\section{Conclusion}
\label{Sec:DesignConclusion}
%------------------------------%
In conclusion, we devise a procedure to efficiently realize any given $n_{s}n_{p}\times n_{s}n_{p}$ unitary transformation on $n_{s}$ spatial and $n_{p}$ internal modes of light.
Our realization uses interferometers composed of beam splitters and optical devices that act on internal modes to effect the given transformation.
Such interferometers can be characterized by using existing procedures~\cite{Laing2012,Dhand2015a} based on one- and two-photon interference on spatial and internal DOFs~\cite{Walborn2003,Schuck2006,Nagali2009,Karimi2014a}.
We thus enable the design and characterization of linear optics on multiple degrees of freedom.

We overcome the problem of decomposing the given unitary transformation into internal transformations by performing the CSD iteratively.
We also open the possibility of using an efficient iterative CSD in problems where the single-shot CSD is currently used~\cite{Bullock2004,Khan2006,Shende2006}.

By employing $n_{p}$ internal modes, the number of beam splitters required to effect the transformation is reduced by a factor of $n_{p}^{2}/2$ at the cost of increasing the number of internal elements by a factor of $2$.
Our procedure facilitates the realization of higher dimensional unitary transformations for quantum information processing tasks such as linear optical quantum computation, BosonSampling and quantum walks.

%=================%
\chapter{Characterization of linear optical interferometer}
\label{Chap:Procedure}
%=================%

This chapter details our procedure for the characterization of linear optical interferometers.
Section~\ref{Sec:Procedure} details the accurate and precise characterization using one- and two-photon measurements and the bootstrapping-based procedure to estimate the precision of the characterized interferometer parameters.
Section~\ref{Sec:Scattershot} presents a scattershot approach to reducing the experimental time by performing all one- and two-photon measurement in parallel.
I present our approach to removing the numerical instability that adversely effect the accuracy and precision of existing characterization procedures in section~\ref{Sec:Instability}.
Section~\ref{Sec:CharDiscussions} comprises a nontechnical summary of the characterization and comparison with existing procedure.

The majority of the material in this chapter is taken from my article~\cite{Dhand2015a} that I co-authored with Abdullah Khalid, He Lu and Barry C.~Sanders.
New material is added or existing material is shifted or eliminated to improve presentation.
Those parts that are reproduced verbatim from our journal paper are listed in ``Thesis content previously published''.

%------------------------------%
\section{Characterization procedure}
\label{Sec:Procedure}
%------------------------------%
In this section, I describe our procedure to characterize linear optical interferometers.
The outline of this section is as follows.
Subsection~\ref{Subsec:Experiment} describes the experimental data required by our characterization procedure.
This experimental data are processed by various algorithms to determine the transformation matrix~(\ref{Eq:RMS0}).
The algorithm to determine the amplitudes $\{\alpha_{ij}\}$ of the transformation-matrix elements is presented in Subsection~\ref{Subsec:Amplitudes}.
In Subsection~\ref{Subsec:Calibration}, I describe the calibration of the source field by determining the mode-matching parameter
~$\gamma$.
The estimation of $\{\theta_{ij}\}$ using two-photon interference is detailed in Subsection~\ref{Subsec:Phases}.
Maximum-likelihood estimation is employed to find the unitary matrix $U$ that best fits the calculated $\{\alpha_{ij}\}, \{\theta_{ij}\}$ values and serves as the representative matrix~(\ref{Eq:RMS0}).
We discuss the calculation of the best-fit unitary representative matrix in Subsection~\ref{Subsec:MaxLikelyUnitary}.

%------------------------------%
\subsection{Experimental procedure and inputs to algorithms}
\label{Subsec:Experiment}
%------------------------------%
Our characterization procedure relies on measuring
(i)~the spectral function $f_j$ of the source light, (ii)~single-photon detection counts,
(iii)~two-photon coincidence counts from a beam splitter and~(iv)~two-photon coincidence counts from the interferometer.
The measurement data constitute the inputs to our algorithms, which then yield the representative matrix.
Before presenting the algorithms, I detail the experimental procedure and the inputs received by the algorithm in this subsection.

%The light entering the interferometer is first passed through band-pass filters to improve the frequency-mode matching~\cite{Lu2000}.
We characterize the spectral function $f(\omega_i)$ of the incoming light for a discrete set $\Omega=\{\omega_1,\omega_2,\dots,\omega_k\}$ of frequencies.
The integer $k$ of frequencies at which the spectral function is characterized is commonly equal to the ratio of the bandwidth to the frequency step of the characterization device.
The characterized spectral function $f(\omega_i)$ is used to calculate the coincidence probabilities as detailed in Algorithm~\ref{Alg:Coincidence}.

%------------------------------%
\begin{algorithm}[h]
	\begin{algorithmic}[1]	
	\Require{\Statex
	\begin{itemize}
		\item $k, \Omega = \{\omega_1,\omega_2,\dots\omega_{k-1},\omega_k\} \in \left(\mathds{R}^+\right)^{k}$ \Comment{Frequencies at which $f_1,f_2$ are given.}
		\item $f_1,f_2:\Omega \to \mathds{R}^+$\Comment{measured spectra.}
		\item $\ell, T= \{\tau_1,\tau_2,\dots,\tau_\ell\}\in(\mathds{R}\cup 0)^\ell$\Comment{Time delay values.}
		\item $A \leftarrow \{\alpha_{ij},\alpha_{ij'},\alpha_{i\j},\alpha_{i'j'}\}\in \left(\mathds{R}^+\cup\,0\right)^{4}$ \Comment{Amplitudes of $2\times 2$ submatrix of $A$~(\ref{Eq:RMS0}).}
		\item $\Theta \leftarrow \theta_{ij},\theta_{ij'},\theta_{i'j},\theta_{i'j'}\in (-\pi,\pi]$\Comment{Phases of $2\times 2$ submatrix of $A$~(\ref{Eq:RMS0}).}
		\item $\gamma\in[0,1]$ \Comment{Mode-matching parameter
 of photon source.}
		\end{itemize}
	}
	\Ensure{\Statex
	\begin{itemize}
	\item $C:T\to \mathds{R}^+$ \Comment{Two-photon coincidence probabilities correct up to multiplicative factor.}
	\end{itemize}
	}

	\Procedure{Coincidence}{$k,\Omega,f_1,f_2,\ell,T,A,\Theta,\gamma$}
 \For{$\tau$ in $T$}
 \State $C(\tau) \leftarrow\textsc{Integrate}\left[F(A,\Theta, f_1,f_2,\gamma,\omega_i,\omega_j,\tau),\{\omega_i\in \Omega,\omega_j\in\Omega\}\right].$
 \Statex \Comment{Numerically integrate RHS of (\ref{Eq:CoincidenceRate}) over $\omega_i,\omega_j$ with $\kappa_i = \kappa_{i'} = \nu_{j} = \nu_{j'} = 1$.}
 \EndFor
	\State \Return $C$
	\caption{\textsc{Coincidence}: Calculates the expected coincidence rate for two-photon interference for a given $2\times 2$ submatrix of an arbitrary $\mathrm{SU}(m)$ transformation.}
	\EndProcedure
	\label{Alg:Coincidence}
	\end{algorithmic}
\end{algorithm}
%------------------------------%

The amplitudes $\{\alpha_{ij}\}$ are determined by impinging single photons at the interferometer and counting single-photon detections at the outputs.
Single-photon counting is repeated multiple $(B\in\mathds{Z}^+)$ times in order to estimate the precision of the obtained $\{\alpha_{ij}\}$ values.
Specifically, the number
\begin{equation}
N_{ijb_j}\in \mathds{Z}^{+}:\, i,j\in \{1,\dots ,m\}, b_{j}\in\{1,\dots,B\}
\label{Eq:Nijbdef}
\end{equation}
of single-photon detection events are counted at all $m$ output ports $\{i\}$ for single photons impinged at the $j$-th input ports in the $b_{j}$-th repetition.
The counting is then performed for each of the input ports $j\in\{1,\dots,m\}$ of the interferometer.
Algorithm~\ref{Alg:AmplitudeEstimation} uses $\left\{N_{ijb_j}, b_{j}\in\{1,\dots,B\}\right\}$ values to estimate $\alpha_{ij}$ and the standard deviation of the estimate.
% and the respective standard deviations $\left\{\delta\alpha_{ij}\right\}$.
The experimental setup for $\{\alpha_{ij}\}$ measurement is depicted in Figure~\ref{Figure:1Photon}.

Arguments $\{\theta_{ij}\}$ are calculated by fitting curves of measured coincidence counts to curves calculated using measured spectra according to~(\ref{Eq:CoincidenceRate}).
Appendix~\ref{Sec:CurveFitting} elucidates the inputs and outputs of the curve-fitting procedure, such as the Levenberg-Marquardt algorithm~\cite{Levenberg1944,Marquardt1963}, employed by our algorithms.
Before calculating $\{\theta_{ij}\}$, we calibrate the source field for imperfect mode matching by measuring coincidence counts on a beam splitter of known reflectivity.
Controllably delayed single-photon pairs are incident at the two input ports of the beam splitter and coincidence counting is performed on the light exiting from its two output ports.
Algorithm~\ref{Alg:Calibration} details the estimation of~$\gamma$ using coincidence counts $C^{\mathrm{cal}}(\tau)$ for time delay $\tau$ between the incoming photons.

The absolute values and the signs of the arguments $\{\theta_{ij}\in(-\pi,\pi]\}$ are calculated separately.
To estimate the absolute values $\{|\theta_{ij}|\}$ of the arguments, pairs of single photons are incident at two input ports $1$ and~$j\in\{2,\dots,m\}$ and coincidence measurement is performed at two output ports $1$ and~$i\in\{2,\dots,m\}$.
The choice of the input and output ports labelled by index $1$ is arbitrary.
The signs
\begin{equation}
\operatorname{sgn}\theta_{ij} \defeq \begin{cases}
-1 & \mathrm{if}\, \theta_{ij} < 0, \\
0 & \mathrm{if}\, \theta_{ij} = 0, \\
1 & \mathrm{if}\, \theta_{ij} > 0 \end{cases}
\label{Eq:SignDef}
\end{equation}
of the arguments are estimated using an additional $(m-1)^2$ coincidence measurements.
Algorithm~\ref{Alg:PhaseCalcNChannel} details the choice of input and output ports for estimating $\{\operatorname{sgn}\theta_{ij}\}$.
A schematic diagram of the experimental setup for $\{\theta_{ij}\}$ estimation is presented in Figure~\ref{Figure:2Photons}.

%------------------------------%
\subsection{Single-photon transmission counts to estimate $\{\alpha_{ij}\}$ (Algorithm~\ref{Alg:AmplitudeEstimation})}
\label{Subsec:Amplitudes}
%------------------------------%
Now I present our procedure to estimate $\{\alpha_{ij}\}$ values using single-photon counting.
Single-photon transmission probabilities are connected to the amplitudes $\{\alpha_{ij}\}$ according to the relation $P_{ij} = \kappa_{i} \lambda_{i} \alpha_{ij}^2 \mu_{j} \nu_{j}$ (\ref{Eq:PSinglePhotons}).
Although the $\{\alpha_{ij}\}$ values can be calculated from single-photon transmission counts, the factors $\{\lambda_{i}\},\{\mu_{j}\}$ cannot.
The transmission probabilities depend on the products of the factors $\{\lambda_{i}\},\{\mu_{j}\}$ and the loss terms $\{\kappa_{i}\},\{\nu_{j}\}$, so $\{\lambda_{i}\},\{\mu_{j}\}$ cannot be measured without prior knowledge of the losses.
The loss terms are usually unknown and can change between experiments.
Hence, we calculate the values of $\{\alpha_{ij}\}$ from single-photon measurements and choose $\{\lambda_{i}\}$ and~$\{\mu_{j}\}$ such that $U = LAM$ is unitary.

The amplitudes $\{\alpha_{ij}\}$ are determined by estimating transmission probabilities.
The probabilities $P_{11},P_{i1},P_{1j},P_{ij}$ of single-photon detection at output ports $1,i$ when single photons are incident at input ports $1,j$ are expresses in terms of the $\alpha_{ij}$ values according to
\begin{equation}
\frac{P_{11}P_{ij}}{P_{1j}P_{i1}}= \frac{\left|r_1\lambda_1\alpha_{11}\mu_1 s_1\right|^2}{\left| r_1\lambda_1\alpha_{1j} \mu_{j} s_{j}\right|^2}\frac{\left| r_{i}\lambda_{i}\alpha_{ij}\mu_{j} s_{j}\right|^2 }{\left| r_i\lambda_i\alpha_{i1}\mu_1 s_1\right|^2}= \left|\frac{\alpha_{11}\alpha_{ij}}{\alpha_{1j}\alpha_{i1}}\right|^2.
\end{equation}
The probabilities $P_{11},P_{i1},P_{1j},P_{ij}$ are estimated by counting transmitted photons.
The definition~(\ref{Eq:RMS0}) of $\alpha_{ij}$ implies that $\alpha_{11}=\alpha_{i1}=\alpha_{1j}=1$.
Hence, the values of $\alpha_{ij}$ are connected to the single-photon transmission probabilities according to
\begin{equation}
\alpha_{ij} = \sqrt{\frac{P_{11}P_{ij}}{P_{1j}P_{i1}}},
\label{Eq:AlphaMeasure0}
\end{equation}
which is independent of the losses at the input and the output ports.

The transmission probabilities $P_{ij}$ are estimated by counting transmitted photons as follows.
The estimated values of $\{\alpha_{ij}\}$ are random variables that are amenable to random error from under-sampling and experimental imperfections.
Thus, data collection is repeated multiple times.
For accurate estimation of $\alpha_{ij}$ and its standard deviation $\delta\alpha_{ij}$, the number $B$ of repetitions is chosen such that the standard deviation of $\{N_{ijb_j}:b_j\in\{1,\dots,B\}\}$ converges in $B$ for all $i,j\in\{1,\dots,m\}$.
The mean and standard deviation of $\{N_{ijb_j}:b_j\in\{1,\dots,B\}\}$ converge for large enough $B$ if the cumulants of the distribution are finite~\cite{James2006}.

% - - - - - - - - - - - - %
%------------------------------%
\begin{algorithm}[h]
	\begin{algorithmic}[1]
	\Require{\Statex
	\begin{itemize}	
		\item $m \in \mathds{Z}^+$, \Comment Number of modes of interferometer.
		\item $N_{ijb_{j}}: \{1,\dots,m\}\times\{1,\dots,m\} \times \{1,\dots,B\}\to \mathds{Z}^+$ 	
		\Statex\Comment{Single-photon detection counts.}
			\item $B \in \mathds{Z}^+$ \Comment Number of times single-photon counting is repeated .
	\end{itemize}
	}
		
	\Ensure{\Statex
	\begin{itemize}
		\item $\{\tilde{\alpha}_{ij}\} \in \left(\mathds{R}^+\cup 0\right)^{m^2}$\Comment{Estimate of $\{\alpha_{ij}\}$~(\ref{Eq:RMS0}).}
%		\item $\left\{\sigma(\tilde\alpha_{ij})\right\}\in \left(\mathds{R}^+\cup 0\right)^{m^2}$\Comment{Estimate of standard error in $\{\alpha_{ij}\}$.}
	\end{itemize}
	}

	\Procedure{AmplitudeEstimation}{$m,N_{ijb_{j}},B$}
	\For{$i,j \in \{1,\dots,m\}\times\{1,\dots,m\}$}
	%\State $\widetilde{N}_{ij} \leftarrow \sum_{b=1}^{B} N_{ijb}/B$\vspace{3pt}
	\State $\tilde{\alpha}_{ij} \leftarrow$ {\textsc{Mean}}$\left(\sqrt{{N}_{11b_1}{N}_{ijb_j}/{N}_{1jb_j}{N}_{i1b_1}}: b_1,b_j \in \{1,\dots,B\}\right)$
%	\State $\delta\tilde{\alpha}_{ij} \leftarrow$ {\textsc{StdDev}}$\left(\sqrt{{N}_{11b_1}{N}_{ijb_j}/{N}_{1jb_j}{N}_{i1b_1}}: b_1,b_j \in \{1,\dots,B\}\right)$
	\EndFor
	\State \Return $\left\{\tilde\alpha_{ij}\right\}$
	\EndProcedure
	\caption{\textsc{AmplitudeEstimation}: Uses single-photon detection counts to calculate the amplitudes of the complex entries of the transformation matrix. $\tilde\bullet$ represents our estimate of $\bullet$.}
	\label{Alg:AmplitudeEstimation}
	\end{algorithmic}
\end{algorithm}
%------------------------------%

The probabilities $P_{ij}$ are estimated by counting single-photon detection events.
Suppose $N_{ijb_j}$ photons are transmitted from input port $j$ to the detector at output port $i$ when ${N}_{b_j}$ photons are incident and~$b_j\in\{1,\dots,B\}$.
For large enough $B$, the transmission probability converges according to
\begin{equation}
P_{ij} \leftarrow \mathrm{mean}\left\{\frac{N_{ijb_j}}{N_{b_j}}:\, b_j\in \{1,\dots,B\}\right\}.
\end{equation}
Likewise, the amplitudes $\{\alpha_{ij}\}$ are estimated by averaging the single-photon detection counts according to
\begin{align}
\alpha_{ij}
= \sqrt{\frac{P_{11}P_{ij}}{P_{1j}P_{i1}}}
&\leftarrow \mathrm{mean}\left\{ \sqrt{\frac{N_{11b_1}}{N_{b_1}}\frac{N_{ijb_j}}{N_{b_j}}\frac{N_{b_j}}{N_{1jb_j}}\frac{N_{b_1}}{N_{i1b_1}}} :\, b_1,b_j\in \{1,\dots,B\}\right\}\nonumber\\
&= \mathrm{mean}\left\{\sqrt{\frac{N_{11b_1}N_{ijb_j}}{N_{1jb_j}N_{i1b_1}}}:\, b_1,b_j\in \{1,\dots,B\}\right\}.
\label{Eq:AlphaMeasure}
\end{align}
The estimate of $\alpha_{ij}$ relies on single-photon counts measured by impinging photons at the first input port repeatedly (repetition index $b_{1}\in\{1,\dots,B\}$) and independently at the $j$-th input port (with repetitions labelled by a different index $b_{j}\in\{1,\dots,B\}$).

Henceforth, we represent our estimate of any parameter $\bullet$ by $\tilde{\bullet}$.
The estimate $\tilde\alpha_{ij}$ calculated using~(\ref{Eq:AlphaMeasure}) is independent of $N_{b_j}$ and thus resistant to variations in the incident-photon number $N_{b_j}$ over different input modes $j$ and different repetitions $b_j$.
Thus, our estimates~$\{\tilde\alpha_{ij}\}$ are accurate in the realistic case of fluctuating light-source strength and coupling efficiencies.

Finally, the standard deviations $\sigma(\tilde\alpha_{ij})$ of our estimates are calculated according to
\begin{equation}
\sigma(\tilde{\alpha}_{ij})\leftarrow \operatorname{std.~dev.}\left(\sqrt{\frac{N_{11b_1}N_{ijb_j}}{N_{1jb_j}N_{i1b_1}}}:\, b_1,b_j\in \{1,\dots,B\}\right),
\end{equation}
which converges for a large enough $B$.
In line with standard nomenclature, I refer to these standard deviations as error bars.
Algorithm~\ref{Alg:AmplitudeEstimation} details the estimation of $\{\tilde\alpha_{ij}\}$ and error bars on the obtained estimates.

%------------------------------%
\subsection{Calibration to estimate mode-matching parameter~$\gamma$ (Algorithm~\ref{Alg:Calibration})}
\label{Subsec:Calibration}
%------------------------------%

In this subsection, I describe the procedure to calibrate our light sources for imperfect mode matching.
The mode-matching parameter~$\gamma$ is estimated using one- and two-photon interference on an arbitrary beam splitter.
First, the reflectivity of the beam splitter is determined using single-photon counting~\cite{Laing2012}.
Next, controllably delayed photon pairs are incident at the beam splitter inputs and coincidence counting is performed on the beam splitter output .
We introduce a curve-fitting procedure to estimate the value of~$\gamma$ such that~(\ref{Eq:CoincidenceRate}) best fits the measured coincidence counts.

The beam-splitter reflectivity, which is denoted by $\cos\vartheta$, is estimated as follows.
A beam splitter of reflectivity $\cos \vartheta$ effects the $2\times 2$ transformation
\begin{align}
U_{\mathrm{bs}} &= \begin{pmatrix}
\cos\vartheta&\mathrm{i}\sin\vartheta\\\mathrm{i}\sin\vartheta&\cos\vartheta.
\end{pmatrix}\nonumber\\
&= \begin{pmatrix}1&0\\0&\mathrm{i}
\end{pmatrix}
\begin{pmatrix}1&0\\0&\tan\vartheta
\end{pmatrix}
\begin{pmatrix}1&1\\1&-\cot^2\vartheta
\end{pmatrix}
\begin{pmatrix}\cos\vartheta&0\\0&\sin\vartheta
\end{pmatrix}
\begin{pmatrix}1&0\\0&\mathrm{i}
\end{pmatrix},
\label{Eq:BSMatrix}
\end{align}
which is in the form of~(\ref{Eq:RMS0}) with $\alpha_{22} \defeq \cot^2\vartheta$.
The value of $\alpha_{22}$ is estimated using single-photon counting as described in Algorithm~\ref{Alg:AmplitudeEstimation}.
The estimated beam-splitter reflectivity is
\begin{equation}
\cos\tilde{\vartheta} = \sqrt{\frac{\alpha_{22}}{1-\alpha_{22}}}.
\label{Eq:FindVartheta}
\end{equation}
The error bar on $\cos\tilde{\vartheta}$ is estimated by repeating the photon counting along the lines of Algorithm~\ref{Alg:AmplitudeEstimation}.

%=================%
\begin{algorithm}[h]
	\begin{algorithmic}[1]
	\Require{\Statex
	\begin{itemize}	
		\item $k, \Omega = \{\omega_1,\omega_2,\dots\omega_{k-1},\omega_k\} \in \left(\mathds{R}^+\right)^{k}$ \Comment{Frequencies at which $f_1,f_2$ are given.}
		\item $f_1,f_2:\Omega \to \mathds{R}^+$\Comment{Given spectral functions.}
		\item $\ell, T= \{\tau_1,\tau_2,\dots,\tau_\ell\}\in (\mathds{R}\cup 0)^\ell$\Comment{Time delay values coincidence is measured at.}
		\item $C^\mathrm{cal}:T \to \mathds{R}^+$\Comment{Measured coincidence curve.}
		\item $\vartheta \in(-\pi,\pi]$ \Comment{$\cos\vartheta$ is reflectivity of calibrating beam splitter.}
	\end{itemize}
	}
	
	\Ensure{\Statex
	\begin{itemize}
		\item $\tilde{\gamma}\in[0,1]$ \Comment{Estimate of mode-matching parameter
 of photon source.}
	\end{itemize}
	}
	
	\Procedure{Calibration}{$k, \Omega, f_1,f_2, \ell, T,C^\mathrm{cal},\vartheta$}
	\State $A \leftarrow \{\cos\vartheta,\sin\vartheta,\sin\vartheta,\cos\vartheta\}$ \Comment{Beam splitter of reflectivity $R$~(\ref{Eq:BSMatrix})}
	\State $\Phi \leftarrow \{0,\pi/2,\pi/2,0\}$\Comment{Beam splitter of reflectivity $R$~(\ref{Eq:BSMatrix})}
	\State $C(\tau,\gamma)\defeq \textsc{Coincidence}(\Omega, f_1, f_2,T,A,\Phi,\gamma)$	\Comment{The quantities $\Omega, f_1, f_2,R^\mathrm{cal}$ are given.~$\gamma$ is unknown. $\textsc{Coincidence}(\Omega, f_1, f_2,T,R^\mathrm{cal},\gamma)$ depends on~$\gamma$ and $\tau$}
	\State \Return $\tilde{\gamma} \leftarrow$ \textsc{Fit}$(C(\tau,\gamma),C^\mathrm{cal}(\tau),1/C^\mathrm{cal}(\tau),\mathrm{InitGuesses})$
	\Comment{Least-squares curve fitting to obtain the value of~$\gamma$ that minimizes $\frac{\sum_{\tau \in T}|C^\mathrm{cal}(\tau) - C(\tau,\gamma)|^2}{C^\mathrm{cal}(\tau)}$.
The argument $1/C^\mathrm{cal}(\tau)$ is the weight function~\cite{Strutz2010} that accounts for experimental noise, which is assumed to be proportional to $\sqrt{C(\tau)}$.
Ignore values of $\tau$ at which $C(\tau) = 1$.
Appendix~\ref{Sec:CurveFitting} details the choice of initial guesses to the algorithm.}
	\EndProcedure
	\caption{\textsc{Calibration} Calculates the mode-matching parameter~$\gamma$ of source-field using a beam splitter of known reflectivity.}
	\label{Alg:Calibration}
	\end{algorithmic}
\end{algorithm}
%------------------------------%

Next we estimate~$\gamma$ using two-photon coincidence counting.
Controllably delayed pairs of photons are incident at the two input ports of the beam splitter.
Coincidence measurement is performed at the output ports for different values of time delay between the two photons.
A curve-fitting algorithm is employed to find the best-fit value of~$\gamma$, i.e., the value $\tilde{\gamma}$ that minimizes the squared sum of residues between the measured counts and the coincidence counts expected from~(\ref{Eq:CoincidenceRate}) for the beam splitter matrix~(\ref{Eq:BSMatrix}).
Algorithm~\ref{Alg:Calibration} details the calculations of $\tilde{\gamma}$, which is used to estimate $\{\theta_{ij}\}$ values accurately.

%------------------------------%
\subsection{Two-photon interference to estimate $\{\theta_{ij}\}$ (Algorithms~\ref{Alg:PhaseCalc2Channel}-\ref{Alg:PhaseCalcNChannel})}
\label{Subsec:Phases}
%------------------------------%
In this subsection, I describe our procedure to estimate the arguments $\{\theta_{ij}\}$ of the representative matrix $U$~(\ref{Eq:RMS0}).
Our procedure requires the measurement of coincidence counts for $2(m-1)^{2}$ different choices of input and output ports.
Of these measurements, $(m-1)^{2}$ are used to estimate the absolute values $\{|\theta_{ij}|\}$ of the arguments and the remaining $(m-1)^{2}$ are used to estimate the signs $\{\operatorname{sgn}\theta_{ij}\}$.

The absolute values $\{|\theta_{ij}|\}$ are estimated as follows.
Single-photon pairs are incident at input ports $1$ and~$j$ and coincidence measurements are performed at output ports $1$ and~$i$ for $i,j\in\{2,\dots,m\}$.
The state~(\ref{Eq:twophotonstate}) of a photon pair is transformed under the action of the $2\times 2$ submatrix
{\setlength{\arraycolsep}{1pt}
\begin{equation}
U_{i1j1} = \begin{pmatrix}\sqrt{\kappa_1}&0\\0&\sqrt{\kappa_{i}}\end{pmatrix}
\begin{pmatrix}\sqrt{\lambda_1}&0\\0&\sqrt{\lambda_{i}}\end{pmatrix}
\begin{pmatrix}1 &1\\1&\alpha_{ij}\e^{\mathrm{i} \theta_{ij}}\end{pmatrix} \begin{pmatrix}\sqrt{\mu_1}&0\\0&\sqrt{\mu_{j}}\end{pmatrix}
\begin{pmatrix}\sqrt{\nu_1}&0\\0&\sqrt{\nu_{j}}\end{pmatrix}
\end{equation}}
of $U$ labelled by the rows $1$ and~$i$ and columns $1$ and~$j$.
The probability of detecting a coincidence at the output ports $1,i$ is
\begin{align}
C_{i1j1}(\tau)
=&
\kappa_{j}\kappa_1\lambda_{j}\lambda_1\nu_{i}\nu_1\mu_{i}\mu_1\Big[\left\{\alpha_{ij}^2+1\right\}\int \mathrm{d}\omega_1\mathrm{d}\omega_2 |f_{j}(\omega_1)f_{1}(\omega_2)|^2 \nonumber
\\ &+2 \gamma \alpha_{ij} \int \mathrm{d}\omega_1\mathrm{d}\omega_2 f_{j}(\omega_1)f_{1}(\omega_2)f_{j}(\omega_2)f_{1}(\omega_1) \cos\left(\omega_2\tau-\omega_1\tau+\theta_{ij}\right)\Big],
\label{Eq:CoincidenceRateGH}
\end{align}
which is obtained by setting $i' = j'=1$ in~(\ref{Eq:CoincidenceRate}).

The measured coincidence counts are used to estimate the value of $|\theta_{ij}|$ as follows.
The shape of the coincidence-versus-$\tau$ curve~(\ref{Eq:CoincidenceRateGH}) depends on the values of $\alpha_{ij}$ and~$\theta_{ij}$.
The shape does not depend on the parameters $\kappa_1,\kappa_i,\lambda_1,\lambda_i,\mu_1,\mu_j,\nu_1,\nu_j$, which lead to a constant multiplicative factor to the coincidence expression.
Furthermore, the shape is unchanged under the transformation $\theta_{ij} \to -\theta_{ij}$ for $\theta_{ij} \in (-\pi,\pi]$ if the spectral functions are identical.
Hence, $|\theta_{ij}|$ can be estimated using the shape of the coincidence function~(\ref{Eq:CoincidenceRateGH}) and the values $\{\tilde\alpha_{ij}\}$ estimated using Algorithm~\ref{Alg:AmplitudeEstimation}.
A curve-fitting algorithm estimates the value $|\tilde{\theta}_{ij}|\in[0,\pi]$ that best fits the measured coincidence counts.
The calculation of $\{|\tilde{\theta}_{ij}|\}$ is detailed in Algorithm~\ref{Alg:PhaseCalc2Channel}.

%=================%
\begin{algorithm}
	\begin{algorithmic}[1]
	\Require{\Statex
	\begin{itemize}	

		\item $k, \Omega = \{\omega_1,\omega_2,\dots\omega_{k-1},\omega_k\} \in \left(\mathds{R}^+\right)^{k}$ \Comment{$f_1,f_2$ are measured at frequencies $\Omega$.}
	\item $f_1,f_2:\Omega \to \mathds{R}^+$\Comment{measured spectra.}
		\item $\ell, T= \{\tau_1,\tau_2,\dots,\tau_\ell\}\in (\mathds{R}\cup 0)^\ell$\Comment{Time delay values coincidence is measured at.}
	\item $C^{\mathrm{exp}}:T \to \mathds{R}^+$\Comment{Measured coincidence curve.}
 \item $A \leftarrow \{\alpha_{ij},\alpha_{ij'},\alpha_{i'j},\alpha_{i'j'}\}$ \Comment{Complex amplitudes of $2\times 2$ submatrix of $A$~(\ref{Eq:RMS0}).}
	\item $\Theta \leftarrow \{\theta_{ij'},\theta_{i'j},\theta_{i'j'}\in (-\pi,\pi]\}$\Comment{Three complex arguments of submatrix.}
	\item~$\gamma$ \Comment{Mode-matching parameter of photon source.}
	\end{itemize}
	}
	
	\Ensure{\Statex
	\begin{itemize}
	\item $|\tilde{\theta}_{ij}|$\Comment{Estimated magnitude of the unknown complex argument.}
	\end{itemize}
	}
	
	\Procedure{Argument2Port}{$k, \Omega,f_1,f_2,\ell, T,C_\mathrm{exp},A,\Theta,\gamma$}
	\State $\Phi \defeq \{\theta_{ij},\theta_{ij'},\theta_{i'j},\theta_{i'j'}\}$\Comment{Set of three known phases and one unknown phase.}
	\State $C(\tau,\theta_{ij})\defeq \textsc{Coincidence}(\Omega, f_1, f_2,T,A,\Phi,\gamma)$%\Comment{For given $\Omega, f_1, f_2,R^\mathrm{cal}$, $\textsc{Coincidence}(\Omega, f_1, f_2,T,R^\mathrm{cal},\gamma)$ is defined as the $\theta_{ij}$-dependent function of time delay value $C(\tau,\gamma):T\times\mathds{R}^+\rightarrow \mathds{R}^+$.}
	\State \Return $\tilde{\theta}_{ij} \leftarrow \left|\textsc{LM}(C(\tau,\theta_{ij}),C^{\mathrm{exp}}(\tau),1/C^\mathrm{exp}(\tau))\right|$
	\Statex\Comment{Use curve fitting to compute the $\theta_{ij}$ value that minimizes $\frac{\sum_{\tau \in T}|C_\mathrm{exp}(\tau) - C(\tau,\gamma)|^2}{C_\mathrm{exp}(\tau)}$.}
	\EndProcedure
	\caption{\textsc{Argument2Port}: Calculates the unknown complex argument in the entries of a $2\times 2$ transformation using a two-photon coincidence curve.}
	\label{Alg:PhaseCalc2Channel}
	\end{algorithmic}
\end{algorithm}
%------------------------------%

Our procedure computes the signs by using an additional $(m-1)^{2}$ coincidence measurements.
First we arbitrarily set $\theta_{22}$ as positive
\begin{equation}
\operatorname{sgn}\theta_{22} = 1
\label{Eq:Theta22}
\end{equation}
 because of the invariance\footnote{
Expectation values of Fock-state projection measurement with Fock-state inputs are unchanged under $U\to U^*$ if the spectral functions are equal $f_1(\omega) = f_2(\omega)$.
Otherwise, the sign of $-\alpha_{22}$ can be ascertained using the difference in the $\tau>0$ and~$\tau<0$ coincidence counts in $C_{2,2,1,1}(\tau)$.

} of one- and two-photon statistics under complex conjugation $U\to U^*$~\cite{Laing2012}.
The signs of the remaining arguments $\{\theta_{ij}\}$ are set using the coincidence counts between output ports $\{i,i'\}$ when photon pairs are incident at input ports $\{j,j'\}$ for a suitable choice of $\{i',j'\}$ as I describe below.
The coincidence probability at the output ports $i,i'$ is
\begin{align}
C_{ii'jj'}(\tau) =&\,\kappa_{i}\kappa_{i'}\lambda_{i}\lambda_{i'}\mu_{j}\mu_{j'}\nu_{j}\nu_{j'}\Big[\left(\alpha_{ij}^2 \alpha_{i'j'}^2+ \alpha_{ij'}^2 \alpha_{i'j}^2\right)\int \mathrm{d}\omega_1\mathrm{d}\omega_2 |f_{j}(\omega_1)f_{j'}(\omega_2)|^2 \nonumber
\\ &+2 \gamma \alpha_{ij} \alpha_{ij'} \alpha_{i'j} \alpha_{i'j'} \int \mathrm{d}\omega_1\mathrm{d}\omega_2 f_{j}(\omega_1)f_{j'}(\omega_2)f_{j}(\omega_2)f_{j'}(\omega_1)
\nonumber\\
&\times \cos\left(\omega_2\tau-\omega_1\tau+\beta_{ii'jj'}\right)\Big],\label{Eq:Coiniijj}
\end{align}
where
\begin{equation}
\beta_{ii'jj'} \defeq |\theta_{i'j'}-\theta_{ij'}-\theta_{i'j}+\theta_{ij}|\in[0,\pi]\label{Eq:Beta}.
\end{equation}
Curve fitting is employed to estimate the value of $\beta_{ii'jj'}$ that best fits the measured coincidence counts.

%=================%
\begin{algorithm}[h]
	\begin{algorithmic}[1]
	\Require{\Statex
	\begin{itemize}	
		\item $\beta \equiv |\theta_{i'j'}-\theta_{ij'}-\theta_{i'j}+\theta_{ij}|$ \Comment As defined in~(\ref{Eq:SignEquation}).
		\item $\theta_{i'j'}, \theta_{ij'}, \theta_{i'j}, \left|\theta_{ij}\right|$ \Comment Equations~(\ref{Eq:SignEquation}-\ref{Eq:BetaMinus}).
	\end{itemize}
	}
		
	\Ensure{\Statex
	\begin{itemize}
		\item $\operatorname{sgn}\theta_{ij}$\Comment{Sign of $\theta\in(-\pi,\pi]$ is defined in~(\ref{Eq:SignDef})}
	\end{itemize}}
		\Procedure{SignCalc}{$\beta,\theta_{i'j'}, \theta_{ij'}, \theta_{i'j}, \left|\theta_{ij}\right|$}
		\State $\beta^+ \leftarrow |\theta_{i'j'}-\theta_{ij'}-\theta_{i'j}+|\theta_{ij}||$ \Comment If $\theta_{ij}<0$, then $\beta = \beta^{-}$.
		\State $\beta^- \leftarrow |\theta_{i'j'}-\theta_{ij'}-\theta_{i'j}-|\theta_{ij}||$ \Comment If $\theta_{ij}>0$, then $\beta = \beta^{+}$.
		\State $\text{{sgn}}\,\theta_{ij} \leftarrow \operatorname{sgn}\left|\beta-\beta^-\right|-\left|\beta-\beta^+\right|$
 		\State\Return $\text{sgn}\,\theta_{ij}$
		\EndProcedure
		
	\caption{\textsc{SignCalc}: Calculates the complex-phase sign of an element of the $2\times 2$ submatrix of an interferometer transformation matrix.}
	\label{Alg:SignCalc}
	\end{algorithmic}
\end{algorithm}
%------------------------------%

The estimated value of $\beta_{ii'jj'}$ is employed by Algorithm~\ref{Alg:SignCalc} to ascertain the sign of $\theta_{ij}$.
Algorithm~\ref{Alg:SignCalc} relies on the identity
\begin{equation}
\operatorname{sgn}\theta_{ij} = \operatorname{sgn}\left(|\beta_{ii'jj'}-\beta^{-}_{ii'jj'}|-|\beta_{ii'jj'}-\beta^{+}_{ii'jj'}|\right),\label{Eq:SignEquation}
\end{equation}
and on known values of
\begin{equation}
\beta^{\pm}_{ii'jj'}\defeq |\theta_{i'j'}-\theta_{ij'}-\theta_{i'j} \pm |\theta_{ij}||, \beta^{\pm}_{ii'jj'}\in[0,\pi]\label{Eq:BetaMinus}
\end{equation}
to ascertain the sign of $\theta_{ij}$.
If the sign of $\theta_{ij}$ is positive, then $\beta_{ii'jj'} = \beta_{ii'jj'}^{+}$ and~(\ref{Eq:SignEquation}) returns a positive $\operatorname{sgn}\theta_{ij}$. Otherwise, $\beta_{ii'jj'}=\beta_{ii'jj'}^-$, in which case~(\ref{Eq:SignEquation}) gives a negative sign.

%Algorthm~\ref{Alg:PhaseCalcNChannel} finds the arguments of the representative matrix as follows.
%The arguments $\{\theta_{1i}\}$ and~$\{\theta_{i1}\}$ are zero by definition~(\ref{Eq:RMS0}).
%The amplitudes $\{|\theta_{ij}|\}$ of the other arguments are computed using the $(m-1)^{2}$ coincidence measurements $\{C_{i1j1}(\tau): i,j \in\{2,\dots,m\}\}$.
%The sign of $\theta_{22}$ is arbitrarily set positive according to~(\ref{Eq:Theta22}).
%The signs of remaining arguments are computed iteratively.
Algorithm~\ref{Alg:PhaseCalcNChannel} iteratively chooses indices $i,i',j,j'$ such that the signs of $\theta_{ij'}, \theta_{i'j},\theta_{i'j'}$ have already been ascertained before ascertaining the sign of $\theta_{ij}$.
In each iteration, the values of $\beta_{ii'jj'}^{\pm}$ are calculated by substituting $|\theta_{ij}|,|\theta_{ij'}|,|\theta_{i'j}|,|\theta_{i'j'}|,\operatorname{sgn}\theta_{ij'},\operatorname{sgn}\theta_{i'j},\operatorname{sgn}\theta_{i'j'}$.
The algorithm estimates $\beta_{ii'jj'}$ by curve fitting measured coincidence counts to~(\ref{Eq:Coiniijj}).
Algorithm~\ref{Alg:SignCalc} is ascertains the sign of $\theta_{ij}$ using the estimates of $\beta_{ii'jj'}$ and $\beta_{ii'jj'}^{\pm}$.
One suitable ordering of indices $ii'jj'$, which I depict in Figure~\ref{Figure:Ordering}, is
\begin{itemize}
\item set $i'=2,j'=1$ to determine sgn$\theta_{i2}$ for $i\in \{3,\dots,m\}$ (Figure~\ref{Figure:Ordering}b),
\item set $i'=1,j'=2$ to determine sgn$\theta_{2j}$ for $j\in \{3,\dots,m\}$ (Figure~\ref{Figure:Ordering}c),
\item set $i'=2,j'=2$ to determine sgn$\theta_{ij}$ for $(i,j)\in\{3,\dots,m\}\times\{3,\dots,m\}$ (Figure~\ref{Figure:Ordering}d).
\end{itemize}
In summary, $\operatorname{sgn}\theta_{ij}$ is determined using the values of $\beta_{ii'jj'}$, which are estimated by curve fitting and of~$\beta_{ii'jj'}^\pm$, which are computed using the signs and amplitudes of $\theta_{ij'},\theta_{i'j},\theta_{i'j'}$.
Algorithms~\ref{Alg:PhaseCalc2Channel}-\ref{Alg:PhaseCalcNChannel} detail the step-by-step procedure to determine the absolute values and the signs of $\{\theta_{ij}\}$.

For certain interferometers $U$, the ordering of indices $ii'jj'$ depicted in Figure~\ref{Figure:Ordering} can lead to instability in the characterization procedure.
\ref{Sec:Instability} elucidates on this instability and presents strategies to counter the instability.
This completes our procedure to characterize the matrix $A$ for representative matrix $U = LAM$.
In the next subsection, I present a procedure to estimate the matrix that is most likely for the characterized matrix $A$.

\begin{figure}[h]
\centering
\subfloat[]{\includegraphics[width=0.24\textwidth]{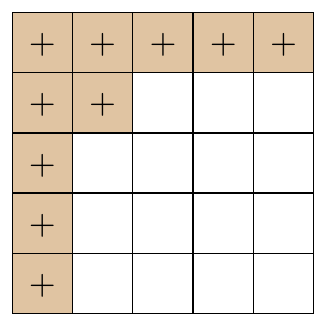}}
\subfloat[]{\includegraphics[width=0.24\textwidth]{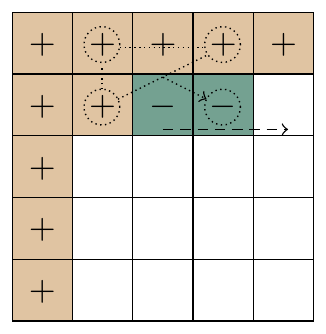}}
\subfloat[]{\includegraphics[width=0.24\textwidth]{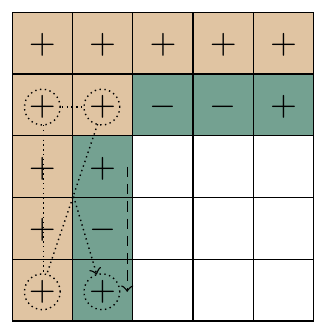}}
\subfloat[]{\includegraphics[width=0.24\textwidth]{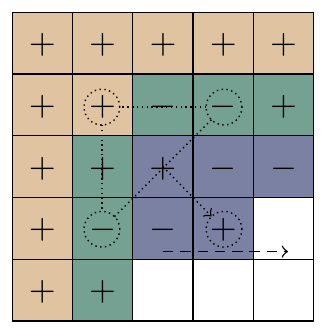}}
\caption{%
A depiction of the sign estimation procedure in Lines~\ref{Algline:StartSigns}--\ref{Algline:EndSigns} of Algorithm~\ref{Alg:PhaseCalcNChannel}.
(a)~The first row and first column arguments $\{\theta_{i1}\},\{\theta_{1j}\}$ are zero, so their signs are arbitrarily set as positive. $\theta_{22}$ is set as positive according to~(\ref{Eq:Theta22}).
(b)~The sign of each second row argument $\theta_{i2}$ is set using the known values $|\theta_{22}|,|\theta_{i2}|$ and coincidence measurement for input ports $1,2$ and output ports $2,i$ as in Line~\ref{Alglin:Rows}.
(c)~The sign of each second column argument $\theta_{2j}$ is set using the known values $|\theta_{22}|,|\theta_{2j}|$ and coincidence measurement for input ports $2,j$ and output ports $1,2$ as in Line~\ref{Alglin:Columns} of Algorithm~\ref{Alg:PhaseCalcNChannel}.
(d)~The signs of each remaining argument $\theta_{ij}$ is set using the known values $|\theta_{22}|,|\theta_{i2}|,|\theta_{2j}|$ and coincidence measurement for input ports $2,j$ and output ports $2,i$ in Line~\ref{Alglin:Rest} of Algorithm~\ref{Alg:PhaseCalcNChannel}.
}\label{Figure:Ordering}
\end{figure}
%------------------------------%
\begin{algorithm}[p]
	\begin{algorithmic}[1]
	\Require{\Statex
	\begin{itemize}	
		\item $k, \Omega = \{\omega_1,\omega_2,\dots\omega_{k-1},\omega_k\} \in \left(\mathds{R}^+\right)^{k}$ \Comment{$f_1,f_2$ are measured at frequencies $\Omega$.}
	\item $f_1,f_2:\Omega \to \mathds{R}^+$\Comment{measured spectra.}
		\item $\ell, T= \{\tau_1,\tau_2,\dots,\tau_\ell\}\in (\mathds{R}\cup 0)^\ell$\Comment{Time delay values coincidence is measured at.}
	\item $C^{\mathrm{exp}}_{ii'jj'}(\tau)$ for $(i,i',j,j')\in\{1,2\}\times\{1,\dots,m\}\times\{1,2\}\times\{1,\dots,m\},\,i\ne i', j\ne j'$
	\Statex \Comment{Measured coincidence at output ports $i',j'$ when photons that have mutual delay $\tau$ are incident at input ports $i,j$.}
	\item $\tilde\alpha:\{2,\dots,m\}\times\{2,\dots,m\}\to\mathds{R}^+$ \Comment{Complex amplitudes~(\ref{Eq:RMS0}).}
	\item $\gamma\in[0,1]$ \Comment{Mode-matching parameter estimated using Algorithm~\ref{Alg:Calibration}.}
	\end{itemize}
	}	
	\Ensure{\Statex
	\begin{itemize}
		\item $\tilde\theta_{ij}:\{1,\dots,m\}\times\{1,\dots,m\}\to (-\pi,\pi]$\Comment{Complex Arguments~(\ref{Eq:RMS0}).}
	\end{itemize}
	}
	\Procedure{ArgumentCalc}{$k, \Omega,f_1,f_2,\ell, T,C^{\mathrm{exp}}_{ii'jj'}(\tau),\alpha_{ij},\gamma$}
	\For{$i$ in $\{1, \dots, m\}$}
		\State $\theta_{i1}, \theta_{1i},\,\operatorname{sgn}\theta_{i1},\,\operatorname{sgn}\theta_{1i} \leftarrow 0$ \Comment The first row, column are real valued.
	\EndFor	

	\For{$(i,j)$ in $\{2, \dots, m\}\times\{2, \dots, m\}$}
		\State $A \leftarrow \{1,1,1,\alpha_{gh}\}$, $\Phi \leftarrow \{0,0,0\}$ \Comment $2\times2$ matrix: rows $1,i$, columns $1,j$.
		\State $|\tilde\theta_{ij}| \leftarrow \textsc{Argument2Port}\left(C^{\mathrm{exp}}_{1i1j}T,\Omega, f_1, f_2,T,A,\Phi,\gamma\right)$
	\EndFor\label{Alglin:EndAmplitudes}
	
	\State $\operatorname{sgn}\theta_{22}\leftarrow 1$ \Comment The sign of $\theta_{22}$ is positive by definition.\label{Algline:StartSigns}
	
	\For{($i,i',j,j')\in \{2\}\times\{3,\dots,m\}\times \{2\}\times\{3,\dots,m\}
	\cup \{1\}\times\{2\}\times \{2\}\times\{3,\dots,m\}
	\cup \{2\}\times\{3,\dots,m\}\times \{1\}\times\{2\}$}
	\State $A \leftarrow \{0,0,0\}, \Phi \leftarrow \{0,0,0\}$
	\State $\beta_{i,i',j,j'} \leftarrow \textsc{Argument2Port}(C^{\mathrm{exp}}_{ii'jj'}(\tau),\Omega, f_1, f_2,T,A,\Phi,\gamma)$
	\EndFor
	\For{$i$ in $\{3, \dots, m\}$} \label{Alglin:StartSecondRowColumn}
		\State $\tilde\theta_{i2} \leftarrow |\tilde\theta_{i2}|${\textsc{SignCalc}}$(\beta_{122i},0,\theta_{22},0,|\theta_{i2}|,));$\label{Alglin:Rows}
		\State $\tilde\theta_{2i} \leftarrow |\tilde\theta_{2i}|${\textsc{SignCalc}}$(\beta_{2i12},0,\theta_{22},0,|\theta_{2i}|,));$\label{Alglin:Columns}
	\EndFor\label{Alglin:EndSecondRowColumn}
	
	\For{$(i,j)$ in $\{3, \dots, m\}\times\{3, \dots, m\}$}
		\State $\tilde\theta_{ij} \leftarrow |\tilde\theta_{ij}|${\textsc{SignCalc}}$(\beta_{ii'jj'},\theta_{22},\theta_{i2},\theta_{2j},|\theta_{ij}|)$\label{Alglin:Rest}
	\EndFor
	\State \Return $\{\theta_{ij}\}$
	\EndProcedure \label{Algline:EndSigns}

	\caption{\textsc{ArgumentCalc}: Calculate $\{\theta_{ij}\}$ using two-photon coincidences}
	\label{Alg:PhaseCalcNChannel}
	\end{algorithmic}
\end{algorithm}
%------------------------------%

%Conservation of total photon numbers requires that the transformation matrix for a lossless linear optical interferometer is unitary,
%\begin{equation}
%U^\dagger U = UU^\dagger = \mathds{1}.
%\end{equation}
%Real-world interferometers can differ from unitarity.
%We assume unitarity and enforce it by requiring the following conditions
%_{i} the first row is normalized,
%(ii)~the first row is orthognal to all other rows,
%(iii)~the first column is normalized, and
%(iv) the first column is orthognal to all other columns.
%These constraints on the rows and columns respectively can be expressed as

%------------------------------%
\subsection{Maximum-likelihood estimation for finding unitary matrix}
\label{Subsec:MaxLikelyUnitary}
%------------------------------%
At this stage, we have estimated the matrix $A$~(\ref{Eq:RMS0}).
The diagonal matrices $L$ and~$M$ can be uniquely determined from $A$ as follows.
The representative matrix $U = LAM$ is unitary so we have
\begin{equation}
U U^\dagger =\mathds{1},
\end{equation}
which, upon substitution $U=LAM$, implies that
\begin{align}
LAMM^*A^\dagger L^* &= \mathds{1}\nonumber\\
\implies AMM^* A^\dagger &= L^{-1}L^{*-1}.\label{Eq:MRRM}
\end{align}
Considering the first columns of the matrices~(\ref{Eq:MRRM}) gives
\begin{equation}
A_{ij}M^*_{jj}M_{jj}A^\dagger_{j1} = \begin{pmatrix}1\\ 0 \\ \vdots\\0\end{pmatrix}
\end{equation}or
\begin{equation}
A\begin{pmatrix}
\mu_1\\\mu_2\\\vdots \\\mu_m\end{pmatrix}
= \begin{pmatrix}1 \\0 \\\vdots \\0\end{pmatrix}.
\label{Eq:MS}
\end{equation}
Similarly, using $U^\dagger U= \mathds{1}$ we obtain
\begin{equation}A^\dagger\begin{pmatrix}1\\\lambda_2\\\vdots \\\lambda_m\end{pmatrix}
=\frac{1}{\mu_1}
\begin{pmatrix}1 \\0 \\\vdots \\0\end{pmatrix}.
\label{Eq:MR}
\end{equation}
Equations~(\ref{Eq:MS}) and~(\ref{Eq:MR}) are systems of linear equations that can be solved for $L$ and~$M$ respectively using standard methods~\cite{Kailath1980}.
The solutions $L$ and~$M$ of the linear systems and the characterized matrix $A$ give us the representative matrix $U = LAM$.

%------------------------------%
\begin{algorithm}[h]
	\begin{algorithmic}[1]
	\Require{\Statex
	\begin{itemize}
		\item $\tilde\alpha:\{1,\dots,m\}\times\{1,\dots,m\}\to\mathds{R}^+\cup 0$\Comment{Estimated amplitudes of $A$~(\ref{Eq:RMS0})}.
		\item $\tilde\theta:\{1,\dots,m\}\times\{1,\dots,m\}\to(-\pi,\pi]$\Comment{Estimated arguments of $A$~(\ref{Eq:RMS0})}.
	\end{itemize}
	}
	\Ensure{\Statex
	\begin{itemize}
		\item $W\in \mathrm{SU}(n)$\Comment{Unitary matrix with maximum likelihood of generating $A$.}
	\end{itemize}
	}
	
	\Procedure{MaxLikelyUnitary}{$\alpha_{ij},\theta_{ij}$}
	\State $\lambda_1\leftarrow 1$
	\State $\{\tilde\mu_i: i\in \{1,\dots,m\}\} \leftarrow$ solution of system~(\ref{Eq:MS}) of linear equations.
	\State $\{\tilde\lambda_i: i\in \{2,\dots,m\}\}\leftarrow$ solution of system~(\ref{Eq:MR}) of linear equations.
	\State $\tilde{U}_{ij}\leftarrow \tilde\lambda_{i}\tilde\alpha_{ij}\e^{\mathrm{i}\tilde\theta_{ij}}\tilde\mu_{j} $
	\State $W \leftarrow \left(\tilde{U}\tilde{U}^\dagger\right)^{-\frac{1}{2}}\tilde{U}$\Comment{Assumption: $U_{ij}-\tilde{U}_{ij}$ is an iid Gaussian random variable with zero mean for all $i,j \in\{1,\dots,m\}$.}
	\EndProcedure
	\caption{\textsc{MaxLikelyUnitary}: Calculates unitary matrix that has maximum likelihood of generating estimated $\{\alpha_{ij}\},\{\theta_{ij}\}$}
	\label{Alg:MaxLikelyUnitary}
	\end{algorithmic}
\end{algorithm}
%------------------------------%

The experimentally determined ${\tilde{A}}$ is different from the actual~$A$ because of random and systematic error in the experiment, where I denote the experimentally determined values of interferometer parameter $\bullet$ by $\tilde{\bullet}$.
Similarly, the $\tilde{L}$ and~$\tilde{M}$ matrices obtained by solving Equations~(\ref{Eq:MS}) and~(\ref{Eq:MR}) for $\tilde{A}$ (rather than $A$) differ from the actual $L$ and $M$ respectively.
The estimated $\tilde{U} = \tilde{L}\tilde{A}\tilde{M}$ is thus a non-unitary matrix and is not equal to $U$ in general.
Furthermore, $\tilde{U}$ is a random matrix, which depends on the random errors in the one- and two-photon experimental data.

We employ maximum-likelihood estimation to calculate the unitary matrix $W$ that best fits the collected data.
First, bootstrapping techniques are used to estimate the probability-density function\gls{pdf} of the entries of the random matrix $\tilde{U}$~\cite{Efron1979,Efron1986}.
Next, standard methods in maximum-likelihood estimation~\cite{Scholz2004} are employed to find the unitary matrix $W$.
Maximum-likelihood estimation simplifies under the assumption that the error on $\tilde{U}$ is a Gaussian random matrix ensemble, i.e, that the matrix entries $\left\{\tilde{U}_{ij}\right\}$ are complex independent and identically distributed (iid) Gaussian random variables centred at the correct matrix entries.
In this case, the most likely unitary matrix $W$ is the one that minimizes the Frobenius distance from $\tilde{U}$~\cite{Tao2012}.
The unitary matrix
\begin{equation}
W = \left(\tilde{U}\tilde{U}^\dagger\right)^{-\frac{1}{2}}\tilde{U},
\label{Eq:W}
\end{equation}
minimizes the~\gls{FND} from $\tilde{U}$~\cite{Keller1975}.
Thus, if the random errors $\{U_{ij} - \tilde{U}_{ij}\}$ in the matrix elements are iid Gaussian random variables with mean zero, then $W$ is the best-fit unitary matrix.
Figure~\ref{Figure:UUV} is a depiction of the actual, the estimated and the most likely transformation matrices.
Algorithm~\ref{Alg:MaxLikelyUnitary} computes $W$.

This completes our procedure to estimate the most-likely unitary matrix $W$ that represents the linear optical interferometer.
In the next section, I present a procedure to estimate the error bars on the entries of the estimated representative matrix $W$ accurately.

\begin{figure}[h]
\begin{centering}
\includegraphics[width=0.5\textwidth]{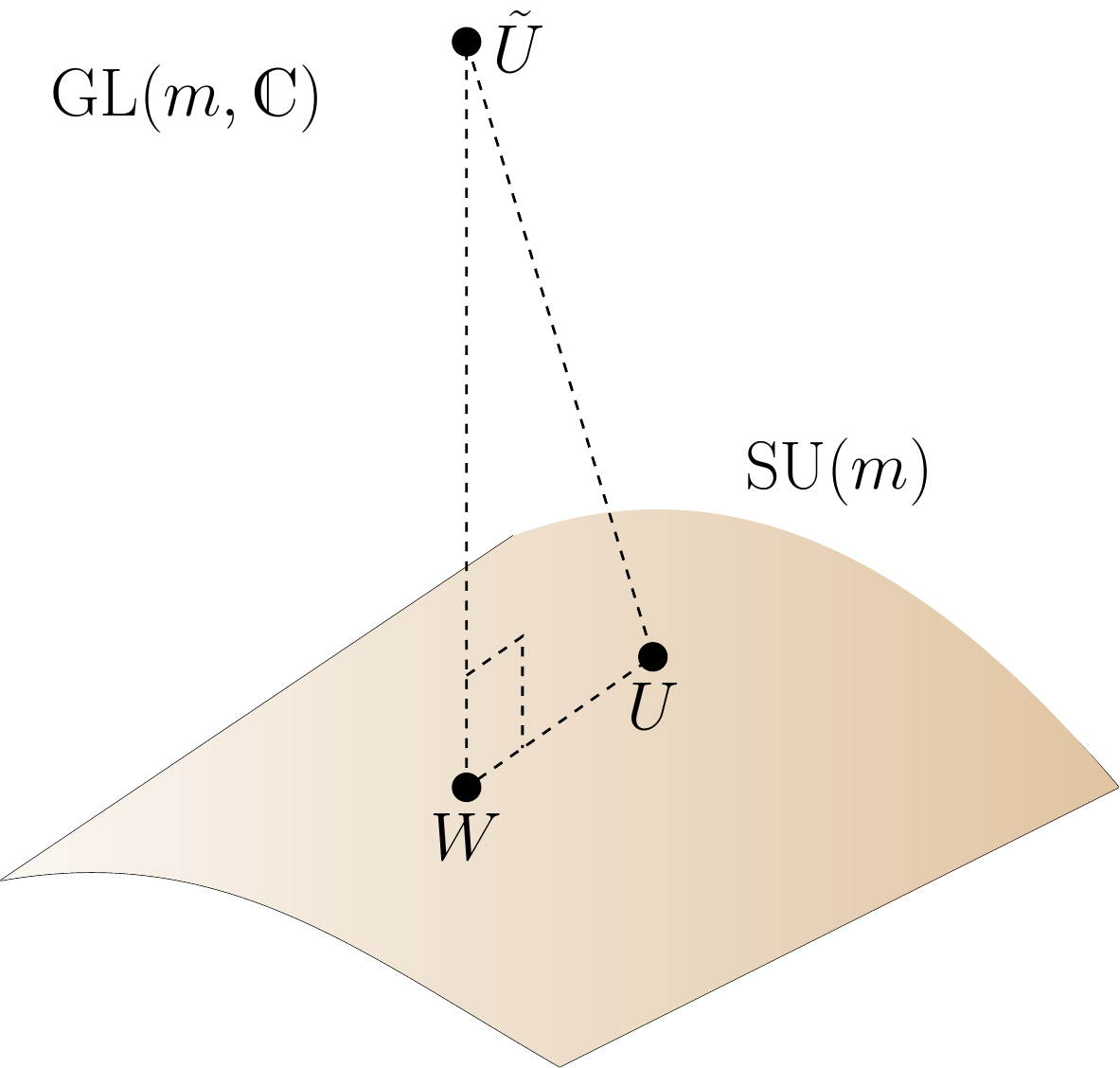}
\caption{
A depiction of the error in reconstruction of the interferometer matrix $U$.
The matrix $U$ represents the unitary transformation effected by the interferometer.
$\tilde U$ is the complex-valued transformation matrix returned by the reconstruction procedure.
Algorithm~\ref{Alg:MaxLikelyUnitary} returns $W$, which represents the unitary matrix that is most likely to have generated the data collected in the characterization experiment.
}
\label{Figure:UUV}
\end{centering}
\end{figure}

%------------------------------%
\subsection{Bootstrapping to estimate error bars (Algorithm~\ref{Alg:Bootstrapping})}
\label{Sec:Bootstrapping}
%------------------------------%

In this section, I present a procedure to estimate the error bars on the matrix entries $\{W_{ij}\}$ of the characterized representative matrix $W$.
The entries $\{W_{ij}\}$ computed by Algorithms~\ref{Alg:Coincidence}--\ref{Alg:MaxLikelyUnitary} are random variables because of random error in experiments.
Obtaining accurate error bars on these random variables is important for using characterized linear optical interferometers in quantum computation and communication.
Current procedures compute error bars under the assumption that Poissonian shot noise is the only source of error in experiment~\cite{Mower2014,Carolan2015}.

We choose to employ bootstrapping on the data determine error bars~\cite{Efron1979,Efron1986,Diaconis1983,Willmott1985,Manly2006}.
Monte-Carlo simulation is widely used but this technique is not applicable here because the Poissonian shot noise assumption is not reliable given the presence of other sources of error some of which are not understood.
Bootstrapping is preferred because the nature of the error need not be characterized and instead relies on random sampling with replacement from the measured data.
Bootstrapping can be employed to yield estimators such as bias, variance and error bars.

Algorithm~\ref{Alg:Bootstrapping} calculates the error bars $\sigma(W_{ij})$ using estimates of the $\{W_{ij}\}$ pdf's, which are obtained using bootstrapping as follows.
The algorithm simulates $N$ characterization experiments using the one- and two-photon data, i.e., the inputs to Algorithms~\ref{Alg:Coincidence}--\ref{Alg:MaxLikelyUnitary}.
In each of the $N$ rounds, the one- and two-photon data are randomly sampled with replacement (resampled) to generate simulated data.
The data thus simulated are given as inputs to Algorithms~\ref{Alg:Coincidence}--\ref{Alg:MaxLikelyUnitary}, which return the simulated representative matrices
\begin{equation}
\left\{ W'^{b}: b\in \{1,\dots,N\},\,N\in\mathds{Z}^{+} \right\}.
\end{equation}
The pdf's of the simulated-matrix entries $\{ W'^{b}_{ij}: b\in \{1,\dots,N\}\}$ converge to the pdf's of the respective elements $\{W_{ij}\}$ for large enough $N$~\cite{Diciccio1996,Davison1997}.

The simulated data are obtained in each round by resampling from the one- and two-photon experimental data as follows.
Single-photon detection counts are simulated by resampling from the set~$\{N_{ijb_{j}}:\, b_{j}\in\{1,\dots,B\}\}$ of experimental detection counts (Line~\ref{Alglin:Sim1Photon} of Algorithm~\ref{Alg:Bootstrapping}).
Two-photon coincidence counts are simulated by shuffling residuals obtained on curve-fitting experimental data.
Specifically, the algorithm (Line~\ref{Alglin:StartBootTheta}) resamples from the set
\begin{equation}
\{r(\tau) = C^{\mathrm{exp}}_{ii'jj'}(\tau) - C_{ii'jj'}(\tau): \tau \in T\}
\end{equation}
of residuals obtained by fitting experimentally measured coincidence counts to function $C_{ii'jj'}(\tau)$~(\ref{Eq:CoincidenceRate}).
The resampled residuals are added to the fitted curve to generate the simulated data (Line~\ref{Alglin:EndBootTheta}).\footnote{
The pdf of the residuals is different for different values of $\tau$.
We assume that the pdf's for different $\tau$ are of the same functional form, albeit with different widths.
The distribution of the residuals for different values of $\tau$ are determined using standard methods for non-parametric estimation of residual distribution~\cite{Akritas2001,Chen2003}.
Algorithm~\ref{Alg:Bootstrapping} normalizes the residuals before resampling from the residual distribution.
}
Algorithms~\ref{Alg:Coincidence}--\ref{Alg:MaxLikelyUnitary} are used to obtain the simulated elements $W_{ij}$ of the representative matrix.
Finally, the error bars on the $\{W_{ij}\}$ are estimated by the standard deviation of the pdf of the elements.

This completes the characterization of representative matrix $W$ and the error bars on its elements.
The next section details a procedure for the scattershot characterization of the interferometer to reduce the experimental time required for characterizing a given interferometer.

%------------------------------%
\begin{algorithm}[p]
	\begin{algorithmic}[1]
	\Require{\Statex
	\begin{itemize}	
	
	\item $k, \Omega, f_1,f_2:\Omega \to \mathds{R}^+$\Comment{Spectral functions: same as Algorithm~\ref{Alg:Calibration}.}
	\item $\ell, T= \{\tau_1,\tau_2,\dots,\tau_\ell\}\in (\mathds{R}\cup 0)^\ell$\Comment{Time delay values.}
 \item $m,C^{\mathrm{cal}}(\tau), C^{\mathrm{exp}}_{ii'jj'}(\tau)$ for $\tau \in T$ and $(i,i',j,j')$ \Comment{Same as Algorithms~\ref{Alg:Calibration} and~\ref{Alg:PhaseCalcNChannel}}.
	\item $B,N_{ijb_{j}}:\{1,\dots,m\}\times\{1,\dots,m\} \times \{1,\dots,B\}\to \mathds{Z}^+$ as in Algorithm~\ref{Alg:AmplitudeEstimation}.
 \item $N$ \Comment Number of bootstrapping samples.

	\end{itemize}
	}	
	\Ensure{\Statex
	\begin{itemize}
		\item $\sigma\left(\operatorname{re}W_{ij}\right),\sigma(\operatorname{im}W_{ij}):\{1,\dots,m\}\times\{1,\dots,m\}\to\mathds{R}^+$ \Comment{Error in $W$ elements.}
	\end{itemize}
	}
	
	\Procedure{Bootstrap}{$k, \Omega, f_1,f_2,\ell,T,\gamma,C^{\mathrm{exp}}_{ii'jj'}(\tau),\theta_{ij},B,N_{ijb},N$}
	\State $A \leftarrow \{\cos\vartheta,\sin\vartheta,\sin\vartheta,\cos\vartheta\}$, $\Phi \leftarrow \{0,\pi/2,\pi/2,0\}$
	\State $\mathrm{Residuals}^{\mathrm{cal}}(\tau) \leftarrow C^\mathrm{cal}(\tau) - \textsc{Coincidence}(k,\Omega,f_1,f_2,\ell,T,A,\Theta,\gamma)$
	\State $\mathrm{NormalResiduals}^\mathrm{cal}(\tau) \leftarrow \frac{\mathrm{Residuals}^{\mathrm{cal}}(\tau)}{C^\mathrm{fit}(\tau)}$\Comment{Assumption: $\mathrm{Residuals}^{\mathrm{cal}}(\tau)$ pdf width $\propto C^{\mathrm{fit}}(\tau)$.}
	\For{$(i,i',j,j')\in\{1,2\}\times\{1,\dots,m\}\times\{1,2\}\times\{1,\dots,m\},\,i\ne i', j\ne j'$}
		\State $A \leftarrow \{\alpha_{i'j'},\alpha_{i'j},\alpha_{ij'},\alpha_{ij}\}$, $\Phi \leftarrow \{\theta_{i'j'},\theta_{i'j},\theta_{ij'},\theta_{ij}\}$
		\State $C^\mathrm{fit}_{ii'jj'}\leftarrow$ \textsc{Coincidence}($k,\Omega,f_1,f_2,\ell,T,A,\Theta,\gamma$)
		\State $\mathrm{Residuals}_{ii'jj'}(\tau) \leftarrow C_{ii'jj'}^\mathrm{exp}(\tau) - C_{ii'jj'}^\mathrm{fit}(\tau)$
		\State $\mathrm{NormalResiduals}_{ii'jj'}(\tau) \leftarrow \frac{\mathrm{Residuals}_{ii'jj'}(\tau) }{C_{ii'jj'}^\mathrm{fit}(\tau)}$
	\EndFor
	\For{$n = 1$ to $N$}
		\State $\mathrm{ShuffledNormalResiduals}^{\mathrm{cal}}(\tau) \leftarrow$ Resample $|T|$ residuals from $\mathrm{NormalResiduals}^{\mathrm{cal}}(\tau)$\label{Alglin:StartBootTheta}
		\State $\mathrm{ShuffledResidual}^{\mathrm{cal}}(\tau) \leftarrow C^\mathrm{fit}(\tau) \times \mathrm{ShuffledNormalResiduals}^{\mathrm{cal}}(\tau) $
		\State $C^{n}(\tau) = C^\mathrm{fit}(\tau) + \mathrm{ShuffledResidual}(\tau)$\label{Alglin:EndBootTheta}
		\State $\gamma^{n} \leftarrow$ \textsc{Calibration}$(k, \Omega, f_1,f_2, \ell, T,C^\mathrm{b},\vartheta)$
	\For{$(i,i',j,j')\in\{1,2\}\times\{1,\dots,m\}\times\{1,2\}\times\{1,\dots,m\},\,i\ne i', j\ne j'$}
			\State $\alpha^{n}_{ij}\leftarrow${\textsc{Mean}}$\sqrt{{N}_{11b_1}{N}_{ijb_j}/{N}_{1jb_j}{N}_{i1b_1}}$ from $B$ values each of $b_1,b_j$ drawn with replacement from $\{1,\dots,B\}$ \label{Alglin:Sim1Photon}
			\State $\mathrm{ShuffledNormalResiduals}_{ii'jj'}(\tau) \leftarrow |T|$ entries in $\mathrm{NormalResiduals}_{ii'jj'}(\tau)$
			\State $\mathrm{ShuffledResidual}_{ii'jj'}(\tau) \leftarrow C_{ii'jj'}^\mathrm{fit}(\tau) \times \mathrm{ShuffledNormalResiduals}_{ii'jj'}(\tau) $
			\State $C_{ii'jj'}^{n}(\tau) = C_{ii'jj'}^\mathrm{fit}(\tau) + \mathrm{ShuffledResidual}_{ii'jj'}(\tau)$	
		\EndFor
	\State $\{\theta_{ij}^{n}\}$ = {\textsc{ArgumentCalc}}($k, \Omega,f_1,f_2,\ell, T,C^{n}_{ii'jj'}(\tau),\alpha_{ij},\gamma$)
	\State $\{W^{\prime b}_{ij}\} =\, ${\textsc{MaxLikelyUnitary}}$\left(\left\{\alpha_{ij}^{n}\},\{\theta_{ij}^{n}\right\}\right)$
	\EndFor
	\For{$(i,j)$ in $\{1, \dots, m\}\times\{1, \dots, m\}$}
		\State $\sigma(\operatorname{re}W_{ij}) = \text{std.~dev.}\left(\{\operatorname{re}W_{ij}\}\right);\,\sigma(\operatorname{im}W_{ij}) = \text{std.~dev.}\left(\{\operatorname{im}W_{ij}\}\right)$
	\EndFor
	\EndProcedure
	
	\caption{\textsc{Bootstrap}: Estimate error bars on $\{W_{ij}\}$.}
	\label{Alg:Bootstrapping}
	\end{algorithmic}
\end{algorithm}
%------------------------------%

%=================%
\section{Scattershot characterization for reduction in experimental time}
\label{Sec:Scattershot}
%=================%
In this section, I present a scattershot-based characterization approach to effect a reduction in the characterization time~\cite{Lund2014,Bentivegna2015}.
Our scattershot approach reduces the time required to characterize an $m$-mode interferometer from $\BigO{m^{4}}$ to $\BigO{m^{2}}$ with constant error in the interferometer-matrix entries.

The straightforward approach of characterization involves coupling and decoupling light sources successively for each one- and two-photon measurement.
In contrast, the scattershot characterization relies on coupling heralded nondeterministic single-photon sources to each of the input ports of the interferometer and detectors to each of the output ports.
Controllable time delays are introduced at two input ports, which are labelled as the first and second ports.
All sources and detectors are switched on and the controllable time-delay values are changed first for the first port and then for the second port.

Single-photon data are collected by selecting the events in which exactly one of the heralding detectors and exactly one of the output detectors register a photon simultaneously.
Two-photon coincidence events at the outputs are counted when two heralding detectors register photons.
The controllable time delays introduced at the first and second input ports ensure that each of the $2(m-1)^2$ coincidence measurements is performed.
Note that our characterization procedure (Algorithms~\ref{Alg:Coincidence}--\ref{Alg:Bootstrapping}) yields accurate estimates of interferometer parameters even when photon sources with different spectral functions are used.
In summary, the required characterization data are collected by selectively recording one- and two-photon events.
The setup for the scattershot characterization of an interferometer is depicted in Figure~\ref{Figure:Scattershot}.

\begin{figure}[h]
\begin{centering}
\includegraphics[width=0.98\textwidth]{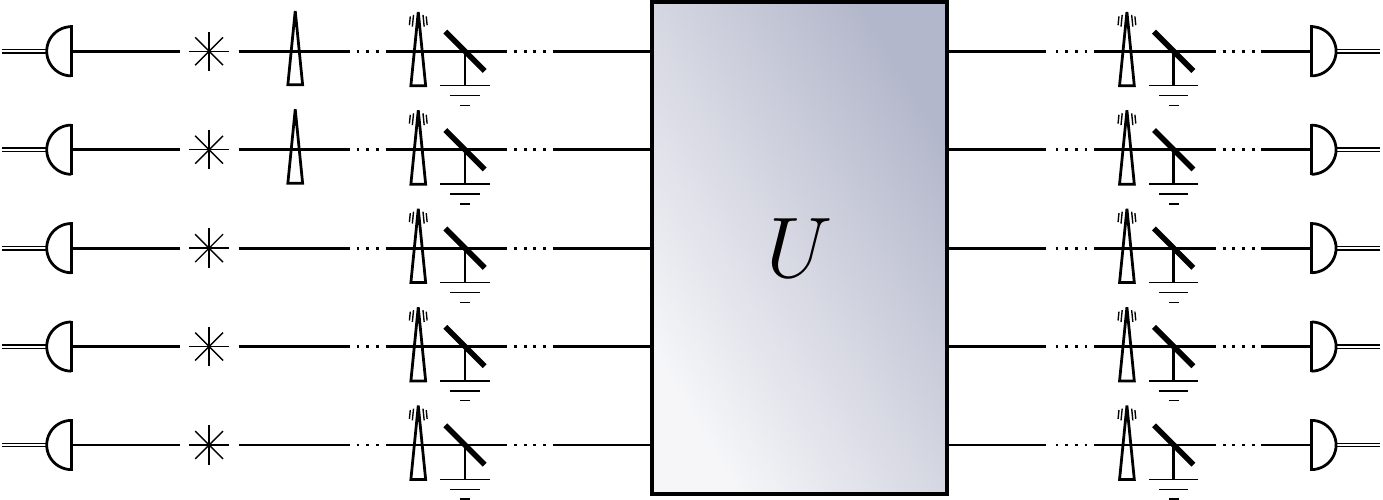}
\caption{Schematic diagram of the procedure for scattershot characterization of a five-mode interferometer $U$.
Heralded single-photon sources are coupled to the inputs of the interferometer and controllable time delays are introduced at the first two ports.
All sources and detectors are switched on and the controllable time delay values are changed for the first port and then for the second port.
The required characterization data are collected by selectively recording one- and two-photon events.
} % data are are. are not is.
\label{Figure:Scattershot}
\end{centering}
\end{figure}

Now I quantify the experimental time required in the characterization of a linear optical interferometer.
Our characterization procedure requires $Bm^2$ single-photon counting measurements and~$2(m-1)^2$ coincidence-counting measurements to characterize an $m$-mode interferometer.
We estimate the time required for each of these measurements such that random errors in the $\{\alpha_{ij}\},\{\theta_{ij}\}$ estimates remain unchanged with increasing $m$.
To ensure constant error in the $\{\alpha_{ij}\},\{\theta_{ij}\}$ estimates, we require that the number of one- and two-photon detection counts remain unchanged with increasing $m$.
The probability of photon detection at the output decreases with increasing $m$ because of the concomitant decrease in the transmission amplitudes $\{\alpha_{ij}\}$.

The amplitudes $\{\alpha_{ij}\}$ drop as $\BigO{1/\sqrt{m}}$ because of the unitarity of $U$~\cite{Dhand2014}.
Hence, one- and two-photon transmission probabilities~(\ref{Eq:PSinglePhotons},\ref{Eq:CoincidenceRate}) decrease as $1/m$ and $1/m^{2}$, respectively.
More photons need to be incident at the interferometer input ports to offset this decrease in transmission probabilities.
Therefore, maintaining a constant standard deviation in the $\{\alpha_{ij}\}$ and~$\{\theta_{ij}\}$ measurements requires $\BigO{m}$ and~$\BigO{m^{2}}$ scaling respectively in the number of incident photons, which amounts to an overall $\BigO{m^4}$ scaling in the experimental time requirement.
Scattershot characterization allows $(m-1)^2$ different sets of the one- and two-photon data to be collected in parallel thereby reducing the time required to characterize the interferometer by a factor of $(m-1)^2$.
The overall time required for the characterization decreases from $\BigO{m^{4}}$ to $\BigO{m^{2}}$ if the scattershot approach is employed.

Our analysis of scattershot characterization assumes that the coupling losses are small and that weak single-photon sources are used, i.e., that the probability of multi-photon emissions from the heralded sources is small as compared to single-photon emission probabilities.
These assumptions are expected to hold for on-chip implementations of linear optics that have integrated single-photon sources and detectors.

Light sources used at each input port in our scattershot-based characterization procedure differ spectrally in generally.
Our characterization procedure is accurate despite this difference because we measure source-field spectra and using these data in the curve-fitting procedure.

We have developed the scattershot approach which has advantages and disadvantages but on balance is a superior experimental approach to consecutive measurement.
The advantage is that the time requirement for characterization if reduced by a factor that scales as~$\BigO{m^{2}}$.
The disadvantage is the overhead of requiring one source at each input port and one detector at each output port.
The disadvantage is not daunting because these requirements are commensurate with other active investigations of QIP such as LOQC and scattershot BosonSampling.
In fact, state of the art implementations~\cite{Bentivegna2015} meet our increased requirements for scattershot characterization.
\section{Removal of instability in $\operatorname{sgn}\theta_{ij}$ estimation}
\label{Sec:Instability}
%------------------------------%
In this section, I describe an instability in our characterization procedure, which can yield large error in the $\{W_{ij}\}$ output for small error in the experimental data $C^{\mathrm{exp}}_{ii'jj'}(\tau)$ in case of certain interferometers $W$.
We present a strategy to circumvent this instability by means of collecting and processing additional experimental data.

\begin{figure}[h]
\centering
\subfloat[]{\includegraphics[width=0.51\textwidth]{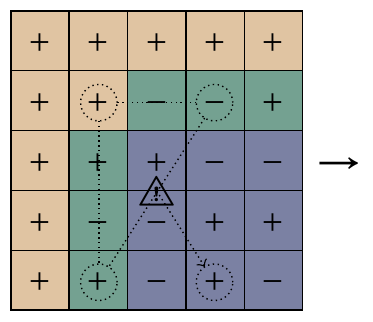}}
\subfloat[]{\includegraphics[width=0.43\textwidth]{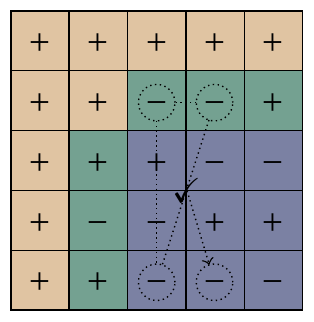}}
\caption{An illustration of the instability in the $\theta_{ij}$ characterization procedure for an interferometer with $m=5$ modes.
(a)~If the value of $|\theta_{22}-\theta_{52}-\theta_{24}|$ is close to $0$ or $\pi$, then small error in $C_{2524}^{\mathrm{exp}}(\tau)$ can lead to an error in the estimation of $\operatorname{sgn}(\theta_{54})$.
(b)~The instability in the $\theta_{54}$ can be removed by collecting two-photon coincidence data for output ports $2,5$ and input ports $3,4$, and using the values of $\theta_{23},\theta_{53},\theta_{24}$ instead of $\theta_{22},\theta_{52},\theta_{24}$ values.}
\label{Figure:Syndrome}
\end{figure}

The instability in the characterization procedure arises because of an instability in estimation of $\{\operatorname{sgn}\theta_{ij}\}$ (Algorithm~\ref{Alg:SignCalc}).
Small error in the measured coincidence counts can lead to the wrong inference of $\operatorname{sgn}\theta_{ij}$, which can lead to a large error $\|W-U\|$ in the characterized matrix $W$.
Recall that Algorithm~\ref{Alg:SignCalc} uses the identity $\operatorname{sgn}\theta_{ij} = \operatorname{sgn}\left(\big|\beta_{ii'jj'}-\beta_{ii'jj'}^{+}\big|-\big|\beta_{ii'jj'}-\beta_{ii'jj'}^{-}\big|\right)$ (\ref{Eq:SignEquation}) to determine the sign of the arguments, where $\beta^{\pm}_{ii'jj'}\defeq |\theta_{i'j'}-\theta_{ij'}-\theta_{i'j}\pm|\theta_{ij}||$ and the values of $\beta_{ii'jj'}, \theta_{i'j'},\theta_{i'j},\theta_{ij'},\left|\theta_{ij}\right|$ are estimated by curve fitting.
%Thus, the procedure checks if the estimated $\beta_{ii'j'}$ value is closer to $\beta_{ii'jj'}^{+}$, in which case it infers that $\operatorname{sgn}\theta_{ij}$ is positive, or to $\beta_{ii'jj'}^{-}$, for which $\operatorname{sgn}\theta_{ij}$ is negative.

Random and systematic error in measured coincidence counts can lead to estimate of variables $\beta_{ii'jj'}, \theta_{i'j'},\theta_{i'j},\theta_{ij'},\left|\theta_{ij}\right|$ differing from their actual values.
The estimation of $\operatorname{sgn}\theta_{ij}$ is unstable if the $\theta_{i'j'}-\theta_{i'j}-\theta_{ij'}$ term (\ref{Eq:SignEquation}) is close to $0$ or $\pi$ because, in this case, a small error in the $\beta_{ii'jj'}$ estimate can lead to an incorrect $\operatorname{sgn}\theta_{ij}$ estimate.
In other words, the sign estimates are unstable if the values of
\begin{equation}
\theta_{ii'jj'}^{\mathrm{ref}} = \mathrm{min}\left[\theta_{i'j'} - \theta_{ij'} - \theta_{i'j},\pi-(\theta_{i'j'} - \theta_{ij'} - \theta_{i'j})\right],
\end{equation}
are small compared to the error in our $\beta_{ii'jj'}, \theta_{i'j'},\theta_{i'j},\theta_{ij'},\left|\theta_{ij}\right|$ estimates.

We mitigate the sign-inference instability by making two modifications to our characterization procedure; the first modification removes instability from the sign-inference of the second row and second column elements whereas the second modification prevents incorrect inference of the remaining signs.
The inference of $\{\operatorname{sgn}\theta_{i2}\},\{\operatorname{sgn}\theta_{2j}\}$ (Lines~\ref{Alglin:StartSecondRowColumn}--\ref{Alglin:EndSecondRowColumn}, Figures~\ref{Figure:Ordering}b,~\ref{Figure:Ordering}c) is unstable if
\begin{equation}
\theta_{i2j2}^{\mathrm{ref}} =\mathrm{min}(\theta_{22},\pi-\theta_{22})
\end{equation}
is small as compared to the error in the $\beta_{i2j2}, \theta_{22},\theta_{2j},\theta_{i2},\left|\theta_{ij}\right|$ estimates.
Hence, we relabel the interferometer ports such that $\theta_{22}$ is as far away from $0$ and $\pi$ as possible.
Specifically, after the amplitudes of the phases have been estimated (Line~\ref{Alglin:EndAmplitudes} of Algorithm~\ref{Alg:PhaseCalcNChannel}), we choose $i,j$ for which $|\theta_{ij}-\pi/2|$ is minimum, and we swap the labels of input ports $2,j$ and output ports $2,i$.
We measure two-photon coincidence counts based on this new labelling and process it using Algorithm~\ref{Alg:PhaseCalcNChannel}.
The instability in the procedure for estimation of the $\{\theta_{i2}\},\{\theta_{2j}\}$ signs is removed as a result of the relabelling.

The second modification is aimed at removing the instability in the remaining signs.
The procedure estimates the remaining signs by using $\{C^{\mathrm{exp}}_{ii'jj'}(\tau)\}$ values for $i'=j'=2$.
The estimation of $\theta_{ij}$ is unstable if $\theta_{i2j2}^{\mathrm{ref}}$ is small as compared to the error in the $\beta_{i2j2}, \theta_{22},\theta_{2j},\theta_{i2},\left|\theta_{ij}\right|$ estimates.
We make a heuristic choice of a threshold angle $\theta^{\mathrm{T}}$ that accounts for the error in these variables, and we reject any $\operatorname{sgn}\theta_{ij}$ inferred using $ \theta_{i2j2}^{\mathrm{ref}} \le \theta^{\mathrm{T}}$.
 Additional two-photon coincidence counting is performed and employed to estimate these values of $\theta_{ij}$, as detailed in the following lines that can be added to the algorithm to remove the instability
\begin{algorithmic}[1]
 \setcounter{ALG@line}{0}
 \algrenewcommand{\alglinenumber}[1]{\footnotesize{\ref{Alglin:EndSecondRowColumn} + #1:}}
 \For{$(i,j)$ in $\{3, \dots, m\}\times\{3, \dots, m\}$}
 \algrenewcommand{\alglinenumber}[1]{\footnotesize{\hspace{2em} #1:}}
 \If{$\theta^r_{i2j2} < \theta_T$}
 \State Choose $i'\ne 1,i$ and $j'\ne 1,j$ such that $|\theta_{i'j'}-\theta_{ij'}-\theta_{j'j}|$ is closest to $\pi/2$.	
 	\State $C^{\mathrm{exp}}_{ii'jj'}(\tau)\leftarrow$ Coincidence counts for input ports $j,j'$ and output ports $i,i'$.
 \State $A \leftarrow \{0,0,0\}, \Phi \leftarrow \{0,0,0\}$
		\State $\beta_{ii'jj'} \leftarrow \textsc{Argument2Port}(C^{\mathrm{exp}}_{ii'jj'}(\tau),\Omega, f_1, f_2,T,A,\Phi,\gamma)$
		\State $\theta_{ij} \leftarrow \theta_{ij}${\textsc{SignCalc}}$(\beta_{ii'jj'},\theta_{i'j'},\theta_{ij'},\theta_{i'j},|\theta_{ij}|)$
 \EndIf
	\EndFor
\end{algorithmic}
Figure~\ref{Figure:Syndrome} illustrates the rejection of those $i,j$ choices for which $\theta_{i2j2}^{\mathrm{ref}} \le \theta^{\mathrm{T}}$ and the use of $ C^{\mathrm{exp}}_{ii'jj'}(\tau), j'\ne 2$ counts to obtain a correct estimate of $\operatorname{sgn}\theta_{ij}.$
We thus remove the instability in the estimation of $\{\theta_{ij}\}$ and in the estimation of the representative matrix $W$.

%------------------------------%
\section{Summary of procedure and discussions}
\label{Sec:CharDiscussions}
%------------------------------%
In this section, I summarize our characterization procedure for a less formally oriented audience.
We describe the processing of the collected experimental data by the various algorithms presented in Section~\ref{Sec:Procedure}.
We compare our procedure with the existing procedure for the characterization of linear optics using one- and two-photon interference~\cite{Laing2012}.
We provide numerical evidence that our characterization procedure promises enhanced accuracy and precision even in the presence of shot noise and mode mismatch.

The experimental data required by our procedure to characterize an $m$-mode interferometer includes the following one- and two-photon measurements.
The number $N_{ijb_{j}}$~(\ref{Eq:Nijbdef}) of single-photon detection events is counted at the $j$-th output port when single photons are incident at the $i$-th input port.
This single-photon counting is repeated $B$ times for each of the input ports and output ports, where $B$ is chosen such that the cumulants of the set $\{N_{ijb_{j}}:b_{j}\in\{1,\dots,B\}\}$ converge.
The single-photon counts $\{N_{ijb_{j}}\}$ are received by Algorithm~\ref{Alg:AmplitudeEstimation}, which returns the $\{\alpha_{ij}\}$~(\ref{Eq:RMS0}) estimates using Equation~(\ref{Eq:AlphaMeasure}).

The spectral function $f_{j}(\omega)$~(\ref{Eq:Single}) of the light incident at each input port $j$ is measured.
This function is used by Algorithm~\ref{Alg:Coincidence} to calculate the expected two-photon coincidence curves using Equation~(\ref{Eq:CoincidenceRate}).
Fitting experimental data to these coincidence curves yields an accurate estimate of the mode-matching parameter during calibration and the arguments $\{\theta_{ij}\}$ in the argument-estimation procedure.
Thus, the spectral function $f_{j}(\omega)$ serves as an input to the algorithms for the estimation of the mode-matching parameter
 and of the arguments $\{\theta_{ij}\}$ (Algorithms~\ref{Alg:Calibration}--\ref{Alg:PhaseCalcNChannel}).

The mode-matching parameter~$\gamma$ is estimated by performing coincidence measurement on a beam splitter that is separate from the interferometer but is constructed using the same material.
First, we use single-photon data to estimate the reflectivity $\cos\vartheta$ of the beam splitter according to Equation~(\ref{Eq:FindVartheta}).
Imperfect mode-matching changes the shape of the coincidence curve, and we find~$\gamma$ by comparing the shapes of (i)~the curve expected for reflectivity $\cos\vartheta$ and (ii)~the curve obtained experimentally.
The estimated beam splitter reflectivity, the measured spectra and the coincidence counts are received as inputs by Algorithm~\ref{Alg:Calibration}, which returns an estimate of~$\gamma$.

Algorithm~\ref{Alg:PhaseCalcNChannel} uses two-photon coincidence counts to estimate the arguments $\{\theta_{ij}\}$.
Coincidence counts are measured for the input ports $j,j'$ and output ports $i,i'$ for the $2(m-1)^{2}$ sets
\begin{equation}
(i,i',j,j') \in \{1,2\}\times\{1,\dots,m\}\times \{1,2\}\times\{1,\dots,m\}, \quad i\ne i',\, j\ne j'
\label{Eq:Choice}
\end{equation}
 of input and output ports.
In other words, coincidence counts are measured for different choices of two input ports and two output ports, such that each of the choices includes (i)~either the first or the second input ports and (ii)~either the first or the second output port.
Algorithm~\ref{Alg:PhaseCalcNChannel} receives as input the measured spectra, the $\{\alpha_{ij}\}$ values estimated by Algorithm~\ref{Alg:AmplitudeEstimation}, the~$\gamma$ value estimated by Algorithm~\ref{Alg:Calibration} and the two-photon coincidence data for the choice~(\ref{Eq:Choice}) of input ports.
The algorithm returns the $\{\theta_{ij}\}$ estimates.
The computed estimates of $\{\alpha_{ij}\}$ and of $\{\theta_{ij}\}$ yield the representative unitary matrix $W$ (\ref{Eq:W}) that has maximum likelihood of describing the characterized interferometer (Algorithm~\ref{Alg:MaxLikelyUnitary}).
This completes a summary of our procedure for characterization of the interferometer.

Algorithm~\ref{Alg:Bootstrapping} employs bootstrapping to find the error bars on the elements $\{W_{ij}\}$ of the characterized unitary matrix.
The bootstrapping procedure uses the experimental data that is received by Algorithms~\ref{Alg:Coincidence}--\ref{Alg:MaxLikelyUnitary} and repeatedly simulates experiments by resampling from the experimental data.
The number $N$ of repetitions is chosen such that the pdf's of the $\{W_{ij}\}$ elements over many rounds of simulation converge.
The error bars on the $\{W_{ij}\}$ elements are computed based on the estimated pdf's of the elements. Our procedure thus enables the estimation of meaningful error bars on the characterized unitary matrix.

Bootstrapping is employed to test the goodness of fit between the experimental curve and expected curves~\cite{Kojadinovic2012}.
Experiments~\cite{Metcalf2013,Tillmann2015} can employ bootstrapping instead of the incorrect $\chi^{2}$-confidence measure to test if the data are consistent with quantum predictions or with the classical theory.

Finally, we recommend a scattershot approach for reducing the experimental time required to characterize interferometers.
The approach involves coupling heralded nondeterministic single-photon sources at each of the input ports and single-photon detectors at each of the output ports.
All the sources and the detectors are switched on in parallel.
Single-photon counts are recorded selectively as two-photon coincidences between the heralding detectors and the output detectors, whereas two-photon events are recorded when two heralding detectors and two output detectors record photons.
Controllable time delays are introduced at the first and second input ports so coincidences between each of the $2(m-1)^2$ choices~(\ref{Eq:Choice}) of input and output ports are recorded.
The scattershot approach reduces the experimental time required to characterize an $m$-mode interferometer from $\BigO{m^{4}}$ to $\BigO{m^{2}}$.

Now we compare and contrast our procedure with the Laing-O'Brien procedure~\cite{Laing2012}.
Our procedure is inspired by the Laing-O'Brien procedure in that it employs
(i)~a `real-bordered' parameterization~(\ref{Eq:RMS0}) of the representative matrix and modelling of linear losses at the interferometer ports,
(ii)~a ratio of single-photon data to estimate the complex amplitudes of the matrix elements and
(iii)~an iterative procedure that uses two-photon data to estimate the amplitudes of the complex arguments and to estimate the signs of the complex arguments.

Our procedure differs from the Laing-O'Brien~\cite{Laing2012} procedure in that we use averaged value~(\ref{Eq:AlphaMeasure}) of the ratio of single-photon detections over many runs rather than the ratio of averaged values.
This difference ensures accuracy of our procedure even under fluctuation in the number of incoming photons.
Such fluctuations might arise from fluctuations in pump strength of the single photon source or in the strength of coupling between photon source and interferometer.

Another advance in our method is the curve-fitting procedure for estimating complex arguments of interferometer matrices.
The Laing-O'Brien procedure requires coincidence-curve visibilities to estimate complex arguments~$\alpha_{ij}$.
Whereas the Laing-O'Brien procedure recommends coincidence probabilities be measured at zero time delay and also at time delays large as compared to the temporal spread of the wave-packet, in practice, current implementations determine the visibilities by fitting experimental data to Gaussian curves~\cite{Hong1987,Peruzzo2011,Crespi2013,Spring2013,Tillmann2013,Carolan2014}.
These implementations are flawed because source spectra differ from Gaussian in general.
Our procedure is accurate because the data are fit to curves computed from spectral functions, rather than fitting to Gaussians.

We introduce the calibration subroutine, which relies on the estimation of the mode mismatch in the source field.
Spatial and polarization mode mismatch is not an issue of major concern in waveguide-based interferometers, which typically operate in the single-photon regime.
In these interferometers, the calibration step of our procedure can be neglected without decreasing accuracy.
The mode-mismatch parameter~$\gamma$, which is an input of the curve-fitting procedure, is set to unity.

In the context of bulk-optics, our calibration step ensures accuracy and precision if (i)~$\gamma$ is identified as the maximum-possible source overlap in the spatial and polarization degrees of freedom and (ii)~the experimentalist adjusts the setup to maximize coincidence visibility for the calibrating beam splitter and for each choice of interferometer inputs ports.
Such an adjustment will ensure that the source overlap acquires its maximum-possible value~$\gamma$ in each of the coincidence-curve measurements.
This maximum value is a property of the sources used and is independent of source alignment and focus so is expected to remain unchanged between different confidence measurements.

Other advances made in our characterization procedure over existing procedures include (i)~a maximum likelihood estimation approach to determine the unitary matrix that best fits the data (ii)~a bootstrapping based procedure to obtain meaningful estimates of precision and (iii)~a scattershot-based procedure to improve the experimental requirements of characterization.

%------------------------------%
\section{Conclusion}
\label{Sec:CharConclusion}
%------------------------------%
In conclusion, we devise a one- and two-photon interference procedure to characterize any linear optical interferometer accurately and precisely.
Our procedure provides an algorithmic method for recording experimental data and computing the representative transformation matrix with known error.

The procedure accounts for systematic errors due to spatiotemporal mode mismatch in the source field by means of a calibration step and corrects these errors using an estimate of the mode-matching parameter.
We measure the spectral function of the incoming light to achieve good fitting between the expected and measured coincidence counts, thereby achieving high precision in characterized matrix elements.
We introduce a scattershot approach to effect a reduction in the experimental requirement for the characterization of interferometer.
The error bars on the characterized parameters are estimated using bootstrapping statistics.

Bootstrapping computes accurate error bars even when the form of experimental error is unknown and is, thus, advantageous over the Monte Carlo method.
Hence, our bootstrapping-based procedure for estimating error bars can replace the Monte Carlo method used in existing linear-optics characterization procedures.
We thus open the possibility of applying bootstrapping statistics for the accurate estimation of error bars in photonic state and process tomography.

%=================%
\chapter{Numerical and experimental verification of accurate and precise characterization of linear optics}
\label{Chap:Verification}
%=================%
This chapter presents experimental and numerical evidence of the accuracy and precision of our characterization procedure that was detailed in the preceding chapter.
Section~\ref{Sec:NumericalVerification} comprises numerical simulations to compare our procedure with existing procedures.
Section~\ref{Sec:Experimental} presents experimental data comparing the accuracy and precision of our procedure with existing procedures.

The majority of the material in Section~\ref{Sec:NumericalVerification} is taken from my article~\cite{Dhand2015a} that I co-authored with Abdullah Khalid, He Lu and Barry C.~Sanders.
Those parts that are reproduced verbatim from our journal paper are listed in ``Thesis content previously published''.
A manuscript reporting the experimental verification presented in Section~\ref{Sec:Experimental} is in preparation.

%------------------------------%
\section{Numerical verification}
\label{Sec:NumericalVerification}
%------------------------------%
This section comprises the methods and results of the numerical simulations performed to verify the accuracy and precision of our characterization procedure as compared to existing procedures.
We simulated one- and two-photon data using experimentally measured spectra and performed characterization many times using different procedures.
Our procedure showed substantial advantage over alternative characterization procedures.
I detail the methods employed in our numerical simulations in Section~\ref{Sec:NumericalVerificationProcedure} before presenting the results of the simulation in Section~\ref{Sec:NumericalVerificationResults}.

%------------------------------%
\subsection{Numerical verification: Methods}
\label{Sec:NumericalVerificationProcedure}
%------------------------------%
Here I detail the numerical methods employed to compare the accuracy of our procedure with existing procedures.
We repeatedly simulated experimental data for randomly sampled interferometer matrices, simulated characterization on the sampled interferometer and computed the error between the expected and characterized interferometer matrices.

Two aspects in which our procedure differs from the Laing-O'Brien procedure are that we fit to curves calculated from experimental spectra rather than to Gaussian, and that we perform calibration to estimate the mode mismatch rather than assuming perfect mode matching.
In these simulations we test the efficacy of our characterization procedure as compared to characterizing without calibration and characterizing by fitting to Gaussian.
Specifically, I repeated following step-by-step procedure 1000 times.
\begin{enumerate}
\item
Generate a random unitary matrix $W$ sampled uniformly from the Haar measure using a \gls{QRD} based procedure~\cite{Mezzadri2007}.
\item
Simulate single-photon detection data by computing single photon transmission probabilities~(\ref{Eq:PSinglePhotons}) and sampling under Poissonian shot noise assumption.
\item
Simulate two-photon coincidence data by computing two-photon coincidence probabilities~(\ref{Eq:CoincidenceRate}) and assuming Poissonian shot noise.
The simulated two-photon coincidence data comprises the coincidence measurements performed for calibration (Algorithm~\ref{Alg:Calibration}) and for argument estimation (Algorithm~\ref{Alg:PhaseCalcNChannel}).
Use experimentally measured spectral functions to simulate coincidence rates for different values of time-delay.
\item
Process simulated experimental data using Algorithms~\ref{Alg:Coincidence}--\ref{Alg:Bootstrapping} to obtain the characterized representative unitary matrix $\tilde{W}_\mathrm{Acc}$.
Use experimentally obtained spectra in the calibration and argument estimation algorithms.
\item
Process same data using Algorithms~\ref{Alg:Coincidence}--\ref{Alg:Bootstrapping} but without calibration, i.e., artificially setting mode-matching parameter~$\gamma = 1$, to obtain the characterized representative unitary matrix $\tilde{W}_{\mathrm{NoCal}}$.
Use experimentally obtained spectra in the calibration and argument estimation algorithms.
\item
Process same data using Algorithms~\ref{Alg:Coincidence}--\ref{Alg:Bootstrapping} but using Gaussian fitting function, i.e., using Gaussian spectra as inputs to the calibration and argument estimation algorithms, to obtain the characterized representative unitary matrix $\tilde{W}_{\mathrm{Gauss}}$.
\item
Find the error
\begin{equation}
\varepsilon_{i} = \operatorname{dist}(W,\tilde{W}_{i})
\label{Eq:CharError}
\end{equation}
in the three procedures that are labelled by index $i$ and are described in steps 4--6 above.
The operator $\operatorname{list}(\bullet,\bullet')$ refers to the trace distance between the two matrices $\bullet$ and $\bullet'$
\end{enumerate}
We compare (i)~the mean error~(\ref{Eq:CharError}) incurred from our characterization procedure and the Gaussian-fitting procedure for different values of Poissonian shot noise and (ii)~the mean error incurred by our procedure with and without calibration for different values of mode-matching parameter~$\gamma$.
The results of these comparisons are presented in the next section.

In summary, we used experimental spectral data to repeatedly generate one- and two-photon measurement data for interferometer matrices sampled randomly from the Haar measure.
Poissonian shot noise was simulated on the generated data.
The data were used as inputs to our characterization procedure and to existing procedures.
The error in the characterization was computed as the trace distance between the actual and the characterized interferometer transformation matrix for each set of simulated data.

%------------------------------%
\subsection{Numerical verification: Result}
\label{Sec:NumericalVerificationResults}
%------------------------------%
\begin{figure}[h]
\begin{centering}
\includegraphics[width=0.7\textwidth]{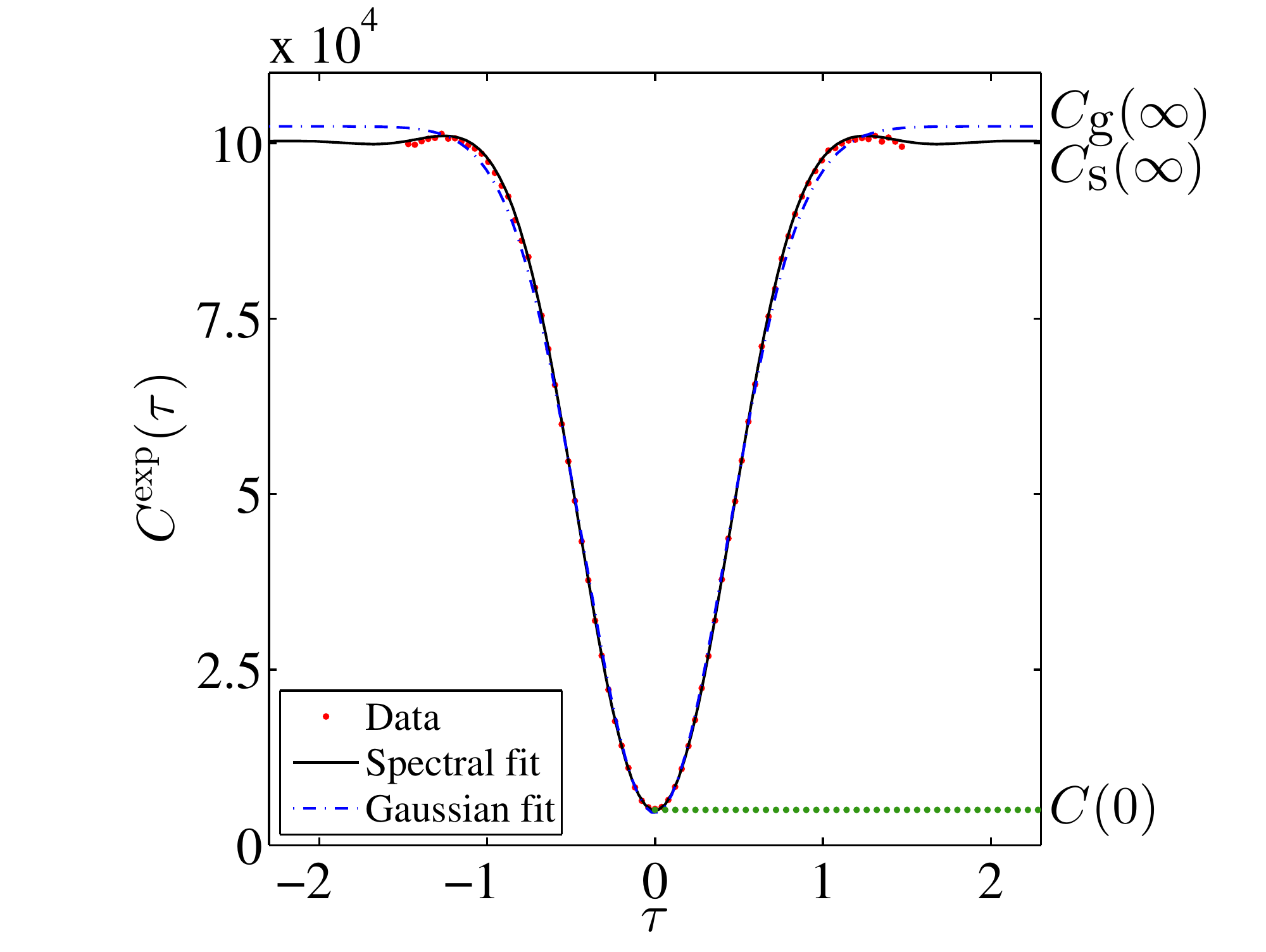}
\caption{
The fitting of coincidence data to curves obtained from spectra using~(\ref{Eq:CoincidenceRate}) and to Gaussian functions.
Coincidence counts are simulated using experimentally measured spectra.}
\label{Figure:CoincidenceCurveFit}
\end{centering}
\end{figure}

Here I present the results of the numerical simulation by comparing the accuracy of our procedure with the accuracy of existing procedures.
First, we compared the error in characterization using our curve-fitting procedure, which relies on experimentally measured spectra for calculating fitting curves, and the standard practice of fitting to Gaussians~\cite{Hong1987,Peruzzo2011,Crespi2013,Spring2013,Tillmann2013,Carolan2014}.
Figure~\ref{Figure:CoincidenceCurveFit} illustrates the distinction between fitting experimental coincidence counts to the coincidence function~(\ref{Eq:CoincidenceRate}) simulated using spectra and fitting to Gaussian functions.
Figure~\ref{Figure:GaussianVersusNongauss} demonstrates the increase in accuracy and precision of characterization by using the correct curve-fitting function.

\begin{figure}[h]
\begin{center}
\includegraphics[width=0.7\textwidth]{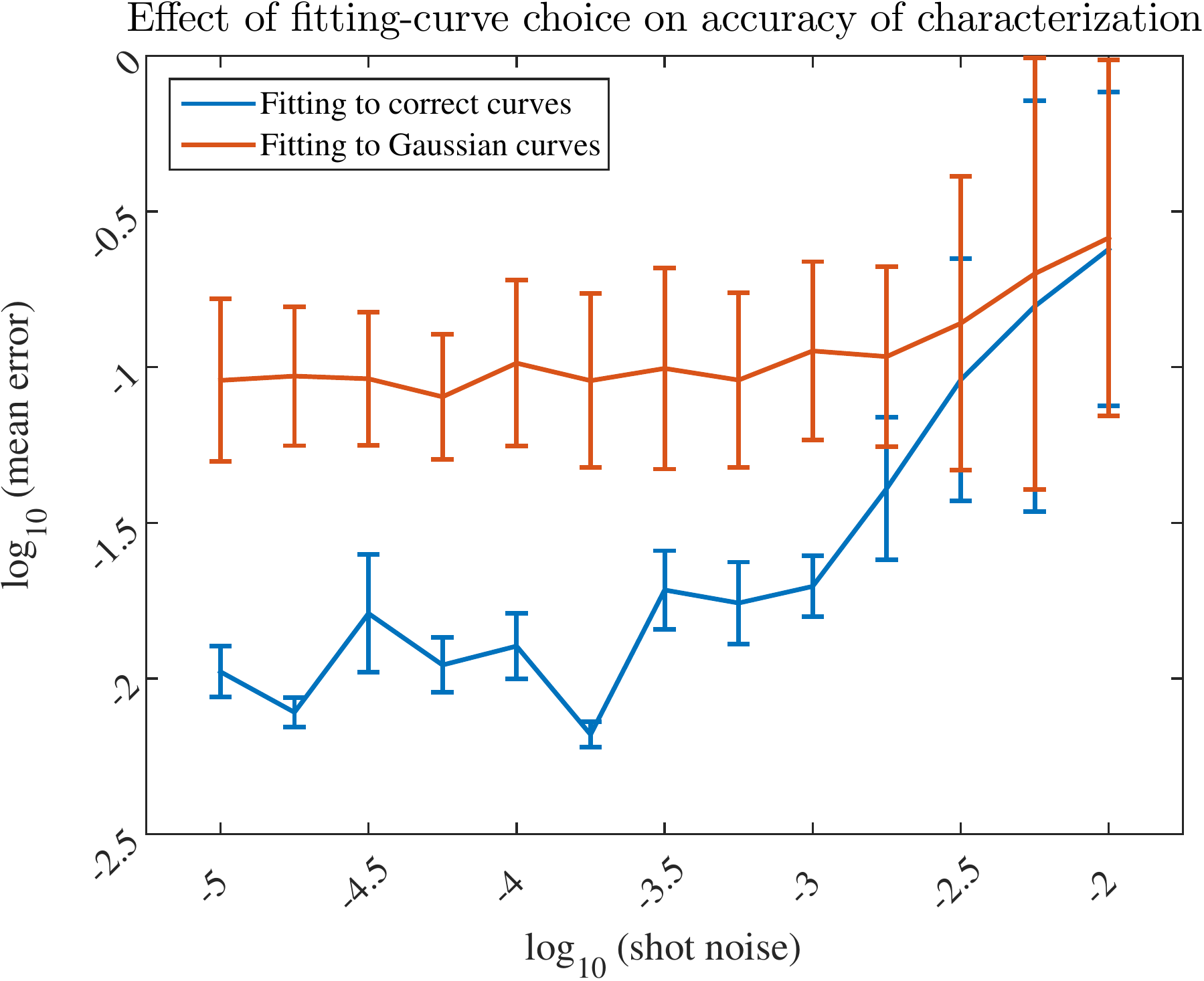}
\caption{A plot showing the effect of fitting-curve choice on the accuracy and precision of the characterization procedure. The two curves depict the mean error for the two different choices of fitting curves, where the error is the trace distance between the expected and the actual unitary transformations and the mean is over 1000 simulated characterization experiments.
One- and two-photon interference data was simulated for a five-channel interferometer using experimentally measured spectra and simulated Poissonian shot-noise.
Characterization was performed by fitting coincidence curves to Gaussians (red curve) and to correct curves according to our procedure (blue curve).
\textsc{matlab} code for the simulations depicted in this figure is available on GitHub.
}
\label{Figure:GaussianVersusNongauss}
\end{center}
\end{figure}

\begin{figure}[h]
\begin{centering}
\includegraphics[width=0.8\textwidth]{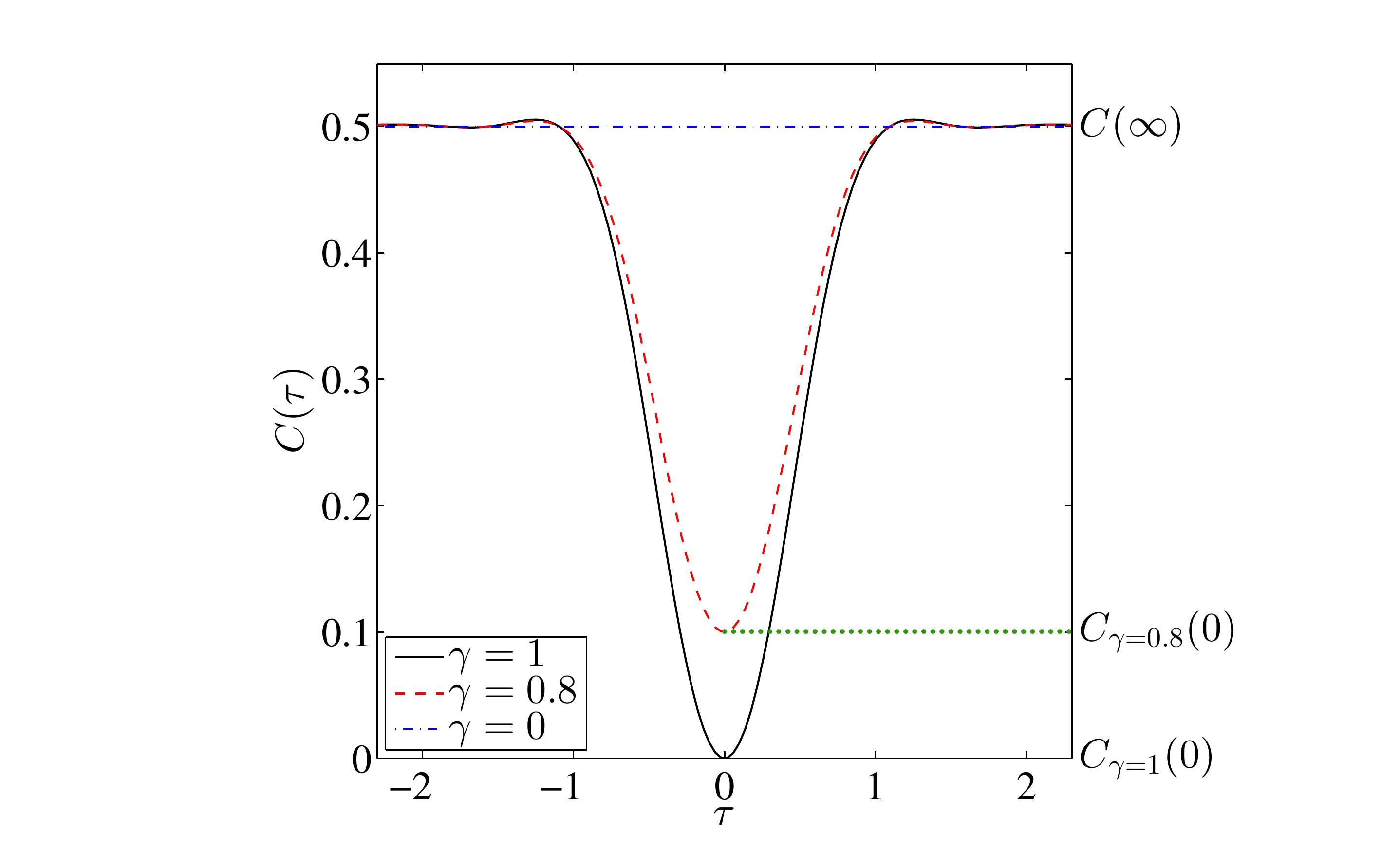}
\caption{
Plots of coincidence probability versus time delay for different values of~$\gamma$ for a lossless balanced beam splitter.
The time delay $\tau$ is in units inverse special width of incoming photons.
}
\label{Figure:ModeMismatch}
\end{centering}
\end{figure}

Next we demonstrate the increase in accuracy by employing the calibration subroutine.
Figure~\ref{Figure:ModeMismatch} depicts how imperfect mode matching, i.e., $\gamma<1$, alters the observed two-photon coincidence counts.
Our calibration procedure estimates and accounts for imperfect mode matching, which is assumed to be constant over the runtime of the characterization experiment.

We simulated the characterization experiment for different values of~$\gamma$ using~(i)~our calibration procedure and~(ii)~our calibration procedure without the calibration subroutine, i.e., by artificially setting~$\gamma = 1$.
Figure~\ref{Figure:IsCalibrationUseful} presents such this comparison for different values of $\gamma$.
Observe that even for almost perfect mode matching~($\gamma = 0.99$), our calibration procedure reduces characterization error by one order of magnitude.
This completes our numerical verification of the advantage offered by our characterization procedure.

\begin{figure}[h]
\begin{center}
\includegraphics[width=0.6\textwidth]{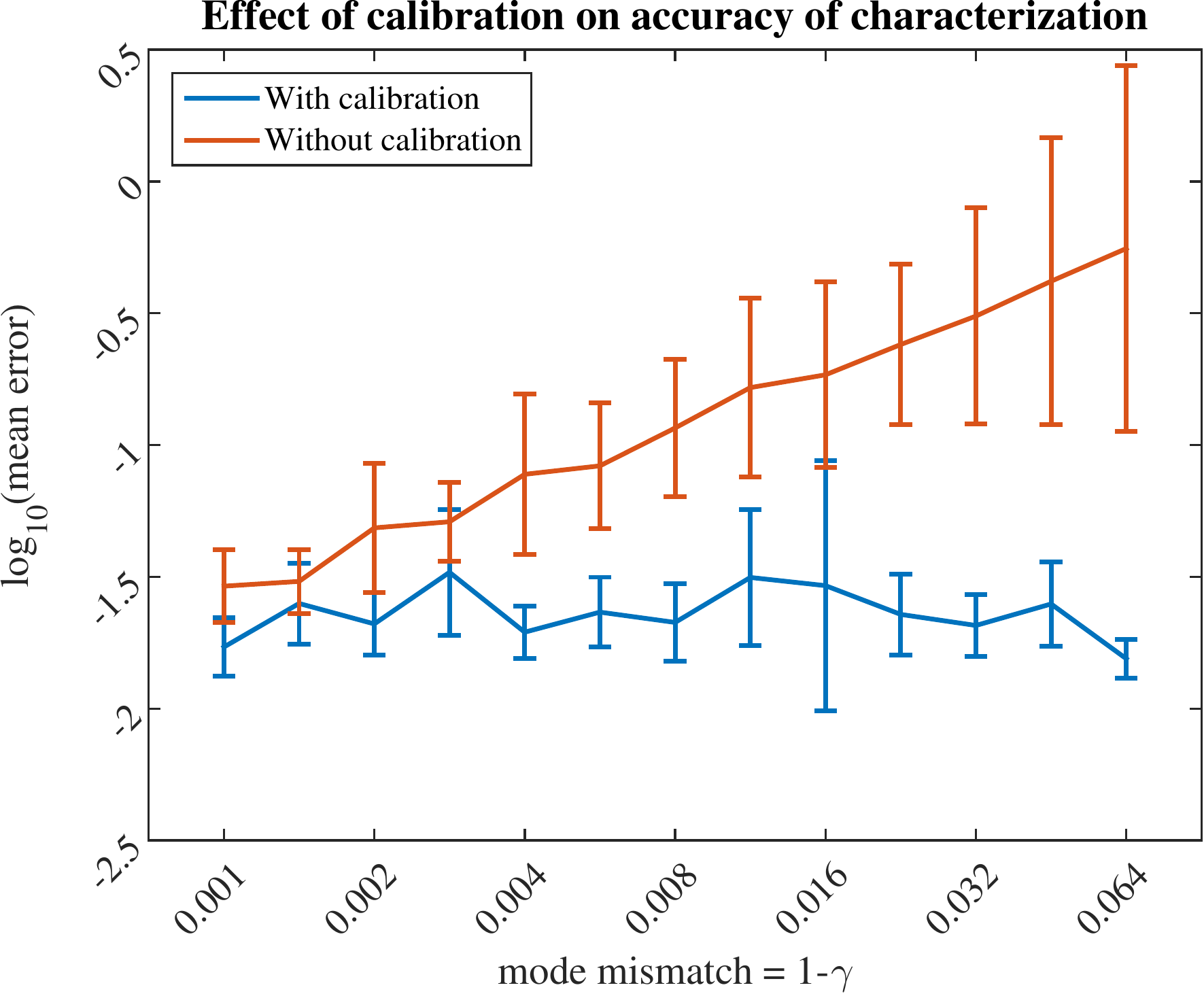}
\caption{A plot showing the effect of calibration on the accuracy and precision of the characterization procedure. The two curves depict the mean error for characterization with (blue curve) and without calibration (red curve), where the error is the trace distance between the expected and the actual unitary transformations and the mean is over 1000 simulated characterization experiment. The simulations comprised generating one- and two-photon interference data based on experimentally measured spectral functions and performing characterization by our procedure.
\textsc{matlab} code for the simulations depicted in this figure is available on GitHub.
}
\label{Figure:IsCalibrationUseful}
\end{center}
\end{figure}

%------------------------------%
\section{Experimental verification of beam splitter characterization}
\label{Sec:Experimental}
%------------------------------%
This section presents experimental results verifying the accuracy and precision of our characterization procedure.
The experiment was performed at the Shanghai branch of the University of Science and Technology of China, and a manuscript reporting these results is in preparation.

The experiments involved the characterization of a two-channel interferometer (beam splitter) using our procedure and existing procedures.
The collected data were used to test (i)~the assumption that the mode-matching parameter~$\gamma$ is constant over different characterization experiments (Section~\ref{Sec:GammaTest}) and (ii)~our claims~\cite{Dhand2015a} that our characterization procedure yields a more accurate characterization of linear optical interferometers (Section~\ref{Sec:ExpVerification}).

%------------------------------%
\subsection{Testing the constant~$\gamma$ assumption}
\label{Sec:GammaTest}
%------------------------------%
Our characterization procedure assumes that the mode-matching parameter is unchanged over different characterization experiments.
Here I present experimental evidence for the validity of this assumption.

We estimate $\gamma$ for four beam splitters using two-photon coincidence curves and reflectivity values ascertained from single-photon transmission data.
Our experimental procedure is as follows.
\begin{enumerate}
\item
Single-photon transmission data and two-photon coincidence data are collected for the four beam splitters labelled by index $i\in\{1,2,3,4\}$.
\item
The single-photon data are employed to estimate the reflectivity $\cos{\vartheta_{i}}$ of each of the beam splitters using Equations~(\ref{Eq:AlphaMeasure}) and~(\ref{Eq:FindVartheta}).
\item
The reflectivity estimate obtained in Step~2 and the measured two-photon coincidence data are used to estimate the mode-matching parameter $\gamma_{i}$ via curve-fitting. Appendix~\ref{Sec:CurveFitting} details the inputs, procedure and outputs of the curve-fitting method.

\item
The error bars $\sigma(\gamma_{i})$ on $\gamma_{i}$ are estimated using a bootstrapping-based procedure.
\end{enumerate}
Table~\ref{Table:Promise} presents the estimated values of $\gamma_{i}$ and $\sigma(\gamma_{i})$, and the ratio of these two estimates.

\begin{table}[h]
\begin{center}
\def\arraystretch{1.5}
\setlength{\tabcolsep}{20pt}
\begin{tabular}{|c|ccc|}
\hline
$i$ & $\gamma_{i}$ &$\sigma(\gamma_{i})$& $\sigma(\gamma_{i})/\gamma_{i}$ \\
\hline
$1$ & $0.960$ & $0.027$ & $0.0281$ \\
$2$ & $0.981$ & $0.007$ & $0.0071$ \\
$3$ & $0.956$ & $0.011$ & $0.0115$ \\
$4$ & $0.985$ & $0.019$ & $0.0193$ \\
\hline
\end{tabular}
\end{center}
\caption{On the promise of constant mode-matching parameter~$\gamma$.
The first column presents the beam splitter labels.
The estimates of $\gamma_{i}$ in the second column are computed using one- and two-photon data.
The error bars $\sigma(\gamma_{i})$ on $\gamma_{i}$ estimates are computed using bootstrapping.}
\label{Table:Promise}
\end{table}

The mode-matching parameters obtained for the four beam splitters agree with each other to within $95\%$ confidence or two standard deviations;
i.e.,
\begin{equation}
|\gamma_{i}-\gamma_{j}|< 2\sigma(\gamma_{i}-\gamma_{j})
\end{equation}
for beam splitter labels $i,j \in \{1,2,3,4\}$ and the standard error of the difference $\gamma_{i}-\gamma_{j}$ estimated according to
\begin{equation}
\sigma(\gamma_{i}-\gamma_{j}) \approx \sqrt{\sigma^{2}(\gamma_{i})+\sigma^{2}(\gamma_{j})}.
\end{equation}
Hence, our assumption of constant mode-mismatching is valid with $95\%$ confidence level.
In the next section, I provide evidence that our characterization procedure offers significant advantage over existing procedures.
%------------------------------%
\subsection{Comparison of beam-splitter reflectivity estimates}
\label{Sec:ExpVerification}
%------------------------------%
Here I compare the accuracy of the beam-splitter reflectivity estimates obtained using various procedures to process two-photon coincidence data.
As in the numerical verification (Section~\ref{Sec:NumericalVerification}), we consider the effect of calibration of source-light for mode mismatch and the effect of using Gaussian functions for fitting the measured coincidence curves.
\begin{table}[h]
\centering
\def\arraystretch{1.5}
\setlength{\tabcolsep}{22pt}
\begin{tabular}{|l|ccc|}
\hline
\hspace{-10pt}\textbf{Beam splitter} & $i = 2$ & $i = 3$ & $i = 4$\\
\hline
 \multicolumn{4}{|l|}{\textbf{\hspace{-10pt}Characterization using single-photon data:}}\\
Reflectivity $R^{\mathrm{sp}}_{i}$ & 0.4712 & 0.4741 & 0.3570 \\
Error bars $\operatorname{\sigma}$$\left(R^{\mathrm{sp}}_{i}\right)$ & 0.0028 & 0.0032 & 0.0033 \\
 \hline
 \multicolumn{4}{|l|}{\hspace{-10pt}\textbf{Two-photon characterization:} ($\gamma=0.960, \sigma(\gamma)=0.027$)}\\
 Reflectivity $R^{\mathrm{cal}}_{i}$& 0.4851 & 0.4632 & 0.3690 \\
 Error bars $\operatorname{\sigma}$$\left(R^{\mathrm{cal}}_{i}\right)$ & 0.0196 & 0.0317 & 0.0170 \\
 $\left.\left|R^{\mathrm{cal}}_{i}-R^{\mathrm{sp}}_{i}\right|\right/\operatorname{\sigma}$$\left(R^{\mathrm{cal}}_{i}-R^{\mathrm{sp}}_{i}\right)$& 0.7021 & 0.3421 & 0.6929 \\
\hline
\multicolumn{4}{|l|}{\hspace{-10pt}\textbf{Two-photon characterization without calibration:} ($\gamma=1, \sigma(\gamma)=0$)}\\

 Reflectivity $R^{\mathrm{nocal}}_{i}$& 0.4422 & 0.4228 & 0.3513 \\
Error bars $\operatorname{\sigma}$$\left(R^{\mathrm{nocal}}_{i}\right)$ & 0.0069 & 0.0106 & 0.0074 \\
$\left.\left|R^{\mathrm{nocal}}_{i}-R^{\mathrm{sp}}_{i}\right|\right/\operatorname{\sigma}$$\left(R^{\mathrm{nocal}}_{i}-R^{\mathrm{sp}}_{i}\right)$ & 3.8945 & 4.6331 & 0.7035 \\
 \hline
 \multicolumn{4}{|l|}{\hspace{-10pt}\textbf{Two-photon. No calibration. Gaussian fitting:} ($\gamma=1, \sigma(\gamma)=0$)}\\

 Reflectivity $R^{\mathrm{g}}_{i}$& 0.4402 & 0.4108 & 0.3473 \\
Error bars $\operatorname{\sigma}$$\left(R^{\mathrm{g}}_{i}\right)$ & 0.0081 & 0.0082 & 0.0068 \\
$\left.\left|R^{\mathrm{g}}_{i}-R^{\mathrm{sp}}_{i}\right|\right/\operatorname{\sigma}$$\left(R^{\mathrm{g}}_{i}-R^{\mathrm{sp}}_{i}\right)$ & 3.6171 & 7.1913 & 1.2833 \\
\hline
\end{tabular}
\caption{Reflectivity values for beam splitters (labelled by index $i$) obtained using different methods. The four section of the table present the reflectivity estimates obtained using (i)~single-photon data, (ii)~our two-photon characterization procedure, (iii)~two-photon characterization without calibration and (iv)~two-photon characterization using Gaussian curve-fitting and without using calibration.}
\label{Table:Reflectivity}
\end{table}

Table~\ref{Table:Reflectivity} presents the reflectivity estimates of beam-splitters $i \in \{2,3,4\}$ obtained using various procedures.
The first section of Table~\ref{Table:Reflectivity} presents the beam splitter reflectivity values that are estimated using single-photon data.
The reflectivity values estimated from different two-photon procedures are compared with these single-photon estimates of reflectivity.
The second section of the table presents the reflectivity values estimated using our characterization procedure, where the mode-matching parameter~$\gamma$ is obtained using two-photon coincidences on beam splitter $i =1$.
The third section presents reflectivity values estimated using our characterization procedure without calibration, i.e., with the assumption that $\gamma = 1$ and $\sigma(\gamma) =0$.
The final section of Table~\ref{Table:Reflectivity} contains reflectivity values estimated without calibration and using Gaussian curve-fitting (instead of fitting to curves calculated from spectral functions).

We compare the accuracy values of different beam splitters by comparing the respective differences between the reflectivity values obtained from the procedure and those obtained from single-photon data.
Specifically, we compare
\begin{equation}
\frac{\left|R^{\mathrm{proc}}_{i}-R^{\mathrm{sp}}_{i}\right|}
{\operatorname{\sigma}\left(R^{\mathrm{proc}}_{i}-R^{\mathrm{sp}}_{i}\right)},
\label{Eq:Distance}
\end{equation}
for beam splitters labelled $i\in\{2,3,4\}$ and reflectivity estimates $R_{i}^{\mathrm{proc}}$ obtained from different procedures.
The denominator in Equation~\eqref{Eq:Distance} normalizes the distances with respect to the error bars on the estimated reflectivity.
Table~\ref{Table:Reflectivity} illustrates that these distances are consistently smaller than unity for our characterization procedure, whereas those for other procedures are not.
We conclude that our procedure is accurate whereas other procedures are not.
This completes the experimental verification of our procedure for the characterization of linear optics.
%------------------------------%
\section{Conclusion}
\label{Sec:VerificationConclusion}
%------------------------------%
In summary, I have presented numerical and experimental evidence for the advantage of our characterization procedure over existing procedures.
Numerically, we sampled 1000 Haar-random unitary interferometer matrices and simulated the characterization of interferometers using different procedures.
Experimentally, we characterized four beam-splitter reflectivities using different characterization procedure.
Both numerically and experimentally, our characterization procedure significantly reduced the error in characterization as compared to existing procedures.

 %=================%
\chapter{ $\mathrm{SU}(n)$ Representation theory for simulating linear~optics}
\label{Chap:Sun}
%=================%

This chapter presents my contribution to the representation theory of $\mathrm{SU}(n)$, namely algorithms for $\mathcal{D}$-functions of $\mathrm{SU}(n)$ and relations between $\mathrm{SU}(n)$ $\mathcal{D}$-functions and immanants.
These contributions enable a deeper analysis of multi-photon multi-channel interferometry and enable a speedup in the simulation of linear optics.
Section~\ref{Sec:SunAlgorithms} details the algorithms to compute $\mathrm{SU}(n)$ states and $\mathcal{D}$-functions.
Section~\ref{Sec:SunImmanantResult} presents our results on the relation between $\mathcal{D}$-functions and immanants of the fundamental representation of $\mathrm{SU}(n)$.

A majority of the material in Section~\ref{Sec:SunAlgorithms} and~\ref{Sec:SunImmanantResult} is taken from two articles.
One article is co-authored with Barry C.~Sanders and Hubert de Duise~\cite{Dhand2015d}.
The other article is coauthored with Hubert de Guise, Dylan Spivak and Justin Kulp~\cite{deGuise2015}.
New material is added or existing material is eliminated to improve coherence and readability.
Those parts that are reproduced verbatim from our articles are listed in ``Thesis content previously published''.

%------------------------------%
\section{Algorithms for boson realizations of $\mathrm{SU}(n)$ states and $\mathcal{D}$-functions}
\label{Sec:SunAlgorithms}
%------------------------------%

This section presents our algorithms for the computation of $\mathcal{D}$-functions via boson realizations.
$\mathcal{D}$-functions of a group element are the entries of irreps of the element.
$\mathcal{D}$-functions of the special unitary group $\mathrm{SU}(2)$ are important in nuclear, atomic, molecular and optical physics~\cite{Varvsalovivc1989,Rose1995,Edmonds1996,Racah1943,Jacob1959,Alder1956,Moshinsky1962}.
$\mathrm{SU}(1,1)$ is the iconic non-compact semi-simple Lie group, and its $\mathcal{D}$-functions appear in connection with Bogolyubov transformations, squeezing and parametric downconversion~\cite{Ui1970,Yurke1986}.
Methods for construction of intelligent states and the analysis of cylindrical Laguerre-Gauss beams employ $\mathcal{D}$-functions of $\mathrm{SU}(1,1)$~\cite{Joanis2010,Karimi2014}.
$\mathcal{D}$-functions of other Lie groups enable exact solutions to problems in quantum optics~\cite{Cervero1996,Wunsche2000}.
Recently, $\mathrm{SU}(n)$ $\mathcal{D}$-functions for arbitrary $n$ have found application to the BosonSampling problem as detailed in Section~\ref{Sec:IntroSimulation}.

Two existing procedures for computing $\mathrm{SU}(n)$ $\mathcal{D}$-function are based on factorization and on exponentiation.
Both procedures have drawbacks, which we describe as follows.
Factorization-based methods, which compute $\mathrm{SU}(n)$ $\mathcal{D}$-functions in terms of $\mathcal{D}$-functions of subtransformations, are well developed for groups of low rank~\cite{Vilenkine1968,Miller1968,Talman1968,Chacon1966,Rowe1999}.
However, generalizing these algorithms to higher $n$ requires $\mathrm{SU}(n-1)$ coupling and recoupling coefficients, which have limited availability for $n>3$, i.e., restricted to certain subgroups of $\mathrm{SU}(3)$~\cite{Draayer1973,Millener1978,Rowe2000}.
Hence, methods for $\mathcal{D}$-functions of higher groups are underdeveloped despite the application of their corresponding algebras to diverse problems~\cite{Beg1965,Slansky1981,Georgi1999,Rowe2010}.

The second approach to computing $\mathrm{SU}(n)$ $\mathcal{D}$-functions involves exponentiating and composing the matrix representations of the algebra~\cite{Cornwell1997,Gilmore2012}.
This approach has three hurdles.
For one, this method requires knowledge of all the matrix elements of each generator to be exponentiated.
Certain applications require closed-form expressions of $\mathcal{D}$-functions in terms of elements of the fundamental representation; exponentiation-based methods are infeasible for these applications because of the difficulty of exponentiating matrices analytically, especially for $n>5$.
Furthermore, if only a limited number of $\mathcal{D}$-functions are required, exponentiation is wasteful because it computes the entire set of $\mathcal{D}$-functions.

We overcome the shortcomings of current approaches by utilizing boson realizations, which map the algebra and its carrier space to bosonic operators and spaces respectively.
Boson realizations arise naturally when considering the groups Sp($2n,\mathds{R}$), SU($n$) and some of their subgroups.
For instance, $\mathrm{SU}(1,1)$, $\mathrm{SU}(2)$ and $\mathrm{SU}(3)$ boson realizations are used to study degeneracies, symmetries and dynamics in quantum systems~\cite{Jauch1940,Schwinger1952,Baker1956,Elliott1958,Fradkin1965,Iachello1987,Klein1991,Kuriyama2000,Bartlett2001}.
A wide class of problems in theoretical physics rely on boson realizations of the symplectic group~\cite{Hwa1966,Kramer1966,Moshinsky1971,Quesne1971,Rowe1984}.

Here we aimed to devise an algorithm to construct the $\mathcal{D}$-functions of arbitrary representations of $\mathrm{SU}(n)$ for arbitrary $n$.
We approach the problem of limited availability of $\mathrm{SU}(n)$ $\mathcal{D}$-functions~\cite{Weyl1950,Chaturvedi2006} by presenting
(i)~a graph-theoretic algorithm to construct boson realizations of basis sets for each weight of a given $\mathfrak{su}(n)$ irrep (Algorithm~\ref{Alg:BasisSet} in Subsections~\ref{Subsec:BasisSet}),
(ii)~an algorithm to compute boson realizations of the canonical basis states of $\mathrm{SU}(n)$ for arbitrary $n$ (Algorithms~\ref{Alg:Main} in Subsection~\ref{Subsec:BasisStates}) and
(iii)~an algorithm that employs the constructed boson realizations to compute expressions for $\mathcal{D}$-functions as polynomials in the matrix elements of the defining representation (Algorithm~\ref{Alg:D} in Subsection~\ref{Subsec:D}).

%------------------------------%
\subsection{Mapping to graphs}
\label{Sec:Definitions}
%------------------------------%
Before delving into the algorithms, I present a mapping from $\mathrm{SU}(n)$ weights and $\mathfrak{su}(n)$ transformations to the vertices and edges of a graph.
Algorithms~\ref{Alg:BasisSet} and~\ref{Alg:Main} rely on mapping the $\mathrm{SU}(n)$ irrep to a graph and systematically traversing the graph to obtain basis states.
The vertices of the irrep graph are identified with the weights of the given irrep of $\mathrm{SU}(n)$ and the edges with the action of the elements of the Lie algebra $\mathfrak{su}(n)$ on the states.
Specifically, the irrep graph $G = (\mathcal{V},\mathcal{E})$ of an $\mathrm{SU}(n)$ irrep is defined as follows.

\begin{defn}[Irrep graph]
The bijection
\begin{equation}
v\colon \{\Lambda_1,\Lambda_2,\dots,\Lambda_{d} \} \to \mathcal{V}
\end{equation}
maps the set $\{\Lambda_1,\Lambda_2,\dots,\Lambda_d\}$ of the $d$ weights in the given irrep to the vertices
\begin{equation}
\mathcal{V} = \{v(\Lambda_1),v(\Lambda_2),\dots,v(\Lambda_d)\}
\end{equation}
of its irrep graph.
Vertices $v(\Lambda_{k})$ and $v(\Lambda_{\ell})$ are connected by an edge $e_j = (v(\Lambda_{k}),v(\Lambda_{\ell})) \in \mathcal{E}$ iff $\exists~c_{i,j}, \Lambda_{k}, \Lambda_{\ell}$ such that
\begin{equation}
c_{i,j\ne i}\ket{\psi_{\Lambda_{k}}} = \ket{\psi_{\Lambda_{\ell}}},
\end{equation}
where $\ket{\psi_{\Lambda_{k}}}$ and $\ket{\psi_{\Lambda_{\ell}}}$ are SU($n$) states that have weights ${\Lambda_{k}}$ and ${\Lambda_{\ell}}$ respectively.
In general, states $\ket{\psi_{\Lambda_{k}}}$ and $\ket{\psi_{\Lambda_{\ell}}}$ are linear combinations of canonical basis states.
Edges~$\mathcal{E}$ together with the vertices $\mathcal{V}$ define the irrep graph $G = (\mathcal{V},\mathcal{E})$.
\end{defn}

More than one basis state can have the same weight.
The number of basis states sharing a weight $\Lambda_i$ is defined as the multiplicity $M(\Lambda_i)$ of the weight.
In other words, each vertex $v(\Lambda_i)$ is identified with an $M(\Lambda_i)$-dimensional space spanned by those canonical basis states that have weight $\Lambda_i$.
The vertex space and vertex basis sets are defined as follows.
\begin{defn}[Vertex spaces]
The vertex space of $v(\Lambda_i)$ is
\begin{equation}
\Psi(\Lambda_i) = \spn{\left(\ket{{\psi^{1}_{\Lambda_i}}},\ket{\psi^{2}_{\Lambda_i}},\dots,\ket{\psi^{M(\Lambda_i)}_{\Lambda_i}}\right)}
\label{Eq:Space}
\end{equation}
of the canonical basis states~(Definition~\ref{Definition:CanonicalBasisStates}) that have the weight $\Lambda_i$.
\end{defn}
\noindent
The set $\left\{\ket{\psi^{(1)}_{\Lambda_i}},\ket{\psi^{(2)}_{\Lambda_i}},\dots, \ket{\psi^{(M(\Lambda_i))}_{\Lambda_i}}\right\}$ of canonical basis states is not the only set that spans the vertex space $\Psi(\Lambda_i)$ of $v(\Lambda_i)$.
In general, basis sets of $\Psi(\Lambda_i)$ can be defined as follows.
\begin{defn}[Vertex basis sets]
The set
\begin{equation}
\left\{\ket{\phi^{(1)}_{\Lambda_i}},\ket{\phi^{(2)}_{\Lambda_i}},\dots, \ket{\phi^{(M(\Lambda_i))}_{\Lambda_i}}\right\}
\end{equation} is called the basis set of a vertex $v(\Lambda_i)$ if it spans the vertex space $\Psi(\Lambda_i)$~(\ref{Eq:Space}) of $v(\Lambda_i)$, i.e.,
\begin{equation}
\spn{\left(\ket{\phi^{1}_{\Lambda_i}},\ket{\phi^{2}_{\Lambda_i}},\dots,\ket{\phi^{M(\Lambda_i)}_{\Lambda_i}}\right)}
= \Psi(\Lambda_i).
\label{Eq:Space2}
\end{equation}
\end{defn}
\noindent
The states $\left\{\ket{\phi^{(1)}_{\Lambda_i}},\ket{\phi^{(2)}_{\Lambda_i}},\dots, \ket{\phi^{(M(\Lambda_i))}_{\Lambda_i}}\right\}$ are linear combinations of the canonical basis states $\left\{\ket{\psi^{(1)}_{\Lambda_i}},\ket{\psi^{(2)}_{\Lambda_i}},\dots, \ket{\psi^{(M(\Lambda_i))}_{\Lambda_i}}\right\}$.
Algorithm~\ref{Alg:BasisSet} computes basis sets of the spaces $\Psi(\Lambda_i)$ for each of the $d$ weights $\Lambda_i$ that occurs in a given irrep.
%
%\begin{align}
%\Psi(\Lambda_i) = \spn{\left(\ket{\phi^{1}_{\Lambda_i}},\ket{\phi^{2}_{\Lambda_i}},\dots,\ket{\phi^{M(\Lambda_i)}_{\Lambda_i}}\right)}
%\end{align}

\begin{algorithm}[p]
\caption{Basis-Set Algorithm}
\label{Alg:BasisSet}
	\begin{algorithmic}[1]
	\Require{
	\Statex
	\begin{itemize} 	
		\item
 	HWS $\ket{\psi^{K}_\mathrm{hws}}$ \Comment{Degree $N_K$~(\ref{Eq:N}) polynomial in bosonic creation operators.}
 	\item
 	$m \in \mathds{Z}^+$ \Comment{$\ket{\psi^{K}_\mathrm{hws}}$ is a HWS of $\mathrm{SU}(m)$ irrep $K$. }
	\end{itemize}
	}
	\Ensure{
	\Statex
	\begin{itemize} 	
		\item
 	$\{\Lambda_1,\Lambda_2,\dots, \Lambda_d \colon \Lambda_i \in (\mathds{Z}^+\cup 0)^{m-1}\}$ \Comment {List of weights in the irrep graph of $K$.}
		\item $d,$ Basis sets~(\ref{Eq:BasisSets})
\begin{align}
	&\left\{\ket{\phi^{(1)}_{\Lambda_1}},\ket{\phi^{(2)}_{\Lambda_1}},\dots, \ket{\phi^{(M(\Lambda_1)}_{\Lambda_1}}\right\},
	\left\{\ket{\phi^{(1)}_{\Lambda_2}},\ket{\phi^{(2)}_{\Lambda_2}},\dots, \ket{\phi^{(M(\Lambda_2))}_{\Lambda_2}}\right\},
	\dots,\nonumber \\
	&\left\{\ket{\phi^{(1)}_{\Lambda_i}},\ket{\phi^{(2)}_{\Lambda_i}},\dots, \ket{\phi^{(M(\Lambda_i))}_{\Lambda_i}}\right\},
	\dots,
	\left\{\ket{\phi^{(1)}_{\Lambda_d}},\ket{\phi^{(2)}_{\Lambda_d}},\dots, \ket{\phi^{(M(\Lambda_d))}_{\Lambda_d}}\right\}\nonumber.
\end{align}
	\end{itemize}
	}

	\Procedure{BasisSet}{$m$, $\ket{\psi^{K}_\mathrm{hws}}$}
	\State
 Initialize empty statesList, empty weightList and currentStateQueue $\gets \ket{\psi^{K}_\mathrm{hws}}$ \;
	\While{currentStateQueue is not empty}
		\State currentState $\gets$ \textsc{Dequeue}(currentStateQueue)
		\For{\textsc{CurrentOperator} $\in$ set of $\mathfrak{su}(m)$ lowering operations}
			\State newState $\gets$ \textsc{CurrentOperator}(currentState) \label{AlgLin:Lowering}
			\If{newState $\ne$ 0}
				\If {weight of currentState is already in stateList}
					\If{currentState is LI of stateList states with same weight}\label{AlgLin:LI}
						\State independentState $\leftarrow$ \textsc{Normalize}(newState)
						\State {Enqueue} independentState in currentStateQueue
						\State Add independentState to stateList
						\State Add weight of independentState to weightList
					\EndIf\Comment{Else, do nothing.}
				\Else
					\State Enqueue newState in currentStateQueue \label{AlgLine:Enqueue}
					\State Add \{weight(newState),newState\} to stateList
				\EndIf
			\EndIf
		\EndFor
	\EndWhile
	\State Return stateList
	\EndProcedure
	\end{algorithmic}
\end{algorithm}

%------------------------------%
\subsection{Basis-set algorithm (Algorithm~\ref{Alg:BasisSet})}
\label{Subsec:BasisSet}
%------------------------------%
%

\begin{figure}[h]
\centering
\includegraphics[width=0.5\textwidth]{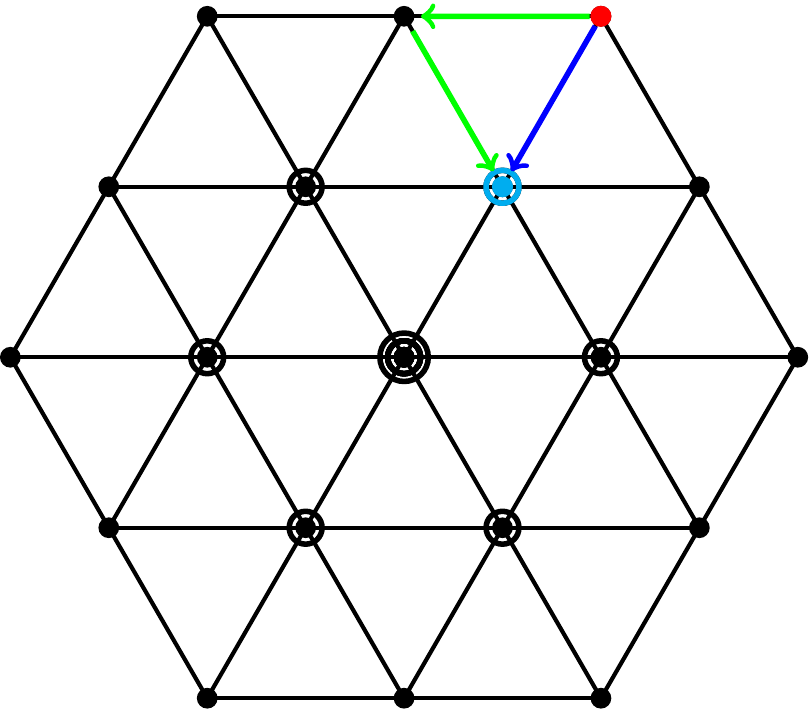}\\
\caption{
The first step of the basis-set computation algorithm~(Algorithm~\ref{Alg:BasisSet}) illustrated for the $(2,2)$ irrep of $\mathrm{SU}(3)$, where the dimension of the space of states at a given vertex is the sum of the number of dots and the number of circles at the vertex.
The algorithm constructs the HWS (occupying the red vertex) using Lemma~\ref{Lemma:HWS}.
The lowering operators can transform states at one vertex to states at another vertex along different paths connecting the starting and the target vertex, for instance the two paths coloured green and blue.
Lowering along the different paths to reach a target vertex will generate the same number of LI as the weight multiplicity.
In our illustration, we obtain a basis set that contains two independent states at the target vertex.
The algorithm traverses the irrep graph systematically until all basis sets are calculated.}%
\label{Figure:Algorithm}%
\end{figure}

This section details the basis-set algorithm%
\footnote{
In Algorithms~\ref{Alg:BasisSet}--\ref{Alg:D}, we denote operations in capital case and \textsc{SmallCaps} font.
Variables are denoted by roman font and are in lower case.
}%
 and presents proofs of termination an correctness of the algorithm.
The basis-set algorithm (Algorithm~\ref{Alg:BasisSet}) finds the basis sets for a given $\mathrm{SU}(m)$ irrep and is the key subroutine of our canonical-basis-state algorithm (Algorithm~\ref{Alg:Main}).

Algorithm~\ref{Alg:BasisSet} requires inputs $\ket{\psi^{K}_\mathrm{hws}}$ and $m$, where $\ket{\psi^{K}_\mathrm{hws}}$ is a HWS of the irrep $K$ of $\mathfrak{su}(m)$ algebra.
The state $\ket{\psi^{K}_\mathrm{hws}}$ is a bosonic state, which is expressed as a summation over products of $N_K$~(\ref{Eq:N}) creation operators $\{a^\dagger_{i,j}\colon i\in \{1,2,\dots,m\}, j\in \{1,2,\dots,m-1\}\}$.
This summation acts on the $m$-site bosonic vacuum state to give an $N_K$-boson state.
The algorithm returns multiple sets
\begin{align}
	&\left\{\ket{\phi^{(1)}_{\Lambda_1}},\ket{\phi^{(2)}_{\Lambda_1}},\dots, \ket{\phi^{(M(\Lambda_1)}_{\Lambda_1}}\right\},
	\left\{\ket{\phi^{(1)}_{\Lambda_2}},\ket{\phi^{(2)}_{\Lambda_2}},\dots, \ket{\phi^{(M(\Lambda_2))}_{\Lambda_2}}\right\},
	\dots,\nonumber \\
	&\left\{\ket{\phi^{(1)}_{\Lambda_i}},\ket{\phi^{(2)}_{\Lambda_i}},\dots, \ket{\phi^{(M(\Lambda_i))}_{\Lambda_i}}\right\},
	\dots,
	\left\{\ket{\phi^{(1)}_{\Lambda_d}},\ket{\phi^{(2)}_{\Lambda_d}},\dots, \ket{\phi^{(M(\Lambda_d))}_{\Lambda_d}}\right\}
\label{Eq:BasisSets}
\end{align}
of $\mathfrak{su}(m)$ states, with each set spanning the space $\Psi(\Lambda_i)$~(\ref{Eq:Space}) at a different vertex $v(\Lambda_i)$ in the $\mathrm{SU}(m)$ irrep $K$.
The states in the output basis sets are represented as polynomials in lowering operators acting on the HWS, or equivalently as polynomials in creation and annihilation operators acting on the $n$-site vacuum state.
Figure~\ref{Figure:Algorithm} is an illustrative example of the algorithm.
%The algorithm receives the two inputs: $m \in \mathds{Z}^+$ and the expression $\ket{\psi^{K}_\mathrm{hws}}$ of an $\mathfrak{su}(m)$ HWS with weights $K = (\kappa_1,\kappa_2,\dots,\kappa_{m-1})$ for $\kappa_i\in \mathds{Z}^+$.
%The output from the algorithm consists of (i)~a list of weights that occur in the given irrep and (ii)~a list of basis sets~(\ref{Eq:BasisSets}), each set spanning the state space at a unique element in the list of weights.

A modified breadth-first search (BFS) graph algorithm~\cite{Moore1961,Lee1961,Knuth1998} is used to traverse the irrep graph for states.
As in usual BFS, we maintain a queue%
\footnote{A queue~\cite{Knuth1998} is a first-in-first-out data structure whose entries are maintained in order.
The two operations allowed on a queue are enqueue, i.e., the addition of entries to the rear and dequeue, which is the removal of entries from the front of the queue.
Both the enqueue and dequeue operations require constant, i.e., O(1) time.},
called currentQueue, of the states that have been constructed but whose neighbourhood is yet to be explored.
The algorithm starts with the given HWS in currentQueue and iteratively dequeues a state from the front of the queue.
States neighbouring the dequeued state are obtained by enacting one-by-one each of the lowering operators of the algebra.
The newly found states are enqueued into the rear of currentQueue, and the current state and its weight are stored.

We modify BFS to handle vertices with weight multiplicity greater than unity as follows.
While traversing the irrep graph, the algorithm directly enqueues the first state that is found at each vertex.
When the same vertex is explored along a different edge, i.e., by enacting different lowering operators, a different state is found in general.
If the newly constructed state is LI of the states already constructed at the vertex, then the new state is enqueued into currentQueue.%
%\footnote{Note that Algorithm 1 does not require knowledge of weight multiplicities: new states of a given weight are kept if they are linearly independent of states of the same weight which already stored. By definition 8, the algorithm cannot produce more or fewer LI states of a given weight than the number of canonical basis states, for otherwise the dimension of this weight sunspace would not equal the number of canonical basis states}.

The algorithm truncates when a state in currentQueue is annihilated by all of the lowering operators and there is no other state in the queue.
This final state must exist because the number of LI states in a given $\mathrm{SU}(n)$ irrep is finite according to the following standard result in representation theory.
\begin{lem}[Dimension of an $\mathrm{SU}(n)$ irrep~\cite{Slansky1981}]
\label{Lemma:Dimension}
The dimension $\Delta_K$ of the carrier space of an $\mathrm{SU}(n)$ irrep $K$ is
\begin{align}
\Delta_{K} \defeq & M(\Lambda_1) + M(\Lambda_2) + \dots + M(\Lambda_d)\nonumber \\
=& \left(1+\kappa_1\right) \left(1+\kappa_2\right)\cdots \left(1+\kappa_{n-1}\right) \left(1+\frac{\kappa_1+\kappa_2}{2}\right)\left(1+\frac{\kappa_2+\kappa_3}{2}\right)\nonumber\\
&\cdots\left(1+\frac{\kappa_{n-2}+\kappa_{n-1}}{2}\right)\left(1+\frac{\kappa_1+\kappa_2+\kappa_3}{3}\right)\left(1+\frac{\kappa_2+\kappa_3+\kappa_4}{3}\right)\nonumber\\
&\cdots\left(1+\frac{\kappa_{n-3}+\kappa_{n-2}+\kappa_{n-1}}{3}\right)\cdots\left(1+\frac{\kappa_1+\kappa_2+\dots+\kappa_{n-1}}{n-1}\right).
\label{Eq:IrrepDimension}
\end{align}
%$D_K$ scales
%\begin{equation}
%N\le \left(1+\mathrm{max}_i\left(\lambda_i\right)\right)^{\frac{n(n-1)}{2}}
%\end{equation}
%no faster than exponentially in $n^2$.
\end{lem}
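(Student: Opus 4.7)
The plan is to obtain this formula as a direct specialization of the Weyl dimension formula to the Lie algebra $\mathfrak{su}(n)$. The Weyl formula states that the dimension of the irrep with highest weight $\lambda$ equals
\begin{equation}
\dim V_\lambda = \prod_{\alpha > 0} \frac{\langle \lambda + \rho, \alpha^\vee \rangle}{\langle \rho, \alpha^\vee \rangle}\, ,
\end{equation}
where the product runs over the positive roots $\alpha$ of the algebra, $\alpha^\vee$ denotes the corresponding coroot, and $\rho$ is half the sum of the positive roots. I would take this identity as given, since it is a standard result of Lie theory that is logically independent of the algorithmic machinery developed elsewhere in the thesis.

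First I would enumerate the positive roots of $\mathfrak{su}(n)$ in terms of the simple roots $\{\alpha_1, \alpha_2, \ldots, \alpha_{n-1}\}$ of the $A_{n-1}$ root system: every positive root has the form $\alpha_{ij} \defeq \alpha_i + \alpha_{i+1} + \ldots + \alpha_j$ with $1 \le i \le j \le n-1$, yielding $\binom{n}{2}$ positive roots, organized naturally by their height $h = j - i + 1 \in \{1,2,\dots,n-1\}$. Next I would write the highest weight of irrep $K$ as $\lambda = \sum_{k=1}^{n-1} \kappa_k \omega_k$ in the basis of fundamental weights $\omega_k$, and use the defining pairing $\langle \omega_k, \alpha_\ell^\vee \rangle = \delta_{k\ell}$ to evaluate $\langle \lambda, \alpha_{ij}^\vee \rangle = \kappa_i + \kappa_{i+1} + \ldots + \kappa_j$. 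Taking $\rho = \omega_1 + \omega_2 + \ldots + \omega_{n-1}$ then gives $\langle \rho, \alpha_{ij}^\vee \rangle = j - i + 1$.

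Substituting these two ingredients into the Weyl formula and grouping factors by height would produce exactly the product on the right-hand side of Equation~(\ref{Eq:IrrepDimension}): the block $\prod_i (1 + \kappa_i)$ collects the contributions of the $n-1$ simple roots (height $1$), the block $\prod_i (1 + (\kappa_i + \kappa_{i+1})/2)$ collects the $n-2$ roots of height $2$, and so on through to the unique root of height $n-1$ contributing $(1 + (\kappa_1 + \ldots + \kappa_{n-1})/(n-1))$. A total count of $\sum_{h=1}^{n-1}(n-h) = \binom{n}{2}$ factors matches the number of positive roots, providing a consistency check.

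The main obstacle is not mathematical depth but convention-keeping: one must be careful to distinguish roots from coroots and the Chevalley basis from the orthogonal basis when computing the inner products, since sign and normalization errors here would propagate into spurious factors in the final product. A cleaner alternative, more faithful to the canonical-basis philosophy of this thesis, would be to proceed by induction on $n$ using the Gelfand--Tsetlin branching rule from $\mathrm{SU}(n)$ to $\mathrm{SU}(n-1)$: express $\Delta_K$ as a sum over admissible $\mathrm{SU}(n-1)$ irrep labels sandwiched between consecutive $\kappa_k$, apply the induction hypothesis to each summand, and verify the resulting combinatorial identity against the claimed product. This route avoids external Lie-theoretic input at the cost of a harder nested-sum manipulation, so I would adopt the Weyl-formula route as the primary proof.
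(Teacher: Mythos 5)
Your derivation is correct, but note that the thesis does not actually prove this lemma: it is quoted as a standard result with a citation to Slansky, and is used only as a black box to bound the number of linearly independent states entering the queue in the termination proof of Algorithm~1. Your Weyl-dimension-formula route is the canonical way to establish the product formula, and the computation checks out: for $A_{n-1}$ the positive roots are $\alpha_i+\cdots+\alpha_j$ with $1\le i\le j\le n-1$, the pairing $\langle\lambda,\alpha_{ij}^\vee\rangle=\kappa_i+\cdots+\kappa_j$ and $\langle\rho,\alpha_{ij}^\vee\rangle=j-i+1$ give exactly the factor $1+(\kappa_i+\cdots+\kappa_j)/(j-i+1)$, and grouping by height reproduces Equation~(\ref{Eq:IrrepDimension}) term by term, with the count $\binom{n}{2}$ of factors matching the number of positive roots. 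Since $A_{n-1}$ is simply laced, the root/coroot distinction you worry about is harmless here. Your alternative suggestion --- induction on $n$ via the Gelfand--Tsetlin branching rule --- would indeed be more in the spirit of the canonical-basis machinery of Chapter~6 and of Appendix~D, but it is strictly more work for a result the thesis is content to import; supplying the Weyl-formula proof is a reasonable way to make the citation self-contained.
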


Now we prove that the basis-set algorithm terminates.
The proof relies on the fact that the carrier space of $\mathrm{SU}(m)$ irrep is finite-dimensional (Lemma~\ref{Lemma:Dimension}).
The algorithm's computational cost is quantified by the number of times the lowering operators are applied on the HWS or on states reached by lowering from the HWS.
We show that the computational cost of Algorithm~\ref{Alg:BasisSet} is linear in the dimension $\Delta_{K}$ of the irrep whose HWS is given as input and polynomial in $n$.
\begin{thm}[Algorithm~\ref{Alg:BasisSet} terminates]
\label{Theorem:1Terminats}
Suppose Algorithm~\ref{Alg:BasisSet} receives as input an HWS $\ket{\psi^{K}_\mathrm{hws}}$ of an $\mathrm{SU}(m)$ irrep $K$.
Then the algorithm terminates after no more than $\Delta_Km(m-1)/2$ applications of lowering operators.
\end{thm}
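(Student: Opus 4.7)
The plan is to control two quantities: the number of states ever enqueued into currentStateQueue, and the amount of work done each time a state is dequeued. The product of these two bounds will give the desired estimate, and finiteness of the first bound will give termination.

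First, I would note that the set of lowering operators of $\mathfrak{su}(m)$ is $\{C_{j,k} : j>k,\, j,k\in\{1,\dots,m\}\}$, of cardinality $\binom{m}{2}=m(m-1)/2$. Hence each execution of the \textbf{for} loop inside the \textbf{while} loop in Algorithm~\ref{Alg:BasisSet} performs exactly $m(m-1)/2$ applications of a lowering operator (line~\ref{AlgLin:Lowering}). Therefore, if $Q$ denotes the total number of states ever enqueued, the total number of lowering-operator applications over the whole run is at most $Q\cdot m(m-1)/2$, and the \textbf{while} loop executes exactly $Q$ times before currentStateQueue empties.

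Next I would bound $Q$ by $\Delta_K$. Inspecting the enqueue branches, a state is enqueued only in one of two cases: either its weight $\Lambda$ has not yet appeared in weightList (a genuinely new vertex is discovered), or its weight $\Lambda$ is already present but the new state is linearly independent of all previously enqueued states sharing that weight (the test on line~\ref{AlgLin:LI}). Grouping enqueued states by weight, the set of states enqueued with weight $\Lambda_i$ is therefore by construction linearly independent, and hence lives inside the vertex space $\Psi(\Lambda_i)$ of dimension $M(\Lambda_i)$. Consequently, the number of enqueued states with weight $\Lambda_i$ is at most $M(\Lambda_i)$, and summing over the $d$ weights of the irrep,
\begin{equation}
Q=\sum_{i=1}^{d}\#\{\text{enqueued states at }v(\Lambda_i)\}\;\le\;\sum_{i=1}^{d} M(\Lambda_i)=\Delta_K,
\end{equation}
by Lemma~\ref{Lemma:Dimension}. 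Since $\Delta_K$ is finite, currentStateQueue can be enqueued at most $\Delta_K$ times, and because each iteration of the \textbf{while} loop dequeues exactly one state, the loop terminates after at most $\Delta_K$ iterations.

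Combining the two bounds, the algorithm halts and applies a lowering operator at most $\Delta_K\,m(m-1)/2$ times, as claimed. The only subtle point is the linear-independence test on line~\ref{AlgLin:LI}: one must be confident that this test is well-defined and is actually executed before enqueueing, so that no vertex ever accumulates more than $M(\Lambda_i)$ enqueued states. This is immediate from the pseudocode as written, so I do not expect any real obstacle — the proof is essentially a queue-bookkeeping argument together with the dimension formula of Lemma~\ref{Lemma:Dimension}. (Correctness, i.e.\ that the returned sets actually span each $\Psi(\Lambda_i)$, is a separate matter not required here for termination.)
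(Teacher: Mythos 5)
Your proposal is correct and follows essentially the same route as the paper's proof: bound the number of states ever enqueued by $\Delta_K$ via the linear-independence test and Lemma~\ref{Lemma:Dimension}, then multiply by the $m(m-1)/2$ lowering operators applied to each dequeued state. Your per-weight accounting (at most $M(\Lambda_i)$ enqueued states at each vertex) is a slightly more explicit version of the paper's global count of linearly independent states in the irrep, but the argument is the same.
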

\begin{proof}
The proof is in two parts.
Firstly, the number of states that enters currentStateQueue is bounded above by the dimension $\Delta_K$~(\ref{Eq:IrrepDimension}) of the irrep space.
Secondly, as each state that enters currentStateQueue is acted upon by no more than $n(n-1)/2$ lowering operators, the number of lowering operations performed is less than or equal to $\Delta_K n(n-1)/2$.

We show that the number of states that enter currentStateQueue is no more than~$\Delta_K$ as follows.
As each currentState is a linear combination of states obtained by acting lowering operators (Line~\ref{AlgLin:Lowering}) on the given HWS, each state that enters currentStateQueue is in the irrep labelled by the HWS.
Moreover, each state entering the queue is tested for linear independence (Line~\ref{AlgLin:LI}) with respect to the states already obtained.
Any state that is not LI is discarded.
Therefore, each enqueued state (Algorithm~\ref{Alg:BasisSet}, Line~\ref{AlgLine:Enqueue}) is in the correct irrep $K$ and is LI of each other enqueued state.
Thus, the number of states that ever enter currentStateQueue is no more than the number~$\Delta_K$ of LI states in irrep $K$.

In each iteration of the algorithm, each of the lowering operators are applied on the states in currentStateQueue.
The number of lowering operations is thus bounded above by the product $\Delta_K n(n-1)/2$ of the number of states that enter currentStateQueue and of the number of lowering operators in the $\mathfrak{su}(n)$ algebra.
The algorithm thus terminates after no more than $\Delta_Kn(n-1)/2$ applications of lowering operators.
\end{proof}

We now prove that the algorithm returns the correct output on termination.
The proof requires the following lemma stating that each canonical basis state can be obtained by enacting only with the lowering operators on the HWS.
\begin{lem}[Every basis-state can be reached by lowering from the HWS~\cite{Humphreys1972}]
\label{Lemma:Cyclic}
No canonical basis-state of a given $\mathrm{SU}(n)$ irrep $K$ is LI of the states obtained by lowering from the HWS by the action
\begin{equation}
c_{i_{k},j_{k}}\cdots c_{i_{2},j_{2}} c_{i_{1},j_{1}}\ket{\psi_\mathrm{hws}}\quad i_\ell \le j_\ell\, \forall 1\le \ell \le k
\label{Eq:Construction}
\end{equation}
of $k\le \sum_i{\kappa_i}$ number of $\mathfrak{su}(n)$ lowering operators on the HWS of the irrep.
\end{lem}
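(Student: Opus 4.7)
The plan is to exhibit every canonical basis state as a linear combination of words of lowering operators acting on the HWS, by showing that the span of such words is a nonzero $\mathfrak{su}(n)$-invariant subspace of the carrier space and then invoking irreducibility. Concretely, let $\mathcal{H}_K$ denote the carrier space of the irrep $K$ and define
\begin{equation}
V \defeq \operatorname{span}\Big\{c_{i_k,j_k}\cdots c_{i_1,j_1}\ket{\psi^K_\mathrm{hws}} \colon k\ge 0,\ i_\ell>j_\ell \text{ for all }\ell\Big\}\subseteq \mathcal{H}_K.
\end{equation}
Since the HWS itself (the $k=0$ word) lies in $V$, $V$ is nonzero; if I can verify that $V$ is stable under the action of every generator of $\mathfrak{su}(n)$, then irreducibility of $\mathcal{H}_K$ forces $V=\mathcal{H}_K$, which is exactly the content of the lemma.

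Stability of $V$ under lowering operators $\{c_{i,j}\colon i>j\}$ is immediate from the definition, since left-multiplying a generating word by another lowering operator produces another generating word. Stability under the Cartan subalgebra is also easy: every generating word is a weight vector (the HWS has weight $K$, and each lowering operator shifts the weight by a fixed negative root), so $h_i$ acts on it as a scalar. The substantive step is stability under the raising operators $c_{a,b}$ with $a<b$, which I would prove by induction on the word length $k$. The base case $k=0$ is precisely the defining property of the HWS. For the inductive step I repeatedly commute $c_{a,b}$ rightward using
\begin{equation}
[c_{a,b},c_{i,j}] = \delta_{b,i}\, c_{a,j} - \delta_{a,j}\, c_{i,b},
\end{equation}
so that
\begin{equation}
c_{a,b}\, c_{i_k,j_k}\cdots c_{i_1,j_1}\ket{\psi^K_\mathrm{hws}} = c_{i_k,j_k}\, c_{a,b}\, c_{i_{k-1},j_{k-1}}\cdots c_{i_1,j_1}\ket{\psi^K_\mathrm{hws}} + R,
\end{equation}
where the remainder $R$ is a sum of words of length strictly less than $k$: each commutator either contracts two adjacent generators into a single operator or, in the case that the resulting index pair is diagonal, produces a Cartan element that acts as a scalar on the underlying weight state. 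Iterating drives the raising operator all the way to the HWS where it annihilates; every surviving term has length $<k$ and lies in $V$ by the inductive hypothesis.

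The length bound $k\le\sum_i\kappa_i$ I would read off from weight considerations: each lowering operator decreases the weight by a positive root, and the total depth from the highest weight to any other weight in the irrep (in the partial order induced by positive roots) is bounded by a linear functional of the Dynkin labels $\kappa_i$, giving the stated estimate once the root-system conventions of Section~\ref{Sec:SUnDefinitions} are made explicit. I expect the main obstacle to be bookkeeping in the inductive step: when $a=j_\ell$ and $b=i_\ell$ the commutator $[c_{a,b},c_{b,a}]$ yields a Cartan element whose scalar action depends on the intermediate weight produced by the remaining lowering operators, so one must be careful that the recursion really terminates at words of strictly shorter length rather than producing longer words with Cartan coefficients. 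Collecting the commutators in the order suggested above (rightmost lowering operator first) ensures this decrease and closes the induction.
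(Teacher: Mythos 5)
The paper offers no proof of this lemma --- it is taken on citation from Humphreys --- and your argument is exactly the standard cyclicity proof one finds there: the span $V$ of all lowering words applied to the HWS is nonzero, manifestly stable under the lowering and Cartan operators, and stable under the raising operators by induction on word length via the commutation relations, so irreducibility forces $V$ to be the whole carrier space; hence no canonical basis state is LI of the lowering words. The induction is sound, with one bookkeeping caveat: the remainder terms are not all of length strictly less than $k$. When the commutator $[c_{a,b},c_{i_k,j_k}]$ is itself a lowering operator, the resulting term is a lowering word of length $k$, which lies in $V$ by definition rather than by the inductive hypothesis; when it is a raising operator it sits atop a word of length $k-1$ and is absorbed by the hypothesis; only the Cartan case genuinely shortens the word. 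The one piece you leave unfinished is the quantitative bound $k\le\sum_i\kappa_i$: the invariant-subspace argument shows only that words of \emph{some} length span, and since words of different lengths can carry the same weight (e.g.\ $c_{3,1}$ versus $c_{3,2}c_{2,1}$ in $\mathfrak{su}(3)$), reading the bound off ``weight considerations'' still requires an argument that the short words already suffice. Since the paper itself imports the entire statement from the literature, this is a defensible omission, but it is the only part of the lemma your proof does not actually establish.
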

\noindent Lemma~\ref{Lemma:Cyclic} implies that each basis state can be constructed by linearly combining states obtained on lowering from the HWS.
Algorithm~\ref{Alg:BasisSet} leverages from the construction of Equation~(\ref{Eq:Construction}) and from testing linear independence to construct the basis sets.

The correctness of the basis-set algorithm is proved as follows.
We show that each state obtained by enacting any number of lowering operators on the HWS is LD on the states returned by the algorithm.
Each canonical basis state is LD on the states obtained by lowering from the HWS in turn, so each canonical basis state is LD on the algorithm output.
The algorithm only constructs states in the correct irrep so Algorithm~\ref{Alg:BasisSet} returns a complete basis set at each weight of the irrep on truncation.
\begin{thm}[Algorithm~\ref{Alg:BasisSet} is correct]
\label{Theorem:1Correct}
The sets
\begin{align}
	&\left\{\ket{\phi^{(1)}_{\Lambda_1}},\ket{\phi^{(2)}_{\Lambda_1}},\dots, \ket{\phi^{(M(\Lambda_1)}_{\Lambda_1}}\right\},
	\left\{\ket{\phi^{(1)}_{\Lambda_2}},\ket{\phi^{(2)}_{\Lambda_2}},\dots, \ket{\phi^{(M(\Lambda_2))}_{\Lambda_2}}\right\},
	\dots,\nonumber \\
	&\left\{\ket{\phi^{(1)}_{\Lambda_i}},\ket{\phi^{(2)}_{\Lambda_i}},\dots, \ket{\phi^{(M(\Lambda_i))}_{\Lambda_i}}\right\},
	\dots,
	\left\{\ket{\phi^{(1)}_{\Lambda_d}},\ket{\phi^{(2)}_{\Lambda_d}},\dots, \ket{\phi^{(M(\Lambda_d))}_{\Lambda_d}}\right\}\nonumber
\end{align}
of states returned by Algorithm~\ref{Alg:BasisSet} span the respective vertex spaces~$\Psi(\Lambda_i)$~(\ref{Eq:Space}) at each vertex~$\Lambda_i$ of the given irrep $K$.
\end{thm}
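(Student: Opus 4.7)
The plan is to argue in three stages: (i) every state the algorithm ever produces lies in the correct vertex space; (ii) every state reachable from the HWS by any sequence of lowering operators lies in the linear span of the algorithm's output; (iii) by Lemma~\ref{Lemma:Cyclic}, the canonical basis states are all reachable in this way, so the output at each weight spans $\Psi(\Lambda_i)$.

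For (i), note that the HWS inserted into currentStateQueue belongs to the irrep~$K$ and has weight~$K$, and every subsequent state the algorithm stores is obtained from a previously stored state by application of a single lowering operator $c_{i,j}$, which preserves the irrep label and shifts the weight by a negative simple root. Hence each stored state at weight~$\Lambda_i$ belongs to the span of canonical basis states of irrep~$K$ with weight~$\Lambda_i$, which is precisely $\Psi(\Lambda_i)$.

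For (ii), I would induct on the length~$k$ of the lowering sequence $c_{i_k,j_k}\cdots c_{i_1,j_1}\ket{\psi^{K}_\mathrm{hws}}$. The base case $k=0$ is immediate, as the HWS is the first state enqueued. For the inductive step, assume every state obtained by at most $k-1$ lowerings is a linear combination of stored states; write this combination and apply $c_{i_k,j_k}$. By linearity it suffices to show that $c_{i_k,j_k}\ket{\phi}$ lies in the span of stored states for every stored state $\ket{\phi}$. But during the execution, every stored state is eventually dequeued, and at that moment the algorithm applies all lowering operators, including $c_{i_k,j_k}$; the resulting state is either independent (hence added and stored) or dependent on states with the same weight that are already stored. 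In either case, $c_{i_k,j_k}\ket{\phi}$ lies in the span of the output. The delicate point, and the main thing the proof must check, is exactly this: discarding LD states is harmless because whenever such a discard happens the new state is by definition already in the span we are trying to build, so future images of it under further lowering operators remain expressible via the retained basis. This is where the linear-independence test on Line~\ref{AlgLin:LI} does its work.

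For (iii), Lemma~\ref{Lemma:Cyclic} asserts that the canonical basis vectors of irrep~$K$ lie in the linear span of states of the form~(\ref{Eq:Construction}) obtained by successive lowerings from the HWS. Combining this with step~(ii), every canonical basis state of weight~$\Lambda_i$ lies in the span of the algorithm's output at weight~$\Lambda_i$; since those canonical basis states by definition span $\Psi(\Lambda_i)$, and since by step~(i) the algorithm's output at weight $\Lambda_i$ is contained in $\Psi(\Lambda_i)$, the two spans coincide. Finally, Theorem~\ref{Theorem:1Terminats} guarantees that the algorithm actually terminates, so this collection of vertex basis sets is delivered in finite time. The main obstacle in turning this sketch into a full proof is a careful bookkeeping argument for step~(ii) that handles the fact that currentStateQueue is populated depth-agnostically, so the order in which states are dequeued need not match the order of the inductive argument; writing the induction on abstract reachability rather than on the algorithm's runtime side-steps this issue.
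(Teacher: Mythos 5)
Your proposal is correct and follows essentially the same route as the paper's proof: an induction on the number of lowering operators applied to the HWS, the key observation that discarding a linearly dependent state is harmless because its further lowerings remain expressible through the retained states, and an appeal to Lemma~\ref{Lemma:Cyclic} to pass from lowering-reachable states to the full canonical basis. The only cosmetic differences are that you phrase the inductive step directly where the paper argues by contradiction, and that you make the containment of the output in each vertex space an explicit first step.
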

\begin{proof}
We first prove by induction that each state in the form of Equation~(\ref{Eq:Construction}) is LD on states in the algorithm output.
Our induction hypothesis is that each state
\begin{equation}
c_{i_{k},j_{k}}\cdots c_{i_{2},j_{2}} c_{i_{1},j_{1}}\ket{\psi_\mathrm{hws}},
\label{Eq:EllState}
\end{equation}
which is obtained by acting $\ell$ lowering operators on the HWS, is LD on the states returned by the algorithm $\forall\ell\in\mathds{Z^+}$.
The proof of the hypothesis follows from mathematical induction over $\ell$.

The induction hypothesis is true for base case $k = 1$.
In the first iteration, the algorithm enacts all the lowering operators on the HWS~(Algorithm~\ref{Alg:BasisSet} Line \ref{AlgLin:Lowering}) and saves each of the obtained states.
No $k = 1$ state~(\ref{Eq:EllState}) is omitted because the vertices neighbouring the HWS vertex are all being explored for the first time.
Hence, all the states that can be reached by lowering once from the HWS are added to currentStateQueue and, eventually, to stateList.

Assume that the induction hypothesis holds for $k = \ell$, i.e., each $k=\ell$ state is LD on the states in stateList.
We prove that the hypothesis holds for $k = \ell + 1$ by contradiction.
Suppose there exists a state that can be reached by enacting $\ell+1$ lowering operators on the HWS but is LI of stateList.
Let $\ket{\psi} = c_{i_{\ell+1},j_{\ell+1}}c_{i_{\ell},j_{\ell}}\cdots c_{i_{2},j_{2}} c_{i_{1},j_{1}}\ket{\psi_\mathrm{hws}}$ be such a state.

Consider now the state $\ket{\varphi} = c_{i_{\ell},j_{\ell}}\cdots c_{i_{2},j_{2}} c_{i_{1},j_{1}}\ket{\psi_\mathrm{hws}}$ obtained by enacting one less lowering operation from the HWS; i.e., $\ket{\psi} = c_{i_{\ell+1},j_{\ell+1}}\ket{\varphi}$.
We have assumed that the induction hypothesis holds for $k = \ell$.
Therefore, $\ket{\varphi}$ is LD on the states constructed by a algorithm.
In other words,
\begin{equation}
\big|{\varphi}\big\rangle = \sum_{j=1}^{J} a_j\ket{\phi_j}
\end{equation}
is LD on the stateList elements $\left\{\ket{\phi_j}:j\in\{1,2,\dots,J\}\right\}$ for complex numbers $a_j$.

The algorithm enacts the lowering operator $c_{i_{\ell+1},j_{\ell+1}}$ on each $\ket{\phi_j}$ and the resulting states are either stored in stateList or are LD on elements in stateList.
Therefore, the elements of the set $\{c_{i_{\ell+1},j_{\ell+1}}\ket{\phi_j}: j \in\{1,2,\dots,J\}\}$ are LD on the elements of stateList.
Hence, the element $c_{i_{\ell+1},j_{\ell+1}}\ket{\varphi}$ is also LD on the elements of stateList.
This dependence contradicts the supposition that $\ket{\psi} = c_{i_{\ell+1},j_{\ell+1}}\ket{\varphi}$ is LI of stateList, thereby proving the induction hypothesis for $k = \ell+1$.

The induction hypothesis is true for $\ell = 1$ and is shown to hold for $k = \ell + 1$ if it holds for $k = \ell$.
Thus, our induction hypothesis is true for all $\ell\in \mathds{Z}^+$.
Every state obtained of irrep $K$ obtained by lowering from the HWS is linearly dependent (\gls{LD}) on the basis sets that are returned by the algorithm.

We know from Lemma~\ref{Lemma:Cyclic} that each canonical basis state is LD on the states obtained by lowering.
Hence, each canonical basis state is LD on the states obtained at the output of the algorithm.
Therefore, the state returned by the algorithm span the space of irrep $K$ states, and the output basis sets span the set of all basis states of the given irrep $K$.
\end{proof}

We have proved that Algorithm~\ref{Alg:BasisSet} terminates and that it returns the correct basis sets on termination.
Now I present our algorithm for the construction of the canonical basis states.
Furthermore, I prove the correctness and termination of the canonical-basis-states algorithm.

%------------------------------%
\subsection{Canonical-basis-states algorithm (Algorithm~\ref{Alg:Main})}
\label{Subsec:BasisStates}
%------------------------------%
The algorithm for constructing the canonical basis-states of $\mathrm{SU}(n)$ requires inputs $n \in \mathds{Z}^+$ and the irrep label $K$.
The algorithm returns expressions for all the canonical basis states in the given irrep.
Figure~\ref{Figure:Main} illustrates $\mathrm{SU}(3)$ basis-state construction using our algorithm.
Algorithm~\ref{Alg:Main} details the step-by-step construction of the canonical basis states.

\begin{figure}[h]
\includegraphics[width=80ex]{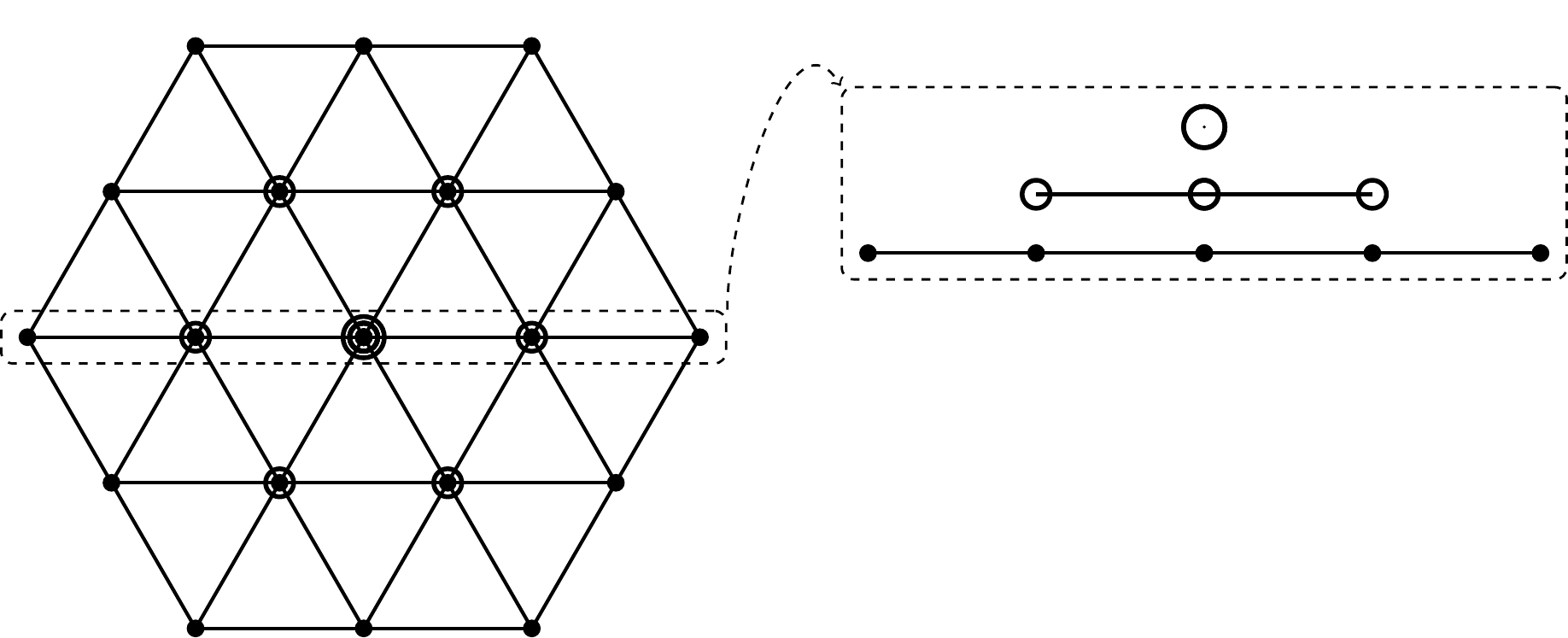}
\caption{Diagrammatic representation of the main algorithm for $n=3$.
The dots and circles represent the canonical basis states.
The dimension of the space of states at a given vertex is the sum of the number of dots and the number of circles at the vertex, for instance weights associated with dimension two are represented by one dot and one circle.
The lines connecting the dots represent the transformation from states of one weight to those of another by the action of $\mathrm{SU}(3)$ raising and lowering operators.
We use Algorithm~\ref{Alg:BasisSet} to construct basis sets for each vertex in the $\mathrm{SU}(n)$ irrep graph.
Once the basis sets for the $\mathrm{SU}(n)$ irreps are computed, the algorithm enacts the $\mathfrak{su}(n-1)$ raising operators on the $(n-1)$-dimensional sub-irreps to find the $\mathfrak{su}(n-1)$ HWS.
Then the algorithm starts with the $\mathfrak{su}(n-1)$ HWS and employs the basis-set construction (Algorithm 1) to find all the states in the $\mathfrak{su}(n-1)$ irrep labelled by the HWS.
The states thus obtained are subtracted from the set of $\mathfrak{su}(n)$ states.
A new state is chosen from the weight of highest multiplicity and the process repeated until all the $\mathfrak{su}(n-1)$ irreps are found.
}%
\label{Figure:Main}%
\end{figure}

\begin{algorithm}[p]
\caption{Canonical-basis-states algorithm}\label{Alg:Main}
	\begin{algorithmic}[1]
	\Require{
	\Statex
	\begin{itemize} 	
 	\item
 	$n \in \mathds{Z}^+$ \Comment{Algorithm constructs basis sets of $\mathfrak{su}(n)$ algebra.}
		\item
		$K = (\kappa_1,\kappa_2,\dots,\kappa_{n-1})\in \left(\mathds{Z}^+ \cup \{0\}\right)^{n-1}$ \Comment{Label of $\mathrm{SU}(n)$ irrep.}
	\end{itemize}
	}
	\Ensure{
	\Statex
	\begin{itemize} 	
		\item $\left\{\left(\Big|{\tensor*{\psi}{*^{K^{(n)}}_{\Lambda^{(n)}}^{,\dots,}_{,\dots,}^{K^{(3)},}_{\Lambda^{(3)},}^{K^{(2)}}_{\Lambda^{(2)}}}}\Big\rangle;K^{(n)},\dots,K^{(2)};\Lambda^{(n)},\dots,\Lambda^{(2)}\right)\right\}$\Comment{List of all canonical basis states and weight labels in the irrep $K^{(n)}=K$.}
	\end{itemize}
	}
	\Statex
	\Procedure{CanonicalBasisStates}{$n$, $K$}
	\State Initialize empty basisStatesList, HWS $\leftarrow \ket{\psi^K_\mathrm{hws}}$ \label{Alglin:HWS}
	\State SUmStates, SUnStates $\leftarrow$ {\textsc{BasisSet}}($n$,HWS) \label{Alglin:FirstStage}
	\While{SUnStates is not empty}
	\For{$m\in \{n,n-1,\dots,2\}$}
		\State $\Lambda_\text{max} \leftarrow \mathfrak{su}(m)$ weight with highest number of states in SUmStates.
		\State $\ket{\psi^{(m)}_\text{max}}\leftarrow$ arbitrary superposition of states at $\Lambda_\text{max}$ in SUmStates.
		\State \label{AlgLin:Raising}Apply $\mathfrak{su}(m-1)$ raising operators on $\ket{\psi^{(m)}_\text{max}}$; reach $\mathfrak{su}(m-1)$ HWS $\ket{\psi^{(m-1)}_\mathrm{hws}}$.
		\State $K^{(m-1)} \leftarrow$ {\textsc{Weight}}$\left(\ket{\psi^{(m-1)}_\mathrm{hws}}\right)$.
		\State SUmStates $\leftarrow$ {\textsc{BasisSet}}$\left(m-1,\ket{\psi^{(m-1)}_\mathrm{hws}}\right)$.
		\If{$m=2$}
		\For{All states $\ket{\psi}$ in SUmStates}
			\State $\{\Lambda^{(n)},\dots,\Lambda^{(2)}\} \leftarrow$ {\textsc{Weights}}$\left(\ket{\psi}\right)$ \Comment{$\mathfrak{su}(m)$ weights $\forall m\le n$.}
			\State Concatenate $\left(\ket{\psi};K^{(n)},\dots,K^{(2)};\Lambda^{(n)},\dots,\Lambda^{(2)} \right)$ basisStatesList
			\State Subtract SUmStates from SUnStates.
		\EndFor
		\EndIf\Comment{Else, do nothing.}	
	\EndFor
	\EndWhile
	\For{All states $\ket{\psi^{(i)}}$ in statelist}
		\State Act $\{C_{1,2},C_{2,3},\dots, C_{n-1,n}\}$ on $\ket{\psi^{(i)}}$ until HWS $\ket{\psi_\mathrm{hws}^{(i)}}$ is reached.
		\State $\ket{\psi^{(i)}}\leftarrow \e^{i\phi^{(i)}}\ket{\psi^{(i)}}$ for $\ket{\psi_\mathrm{hws}^{(i)}} = \e^{i\phi^{(i)}} \ket{\psi^K_\mathrm{hws}}$.\label{Alglin:PhaseConvention}
	\EndFor
	\State Return basisStatesList
	\EndProcedure
	\end{algorithmic}
\end{algorithm}

The canonical-basis-states algorithm proceeds by partitioning $\mathfrak{su}(n)$ basis sets into $\mathfrak{su}(m)$ basis sets for progressively smaller $m$ over $n-1$ stages.
In the first stage, the algorithm employs Lemma~\ref{Lemma:HWS} to construct the HWS of the given irrep $K$ (Algorithm~\ref{Alg:Main}, Line \ref{Alglin:HWS}).
Algorithm~\ref{Alg:BasisSet} is then used to construct the basis sets of the $\mathrm{SU}(n)$ irrep of the constructed HWS (Line \ref{Alglin:FirstStage}).

By the $(n-m)$-th stage, the algorithm has partitioned the entire $\mathfrak{su}(n)$ space into basis states of the $\mathrm{SU}(m+1)$ irreps.
In this stage, each of the $\mathfrak{su}(m+1)$ basis sets is partitioned into $\mathfrak{su}(m)$ basis sets by using $\mathfrak{su}(m)$ operators.
The algorithm searches each $\mathrm{SU}(m+1)$ irrep graph for the vertex that has the highest multiplicity.

An arbitrary linear combination of the basis states at this vertex is chosen.
The algorithm then enacts all the raising operators in the $\mathfrak{su}(m)$ subalgebra on this linear combination until the action of each of the raising operators annihilates the state.
The state thus obtained is the HWS of an $\mathrm{SU}(m)$ irrep, whose label $K^{(m)}$ can be calculated by enacting the Cartan operators on the state.

Next the algorithm performs the basis-set construction algorithm on the $\mathfrak{su}(m)$ HWS employing only the $\mathfrak{su}(m)$ lowering operators.
This procedure gives us sets of basis states that belong to the $\mathrm{SU}(m)$ irrep $K^{(m)}$.
The irrep $K^{(m)}$ basis sets are stored and are then subtracted from the $\mathrm{SU}(m+1)$ states.
The algorithm iteratively (i)~starts from the highest multiplicity vertex of $\mathrm{SU}(m+1)$ irrep graphs, (ii)~constructs a HWS by raising, (iii)~stores the basis sets of $\mathrm{SU}(m)$ irreps corresponding to this HWS and (iv) subtracts them from $\mathrm{SU}(m+1)$ states until all the states in the $\mathfrak{su}(m+1)$ are partitioned.

At the end of $n-1$ stages, we have a list of basis sets of the $\mathrm{SU}(n-1)$ irreps.
We iteratively perform the process of finding basis sets for smaller subgroups until we reach $\mathrm{SU}(2)$ basis sets, which are known to have unit multiplicity.
Hence, the algorithm returns the basis states that are eigenvectors of the Cartan operators of all $\mathrm{SU}(m)\colon m\le n$ groups.

The relative phases between the basis states are fixed by imposing Equation~(\ref{Eq:PhaseConvention}).
Each of the constructed basis states is acted upon by the simple raising operators $\{C_{1,2},C_{2,3},\dots,C_{n-1,n}\}$ until the HWS is reached.
The phase of this HWS obtained by raising is required to be the same for all basis states.
Our algorithm multiplies each of the basis states by a phase factor (Line~\ref{Alglin:PhaseConvention}) to impose the phase convention~ Equation~(\ref{Eq:PhaseConvention}) and returns the set of canonical basis states.

Now we prove that the canonical basis states algorithm terminates.
The proof of termination uses the facts that the number of basis states is equal to the dimension $\Delta_K$ of the irrep and that each basis state is added to currentStateQueue no more than once.
\begin{thm}[Algorithm~\ref{Alg:Main} terminates]
\label{Theorem:2Terminates}
Algorithm~\ref{Alg:Main} terminates after the action of no more than $\Delta_K n(n-1)^2/2$ lowering operators.
\end{thm}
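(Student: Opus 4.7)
The plan is to bound the total number of lowering-operator applications by leveraging Theorem~\ref{Theorem:1Terminats} on the cost of individual calls to Algorithm~\ref{Alg:BasisSet}. The crucial observation is that every lowering operator enacted by Algorithm~\ref{Alg:Main} occurs inside a \textsc{BasisSet} invocation---the raising operations on Line~\ref{AlgLin:Raising} and the phase-fixing step on Line~\ref{Alglin:PhaseConvention} contribute no lowering operators---so it suffices to sum the \textsc{BasisSet} costs across the entire execution.

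First, I would observe that Algorithm~\ref{Alg:Main} decomposes its work into $n-1$ stages indexed by $m \in \{n, n-1, \ldots, 2\}$. At stage $m$, the algorithm constructs the canonical basis of each $\mathrm{SU}(m)$ sub-irrep by calling \textsc{BasisSet} on the corresponding $\mathfrak{su}(m)$ HWS (which was itself produced by raising within the previously-obtained $\mathrm{SU}(m+1)$ basis set). By Theorem~\ref{Theorem:1Terminats}, each such call on an $\mathrm{SU}(m)$ irrep of dimension $\Delta_{K^{(m)}}$ invokes at most $\Delta_{K^{(m)}}\,m(m-1)/2$ lowering operators.

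Next, I would invoke the branching-rule partition of the carrier space. Under the canonical chain $\mathfrak{su}(n)\supset\mathfrak{su}(n-1)\supset\cdots\supset\mathfrak{su}(2)$, the $\mathrm{SU}(n)$ irrep decomposes at stage $m$ into a direct sum of $\mathrm{SU}(m)$ irreps whose dimensions sum to $\Delta_{K}$:
\begin{equation}
\sum_{K^{(m)}} \Delta_{K^{(m)}} \;=\; \Delta_K.
\end{equation}
The ``subtract SUmStates from SUnStates'' bookkeeping in Algorithm~\ref{Alg:Main} ensures that each $\mathrm{SU}(m)$ irrep in this decomposition is processed exactly once, so the total \textsc{BasisSet} cost incurred at stage $m$ is bounded by
\begin{equation}
\sum_{K^{(m)}}\Delta_{K^{(m)}}\,\frac{m(m-1)}{2} \;=\; \Delta_K\,\frac{m(m-1)}{2}.
\end{equation}

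Finally, summing over all $n-1$ stages and using the uniform bound $m(m-1)/2 \le n(n-1)/2$ for $m\le n$,
\begin{equation}
\sum_{m=2}^{n}\Delta_K\,\frac{m(m-1)}{2}\;\le\;(n-1)\cdot\Delta_K\,\frac{n(n-1)}{2}\;=\;\Delta_K\,\frac{n(n-1)^2}{2},
\end{equation}
which is the claimed bound. The step I expect to require the most care is justifying the partition identity at each intermediate stage: one must verify that the ``highest-multiplicity vertex, arbitrary superposition, raise to a HWS'' procedure of Lines~6--8 of Algorithm~\ref{Alg:Main} genuinely yields a HWS of a fresh $\mathrm{SU}(m-1)$ irrep disjoint from those extracted previously, so that the dimensions telescope cleanly. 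This follows from the semisimplicity of $\mathfrak{su}(m-1)$ acting on the $\mathfrak{su}(m)$ carrier space together with the Weyl branching rule, but warrants an explicit argument before the summation above can be applied.
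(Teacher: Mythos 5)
Your proof is correct and follows essentially the same counting argument as the paper: the per-stage cost is bounded by $\Delta_K$ linearly independent states times at most $n(n-1)/2$ lowering operators each, and there are $n-1$ stages. The only cosmetic difference is that you route the per-stage bound through Theorem~\ref{Theorem:1Terminats} and the branching-rule partition $\sum_{K^{(m)}}\Delta_{K^{(m)}}=\Delta_K$, whereas the paper argues directly that the states entering currentStateQueue within a stage are mutually linearly independent and hence number at most $\Delta_K$.
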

\begin{proof}
In each of the $n-1$ stages of Algorithm~\ref{Alg:Main}, the states that are added to currentStateQueue are LI of each other because of the conditions imposed in the algorithm.
There are no more LI states in the given $\mathrm{SU}(n)$ irrep than the dimension $\Delta_K$ of the irrep space.
Thus, the total number of states that are added to currentQueue in each of the $n-1$ stages is no more than $\Delta_K$.
No more that $n(n-1)/2$ lowering operators are applied on the states that enter currentQueue.
Thus, each stage terminates after the application of $\Delta_K n(n-1)/2$ lowering operations.
Furthermore, the algorithm terminates after $n-1$ stages and the application of no more than $\Delta_K n(n-1)^2/2$ lowering operations.
\end{proof}

Finally, we prove that the canonical-basis-states algorithm returns the correct output when it terminates.
\begin{thm}[Algorithm~\ref{Alg:Main} is correct]
\label{Theorem:2Correct}
The $\mathrm{SU}(n)$ states
\begin{equation}
\Big|{\tensor*{\psi}{*^{K^{(z)}}_{\Lambda^{(n)}}^{,\dots,}_{,\dots,}^{K^{(3)},}_{\Lambda^{(3)},}^{K^{(2)}}_{\Lambda^{(2)}}}}\Big\rangle
\end{equation}
yielded by Algorithm~\ref{Alg:Main} are the canonical states of Definition~\ref{Definition:CanonicalBasisStates}.\end{thm}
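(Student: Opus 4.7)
The plan is to establish correctness by induction on the stages of Algorithm~\ref{Alg:Main}, indexed by $m$ running from $n$ down to $2$, showing that after the $(n-m)$-th stage the states under consideration partition into a disjoint union of $\mathrm{SU}(m+1)$ irreps, each represented by a complete basis of $\mathfrak{su}(m+1)$-weight states already constructed via Algorithm~\ref{Alg:BasisSet}. The base case is Stage~1: by Lemma~\ref{Lemma:HWS} the initial state is the HWS of irrep $K^{(n)}=K$, and by Theorem~\ref{Theorem:1Correct} the subsequent call to \textsc{BasisSet} produces a complete basis for this irrep. Each constructed state automatically has a definite $\mathfrak{su}(n)$ weight $\Lambda^{(n)}$, since every lowering operator shifts the weight by a fixed negative root of $\mathfrak{su}(n)$, so the Cartan eigenvalues are preserved within each linear combination tested for independence.

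For the inductive step, I would assume that after stage $(n-m-1)$ the carrier space of $K$ has been decomposed into $\mathrm{SU}(m+1)$ irreps with well-defined labels $K^{(m+1)}$ and weights $\Lambda^{(m+1)}$. Within each such $\mathrm{SU}(m+1)$ irrep, the branching rule $\mathrm{SU}(m+1) \downarrow \mathrm{SU}(m)$ along the canonical chain guarantees a multiplicity-free decomposition into $\mathrm{SU}(m)$ sub-irreps. Selecting the $\mathfrak{su}(m+1)$ weight of highest multiplicity ensures that at least one new $\mathfrak{su}(m)$ HWS lives in that weight space; applying the simple raising operators $\{C_{i,i+1}: 1 \le i \le m-1\}$ of $\mathfrak{su}(m)$ to an arbitrary superposition and iterating until annihilation produces precisely an $\mathfrak{su}(m)$ HWS, whose $\mathfrak{su}(m)$ Cartan eigenvalues read off $K^{(m)}$. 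Invoking Theorems~\ref{Theorem:1Terminats} and~\ref{Theorem:1Correct} on this HWS with only the $\mathfrak{su}(m)$ lowering operators yields a complete basis of the corresponding $\mathrm{SU}(m)$ sub-irrep, which is then subtracted. The iteration exhausts the $\mathrm{SU}(m+1)$ carrier space by a dimension-counting argument that matches Lemma~\ref{Lemma:Dimension} applied at both levels.

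At the conclusion of stage $m=2$, every returned state is simultaneously an eigenstate of every $\mathfrak{su}(m)$ Cartan operator with eigenvalues $\Lambda^{(m)}$ and lies in a fixed $\mathrm{SU}(m)$ sub-irrep labelled $K^{(m)}$ for every $2 \le m \le n$, which is exactly Definition~\ref{Definition:CanonicalBasisStates}. To finish, I would verify that the final loop (Line~\ref{Alglin:PhaseConvention}) enforces the convention~(\ref{Eq:PhaseConvention}): repeated application of $C_{1,2},C_{2,3},\dots,C_{n-1,n}$ to any basis state drives it to a scalar multiple $\e^{\mathrm{i}\phi^{(i)}}\ket{\psi^{K}_\mathrm{hws}}$ of the overall HWS (existence follows from irreducibility and uniqueness from the nondegeneracy of the HWS), and rephasing by $\e^{-\mathrm{i}\phi^{(i)}}$ makes the distinguished matrix element positive as required.

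The main obstacle will be rigorously justifying the inductive step, specifically that an \emph{arbitrary} superposition chosen at the highest-multiplicity weight of an $\mathrm{SU}(m+1)$ irrep reliably produces, upon $\mathfrak{su}(m)$-raising, a nonzero HWS of an $\mathrm{SU}(m)$ sub-irrep not already subtracted, and that successive iterations span all sub-irreps in the branching without overlap. Handling this requires combining the multiplicity-freeness of the canonical chain (Appendix~\ref{Appendix:SubAlgebraChoice}) with a careful genericity argument, namely that a generic vector at the chosen weight has nonzero projection onto each surviving sub-irrep, together with dimension bookkeeping via Lemma~\ref{Lemma:Dimension} to confirm completeness and absence of double-counting.
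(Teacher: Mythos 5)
Your proposal is correct and follows essentially the same route as the paper's proof: establish that every output state has a well-defined weight $\Lambda^{(m)}$ (because lowering operators shift weights by fixed roots and linear combinations are only ever taken within a single weight space) and a well-defined irrep label $K^{(m)}$ (because each state is reached by lowering from an $\mathfrak{su}(m)$ HWS, constructed via Lemma~\ref{Lemma:HWS} at the top level and via raising-until-annihilation at each subsequent level), with Theorem~\ref{Theorem:1Correct} supplying completeness. The one substantive difference is that you make the stage-by-stage induction explicit and you invoke the multiplicity-freeness of the canonical $\mathrm{SU}(m+1)\downarrow\mathrm{SU}(m)$ branching together with dimension counting to control the extraction of sub-irreps; the paper instead disposes of this point in one line, asserting that the raised states are ``legitimate $\mathfrak{su}(m)$ HWS's or possibly linear combinations of $\mathfrak{su}(m)$ HWS's'' and appealing to the existence and uniqueness of the canonical basis. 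The obstacle you flag --- that an \emph{arbitrary} superposition at the chosen weight must have nonzero component in a not-yet-subtracted sub-irrep, and that the resulting HWS must not be a mixture of HWS's from distinct sub-irreps --- is real and is not resolved in the paper either; your proposed remedy (multiplicity-free branching forces HWS's of distinct sub-irreps to sit at distinct $\mathfrak{su}(m)$ weights, so a definite-weight state annihilated by all $\mathfrak{su}(m)$ raising operators determines a unique sub-irrep, and the prior subtraction guarantees the surviving component is new) is the right ingredient and, if written out, would make your version strictly more rigorous than the published one.
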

\begin{proof}
The theorem holds if the states yielded by Algorithm~\ref{Alg:Main} have well defined weights and have well defined irrep labels.
First we show that the weight of each state in the output of the algorithm is well defined.
Each state in the output is obtained either by enacting lowering operators on the HWS or by taking linear combinations of states that have the same weight.
Linear combination of states with the same weights have well defined weights themselves.
Thus, all the output states have well defined weights for $\mathrm{SU}(m)$ irreps for all $2\le m\le n$.

We prove that the states have well defined $\mathrm{SU}(m)$ irrep label separately for $m = n$ and for $2\le m\le n-1$.
The correctness of the $\mathfrak{su}(m)$ HWS follows from Lemma~\ref{Lemma:HWS}.
Every state in the output is a linear combination of states obtained by lowering from the constructed $\mathfrak{su}(m)$ HWS.
Thus, every state is in the correct $\mathrm{SU}(n)$ irrep $K^{(n)}$.

The algorithm (Line~\ref{AlgLin:Raising}) enacts raising operators on linear combinations of $\mathfrak{su}(m+1)$ basis states at one weight until each of the raising operators annihilates the raised state.
The $\mathfrak{su}(m)$ state thus obtained are legitimate $\mathfrak{su}(m)$ HWS's or possibly linear combinations of $\mathfrak{su}(m)$ HWS's by construction.
The uniqueness of the HWS is guaranteed by the existence of the canonical basis~\cite{Humphreys1972}.
Each of the canonical basis states is obtained by lowering from these $\mathfrak{su}(m)$ HWS's using $\mathfrak{su}(m)$ lowering operators and thus have well defined irrep labels for all $2\le m \le n-1$.

We have shown that the states yielded by the algorithm have well defined values of irrep labels $K^{(m)}$ for $\mathfrak{su}(m)$ algebras for all $\{m:2\le m\le n\}$ and of $\mathfrak{su}(m)$ weights $\Lambda^{(m)}$ for all $\{m:2\le m\le n\}$.
Thus, these states are the canonical $\mathrm{SU}(n)$ basis states.
This completes the proof of correctness of Algorithm~\ref{Alg:Main}.
\end{proof}

We have proved that Algorithm~\ref{Alg:Main} terminates and returns the canonical basis states on termination.
The states constructed by the canonical-basis-states algorithm are employed to compute arbitrary $\mathrm{SU}(n)$ $\mathcal{D}$-functions using an algorithm presented in the next subsection.

%------------------------------%
\subsection{$\mathcal{D}$-function algorithm}
\label{Subsec:D}
%------------------------------%
\begin{algorithm}[p]
\caption{$\mathcal{D}$-function Algorithm}\label{Alg:D}
	\begin{algorithmic}[1]
	\Require{
	\Statex
	\begin{itemize} 	
 	\item
 	$n \in \mathds{Z}^+$ \Comment{Algorithm constructs $\mathcal{D}$-functions of $\mathrm{SU}(n)$ elements.}
		\item
		$\Omega = \{\omega_1,\omega_2,\dots,\omega_{n^2-1}\} \in \mathds{R}^{n^2-1}$ \Comment{Parametrization of $\mathrm{SU}(n)$ transformation.}
		\item
		$K^{(n)},\dots,K^{(2)}$ and ${\Lambda^{(n)},\dots,\Lambda^{(2)}}$ \Comment{Row Label.}
		\item
		$K^{\prime(n)},\dots,K^{\prime(2)}$ and ${\Lambda'^{(n)},\dots,\Lambda'^{(2)}}$ \Comment{Column Label.}
	\end{itemize}
	}
	\Ensure{
	\Statex
	\begin{itemize} 	
		\item $\tensor*{\D}{*^{K^{(n)}}_{\Lambda^{(n)}}^{,\dots,}_{,\dots,}^{K^{(3)},}_{\Lambda^{(3)},}^{K^{(2)}}_{\Lambda^{(2)}}^;_;^{K^{\prime(n)}}_{\Lambda^{\prime(n)}}^{,\dots,}_{,\dots,}^{K^{\prime(3)},}_{\Lambda^{\prime(3)},}^{K^{\prime(2)}}_{\Lambda^{\prime(2)}}}(\Omega)$	
	\end{itemize}
	}
	\Statex
	\Procedure{D}{$n,\Omega, K^{(n)},\dots,K^{(2)}, K'^{(n)},\dots,K'^{(2)}, {\Lambda^{(n)},\dots,\Lambda^{(2)}},\Lambda^{\prime (n)},\dots,\Lambda^{\prime (2)}$}
	\State Construct $V \in GL(n,\mathds{C})$ from parametrization $\Omega$~\cite{Reck1994}
	\If{$K^{(n)} = K^{\prime(n)}$}
	\State $\Big|{\tensor*{\psi}{*^{K^{(n)}}_{\Lambda^{(n)}}^{,\dots,}_{,\dots,}^{K^{(3)},}_{\Lambda^{(3)},}^{K^{(2)}}_{\Lambda^{(2)}}}}\Big\rangle \leftarrow$ {using \textsc{CanonicalBasisStates}}($n,K(n)$).	
	\State $\Big|{\tensor*{\psi}{*^{K^{\prime(n)}}_{\Lambda^{\prime(n)}}^{,\dots,}_{,\dots,}^{K^{\prime(3)},}_{\Lambda^{\prime(3)},}^{K^{\prime(2)}}_{\Lambda^{\prime(2)}}}}\Big\rangle \leftarrow$ using {\textsc{CanonicalBasisStates}}($n,K'(n)$).	
	\State Construct $\Big\langle{\tensor*{\psi}{*^{K^{(n)}}_{\Lambda^{(n)}}^{,\dots,}_{,\dots,}^{K^{(3)},}_{\Lambda^{(3)},}^{K^{(2)}}_{\Lambda^{(2)}}}}\Big|$ from $\Big|{\tensor*{\psi}{*^{K^{(n)}}_{\Lambda^{(n)}}^{,\dots,}_{,\dots,}^{K^{(3)},}_{\Lambda^{(3)},}^{K^{(2)}}_{\Lambda^{(2)}}}}\Big\rangle$ by complex conjugation.
	\State Construct $V(\Omega)\Big|{\tensor*{\psi}{*^{K^{\prime(n)}}_{\Lambda^{\prime(n)}}^{,\dots,}_{,\dots,}^{K^{\prime(3)},}_{\Lambda^{\prime(3)},}^{K^{\prime(2)}}_{\Lambda^{\prime(2)}}}}\Big\rangle$ using $a^\dagger_{i,k} \to \sum_j V_{i,j}(\Omega) a^\dagger_{j,k}\,\forall\,a_{i,k},a^\dagger_{i,k}$.	
	\State Return $\D =\Big\langle{\tensor*{\psi}{*^{K^{(n)}}_{\Lambda^{(n)}}^{,\dots,}_{,\dots,}^{K^{(3)},}_{\Lambda^{(3)},}^{K^{(2)}}_{\Lambda^{(2)}}}}\Big|V(\Omega) \Big|{\tensor*{\psi}{*^{K^{\prime(n)}}_{\Lambda^{\prime(n)}}^{,\dots,}_{,\dots,}^{K^{\prime(3)},}_{\Lambda^{\prime(3)},}^{K^{\prime(2)}}_{\Lambda^{\prime(2)}}}}\Big\rangle$
	\Else
	\State Return $\D = 0$
	\EndIf
	\EndProcedure
	\end{algorithmic}
\end{algorithm}

Our task is to construct the $\mathcal{D}$-function
\begin{equation}
\tensor*{\D}{*^{K^{(n)}}_{\Lambda^{(n)}}^{,\dots,}_{,\dots,}^{K^{(3)},}_{\Lambda^{(3)},}^{K^{(2)}}_{\Lambda^{(2)}}^;_;^{K^{\prime(n)}}_{\Lambda^{\prime(n)}}^{,\dots,}_{,\dots,}^{K^{\prime(3)},}_{\Lambda^{\prime(3)},}^{K^{\prime(2)}}_{\Lambda^{\prime(2)}}}(\Omega)
\label{Eq:DDef2}
\end{equation}
for given labels $\{K^{(m)}\}, \{\Lambda^{(m)}\}, \{K'^{(m)}\}, \{\Lambda'^{(m)}\}$ of the $\mathrm{SU}(n)$ element $V(\Omega)$ given by the parametrization $\Omega$.
The $\mathcal{D}$-function~(\ref{Eq:DDef2}) is computed as the inner product between the state
\begin{equation}
\Big|{\tensor*{\psi}{*^{K^{(n)}}_{\Lambda^{(n)}}^{,\dots,}_{,\dots,}^{K^{(3)},}_{\Lambda^{(3)},}^{K^{(2)}}_{\Lambda^{(2)}}}}\Big\rangle
\end{equation}
of Equation~(\ref{Eq:States}) and the transformed state
\begin{equation}
V(\Omega)\Big|{\tensor*{\psi}{*^{K^{\prime(n)}}_{\Lambda^{\prime(n)}}^{,\dots,}_{,\dots,}^{K^{\prime(3)},}_{\Lambda^{\prime(3)},}^{K^{\prime(2)}}_{\Lambda^{\prime(2)}}}\Big\rangle}.
\end{equation}
Algorithm~\ref{Alg:D} constructs the fundamental representation, i.e., the $n\times n$ matrix, $V_{ij}$ of the $\mathrm{SU}(n)$ element $V(\Omega)$~\cite{Reck1994}.
Then, the expressions for the basis states
\begin{equation}
\Big|{\tensor*{\psi}{*^{K^{(n)}}_{\Lambda^{(n)}}^{,\dots,}_{,\dots,}^{K^{(3)},}_{\Lambda^{(3)},}^{K^{(2)}}_{\Lambda^{(2)}}}}\Big\rangle, \Big|{\tensor*{\psi}{*^{K^{\prime(n)}}_{\Lambda^{\prime(n)}}^{,\dots,}_{,\dots,}^{K^{\prime(3)},}_{\Lambda^{\prime(3)},}^{K^{\prime(2)}}_{\Lambda^{\prime(2)}}}\Big\rangle
}
\label{Eq:States}
\end{equation}
corresponding to the given labels are computed using the canonical-basis-states algorithm.
The basis states thus obtained are expressed as summations over products of creation and annihilation operators.
$V(\Omega)$ acts on the boson realization by transforming each boson independently according to
\begin{equation}
a^\dagger_{i,j} \rightarrow a^{\dagger\prime}_{i,j} = \sum_{k} V_{ik}(\Omega)a^\dagger_{k,j},
\label{Eq:TransformationU}
\end{equation}
where $\{V_{ik}(\Omega)\}$ are the matrix elements of the $n\times n$ representation of $V(\Omega)$.
The algorithm transforms the second basis state of Equation~(\ref{Eq:States}) under the action of $V(\Omega)$ by replacing each of the creation and annihilation operators of the state according to Equation~(\ref{Eq:TransformationU}).

The $\mathcal{D}$-function is evaluated as the inner product using the commutation relations~(\ref{Eq:ccr}) or equivalently by using the Wick's theorem~\cite{Bogoljubov1959}.
The correctness and termination of Algorithm~\ref{Alg:D} follows directly from Theorems~\ref{Theorem:2Terminates} and~\ref{Theorem:2Correct}, which completes our algorithms for the computation of boson realizations of $\mathrm{SU}(n)$ states and of $\mathcal{D}$-functions.

%------------------------------%
\subsection{Conclusion}
%------------------------------%
In summary, we have devised an algorithm to compute expressions for boson realizations of the canonical basis states of $\mathrm{SU}(n)$ irreps.
Boson realizations are ideally suited for analyzing the physics of single photons, providing a tractable interpretation to basis states as multi-photon states and to transformations on these states as optical transformations.
Furthermore, we have devised an algorithm to compute expressions for $\mathrm{SU}(n)$ $\mathcal{D}$-functions in terms of elements of the fundamental representation of the group.
Our algorithm offers significant advantage over competing algorithms to construct $\mathcal{D}$-functions.
Furthermore, our $\mathcal{D}$-function algorithm lays the groundwork for generalizing the analysis of optical interferometry beyond the three-photon level~\cite{Tan2013,Guise2014,Tillmann2015}.

This work is the first known application of graph-theoretic algorithms to $\mathrm{SU}(n)$ representation theory.
We overcome the problem of $\mathrm{SU}(n)$ weight multiplicity greater than unity by modifying the breadth-first graph-search algorithm.
Our procedure for generating a basis set can be extended to subgroups of $\mathrm{SU}(n)$.
In particular, the boson realization of the HWS of O$(2k)$ and O$(2k+1)$ irreps can be constructed along the lines of Lemma~\ref{Lemma:HWS}~\cite{Pang1967,Wong1969,Lohe1971}.
Graph-search algorithms can be employed to construct $\BigO{n}$ basis states and $\mathcal{D}$-functions if the problem of labelling $\BigO{n}$ basis states can be overcome.
Our approach opens the possibility of exploiting the diverse graph-algorithms toolkit for solving problems in representation theory of Lie groups.

%=================%
\section{$\mathcal{D}$-functions and immanants of unitary matrices and submatrices}
\label{Sec:SunImmanantResult}
%=================%

%------------------------------%
\subsection{Introduction and basic result}
%------------------------------%
In this section I present result on the connection between immanants and group functions (or $\mathcal{D}$-functions) for the unitary groups.
We extend a result of Kostant~\cite{Kostant1995} to submatrices of the fundamental representation of these groups.

Immanants of totally non-negative and of Hermitian matrices have been studied in~\cite{Pate1992,Pate1994,Rhoades2005};
our results instead are applicable to unitary matrices and depend on the well-known
duality between representations of the unitary and of the symmetric groups~\cite{Weyl1950,Rowe2012}.
This duality identifies some irreps of {$\mathrm{SU}(m)$ with irreps of $S_N$ with $N\le m$.}
If $\{\lambda\}=\{\lambda_1,\lambda_2,\ldots,\lambda_N\}$ is a partition of $\lambda=\sum_k\lambda_k$ labelling an irrep of $S_N$, we {choose to} label irreps of $\mathrm{SU}(m)$ using the round brackets
$(\lambda)$ with {$m-1$} entries defined by {$(\lambda_1-\lambda_2,\ldots,\lambda_{N-1}-\lambda_N, \lambda_N)$ and trailing zeroes omitted.} Thus, the irrep $\{21\}$ of {$S_3$} corresponds to the $\mathrm{SU}(4)$ irrep $(110)\sim(11)$, the $\mathrm{SU}(5)$ irrep $(1100)\sim(11)$ etc.
%
%Group functions (or Wigner $\mathcal{D}$-functions) are defined as the overlap between two basis states of the same irrep of $\mathrm{SU}(m)$, one of which
%has been transformed by an element $\Omega\in \mathfrak{su}{(m)}$ .
%{If $\ket{\psi^{(\lambda)}_\ell}$, $\ket{\psi^{(\tau)}_t}$ are any two
%basis states in irreps $(\lambda)$ and $(\tau)$ respectively and $T^{(\lambda)}(\Omega)$ is the matrix representing element
%$\Omega\in\operatorname{SU}(m)$ in the irrep $(\lambda)$}, then
%\begin{equation}
%\mathcal{D}^{(\lambda)}_{\ell t}(\Omega)\defeq\bra{\psi^{(\lambda)}_\ell}T^{(\lambda)}(\Omega)\ket{\psi^{(\tau)}_t}\delta_{\lambda,\tau}\, .
%\end{equation}

Kostant~\cite{Kostant1995} has shown a simple connection between immanants~(\ref{defineimmanant}) of the fundamental representation
$T$ of $\mathrm{SU}(m)$ group elements and group functions $\mathcal{D}^{(\lambda)} _{tt}$ of $\mathrm{SU}(m)$
with $t$ running over each of the zero-weight states in irrep $(\lambda)$.
Specifically, let $\Omega\in\operatorname{SU}(m)$ and $T(\Omega)$ {(no superscript)} is the defining $m\times m$ representation of $\Omega$.
{Further define the matrix $\mathcal{D}^{(\tau)}(\Omega)$ by}
\begin{equation}
\left(\mathcal{D}^{(\tau)}(\Omega)\right)_{rt}=\mathcal{D}^{(\tau)}_{rt}(\Omega) \label{calD}
\end{equation}
with $r,t$ {restricted to} labelling zero-weight states in the irrep $(\tau)$. Then we have~\cite{Kostant1995}
\begin{equation}
\operatorname{imm}^{\{\tau\}}(T(\Omega))
=\operatorname{Tr}\left[\mathcal{D}^{(\tau)}(\Omega)\right].\label{kostantD}
\end{equation}

For $\mathrm{SU}(2)$, this result simply states that the permanent of the matrix
\begin{equation}
T(\Omega)=\left(
\begin{array}{cc}
 \e^{-(\frac{1}{2}) i (\alpha +\gamma )} \cos \left(\frac{\beta }{2}\right) & -\e^{-(\frac{1}{2}) i (\alpha -\gamma )} \sin \left(\frac{\beta }{2}\right) \\
 \e^{(\frac{1}{2}) i (\alpha -\gamma )} \sin \left(\frac{\beta }{2}\right) & \e^{(\frac{1}{2}) i (\alpha +\gamma )} \cos \left(\frac{\beta }{2}\right) \\
\end{array}
\right)\, ,
\end{equation}
where $\Omega=(\alpha,\beta,\gamma)\in \operatorname{SU}(2)$,
is the $\mathrm{SU}(2)$-function $\operatorname{imm}^{\{2\}}(T(\Omega))=\mathcal{D}^1_{00}(\alpha,\beta,\gamma)=\cos\beta$ whereas
the determinant $\operatorname{imm}^{\{1,1\}}(T(\Omega))=\mathcal{D}^0_{00}(\alpha,\beta,\gamma)=1$.
The trace of Equation~(\ref{kostantD}) contains a single term in both $\mathrm{SU}(2)$ cases as the zero-weight subspaces in irreps $J=1$ and $J=0$ are both one-dimensional.
Here and henceforth we follow the physics notation of
labelling $\mathrm{SU}(2)$ irreps using the angular momentum label $J=\frac{1}{2}\lambda$, such that $2J$ is an integer. Thus, $\mathcal{D}^{1}_{00}(\Omega)$ is an
$\mathrm{SU}(2)$ $\mathcal{D}$-function in the three-dimensional irrep $J=1$.

%------------------------------%
\subsection{Recap of notation and an illustration}
%------------------------------%
Here I recall the relevant notation of boson realization of $\mathrm{SU}(n)$ states and $\mathfrak{su}(n)$ operators (Section~\ref{Sec:BosonRealizations}) and present and example of the connection between $\mathcal{D}$-functions and immanants.
We first introduce a basis for $\mathds{H}^{(1)}_p$, which is the $p$-th copy of the
carrier space for fundamental irrep $\{1\} \equiv (1)$ of $\mathrm{SU}(m)$. I write this basis in terms of harmonic oscillator states according to
\begin{equation}
\mathds{H}^{(1)}_p=\operatorname{span}\{a_k^\dagger(\omega_p)\ket{0}\, , k=1,\ldots,m\}\, .
\end{equation}
The label $\omega_p$ can be thought of as an internal DOF, say the frequency, of the $p$-th oscillator.
We introduce the (reducible) Hilbert space
$\mathds{H}^{(N)} \defeq \mathds{H}^{(1)}_1\otimes \mathds{H}^{(1)}_2\ldots\otimes \mathds{H}^{(1)}_N$, which is spanned by the set of harmonic oscillator states of the type
\begin{equation}
a_k^\dagger(\omega_1)a_r^\dagger(\omega_2)\ldots a_s^\dagger(\omega_N)\ket{0}\, , \quad
k =1,\ldots,m\, ;\quad r=1,\ldots,m\, ,{\emph{etc}}\label{Hpstates}
\end{equation}

Now I detail the action of the permutation group $S_N$ on our basis states.% $\mathds{H}^{(N)}$.
 The action of $P(\sigma)$ is defined as
\begin{align}
P(\sigma)a_k^\dagger(\omega_1)a_r^\dagger(\omega_2)\ldots a_s^\dagger(\omega_N)\ket{0} =a_k^\dagger(\omega_{\sigma^{-1}(1)})a_r^\dagger(\omega_{\sigma^{-1}(2)})\ldots a_s^\dagger(\omega_{\sigma^{-1}(N)})\ket{0}.
\label{leftaction}
\end{align}
Alternatively, one may consider each of the sets $\{a^\dagger_k(\omega_p); k=1,\ldots,m\}$, which are labelled by $p$, as a tensor operator that carries the defining irrep $(1)$ of $\mathfrak{su}(m)$.
$\ket{0}$ is invariant under the action of $S_N$ and $\mathfrak{su}(m)$ elements.

Specifically, algebra u$(m)$ is spanned by the $S_N$-invariant operators
\begin{equation}
\hat C_{ij}=\sum_{k=1}^N a_i^\dagger(\omega_k)a_j (\omega_k)\quad i,j=1,\ldots, m.
\end{equation}
The $\mathfrak{su}(m)$ subalgebra is obtained from u$(m)$ by removing the diagonal operator $\sum_{k=1}^m\hat C_{kk}$,
so the Cartan subalgebra of $\mathrm{su}(m)$ is spanned by the traceless diagonal operators
\begin{equation}
\hat h_i \defeq \hat C_{ii}-\hat C_{i+1,i+1},\quad i=1,\ldots, m-1.
\end{equation}

Furthermore, a basis for the irrep $(\lambda)$ of $\mathfrak{su}(m)$ is given in terms of the harmonic oscillator occupation number $n$ according to
\begin{equation}
\ket{\psi^{(\lambda^{(n)})}n;\Lambda}=
\ket{(\lambda)n_1n_2,\ldots,n_m;(\lambda^{(n-1)})\ldots (\Lambda^{(2)})},
\end{equation}
where $n\defeq(n_1,n_2,\ldots,n_m)$ and $n_k$ indicates the number of excitations in mode $k\le m$.
The weight of this state is equals the $(m-1)$-tuple $[n_1-n_2,n_2-n_3,\ldots, n_{m-1}-n_m]$.
Finally, the multi-index $\Lambda: =(\lambda')\ldots (J)$ refers to a collection of indices, each of which labels irreps in
the subalgebra chain
\begin{equation}
\begin{array}{cccccc}
\mathfrak{su}(m)&\supset& \mathfrak{su}(m-1)&\supset&\ldots &\supset \mathfrak{su}(2)\\
(\lambda^{(n)}) && (\lambda^{(n-1)}) && & (\lambda^{(2)})
\end{array},
\label{subalgebrachain}
\end{equation}
and is needed to fully distinguish states having the same weight. The representation labels are all integers.
We take the subalgebra
$\mathfrak{su}(k-1)\subset \mathfrak{su}(k)$ to be spanned by the subset of the $k\times k$ hermitian traceless matrices of the form
\begin{equation}
\left(\begin{array}{cccc}
0&0&\ldots&0\\
0&*&*&*\\
\vdots &*&*&*\\
0&*&*&*\end{array}\right),
\end{equation}
where $*$ denote possible non-zero entries in $\mathfrak{su}(k-1)$.

As an illustration of the connection between $\mathcal{D}$-function and immanants, consider $n=3$.
The matrix representation of $\Omega\in\operatorname{SU}(3)$ in the fundamental representation
(which is denoted by (1) once the trailing $0$ has been eliminated) of $\mathrm{SU}(3)$ as
\begin{equation}
T(\Omega)=\left(\begin{array}{ccc}
\mathcal{D}^{(1)}_{100(0);100(0)}(\Omega)&\mathcal{D}^{(1)}_{100(0);010(1)}(\Omega)&\mathcal{D}^{(1)}_{100(0);001(1)}(\Omega)\\
\mathcal{D}^{(1)}_{010(1);100(0)}(\Omega)&\mathcal{D}^{(1)}_{010(1);010(1)}(\Omega)&\mathcal{D}^{(1)}_{010(1);001(1)}(\Omega)\\
\mathcal{D}^{(1)}_{001(1);100(0)}(\Omega)&\mathcal{D}^{(1)}_{001(1);010(1)}(\Omega)&\mathcal{D}^{(1)}_{001(1);001(1)}(\Omega)
\end{array}\right)\, ,
\end{equation}
with $\Omega\in \operatorname{SU}(3)$. The result of Kostant~\cite{Kostant1995} applied to $\mathfrak{su}{(3)}$ then states that
\begin{align}
\operatorname{per}(T(\Omega))=\operatorname{imm}^{\{3\}}(T(\Omega))=&\mathcal{D}^{(3)}_{111(1);111(1)}(\Omega)\, ,\nonumber \\
\operatorname{imm}^{\{21\}}(T(\Omega))=&\mathcal{D}^{(11)}_{111(1);111(1)}(\Omega)+\mathcal{D}^{(11)}_{111(0);111(0)}(\Omega)\, ,\\
\operatorname{det}(T(\Omega))=\operatorname{imm}^{\{111\}}(T(\Omega))=&\mathcal{D}^{(0)}_{000(0);000(0)}(\Omega)=1\, ,\nonumber
\end{align}
For convenience, we use the symbols $T$ and $\Omega$ to respectively denote matrices and elements in different $\mathrm{SU}(m)$ without
indicating $m$; this does not affect our conclusions as our results apply to any $m$.

Our novel contribution is to extend result of~\cite{Kostant1995} encapsulated in Equation~(\ref{kostantD})
to submatrices of the fundamental representations.
Our results enable application of $\mathrm{SU}(n)$ methods to the $m$-photon $n$-channel interferometry for $n>m$.
I present a proof of our result in the next section.

%------------------------------%
\subsection{Proving the theorem: {the case $N=m$}}
%------------------------------%

We consider the state $\ket{\Psi_{123\ldots m}}=a_1^\dagger(\omega_1)a_2^\dagger(\omega_2)\ldots a_m^\dagger(\omega_m)\ket{0}$, which lives in the (reducible) tensor product space $\mathds{H}^{(m)}=\mathds{H}^{(1)}_1\otimes \mathds{H}^{(1)}_2\ldots\otimes \mathds{H}^{(1)}_m$.
The first lemma deals with the weight of this state.

\begin{lem}\label{Lem:SUmWeightZero}
The $\mathrm{SU}(m)$ weight of $\ket{\Psi_{123\ldots m}}$ is 0.
 This is immediate since every mode is occupied once, so $n_i=1 \forall\, i$. Since the component $k$ of the
 weight is $n_k-n_{k+1}$, $\hat h_i\ket{\Psi_{123\ldots m}}=0\ \forall\, i$.\hfill $\blacksquare$
\end{lem}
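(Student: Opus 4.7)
The plan is to unpack what "weight" means in the bosonic realization and then simply count occupation numbers. Concretely, I would first recall that the Cartan subalgebra of $\mathfrak{su}(m)$ is spanned by the operators $\hat h_i = \hat C_{ii} - \hat C_{i+1,i+1}$ for $i = 1,\dots,m-1$, where $\hat C_{kk} = \sum_{p=1}^{m} a_k^\dagger(\omega_p) a_k(\omega_p)$ in our $N=m$ setup. Thus $\hat C_{kk}$ is simply the total number operator for excitations in mode $k$, summed over all internal labels $\omega_p$.

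Next I would compute $\hat C_{kk}\ket{\Psi_{123\ldots m}}$ directly using the canonical commutation relations $[a_i(\omega_p), a_j^\dagger(\omega_q)] = \delta_{ij}\delta_{pq}$. Commuting each $a_k(\omega_p)$ through the product $a_1^\dagger(\omega_1)a_2^\dagger(\omega_2)\cdots a_m^\dagger(\omega_m)$, only the term with $p = k$ survives, and it yields one excitation in mode $k$. Since every mode index $k \in \{1,\dots,m\}$ appears exactly once in the product defining $\ket{\Psi_{123\ldots m}}$, we obtain $n_k \defeq \hat C_{kk}\ket{\Psi_{123\ldots m}} = \ket{\Psi_{123\ldots m}}$, i.e.\ the occupation number is $n_k = 1$ for each $k$.

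The conclusion is then immediate: for every $i \in \{1,\dots,m-1\}$,
\begin{equation}
\hat h_i \ket{\Psi_{123\ldots m}} = (n_i - n_{i+1})\ket{\Psi_{123\ldots m}} = (1 - 1)\ket{\Psi_{123\ldots m}} = 0,
\end{equation}
so all Cartan eigenvalues vanish and the $(m-1)$-tuple weight $[n_1 - n_2, n_2 - n_3, \ldots, n_{m-1} - n_m]$ is the zero weight.

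There is no real obstacle here: the argument is essentially a bookkeeping check on occupation numbers, and the slight subtlety worth spelling out is only that $\hat C_{kk}$ sums over the internal frequency labels $\omega_p$, so having distinct $\omega_p$'s in the product does not affect the weight computation. I would not expect to need any nontrivial machinery beyond the commutation relations and the definitions already recalled in Sections~\ref{Sec:SUnDefinitions} and~\ref{Sec:BosonRealizations}.
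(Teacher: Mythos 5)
Your argument is correct and is essentially the paper's own proof, just with the bookkeeping made explicit: the paper likewise observes that every mode is occupied once, so $n_k=1$ for all $k$, and hence each weight component $n_k-n_{k+1}$ vanishes. Your additional remark that $\hat C_{kk}$ sums over the internal labels $\omega_p$ (so distinct frequencies do not affect the count) is a sensible clarification but does not change the route.
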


From Lemma~\ref{Lem:SUmWeightZero}, we write $\ket{\Psi_{123\ldots m}}$ as an expansion over zero-weight states in all irrep occurring in $\mathds{H}^{(m)}$ according to
\begin{equation}
\ket{\Psi_{123\ldots m}}=\sum_{\alpha\lambda\ell}\tilde c^{(\lambda)_\alpha}_{\ell} \ket{\psi^{(\lambda)_\alpha}_{\ell}}\, ,\qquad
\tilde c^{(\lambda)_\alpha}_{\ell}=\left\langle\psi^{(\lambda)_\alpha}_{\ell}\big\vert\Psi_{123\ldots m}\right\rangle\, ,
\end{equation}
where $(\lambda)_\alpha$ is the $\alpha$-th copy of the irrep {$(\lambda)_\alpha$ of $\mathrm{SU}(m)$,}
and $\ell$ labels those basis states that have $0$-weight in the irrep $(\lambda)_\alpha$ of $\mathrm{SU}(m)$.

\begin{lem}
With the notation above:
\begin{equation}
\sum_{\sigma}\chi^{\{\tau\}}(\sigma)P(\sigma)\ket{\Psi_{123\ldots m}}=\frac{m!}{\dim(\tau)}
\sum_{\alpha t}\tilde c^{(\tau)_\alpha}_{t}\, \ket{\psi^{(\tau)_\alpha}_{t}}.
\end{equation}
\end{lem}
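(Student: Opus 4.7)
The plan is to recognize the operator $\sum_\sigma \chi^{\{\tau\}}(\sigma) P(\sigma)$ as a scalar multiple of the central primitive idempotent of the group algebra $\mathds{C}[S_m]$ associated with the irrep $\{\tau\}$, and then use Schur--Weyl duality to identify its image on $\ket{\Psi_{123\ldots m}}$ with the $(\tau)$-part of the given expansion.

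First I would invoke the standard formula for the central idempotent
\begin{equation}
e_{\{\tau\}}\defeq \frac{\dim(\tau)}{m!}\sum_{\sigma\in S_m}\chi^{\{\tau\}}(\sigma^{-1})\,P(\sigma)\, .
\end{equation}
Every element of $S_m$ is conjugate to its inverse, so $\chi^{\{\tau\}}(\sigma^{-1})=\chi^{\{\tau\}}(\sigma)$, and hence the operator on the left-hand side of the lemma is exactly $\tfrac{m!}{\dim(\tau)}\,e_{\{\tau\}}$. Standard representation theory tells us that $e_{\{\tau\}}$ acts as the identity on the $\{\tau\}$-isotypic subspace of any $\mathds{C}[S_m]$-module and annihilates every other isotypic subspace; this reduces the lemma to identifying the image of $e_{\{\tau\}}\ket{\Psi_{123\ldots m}}$.

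Next I would use Schur--Weyl duality. By Lemma~\ref{Lem:SUmWeightZero} the vector $\ket{\Psi_{123\ldots m}}$ lives in the $m!$-dimensional zero-weight subspace of $\mathds{H}^{(m)}$, and the states $\{\ket{\psi^{(\lambda)_\alpha}_\ell}\}$ form a basis of this subspace. The two commuting actions (of $S_m$ via $P$, and of $\mathrm{SU}(m)$) force every $\mathrm{SU}(m)$-isotypic component of $\mathds{H}^{(m)}$ to coincide with an $S_m$-isotypic component, with the $(\tau)$ of $\mathrm{SU}(m)$ paired to the $\{\tau\}$ of $S_m$. Consequently, $e_{\{\tau\}}$ commutes with the $\mathrm{SU}(m)$ action, preserves each $(\lambda)$-isotypic subspace, and acts as the identity on the one labelled by $(\tau)$ while killing the rest. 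Expanding $\ket{\Psi_{123\ldots m}}=\sum_{\alpha,\lambda,\ell}\tilde c^{(\lambda)_\alpha}_\ell\ket{\psi^{(\lambda)_\alpha}_\ell}$ and applying $e_{\{\tau\}}$ therefore retains only the $(\lambda)=(\tau)$ terms, giving
\begin{equation}
e_{\{\tau\}}\ket{\Psi_{123\ldots m}}=\sum_{\alpha,t}\tilde c^{(\tau)_\alpha}_{t}\,\ket{\psi^{(\tau)_\alpha}_{t}}\, ,
\end{equation}
and multiplying by $m!/\dim(\tau)$ yields the lemma.

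The main obstacle I anticipate is the bookkeeping step of establishing that the canonical $\mathrm{SU}(m)$ basis states $\ket{\psi^{(\tau)_\alpha}_t}$, which are defined via the subalgebra chain~(\ref{subalgebrachain}) rather than by any $S_m$ construction, really do span the full $\{\tau\}$-isotypic subspace under $S_m$ after restriction to zero weight. This is clean conceptually but needs the remark that $e_{\{\tau\}}$ intertwines with every $\mathrm{SU}(m)$ operator, so it respects the canonical decomposition regardless of how the intertwiner between the two dualities is fixed; once this is spelled out, the projector picks out $(\lambda)=(\tau)$ unambiguously and the counting $\dim(\{\tau\})=\dim$ of the zero-weight subspace of $(\tau)$ in $\mathds{H}^{(m)}$ ensures no vectors are lost.
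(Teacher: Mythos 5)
Your proposal is correct and matches the paper's proof in substance: the paper expands $P(\sigma)$ in the matrix elements $\Gamma^{\{\lambda\}}_{\ell k}$ of the $S_m$ irreps and applies the Schur orthogonality relations to $\sum_{\sigma t}\Gamma^{\{\tau\}}_{tt}(\sigma)\Gamma^{\{\lambda\}}_{\ell k}(\sigma^{-1})$, which is precisely the computation underlying the central-idempotent property you invoke. Indeed, the paper itself remarks right after the proof that $\sum_{\sigma}\chi^{\{\tau\}}(\sigma)P(\sigma)$ is (up to normalization) the projector onto the $\{\tau\}$-symmetric subspace, so your packaging via $e_{\{\tau\}}$ and Schur--Weyl duality is the same argument stated at one higher level of abstraction.
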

\begin{proof}
The proof of Lemma~2 relies on the duality between representations of the symmetric and the unitary groups.
From duality, the basis states $\{\ket{\psi^{(\tau)_\alpha}_{t}}\}$ are also basis states for the irrep $\{\tau\}$ of $S_m$. Hence, using
 Equation~(\ref{leftaction}) we obtain
\begin{align}
P(\sigma)\ket{\Psi_{123\ldots m}}=&
\sum_{\alpha\lambda\ell}
\ket{\psi^{(\lambda)_\alpha}_{\ell}}\bra{\psi^{(\lambda)_\alpha}_{\ell}}P(\sigma)\ket{\Psi_{123\ldots m}}\, ,\\
=&\sum_{\alpha\ell\lambda k} \ket{\psi^{(\lambda)_\alpha}_{\ell}}
\Gamma^{\{\lambda\}}_{\ell k}(\sigma)
\left\langle{\psi^{(\lambda)_\alpha}_{k}}|{\Psi_{123\ldots m}}\right\rangle\, ,\\
=&\sum_{\alpha\ell\lambda k} \ket{\psi^{(\lambda)_\alpha}_{\ell}}
\Gamma^{\{\lambda\}}_{k\ell}(\sigma^{-1})\,\tilde c^{(\lambda)_\alpha}_{ k},
\end{align}
where $\Gamma^{\{\lambda\}}$ is the unitary irrep $\{\lambda\}$ of $S_m$.
Writing $\chi^{\{\tau\}}(\sigma)=\sum_{t}\Gamma^{\{\tau\}}_{tt}(\sigma)$ gives us
\begin{align}
\sum_{\sigma}\chi^{\{\tau\}}(\sigma)P(\sigma)\ket{\Psi_{123\ldots m}}=&
\sum_{\alpha k\lambda\ell}\tilde c^{(\lambda)_\alpha}_{\ell} \ket{\psi_\ell^{(\lambda)_\alpha}}
\left[\sum_{\sigma t}\Gamma^{\{\tau\}}_{tt}(\sigma)\Gamma^{\{\lambda\}}_{\ell k}(\sigma^{-1})\right],\\
=&\frac{m!}{\dim(\tau)}\sum_{\alpha t}\tilde c^{(\tau)_\alpha}_{t} \ket{\psi^{(\tau)_\alpha}_{t}}\, ,\label{Eq:mFactDimensions}
\end{align}
where we have used the orthogonality of characters to arrive at Equation~(\ref{Eq:mFactDimensions}).

Because the action of $\Omega\in\operatorname{SU}(m)$ commutes with the action of
$\sigma\in S_m$, we have
\begin{align}
\operatorname{imm}^{\{\tau\}}(T(\Omega))=&\sum_{\sigma}\,\chi^{\{\tau\}}(\sigma)P(\sigma)
\left[T_{11}(\Omega)T_{22}(\Omega)\ldots T_{mm}(\Omega)\right]\, ,\\
=&\bra{\Psi_{123\ldots m}}{\left[T(\Omega)\otimes T(\Omega)\ldots\otimes T(\Omega)\right]} \nonumber \\
&\qquad\qquad\left[\sum_{\sigma}\,\chi^{\{\tau\}}(\sigma)P(\sigma)\right] \ket{\Psi_{123\ldots m}}\, ,\\
=&\sum_{\alpha rt} (\tilde c^{\tau_\alpha}_{r})^*\,\tilde c^{\tau_\alpha}_{t}\,\frac{m!}{\operatorname{dim}(\tau)}\mathcal{D}^{(\tau)}_{rt}(\Omega)\, .
\end{align}
Introducing the scaled coefficients
$c^{(\tau)_\alpha}_{t}= \tilde c^{(\tau)_\alpha}_{t} \sqrt{\displaystyle\frac{m!}{\operatorname{dim}(\tau)}}$,
we finally obtain
\begin{equation}
\operatorname{imm}^{\{\tau\}}(T(\Omega))=\sum_{rt} \left[\sum_\alpha (c^{(\tau)_\alpha }_r)^*\, c^{(\tau)_\alpha}_{t}\right]\,\mathcal{D}^{(\tau)}_{rt}(\Omega)\, ,
\label{mixedsum}
\end{equation}
where the sums over $t$ and $r$ is a sum over zero-weight states in $(\tau)_\alpha.$
\end{proof}

This result is not unexpected as the operator
\begin{equation}
\hat \Pi^{\{\tau\}}=\left[\sum_{\sigma}\chi^{\{\tau\}}(\sigma)P(\sigma)\right] \, ,\qquad \sigma\in S_m\label{immanantoperator}
\end{equation}
is a projector to that subspace of $S_m$ which has permutation symmetry $\{\tau\}$, and
hence (by duality) is a projector to a subspace that carries (possibly multiple copies of) the irrep $(\tau)$ of SU($m$)
in the $m$-fold product $(1)^{\otimes m}$.
\begin{thm}
(Kostant~\cite{Kostant1995})
\begin{equation}
\operatorname{imm}^{\{\tau\}}(T(\Omega))=\sum_{t}{\mathcal{D}}^{(\tau)}_{tt}(\Omega)
\end{equation}
\end{thm}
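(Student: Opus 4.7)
The theorem reduces to establishing the orthonormality relation
$$\sum_\alpha (c^{(\tau)_\alpha}_r)^* c^{(\tau)_\alpha}_t = \delta_{rt},$$
for once this is known, substituting into \eqref{mixedsum} collapses the bracketed sum to $\delta_{rt}$ and the double sum to $\sum_t \mathcal{D}^{(\tau)}_{tt}(\Omega)$, immediately yielding the claim. So the entire plan is to establish this orthonormality; everything else has already been done in the derivation leading up to \eqref{mixedsum}.

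The strategy is to evaluate the matrix element $\langle\Psi_{123\ldots m}|P(\sigma^{-1})\hat\Pi^{\{\tau\}}|\Psi_{123\ldots m}\rangle$ in two different ways. On one hand, I would write $\hat\Pi^{\{\tau\}}=\sum_{\sigma'}\chi^{\{\tau\}}(\sigma')P(\sigma')$ and use that the $m!$ states $\{P(\sigma)|\Psi_{123\ldots m}\rangle\}_{\sigma\in S_m}$ are mutually orthonormal, because distinct frequencies $\omega_p$ give orthogonal bosonic mode occupations. Only the term $\sigma'=\sigma$ survives and the matrix element equals $\chi^{\{\tau\}}(\sigma)$. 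On the other hand, I would insert the canonical-basis expansion of Lemma~2 for $\hat\Pi^{\{\tau\}}|\Psi_{123\ldots m}\rangle$ and then transport $P(\sigma^{-1})$ through the expansion using the action $P(\sigma^{-1})|\psi^{(\tau)_\alpha}_t\rangle=\sum_\ell\Gamma^{\{\tau\}}_{\ell t}(\sigma)|\psi^{(\tau)_\alpha}_\ell\rangle$ already implicit in the derivation of \eqref{mixedsum}. This rewrites the same matrix element as
$$\frac{m!}{\dim(\tau)}\sum_{\alpha,\ell,t}\tilde c^{(\tau)_\alpha}_t\,(\tilde c^{(\tau)_\alpha}_\ell)^*\,\Gamma^{\{\tau\}}_{\ell t}(\sigma).$$

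Equating both expressions for every $\sigma\in S_m$ and invoking the Peter--Weyl linear independence of the functions $\sigma\mapsto\Gamma^{\{\tau\}}_{\ell t}(\sigma)$ across distinct $(\ell,t)$ lets me match coefficients: on the character side the coefficient of $\Gamma^{\{\tau\}}_{\ell t}$ is $\delta_{\ell t}$, while on the other side it is $\tfrac{m!}{\dim(\tau)}\sum_\alpha \tilde c^{(\tau)_\alpha}_t (\tilde c^{(\tau)_\alpha}_\ell)^*$. Absorbing the scalar factor $m!/\dim(\tau)$ via the rescaling $c=\sqrt{m!/\dim(\tau)}\,\tilde c$ used in \eqref{mixedsum} produces the required orthonormality, completing the argument. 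The principal non-routine point is conceptual rather than computational: Schur--Weyl duality is what guarantees that the multiplicity index $\alpha$ and the zero-weight label $t$ both range over sets of equal cardinality $\dim(\tau)$, so that $c^{(\tau)_\alpha}_t$ is a square matrix of precisely the shape needed to be unitary. The chief technical pitfall I expect is keeping $\sigma$ versus $\sigma^{-1}$ consistent when transporting $P(\sigma^{-1})$ past $\hat\Pi^{\{\tau\}}$, so that the Peter--Weyl comparison step yields $\delta_{\ell t}$ rather than some twisted index.
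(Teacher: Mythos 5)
Your proposal is correct, but it reaches the conclusion by a genuinely different route from the paper. You prove directly that the Gram-type matrix $W^{\{\tau\}}_{rt}=\sum_\alpha (c^{(\tau)_\alpha}_r)^*c^{(\tau)_\alpha}_t$ equals $\delta_{rt}$, by evaluating $\bra{\Psi_{123\ldots m}}P(\sigma^{-1})\hat\Pi^{\{\tau\}}\ket{\Psi_{123\ldots m}}$ two ways — once using the orthonormality of the $m!$ permuted states (valid here because the frequencies $\omega_1,\dots,\omega_m$ are distinct, so $\bra{\Psi}P(\rho)\ket{\Psi}=\delta_{\rho,e}$), and once through the expansion of Lemma~2 — and then matching coefficients against the linearly independent matrix elements $\Gamma^{\{\tau\}}_{\ell t}$. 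The paper instead never computes $W^{\{\tau\}}$ explicitly: it uses the invariance of the immanant under conjugation by $\bar\sigma\in S_m$ to deduce $\Gamma^{\{\tau\}}(\bar\sigma)W^{\{\tau\}}\Gamma^{\{\tau\}}(\bar\sigma^{-1})=W^{\{\tau\}}$, applies Schur's lemma to get $W^{\{\tau\}}=\xi\mathds{1}$, and fixes $\xi=1$ by evaluating at $\Omega=\mathds{1}$, where the count of zero-weight states equals $\dim\{\tau\}$. Your argument is more self-contained and avoids two mildly delicate points in the paper's version: the inference from $\operatorname{Tr}[\Gamma W\Gamma^{-1}\mathcal{D}^{(\tau)}(\Omega)]=\operatorname{Tr}[W\mathcal{D}^{(\tau)}(\Omega)]$ for all $\Omega$ to the operator identity (which really needs the matrices $\mathcal{D}^{(\tau)}(\Omega)$ to span, not merely to differ from the identity), and the separate normalization step. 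What the paper's route buys in exchange is that it transfers verbatim to Theorem~\ref{Thm:MainThm} on principal submatrices, where the states are no longer of zero weight; your double-counting argument would also survive there (the permuted states $P(\sigma)\ket{\Psi_{k_1\ldots k_p}}$ remain orthonormal since the $k_i$ are distinct), but you would need to re-examine which $S_p$ representation the relevant weight space carries. Your closing remarks are sound: the $\sigma$ versus $\sigma^{-1}$ bookkeeping is harmless because $S_m$ characters are real, and the square shape of $(c^{(\tau)_\alpha}_t)$, while a pleasant consequence of Schur--Weyl duality, is not actually needed for the coefficient-matching step.
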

\begin{proof} We present a proof that will {eventually} allow us to dispense with the requirements that
{$N=m$ and that} states have zero-weight. Construct the matrix
\begin{equation}
W^{\{\tau\}}_{rt}= \sum_{\alpha} c^{(\tau)}_{\alpha t}\, (c^{(\tau)}_{\alpha r})^*\, .
\end{equation}
 Equation~(\ref{mixedsum}) then becomes
\begin{equation}
\operatorname{imm}^{\{\tau\}}(T(\Omega))=\sum_{rt} W^{\{\tau\}}_{rt}\,\mathcal{D}^{(\tau)}_{rt}(\Omega)=\operatorname{Tr}\left[W^{\{\tau\}}\mathcal{D}^{(\tau)}(\Omega)\right]\, ,
\label{tracesum}
\end{equation}
with $\mathcal{D}^{(\tau)}(\Omega)$ defined in Equation~(\ref{calD}). Our objective is to prove that $W^{\{\tau\}}$ is the unit matrix.

Any immanant has the property of invariance
under conjugation by elements {in $S_m$} {\emph{i.e.}}, the immanant of any matrix satisfies
\begin{align}
\hspace{-1cm}\operatorname{imm}^{\{\tau\}}(T(\Omega))&=\sum_\sigma \chi^{\{\tau\}}(\sigma) P(\sigma) \,\left[T_{11}(\Omega)T_{22}(\Omega)\ldots T_{mm}(\Omega)\right]
\, \nonumber \\
\qquad &=\sum_\sigma \chi^{\{\tau\}}(\sigma) P^{-1}(\bar\sigma)P(\sigma) P(\bar\sigma)\,\left[T_{11}(\Omega)T_{22}(\Omega)\ldots T_{mm}(\Omega)\right]
\, ,
\end{align}
with $\sigma,\bar\sigma \in S_m$. Under conjugation by $\bar\sigma$, Equation~(\ref{mixedsum}) becomes
\begin{equation}
\begin{array}{rcl}
\operatorname{imm}^{\{\tau\}}(T(\Omega))
&=& {\operatorname{Tr}\left[\Gamma^{\{\tau\}}(\bar \sigma) W^{\{\tau\}} \Gamma^{\{\tau\}}(\bar \sigma^{-1})\mathcal{D}^{(\tau)}(\Omega)\right]}\, , \\
&=&{\operatorname{Tr}\left[W^{\{\tau\}} \mathcal{D}^{(\tau)}(\Omega)\right] }\, .
\end{array}\, .
\end{equation}
{Since $\mathcal{D}^{(\tau)}(\Omega)$ is certainly \emph{not} the unit matrix for arbitrary $\Omega$}, it follows that
\begin{equation}
{\Gamma^{\{\tau\}}(\bar \sigma) W^{\{\tau\}} \Gamma^{\{\tau\}}(\bar \sigma^{-1})= W^{\{\tau\}},}
\end{equation}
{\emph{i.e.}}~the matrix $W^{\{\tau\}}$ is invariant under any permutation. By Schur's lemma
$W^{\{\tau\}}$ must therefore be proportional to the unit matrix, {\emph{i.e.}}~we have
$W^{\{\tau\}}_{ts}=\xi\, \delta_{ts}$ with $\xi $ the relevant constant of proportionality.
The immanant thus takes the form
\begin{equation}
\operatorname{imm}^{\{\tau\}}(T(\Omega))=\xi\left(\displaystyle\sum_{t}\mathcal{D}^{(\tau)}_{tt}(\Omega)\right)\, .
\end{equation}
To determine $\xi$, choose $\Omega=\mathds{1}$. Then $T(\mathds{1})$ is the $m\times m$ unit matrix, and
$T_{k,\sigma(k)}(\mathds{1})$ is zero unless $\sigma=\mathds{1}\in S_m$. The immanant for $\Omega=\mathds{1}$ is then just the dimension
of the irrep $\{\tau\}$ and we have
\begin{equation}
\operatorname{imm}^{\{\tau\}}(T(\mathds{1}))=\chi^{\{\tau\}}(\mathds{1})=
\operatorname{dim}(\tau)=\xi,\quad\left(\sum_{t} 1\right)=\xi\,\operatorname{dim}(\{\tau\})
\end{equation}
since $\mathcal{D}^{(\tau)}_{tt}(\mathds{1})=1$. Hence, $\xi=1$ and the theorem is proved.
\end{proof}
This completes our results on $\mathcal{D}$-functions and immanants of the fundamental matrix representation.
The next section generalizes these results to submatrices of the fundamental representation.

%------------------------------%
\subsection{Results on submatrices: {the case $N < m$.}}
%------------------------------%
We now consider the submatrices of $T$. In multi-photon interferometry, such submatrices describe the unitary scattering from an input state of the form
\begin{equation}
\ket{\Psi_{k_1\ldots k_p}}=a^\dagger_{k_1}(\omega_1)a^\dagger_{k_2}(\omega_2)\ldots a^\dagger_{k_p}(\omega_p)\ket{0}\, ,
\qquad p< m\, , \label{substate}
\end{equation}
to an output state $\ket{\Psi_{\ell_1\ldots \ell_p}}$, which need not the identical to $\ket{\Psi_{k_1\ldots k_p}}$.
{Both input and output
live in the reducible Hilbert space $\mathds{H}^{(p)}$ and have expansions of the form
\begin{equation}
\ket{\Psi_{k_1\ldots k_p}}=\sum_{\alpha \lambda \ell} \tilde {\mathcal{D}}^{(\lambda)_{\alpha}}_\ell \ket{\psi^{(\lambda)_{\alpha}}_\ell}
\label{expandgeneral}
\end{equation}
where $\ket{\psi^{(\lambda)_{\alpha}}_\ell}$ has weight $[k_1-k_2,k_2-k_3,\ldots,k_{p-1}-k_p]$.}

First we select from $T(\Omega)$ a principal submatrix $\bar T(\Omega)_k$, {\emph{i.e.},}~$\bar T(\Omega)_k$ is obtained by keeping rows and columns $k=(k_1,k_2,\ldots,k_p)$
with $p< m$. In such a case, input and output states are identical.
The permutation group $S_p$ shuffles the $p$ indices $k_1,k_2,\ldots,k_p$ amongst themselves.
Although the submatrix
$\bar T(\Omega)_k$ is not unitary, the proof of Theorem 3 does not depend on the unitarity of
$T(\Omega)$ and so can be copied to show

\begin{thm}
\label{Thm:MainThm}
The immanant
$\operatorname{imm}^{\{\lambda\}}_k(T(\Omega))$ of a submatrix $\bar T(\Omega)_k$, which is a principal submatrix of $T$, is given by
\begin{equation}
\operatorname{imm}^{\{\lambda\}}_k(T(\Omega))=\sum_r \mathcal{D}^{(\lambda)}_{rr}(\Omega) \label{corollary1}
\end{equation}
where $(\lambda)$ is the irrep of $\mathrm{SU}(m)$ corresponding to the partition
$\{\lambda\}$ and where the sum over $r$ is a sum over all the states in $(\lambda)$ with weight $[k_1-k_2,k_2-k_3,\ldots,k_{p-1}-k_p]$;
{following Equation(\ref{expandgeneral}) this is the weight of $\ket{\Psi_{k_1\ldots k_p}}$ in Equation~(\ref{substate})}
and need not be zero.
\end{thm}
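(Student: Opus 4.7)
The plan is to generalize the four-step structure of the proof of Theorem~3 (Kostant) above, working with the $p$-fold tensor space $\mathds{H}^{(p)}$ and the permutation group $S_p$ in place of $\mathds{H}^{(m)}$ and $S_m$. First I would recast the submatrix immanant as a matrix element on $\mathds{H}^{(p)}$: the relation $P(\sigma)\ket{\Psi_{k_1\ldots k_p}} = \ket{\Psi_{k_{\sigma(1)}\ldots k_{\sigma(p)}}}$ together with the fact that the bosonic inner product of $\ket{\Psi_k}$ with $T^{\otimes p}(\Omega)\ket{\Psi_{\sigma k}}$ evaluates to $T_{k_1,k_{\sigma(1)}}\cdots T_{k_p,k_{\sigma(p)}}$ produces
$$\operatorname{imm}^{\{\lambda\}}_k(T(\Omega)) = \bra{\Psi_k}\,T^{\otimes p}(\Omega)\,\hat\Pi^{\{\lambda\}}\,\ket{\Psi_k},$$
paralleling the key identity used in the preceding proof but now with $\hat\Pi^{\{\lambda\}}$ built from $S_p$-characters.

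Next I would expand $\ket{\Psi_k}$ using Eq.~(\ref{expandgeneral}), noting that only basis vectors of weight $\Lambda = (n_1-n_2,\ldots,n_{m-1}-n_m)$, determined by the occupation numbers $n_j$ of the tuple $k$, contribute. Applying character orthogonality, as in the derivation leading to Eq.~(\ref{Eq:mFactDimensions}), would yield
$$\hat\Pi^{\{\lambda\}}\ket{\Psi_k} = \frac{p!}{\dim\lambda}\sum_{\alpha,\ell}\tilde c^{(\lambda)_\alpha}_\ell\ket{\psi^{(\lambda)_\alpha}_\ell},$$
and hence the trace representation
$$\operatorname{imm}^{\{\lambda\}}_k(T(\Omega)) = \operatorname{Tr}\bigl[W^{\{\lambda\}}\,\mathcal{D}^{(\lambda)}(\Omega)\bigr],\qquad W^{\{\lambda\}}_{rt} = \frac{p!}{\dim\lambda}\sum_\alpha \bigl(\tilde c^{(\lambda)_\alpha}_r\bigr)^*\tilde c^{(\lambda)_\alpha}_t,$$
with the trace running over weight-$\Lambda$ states of $(\lambda)$. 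I would then copy Kostant's invariance-by-conjugation step verbatim---the immanant is unchanged under $\sigma\mapsto\bar\sigma^{-1}\sigma\bar\sigma$ for any $\bar\sigma\in S_p$---to deduce that $W^{\{\lambda\}}$ commutes with $\Gamma^{\{\lambda\}}(\bar\sigma)$ for every $\bar\sigma\in S_p$. Schur's lemma then gives $W^{\{\lambda\}} = \xi\,\mathds{1}$, and $\xi=1$ would be pinned down by evaluating both sides at $\Omega=\mathds{1}$: $\operatorname{imm}^{\{\lambda\}}_k(T(\mathds{1}))=\operatorname{imm}^{\{\lambda\}}(\mathds{1}_p)=\chi^{\{\lambda\}}(\mathds{1})=\dim\lambda$, while the number of weight-$\Lambda$ states in $(\lambda)$ equals the number of standard Young tableaux of shape $\lambda$---again $\dim\lambda$, because the content of the index tuple consists only of 0s and 1s.

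The hard part will be the Schur's-lemma step, since the non-zero-weight case loses the classical identification---used in the Kostant proof---between the zero-weight subspace of $W^{(\tau)}$ and the $S_m$-irrep $V^{\{\tau\}}$. For a principal submatrix with distinct $k_i$, the requisite $S_p$-action must instead be transported from the Schur--Weyl multiplicity space onto the weight-$\Lambda$ subspace. The cleanest route is to observe that the $p!$-dimensional $S_p$-orbit of $\ket{\Psi_k}$ carries the regular representation of $S_p$, so in the decomposition $\mathds{H}^{(p)} = \bigoplus_\lambda V^{\{\lambda\}}\otimes W^{(\lambda)}$ this orbit exhausts $\bigoplus_\lambda V^{\{\lambda\}}\otimes W^{(\lambda)}_\Lambda$, forcing $\dim W^{(\lambda)}_\Lambda = \dim\lambda$---the very equality that identifies the two multiplicity spaces and legitimises Schur's lemma. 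Establishing this identification carefully, and noting that it would fail (requiring an extra combinatorial factor $\prod_j n_j!$) for repeated $k_i$, is the main technical subtlety I would have to work through.
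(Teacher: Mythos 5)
Your proposal is correct and follows essentially the same route as the paper, which proves this theorem simply by observing that the proof of Kostant's theorem ``does not depend on the unitarity of $T(\Omega)$ and so can be copied,'' replacing $S_m$ acting on zero-weight states by $S_p$ acting on the weight-$[k_1-k_2,\ldots,k_{p-1}-k_p]$ states. Your regular-representation argument identifying the weight-$\Lambda$ subspace with the $S_p$-irrep $\{\lambda\}$ (via $K_{\lambda,(1^p)}=\dim\{\lambda\}$) makes explicit a step that the paper leaves implicit in its appeal to Schur's lemma, and is exactly the right justification.
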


As an illustration, if the third and fifth rows and columns are removed from the $5\times 5$ fundamental matrix representation of $\mathrm{SU}(5)$, then
the states {entering} in the sum of Equation~(\ref{corollary1}) are {linear combinations of terms of} the form
\begin{equation}
P(\sigma)\left[a^\dagger_{1}(\omega_1)a^\dagger_{2}(\omega_2) a^\dagger_{4}(\omega_3)\ket{0}\right]
\end{equation}
with weight $[0,1,-1,1]$.
Using the $\mathfrak{su}(k)\downarrow\mathfrak{su}(k-1)$ branching rules~\cite{Slansky1981,Whippman1965} to label basis states,
the $\{2,1\}$ immanant of this submatrix is the sum
\begin{equation}
\operatorname{imm}^{\{2,1\}}_{124}(T(\Omega))=\mathcal{D}^{(1,1)}_{11010(2)(1)(1);11010(2)(1)(1)}(\Omega) +\mathcal{D}^{(1,1)}_{11010(0,1)(1)(1);11010(0,1)(1)(1)}(\Omega)\, ,
\end{equation}
where the labels $(2)(1)(1) $ and $(0,1)(1)(1)$ refer to the $\mathfrak{su}(4)\supset\mathfrak{su}(3)\supset\mathfrak{su}(2)$ chains of irreps
(recall that trailing 0s are omitted).

Now we consider our results applied to the most general submatrices of a matrix.
To fix ideas, we start with the $4\times 4$ matrix $T$ and remove row $1$ and column $2$ to obtain the submatrix~$\bar T$:
\begin{equation}
T(\Omega)\to \bar T(\Omega)=
\left(
\begin{array}{cccc}
T_{21}(\Omega)&T_{23}(\Omega)&T_{24}(\Omega)\\
T_{31}(\Omega)&T_{33}(\Omega)&T_{34}(\Omega)\\
T_{41}(\Omega)&T_{43}(\Omega)&T_{44}(\Omega)
\end{array}
\right)
\label{su4submatrix}
\end{equation}
The immanants of $3\times 3$ submatrix $\bar T(\Omega)$ are in the form
\begin{align}
\operatorname{imm}^{\{\lambda\}}(\bar T(\Omega))=&\sum_{\sigma}\chi^{\{\lambda\}}(\sigma)P(\sigma)\left[T_{11}(\Omega)T_{22}(\Omega)T_{34}(\Omega)\right]\, \\
=&\hat \Pi^{\{\lambda\}} \left[T_{11}(\Omega)T_{22}(\Omega)T_{34}(\Omega)\right]\, ,
\end{align}
where $\hat \Pi^{\{\lambda\}}$ is the immanant projector of Equation~(\ref{immanantoperator}) and
$\sigma$ permutes the triple $(124)$.

Let $\{a_k^\dagger(\omega_k)\ket{0},k=1,\ldots,4\}$ be a basis for the fundamental irrep of SU($4$), and define
\begin{align}
\ket{\Psi_{134}}\defeq&a_1^\dagger(\omega_1)a_3^\dagger(\omega_2)a_4^\dagger(\omega_3)\ket{0}\, ,\\
\ket{\Phi_{234}}\defeq&a_2^\dagger(\omega_1)a_3^\dagger(\omega_2)a_4^\dagger(\omega_3)\ket{0}\,
\end{align}
as three-particle states elements of $\mathds{H}^{\{1\}\otimes\{1\}\otimes\{1\}}$.
Clearly there is $\sigma'\in S_4$ such that
\begin{equation}
\ket{\Psi_{234}}=P(\sigma')\ket{\Phi_{134}}\, .
\end{equation}
Indeed by inspection this element is given by $P(\sigma')=P_{12}$. More generally, if
\begin{align}
\ket{\Phi_k}=&a_{k_1}^\dagger(\omega)a_{k_2}^\dagger(\omega_2)a_{k_3}^\dagger(\omega_3)\ket{0}\,,k=(k_1,k_2,k_3)\, ,\\
\ket{\Psi_q}=&a_{q_1}^\dagger(\omega)a_{q_2}^\dagger(\omega_2)a_{q_3}^\dagger(\omega_3)\ket{0}\, ,q=(q_1,q_2,q_3)\, ,
\end{align}
then there is $\sigma_{qk}$ exists such that $\ket{\Psi_{q}}=P(\sigma_{qk})\ket{\Phi_{k}}$.
As the action of the permutation group commutes with the action of the unitary group:
\begin{align}
\operatorname{imm}^{\{\lambda\}}(\bar T(\Omega))_{kq}=&\bra{\Phi_{k}}\,\left[T(\Omega)\otimes T(\Omega)\ldots\otimes T(\Omega)\right] \,\hat \Pi^{\{\lambda\}}\,P(\sigma_{qk})\ket{\Phi_{k}}\,\\
=&\bra{\Phi_{k}}\,\hat \Pi^{\{\lambda\}}\,\left[T(\Omega)\otimes T(\Omega)\ldots\otimes T(\Omega)\right]P(\sigma_{qk})\ket{\Phi_{k}}\, \\
=&\sum_{rs\alpha}\bra{\Phi_{k}}\hat \Pi^{\{\lambda\}}\ket{\psi^{(\lambda)_{\alpha}}_r} \times \bra{\psi^{(\lambda)_{\alpha}}_r}T^{(\lambda)}(\Omega)P(\sigma_{qk})
\ket{\psi^{(\lambda)_\alpha}_s}\left\langle{\psi^{(\lambda)_\alpha}_s}|{\Phi_{k}}\right\rangle \label{immanantnondiagonal}
\end{align}
Now, the permutation $P(\sigma_{qk})$ is represented by a unitary matrix in the carrier space $(\lambda)_\alpha$.
Thus, there exist $\Omega_{qk}\in \mathfrak{su}{(4)}$ and a phase $\zeta$ such that
$P(\sigma_{qk})\ket{\psi^{(\lambda)_\alpha}_s}=e^{i\zeta}\,T(\Omega_{qk})\ket{\psi^{(\lambda)_\alpha}_s}$.
This transforms our original problem back to the case of principal submatrices, but with now an element
$\Omega\cdot \Omega_{qk}$ {\emph{i.e.}},
\begin{equation}
\operatorname{imm}^{\{\lambda\}}(\bar T(\Omega))=\sum_{t}\mathcal{D}^{(\lambda)}_{tt}(\Omega\cdot \Omega_{qk})\, .
\end{equation}
Unfortunately, the action $P(\sigma_{qk})\ket{\psi^{(\lambda)_\alpha}_s}$ is in general non-trivial~\cite{Kramer1966,Moshinsky1968,Rowe1999}~and it is not obvious
how to find $\Omega'$, much less $(\Omega\cdot \Omega')$. Nevertheless, we found that the sum of $\mathcal{D}$-functions that occur on the right hand side of
 Equation~(\ref{immanantnondiagonal}) always contains the same number of $D$ as the dimension of the dual irrep $\{\tau\}$, and that the coefficients of these
$D$'s is always one. This result relied on (i)~evaluating the appropriate group functions using the algorithm~\cite{Dhand2015d}, (ii) explicitly constructing each of
the immanants of all possible $4\times 4$ submatrices and of all possible $3\times 3$ submatrices of the fundamental irrep of $\mathfrak{su}{(5)}$ and (iii) explicitly
constructing the immanants of $3\times 3$ submatrices of the fundamental irrep of $\mathfrak{su}{(4)}$ or $\mathfrak{su}{(5)}$.

Thus, in the specific case of the submatrix given in Equation~(\ref{su4submatrix}), we have
\begin{equation}
\hspace{-1.2cm}\operatorname{imm}^{\{21\}}(\bar T(\Omega))_{(234)(134)}=\mathcal{D}^{(1,1)}_{0111(2)(1);1011(2)(1)}(\Omega)
+ \mathcal{D}^{(1,1)}_{0111(11)(1);1011(11)(1)}(\Omega)\, .
\label{Eq:IshVerified}
\end{equation}
We also verified that a similar identity holds for all $3\times 3$ submatrices of $T(\Omega)\in\mathfrak{su}{(4)}$. For instance,
\begin{align}
\operatorname{imm}^{\{21\}}(\bar T(\Omega))_{(234)(124)}=&\mathcal{D}^{(1,1)}_{0111,(2)(1);1101(2)(1)}(\Omega) + \mathcal{D}^{(1,1)}_{0111(11)(1);1101(01)(1)}(\Omega)\, ,\\
\operatorname{imm}^{\{21\}}(\bar T(\Omega))_{(134)(124)}=& \mathcal{D}^{(1,1)}_{1011(2)(1);1110(2)(1)}(\Omega) + \mathcal{D}^{(1,1)}_{1011(11)(1);1110(01)(1)}(\Omega)\, .
\end{align}
Likewise, we have, for $T(\Omega)\in \mathfrak{su}(5)$,
\begin{align}
\operatorname{imm}^{\{21\}}(\bar T(\Omega))_{(1345)(1235)}\nonumber =&\mathcal{D}^{(1,1)}_{01101(11)(2)(1),10110(2)(2)(1)}(\Omega)
+\mathcal{D}^{(1,1)}_{01101(11)(01)(1),10110(01)(01)(1)}(\Omega)\, ,\\
\operatorname{imm}^{\{31\}}(\bar T(\Omega))_{(235)(134)}=&\mathcal{D}^{(2,1)}_{10111(3)(3)(1),11101(3)(2)(1)}(\Omega)+\mathcal{D}^{2,1}_{10111(11)(11)(1),11101(11)(2)(1)}(\Omega)\nonumber\\
&\quad+\mathcal{D}^{(2,1)}_{10111(11)(11)(0),11101(11)(01)(1)}(\Omega)\, ,
\label{Eq:IshVerifiedLast}
\end{align}
this last being an example of a $4\times 4$ submatrix not principal coaxial.
We thus conjecture that, even for generic submatrices, $\operatorname{imm}^{\{\lambda\}}(\bar T(\Omega))_{kq}$
is a sum of $\operatorname{dim}{\{\lambda\}}$ distinct $D$'s with coefficients equal to $+1$.

\subsection{An application: Relations between $\mathcal{D}$-functions}
Here we present a relation between $\mathcal{D}$-functions of $\mathrm{SU}(3)$ and those of $\mathrm{SU}(4)$.
This relation is obtained using Theorem~\ref{Thm:MainThm} and a theorem due to Littlewood~\cite{Littlewood1950}.

Littlewood~\cite{Littlewood1950} has established relations between immanants of a matrix and sums of products of immanants of
principal coaxial submatrices.
For instance, {the equality for Schur functions $\{3\}\{1\}=\{3,1\}+\{4\}$ yields the immanant relation}
\begin{equation}
\sum_{ijk\ell} \left(\operatorname{imm}^{\{3\}}_{ijk}(T(\Omega))\right)
\left(\operatorname{imm}^{\{1\}}_{\ell}(T(\Omega))\right)\nonumber =\operatorname{imm}^{\{3,1\}}(T(\Omega))+\operatorname{imm}^{\{4\}}(T(\Omega))
\end{equation}
where the sum over $ijk\ell$ is a sum over complementary coaxial submatrices, i.e.
\begin{equation}
\begin{array}{cc||cc}
ijk&\ell &ijk&\ell \\
\hline
123&4&124&3\\
134&2&234&1\\
\end{array}.
\end{equation}
This expands to a sum of products of immanants of submatrices given explicitly by
\begin{align}
&\operatorname{imm}^{\{3\}}_{123}(T(\Omega))\operatorname{imm}^{\{1\}}_{4}(T(\Omega))
+ \operatorname{imm}^{\{3\}}_{124}(T(\Omega)\operatorname{imm}^{\{1\}}_{3}(T(\Omega))\nonumber \\
& + \operatorname{imm}^{\{3\}}_{134}(T(\Omega))\operatorname{imm}^{\{1\}}_{2}(T(\Omega))
+\operatorname{imm}^{\{3\}}_{234}(T(\Omega))\operatorname{imm}^{\{1\}}_{1}(T(\Omega))\nonumber \\
& =\operatorname{imm}^{\{3,1\}}(T(\Omega))+\operatorname{imm}^{\{4\}}(T(\Omega)),
\end{align}
which becomes an equality on the corresponding products of sum of $\mathrm{SU}(4)$ $\mathcal{D}$-functions:
\begin{align}
&\mathcal{D}^{(3)}_{1110(2)(1);1110(2)(1)}(\Omega)\mathcal{D}^{(1)}_{0001(1)(1);0001(1)(1)}(\Omega)\nonumber \\
&\quad +\mathcal{D}^{(3)}_{1101(2)(1);1101(2)(1)}(\Omega)\mathcal{D}^{(1)}_{0010(1)(1);0010(1)(1)}(\Omega)\nonumber \\
&\quad +\mathcal{D}^{(3)}_{1011(2)(2);1011(2)(2)}(\Omega)\mathcal{D}^{(1)}_{0100(1)(0);0100(1)(0)}(\Omega)\nonumber \\
&\quad +\mathcal{D}^{(3)}_{0111(3)(2);0111(3)(2)}(\Omega)\mathcal{D}^{(1)}_{1000(0)(0);1000(0)(0)}(\Omega)\nonumber \\
&=\mathcal{D}^{(21)}_{1111(3)(2);1111(3)(2)}(\Omega)+\mathcal{D}^{(21)}_{1111(11)(2);1111(11)(2)}(\Omega) \nonumber \\
&\quad+\mathcal{D}^{(21)}_{1111(11)(0);1111(11)(0)}(\Omega) +\mathcal{D}^{(4)}_{1111(3)(2);1111(3)(2)}(\Omega).
\end{align}
The subgroup labels are obtained by systematically using the $\mathfrak{su}(k)\downarrow\mathfrak{su}(k-1)$ branching rules~\cite{Dhand2015d}.

\subsection{Conclusion}

%The relation between characters of $S_N$ and U$(n)$ is well known and is a rich source of results in mathematical physics.
%Our work expands these beyond characters to novel connections between immanants and the group functions proper.
%Results on group functions are comparatively less common than those available for, say, the calculation of generator matrix elements or Clebsch-Gordan coefficients~\cite{Rowe1999,Louck1970,Dhand2015d}.
%We hope some of the results given here might be useful in providing impetus or remedy to this relative paucity of results on group functions.

Immanants are connected to the interferometry of partially distinguishable pulses~\cite{Guise2014,Tan2013,Tillmann2015};
the associated permutation symmetries lead to novel interpretations of
immanants as a type of normal coordinates describing lossless passive interferometers \cite{Tillmann2015}.
This connection immediately provides a physical interpretation to the
appropriate combinations of group functions corresponding to these immanants and should stimulate further development of toolkits to compute group functions.

Conjectures in complexity theory regarding the behaviour of permanents of large unitary matrices may also provide an entry point towards understanding the behaviour of
$\mathcal{D}$-functions in similar asymptotic regimes.
It remains to see if this line of thought can also be turned around it might be possible to use results on the asymptotic
behaviour of $\mathcal{D}$-functions to establish some conjectures on the behaviour of immanants of large matrices.

Finally, although the Schur-Weyl duality is not directly applicable to subgroups of the unitary groups, the permutation group retains its deep connection with representations of the classical groups, which are considered as subgroups of the unitary groups~\cite{Butler1969,Wybourne1974}.
Hence, it might be possible to extend the results of this section to
functions of the orthogonal or symplectic groups, thus generalizing the result of Section~5 on immanants associated with plethysms of representations.

%------------------------------%
\section{Conclusion}
%------------------------------%
In summary, I have advanced group theoretic methods for linear optics by means of algorithms for computing $\mathrm{SU}(n)$ $\mathcal{D}$-functions and by finding relations between $\mathcal{D}$-functions and immanants of the interferometer transformation.
Our algorithm for $\mathcal{D}$-functions enables the expression of interferometry outputs in terms of $\mathrm{SU}(n)$ $\mathcal{D}$-functions similar to the three-photon case~(\ref{eq:23symmetry}).
Furthermore, my results on the connection between $\mathcal{D}$-functions and immanants allow for the computation of interferometer outputs using immanants similar to the case of three-photons outputs~(\ref{eq:ABC}).
Thus, I contribute to group-theoretic methods for analyzing and simulating multi-photon multi-channel interferometry along the lines of the three-photon three-channel treatment of Section~\ref{Sec:ThreePhotons}.

%=================%
\chapter{Summary}
\label{Chap:Summary}
%=================%
This chapter summarizes my contribution to the theory of design, characterization and simulation of multi-photon multi-channel interferometry (Section~\ref{Sec:Summary}).
I conclude the chapter and the thesis with a list of open problems related to the contribution reported herein (Section~\ref{Sec:OpenProbs}).

%------------------------------%
\section{Summary of results}
\label{Sec:Summary}
%------------------------------%
In summary, I contribute to the theory of design, characterization and simulation of multi-photon multi-channel interferometry.
The advances that I have reported in this thesis contribute to making linear optics a viable candidate for QIP.

In the design of linear optics, we devised a procedure that enables the realization of arbitrary discrete unitary transformations on the spatial and internal degrees of freedom of light.
Our procedure receives as input the dimensions $n_{s}$ and $n_{p}$ of spatial and internal DOFs respectively and an $n_{s}n_{p}\times n_{s}n_{p}$ unitary matrix and yields as output a sequence of matrices that correspond to either $2n_{p}\times 2n_{p}$ beam-splitter transformations or to arbitrary $n_{p}\times n_{p}$ internal transformations.
By exploiting the $n_{p}$-dimensional internal DOF, the required number of beam splitters is reduced by a factor of $n_{p}^{2}/2$ as compared to realizing the same transformation on spatial modes alone.
Our procedure thus enables realizing larger unitary transformations that are required for implementing QIP tasks.

My contribution to the characterization of multi-channel interferometers includes an accurate and precise procedure that uses one- and two-photon interference.
Our procedure is advantageous to the existing procedures as it accounts for and corrects systematic errors due to spatiotemporal mode mismatch in the source field.
Our procedure employs experimentally measured source spectra to achieve accurate curve-fitting between measured and theoretical coincidence counts and thus yields accurate transformation matrix elements.
We use maximum-likelihood estimation to find the unitary matrix that best represents the measured data.
A scattershot approach is recommended for reducing the required characterization time.
A bootstrapping procedure is introduced to obtain meaningful error bars on the characterized parameters even when the form of experimental error is unknown.
The efficacy of the characterization procedure is verified numerically and experimentally.

I advance the theory of simulation of multi-photon multi-channel interferometry by developing $\mathrm{SU}(n)$ group-theoretic methods for simulation of linear optics.
I devise an algorithm for computing boson realization of canonical $\mathrm{SU}(n)$ basis states and to compute $\mathrm{SU}(n)$ $\mathcal{D}$-functions.
I find relations between these $\mathcal{D}$-functions and immanants of the matrices and submatrices of the fundamental $\mathrm{SU}(n)$ representations.
These results open the possibility speeding up the computation of arbitrary multi-photon multi-channel measurement probabilities.
%------------------------------%
\section{Open problems}
\label{Sec:OpenProbs}
%------------------------------%
Here I list the problems that have been opened by the advances reported in this thesis.
Section~\ref{Sec:OpenDesign} presents two problems that deal with improvements in the linear optical realization of discrete unitary transformations.
A thorough experimental verification of our characterization procedure is recommended in Section~\ref{Sec:OpenCharacterization}.
In Section~\ref{Sec:OpenSimulation}, I suggest open problems regarding analyzing and improving the speed and the complexity of our $\mathrm{SU}(n)$ methods and applying them to optimally compute multi-photon multi-channel measurement probabilities.

%------------------------------%
\subsection{Improved realization of linear optics on multiple degrees of freedom}
\label{Sec:OpenDesign}
%------------------------------%
Our design procedure (Chapter~\ref{Chap:Design}) enables realizing unitary matrices on the combined state of light in the spatial and one internal DOF but there is no known procedure to transform the composite state of light in more than one internal DOF.
The key challenge to realizing such a transformation is the realization of beam-splitter-like transformations that mix light among different DOFs.
This challenge can be overcome on a case-by-case basis for different internal DOFs.

Another direction related to the design of interferometers is to devise a realization that requires the minimum number of beam splitters and internal elements.
We conjecture that our decomposition is optimal in its beam-splitter-number requirement.
However, we think that other decompositions might reduce the requirement of optical elements acting on internal modes and experimental implementations would gain from such a decomposition.

%------------------------------%
\subsection{Experimental evidence for efficacy of characterization procedure}
\label{Sec:OpenCharacterization}
%------------------------------%
Chapter~\ref{Chap:Verification} presents experimental evidence for the efficacy of our procedure for beam splitter characterization $(m=2)$.
A verification of the accuracy and precision of our procedure on bigger $(m>2)$ interferometers is appealing.
A comparison of the accuracy and precision of our procedure with respect to classical-light procedures~\cite{Keshari2013} would also experimentally relevant.

%------------------------------%
\subsection{Group-theoretic methods for simulation of linear optics}
\label{Sec:OpenSimulation}
%------------------------------%

One open problem related to our boson-realization algorithm (Section~\ref{Sec:SunAlgorithms}) is to devise algorithm that computes only a specific $\mathrm{SU}(n)$ state rather than the entire set of $\mathrm{SU}(n)$ states of an irrep.
Such an algorithm is expected to be faster than our algorithm.
One approach to constructing a specific $\mathrm{SU}(n)$ state to construct the HWS of a given $\mathrm{SU}(n)$ irrep and systematically lower from this $\mathrm{SU}(n)$ HWS via the correct $\mathrm{SU}(m)$ HWSs for $2<m<n$ to the given $\mathrm{SU}(n)$ weight.
A faster algorithm to construct specific basic states would also enable a faster $\mathcal{D}$-function computation algorithm.

Finally, the problem of exploiting our group-theoretic methods to speedup the computation the outputs of multi-photon multi-channel interferometry remains open.
Faster classical algorithms for simulating linear optics would make feasible the benchmarking and simulations of this candidate system for QIP.

\appendix
\chapter{Constructive proof of the CSD}
\label{Appendix:Construction}
In this appendix, I present a constructive proof of Theorem~\ref{Thm:CSD} and a procedure to construct the CSD.
Recall that our CSD is a building block of our main decomposition procedure, which is discussed chapter~\ref{Chap:Design}.

The output of our constructive proof matches the output of existing procedures~\cite{Stewart1977,Stewart1982} but our proof emphasizes the key role of the singular value decomposition in the CSD.
Furthermore, numerical implementations of this proof are expected to be more efficient and stable as compared to existing procedures because of the efficiency and stability of established singular-value-decomposition algorithms~\cite{Golub1965,Klema1980}.
Note that efficiency of numerical implementations refers to the computational cost of performing the decomposition and differs from the requirement of efficient realization, which deals with the number of optical elements required to experimentally realize the matrices.

Recall that the singular value decomposition factorizes any $m\times n$ complex matrix $M$ into the form
\begin{equation}
M = W\Lambda^{M} V^{\dagger}
\end{equation}
for $m\times m$ unitary matrix $W$, $n\times n$ unitary matrix $V$ and real non-negative diagonal matrix $\Lambda^{M}$.
The matrices $W$ and $V$ diagonalize $M\,M^\dagger$ and $M^\dagger M$ respectively.
In other words, the rows of $W$ and $V$ are the eigenvectors of $M\,M^{\dagger}$ and $M^{\dagger}M$.
These rows are called the left- and right-singular vectors of $M$.

Now I describe the construction of metrics $\mathds{L}_{m+n},\mathds{S}_{m+n},\mathds{R}_{m+n}$ in the CSD of a given $(m+n) \times (m+n)$ unitary matrix $U$.

\noindent
\textit{Proof of Theorem 7.}~In order to perform CSD of an arbitrary unitary matrix $U$, it is expressed as a $2\times 2$ block matrix
\begin{equation}
 \label{Eq:U-Block}
 U =\left(
 \begin{array}{c|c}
 A & B\\
 \hline
 C & D
 \end{array}\right),
\end{equation}
for complex matrices $A$, $B$, $C$ and $D$ of dimensions $m\times m, n\times m, m\times n$ and $n\times n$ respectively.
The constructive proof is in three parts.
First, I show the matrices $A$ and $C$ have the same left- and right-singular vectors and that $B$ and $D$, too, have the same left- and right-singular vectors.
The next step is to show that these common singular vectors can diagonalize each of the matrices $\{A,B,C,D\}$.
Finally, I show that diagonal form of the matrix $U$~\eqref{Eq:U-Block} is a CS matrix.

The unitarity of $U$ implies the relations
\begin{align}
 U\,U^\dagger \equiv \left(\begin{array}{c|c}
 A\,A^\dagger + B\,B^\dagger &A\,C^\dagger + B\,\mathcal{D}^\dagger\\\hline
 C\,A^\dagger + D\,B^\dagger& C\,C^\dagger+D\,\mathcal{D}^\dagger
 \end{array}\right) &= \mathds{1}_{m+n},\label{Eq:UUDBlocks}\\
U^\dagger U \equiv \left(\begin{array}{c|c}
 A^\dagger A + C^\dagger C&A^\dagger B+ C^\dagger D\\\hline
 B^\dagger A+ \mathcal{D}^\dagger C& B^\dagger B+\mathcal{D}^\dagger D
 \end{array}\right) &= \mathds{1}_{m+n}.\label{Eq:UDUBlocks}
\end{align}
Considering the blocks on the diagonals of Equations~\eqref{Eq:UUDBlocks} yields the matrix equations
\begin{align}
A\,A^\dagger + B\,B^\dagger &= \mathds{1}_m, \label{Eq:07}\\
C\,C^\dagger+D\,\mathcal{D}^\dagger &= \mathds{1}_n.\label{Eq:08}
\end{align}
Equations~\eqref{Eq:07} and~\eqref{Eq:08} imply that
\begin{align}
[A\,A^\dagger, B\,B^\dagger] &= 0,\label{Eq:Commutation1}\\
[C\,C^\dagger, D\,\mathcal{D}^\dagger] &= 0,\label{Eq:CDCommute}
\end{align}
i.e., $A\,A^\dagger$ commutes with $B\,B^\dagger$ and $C\,C^\dagger$ commutes with $D\,\mathcal{D}^\dagger$.
Furthermore, $A\,A^\dagger$ and $B\,B^\dagger$ are normal matrices.
Hence, $A\,A^\dagger$ and $B\,B^\dagger$ are diagonalized by the same matrix; or $A$ and $B$ have the same (up to a phase) left-singular vectors, denoted by the unitary matrix $L_m$.
From Equation~\eqref{Eq:CDCommute}, $C$ and $D$ have the same left-singular vectors, denoted by $L_n'$.

From Equation~\eqref{Eq:UDUBlocks}, we have
\begin{align}
A^\dagger A + C^\dagger C &= \mathds{1}_m,\label{Eq:09}\\
B^\dagger B+\mathcal{D}^\dagger D&= \mathds{1}_n.\label{Eq:10}
\end{align}
Following the same line of reasoning as the one used for obtaining common left-singular vectors, we observe that matrices $A$ and $C$ have the same right-singular vectors, say $R_m$, and $B$ and $D$ have the same right-singular vectors $R'_n$.

% From the commutation relation~\eqref{Eq:Commutation1}, we can show that $A$ and $B$ have the same left singular vectors as follows.
% we can diagonalize
% \begin{align}
% A\,A^\dagger = L_m (\Lambda^{A})^{2} L^{\dagger}_m,\\
% B\,B^\dagger = L_m (\Lambda^{B})^{2} L^{\dagger}_m,
% \end{align}
% the commuting normal matrices $A\,A^\dagger$ and $B\,B^\dagger$ simultaneously, where $L_m$ is a unitary matrix, and $\{|\Lambda^{A}_{ii}|\}$ and $\{|\Lambda^{B}_{jj}|\}$ are the singular value of $A$ and $B$~\cite{Horn1985}.
% Each row of $L_m$ is a left singular vector of $A$ and of $B$.
% Thus $A$ and $B$ share the set of left singular vectors.
% Similarly, matrices $C$ and $D$ share the same left singular vectors, which are denoted by the unitary matrix $L'_n$.

% Equations~\eqref{Eq:09} and~\eqref{Eq:10} imply that $A$ and $C$ share right singular vectors, say $R_m^\dagger$, and that $B$ and $D$ share right singular vector, say $R_n^{\prime\dagger}$.
The left- and right-singular vectors of the matrices $\{A,\,B,\,C,\,D\}$ can be employed to diagonalize these matrices according to
\begin{align}
A &= L_m\Lambda^{A}R^{\dagger}_m,\label{Eq:A}\\
B &= L_m\Lambda^{B}R^{\prime\dagger}_n,\label{Eq:B}\\
C &= L'_n\Lambda^{C}R^{\dagger}_m,\label{Eq:C}\\
D &= L'_n\Lambda^{D}R^{\prime\dagger}_n,\label{Eq:D1}
\end{align}
for diagonal complex matrices $\{\Lambda^{A},\Lambda^{B},\Lambda^{C},\Lambda^{D}\}$.
The matrices consisting of the absolute values of the corresponding complex elements of $\{\Lambda^{A},\Lambda^{B},\Lambda^{C},\Lambda^{D}\}$ matrices are denoted by $|\Lambda^A|,\,|\Lambda^B|,\,|\Lambda^C|$ and $|\Lambda^D|$ and comprise the singular values of $A,\,B,\,C$ and $D$ matrices respectively.
Equations~\eqref{Eq:A} to \eqref{Eq:D1} can be combined into a single $(m+n)\times (m+n)$ matrix equation
\begin{align}
\left( \begin{array}{c|c}
 A&B\\
 \hline
 C&D
 \end{array}\right)
 &= \left(
\begin{array}{c|c}
 L_m & \\
\hline
& L'_n
\end{array}\right)
\left(\begin{array}{c|c}
 \Lambda^A & \Lambda^B \\
\hline
 \Lambda^C & \Lambda^D
\end{array}\right)
\left(
\begin{array}{c|c}
 R_m^\dagger & \\
\hline
& R^{\prime\dagger}_n
\end{array}\right)\nonumber\\
\implies
U&= \tilde{\mathds{L}}_{m+n}
{\tilde{\Lambda}}_{m+n}
\tilde{\mathds{R}}_{m+n}.
\label{Eq:CombineEqn}
\end{align}
Factorization~\eqref{Eq:CombineEqn} is similar to the CSD because $\tilde{\mathds{L}}_{m+n}$ and $\tilde{\mathds{R}}_{m+n}$ block-diagonal unitary matrices and $\tilde{\Lambda}_{m+n}$ comprises diagonal blocks.
In the remainder of this appendix, we show that $\tilde{\Lambda}_{m+n}$ can be brought into the form of a CS matrix~\eqref{Eq:CSMatrix}, thereby completing the construction of the CSD.

If the matrices $L_m$ ($L_n'$) and $R_m$ ($R_n'$) are calculated from the singular value decomposition of $A$ ($D$), then $\Lambda^A$ ($\Lambda^D$) is a real and non-negative diagonal matrix.
The matrices $L_m$, $L_n'$, $R_m$ and $R_n'$ also diagonalize the matrices $C$ and $D$ resulting in $\Lambda^B$ and $\Lambda^C$.
Unlike $\Lambda^A$ and $\Lambda^D$, which consist of real elements, these matrices $\Lambda^B$ and $\Lambda^C$ are complex matrices in general.
In other words, the diagonal matrices $\Lambda^B$ and $\Lambda^C$ are of the form
\begin{equation}
\begin{aligned}
 \Lambda^B &= P|\Lambda^B|\\
 \Lambda^C &= -|\Lambda^C|P^\dagger,
 \label{Eq:Lambda-BC}
\end{aligned}
\end{equation}
where $P$ is an $m\times m$ diagonal unitary matrix.
The phases $P_{jj}$ in Equation~\eqref{Eq:Lambda-BC} for $C$ are complex conjugates of the phases for $B$ because of the unitarity of $\Lambda$.

We can remove the matrix $P$ from $\Lambda^B$ and $\Lambda^C$ by redefining $L_m$ and $R_m$ as
\begin{align}
 \tilde L_m &= L_m P,\\
 \tilde R_m &= R_m P.
\end{align}
Thus Equation~\eqref{Eq:CombineEqn} can be rewritten as:
\begin{equation}
U = \left(
\begin{array}{c|c}
 L_m P& \\
\hline
& L'_n
\end{array}\right)
\left(\begin{array}{r|r}
 \Lambda^A & |\Lambda^B| \\
\hline
 -|\Lambda^C| & \Lambda^D
\end{array}\right)
\left(
\begin{array}{c|c}
P^{\dagger}R_m^\dagger & \\
\hline
& R^{\prime\dagger}_n
\end{array}\right)
\end{equation}
or
\begin{equation}
U = \mathds{L}_{m+n} \Lambda_{m+n} \mathds{R}_{m+n}.
\end{equation}
Note that the matrix $\Lambda_{m+n}$ comprises only real elements.
Furthermore, $\Lambda_{m+n}$ is unitary because it is a product $\Lambda_{m+n} = \mathds{L}_{m+n}^{\dagger} U \mathds{R}^{\dagger}_{m+n}$.
Hence, $\lambda_{m+n}$ is an orthogonal matrix.
%From the unitarity of $U$, we see that each column (and every row) in this matrix contains at the most two non-zero elements with the sum of the square of the two elements equal to unity.
%Therefore, the magnitude of these two elements can be represented by $\cos\theta$ and $\sin \theta$.

The orthogonality of the $\Lambda$ implies that any two rows and any two columns of the matrix are orthogonal.
Therefore, the $2\times 2$ block matrices
\begin{align}
 \Lambda_i &= \begin{pmatrix}
 \Lambda_{i,i} & \Lambda_{i,i+m}\\
 \Lambda_{i+m,i} & \Lambda_{i+m,i+m}
\end{pmatrix}
\end{align}
is also an orthogonal matrix.
Any $2\times 2$ orthogonal matrix is of the form
\begin{align}
 \Lambda_i &= \begin{pmatrix}
 \cos\theta_i & \sin\theta_i\\
 -\sin\theta_i & \cos\theta_i
\end{pmatrix}
\end{align}
for $1\le i \le m$.

Next we consider the case of $i>m$.
For the matrix $\Lambda^{B}$ all the columns with the index $i > m$ are zero.
Similarly, for the matrix $\Lambda^{C}$ all the rows with the index $i>m$ are zero.
From the unitarity of $\Lambda_{m+n}$, we see that each of the diagonal elements in the last $n-m$ columns and rows of the matrix $\Lambda^{D}$ is unity.
In summary, the matrix $\Lambda_{m+n}$ is of the form
\begin{align}
\Lambda_{m+n} &= \mathds{S}_{2m} \oplus \mathds{1}_{n-m}
\end{align}
for $\mathds{S}_{2m}$ a CS matrix in the form of Equation~\eqref{Eq:CSMatrix}.\hfill$\blacksquare$

This completes our procedure for factorizing a given unitary matrix using the CSD.
\textsc{matlab} code for our CSD procedure is available online~\cite{Dhand2015}.

%=================%
\chapter{Curve-fitting subroutine}
\label{Sec:CurveFitting}
%=================%
Here I detail the inputs and outputs of the curve-fitting algorithm employed in our accurate and precise characterization procedure (Chapter~\ref{Chap:Procedure}).
The accurate and precise characterization procedure (Chapter~\ref{Chap:Procedure}) employs curve fitting in Algorithm~\ref{Alg:Calibration} to estimate the mode-matching parameter~$\gamma$ and in Algorithms~\ref{Alg:PhaseCalc2Channel}--\ref{Alg:PhaseCalcNChannel} to estimate the interferometer-matrix arguments $\{\theta_{ij}\}$.
The curve-fitting algorithm uses the experimental data and determines those values of unknown parameters that yield the best between experimental and theoretical coincidence rates.

\begin{figure}[h]
\centering
\subfloat{\includegraphics[width=0.49\textwidth]{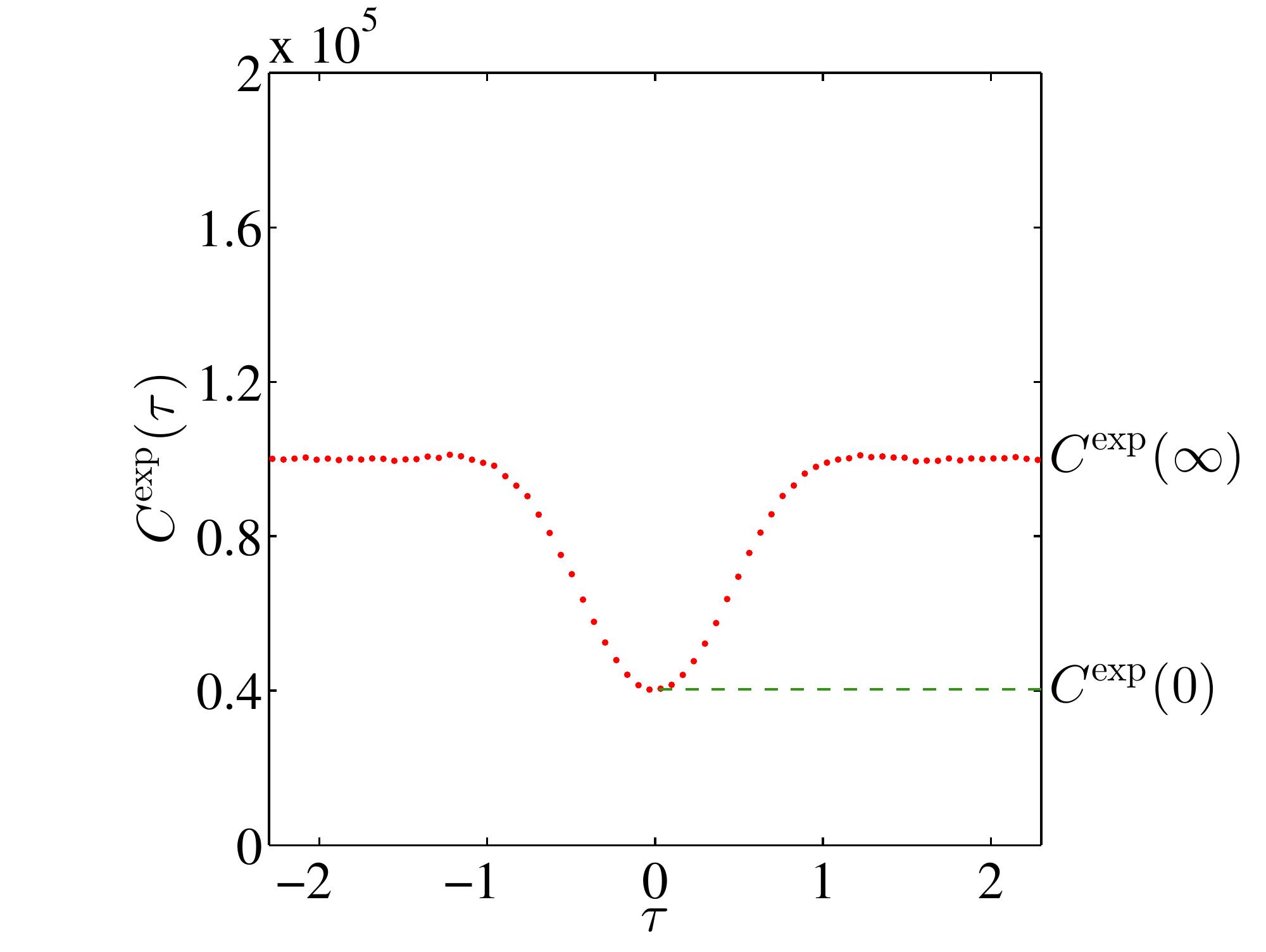}\label{Figure:CoincidenceShapeSU2}}
\subfloat{\includegraphics[width=0.49\textwidth]{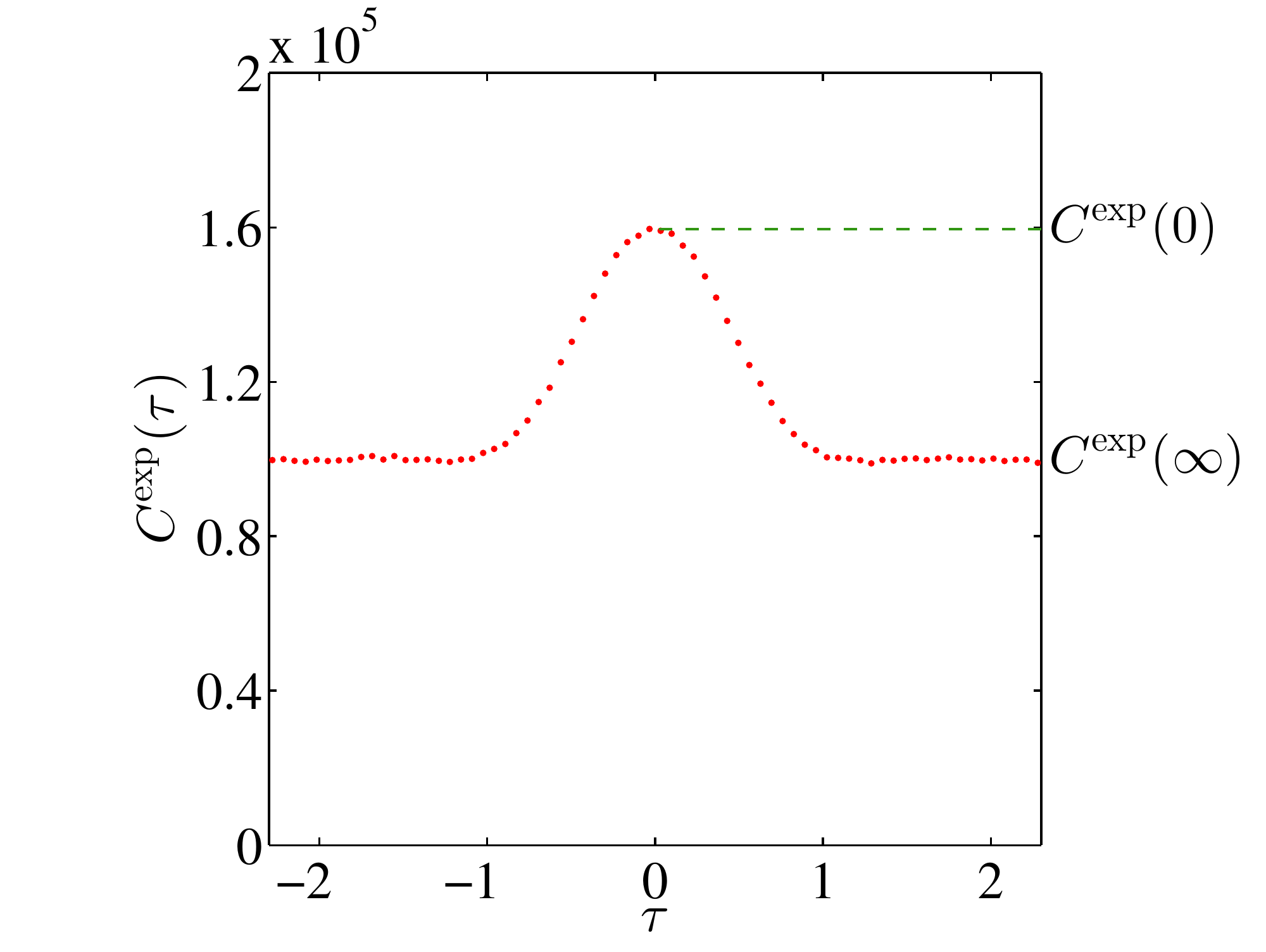}\label{Figure:CoincidenceShapeBetaZero}}\\
\subfloat{\includegraphics[width=0.49\textwidth]{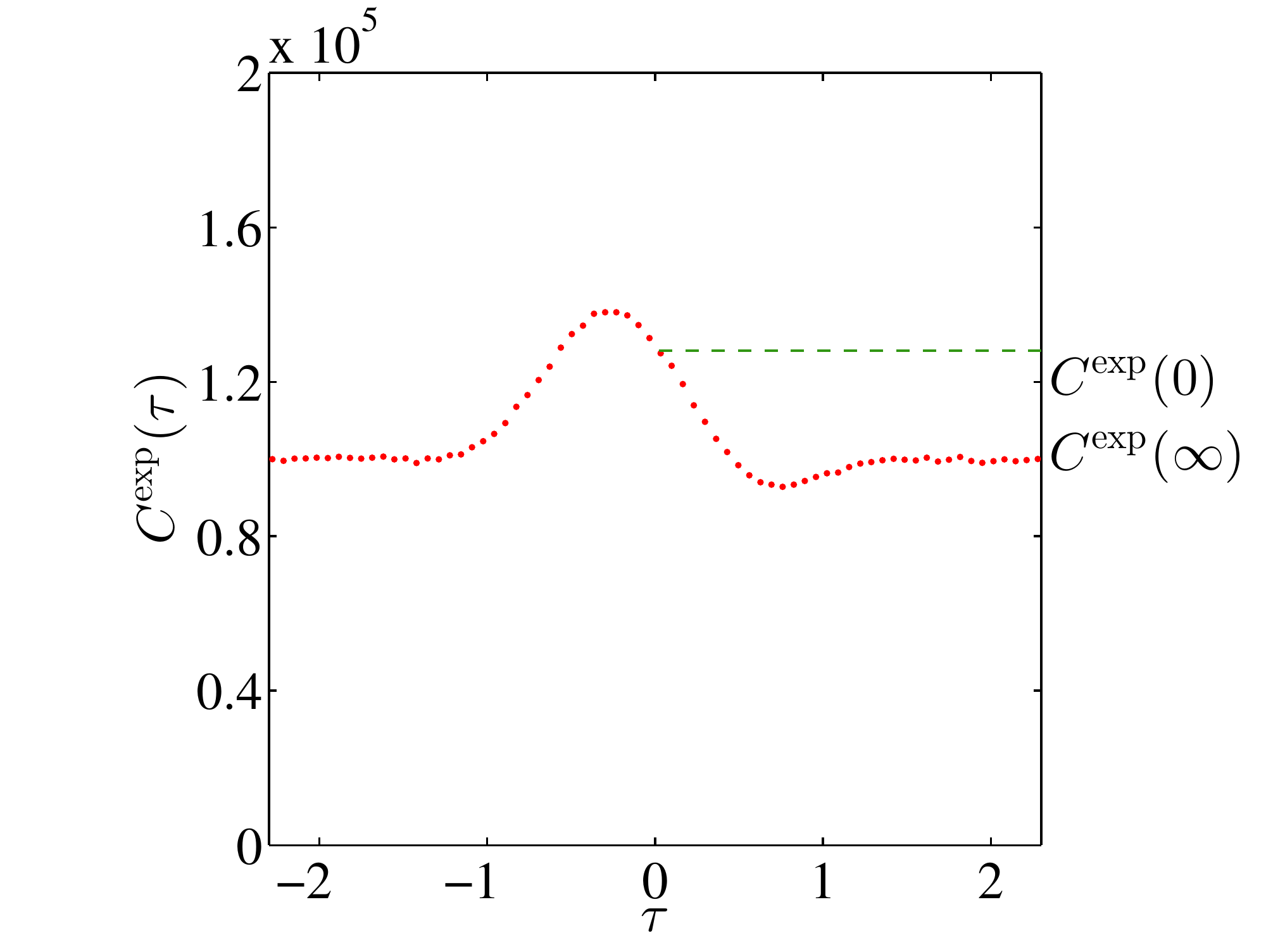}\label{Figure:CoincidenceShapeGeneral1}}
\subfloat{\includegraphics[width=0.49\textwidth]{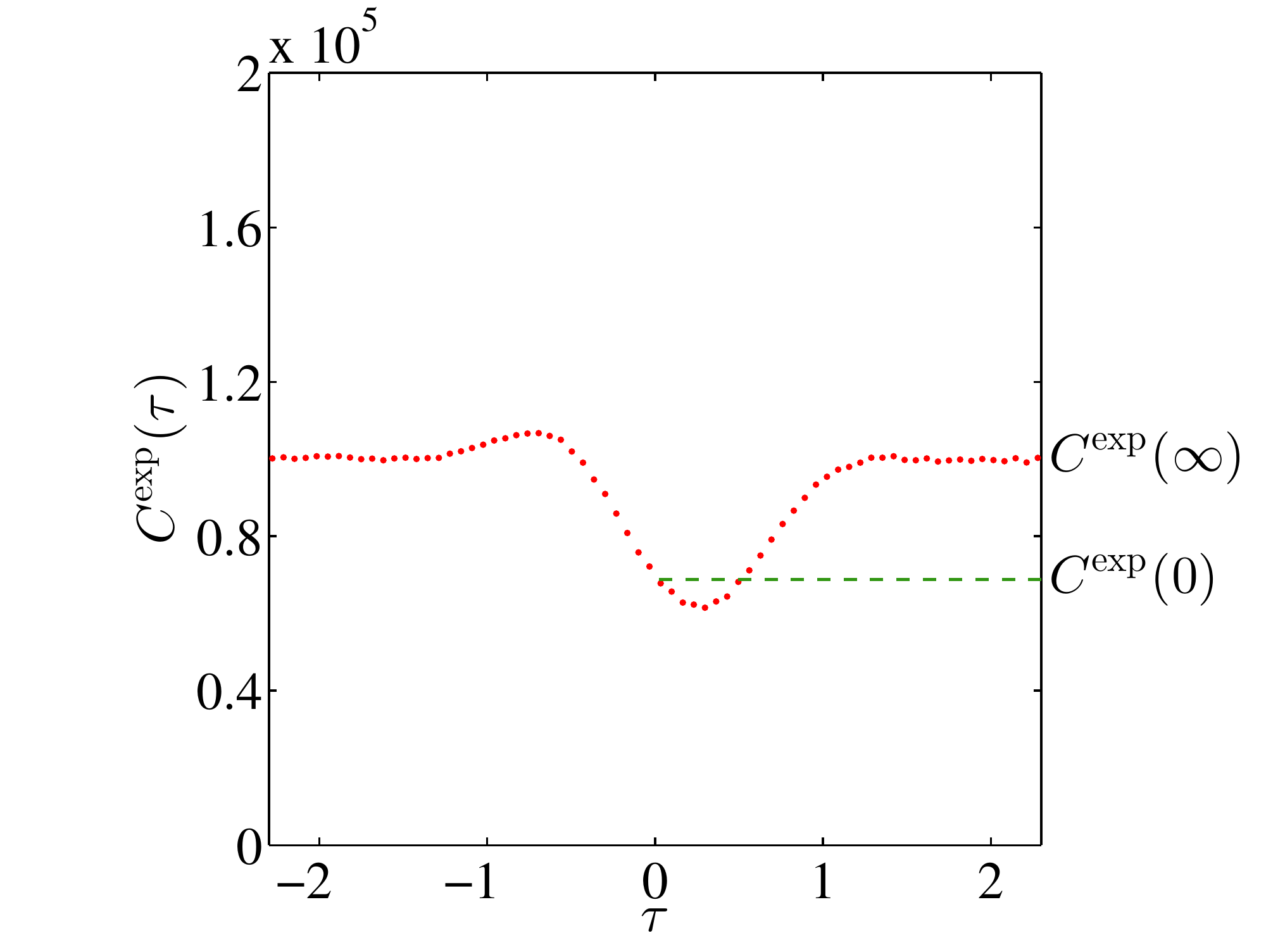}\label{Figure:CoincidenceShapeGeneral2}}
\caption{Simulated coincidence counts for output ports $i,i'$ and input ports $j,j'$ of interferometer with $\alpha_{ii}=\alpha_{i'j'}=\sqrt{3}/4$ and $\alpha_{ii'}=\alpha_{ij'}=1/4$ and for different values of $\beta_{ii'jj'}$. The value of $\beta_{ii'jj'}$ in each respective figure is (a) $\pi$, (b) $0$, (c) ${\pi}/{3}$ and (d) ${2\pi}/{3}$. The coincidence counts corresponding to $\tau=0$ and $\tau\to\infty$ are marked on each plot by $C^{\mathrm{exp}}(0)$ and $C^\mathrm{\text{exp}}(\infty)$ respectively.
}
\label{Figure:CoincidenceCurveShapes}
\end{figure}

The curve-fitting algorithm receives the following inputs:
(i)~the choice of parameters to be fitted;
(ii)~the coincidence counts $\{C_{ii'jj'}^{\text{exp}}(\tau)\}$;
(iii)~an objective function, which characterizes the least-square error between expected and experimental counts; and
(iv)~the initial guesses for each of the fitted parameters. The output of the curve-fitting subroutine is the set of parameter values that optimize the objective function.

%The main text already describes how to collect the experimental data, which is the first input. In the rest of this section I discuss the other inputs to the curve-fitting subroutine. The parameters for Algorithms~\ref{Alg:Calibration} and \ref{Alg:PhaseCalc2Channel} are different. I explain the parameters of each algorithm in turn. Next I discuss the objective function. Last I give a procedure to find a guess for each parameter.

The first input to the subroutine is the choice of the parameters to be fit.
The curve-fitting subroutine fits three parameters.
One of these three (namely the mode-matching parameter~$\gamma$ in Algorithm~\ref{Alg:Calibration} or the $|\theta_{ij}|$ or $\beta_{ii'jj'}$ value in Algorithm~\ref{Alg:PhaseCalcNChannel}) is related to the shape of the curve, whereas the other two are related to the ordinate scaling and the abscissa shift of the curve respectively.
The ordinate scaling factor comprises the unknown losses $\{\kappa_{i},\nu_{j}\}$, transmission factors $\{\lambda_{i},\mu_{j}\}$ and the incident photon-pair count.
The horizontal shift factor accounts for the unknown zero of the time delay between the incident photons.
The algorithm returns the values of the shape parameter, the abscissa shift and the ordinate scaling that best fit the given coincidence curves.

The second input is the experimental data that are fit to the theoretical coincidence curves, which are described in the third input: the objective function.
The objective function quantifies the goodness of fit between the experimental data and the parameterized curve.
We use a weighted sum
\begin{equation}
\sum_{\tau \in T} w(\tau)|C^\mathrm{exp}(\tau) - C'(\tau)|^2
\end{equation}of squares between the experimental data and the fitted curve as the objective function~\cite{Strutz2010} for weighs $w(\tau)$.
We assume that the pdfs of the coincidence counts are proportional to $\sqrt{C^{\mathrm{exp}}(\tau)}$ and we assign the weights
\begin{equation}
 w(\tau) =
 \begin{cases}
 {1}/{C^{\mathrm{exp}}(\tau)} & \text{if } C^{\mathrm{exp}}(\tau) \ne 0\\
 \quad 1 & \text{if } C^{\mathrm{exp}}(\tau) = 0
 \end{cases}
\end{equation}
to the squared sum of residues.
In case the pdf's of the residuals for different values of $\tau$ is not known, standard methods for non-parametric estimation of residual distribution can be employed to estimate the pdf's~\cite{Akritas2001,Chen2003}.
Thus, the curve fitting algorithm returns those values of the fitting parameters that minimize weighted sum of squared residues between experimental and fitted data.

The curve-fitting procedure optimizes the fitness function over the domain of the fitting parameter values.
Like other optimization procedures, the convergence of curve fitting is sensitive to the initial guesses of the fitting parameters.
The following heuristics give good guesses for the three fitting parameters.
We guess the ordinate scaling as the ratio
\begin{equation}
\frac{C^{\mathrm{exp}}(\infty)}{C_{ii'jj'}(\infty)}
\end{equation}
 of the experimental coincidence counts
\begin{equation}
C^{\mathrm{exp}}(\infty) \defeq \frac{C^{\mathrm{exp}}(\tau_1) + C^{\mathrm{exp}}(\tau_\ell)}{2},
\end{equation}
to the coincidence probability $C_{ii'jj'}(\infty)$ for large (compared to the temporal length of the photon) time-delay values.
The~$\gamma$ value is guessed for Algorithm~\ref{Alg:Calibration} as the ratio of the visibility of the experimental curve to the expected visibility in the curve.
The initial guesses for $\vartheta \equiv |\theta_{ij}|$ and $\vartheta \equiv \beta_{ii'jj'}$ are based on the known estimate of~$\gamma$ and the visibility
\begin{equation}
 V = \frac{2\gamma\cos^2\vartheta\sin^2\vartheta}{\cos^4\vartheta + \sin^4\vartheta}.
\end{equation}
of the curve.
As there are four kinds of curves (see Figure~\ref{Figure:CoincidenceCurveShapes}) possible for different values of the shape parameter ($\gamma, |\theta_{ij}|, \beta_{ii'j'}$), another approach is to perform curve fitting four times, each time with a value from the set $\pi/4,3\pi/4,5\pi/4,7\pi/4$ of initial guesses and choose the fitted parameters that optimize the objective function.
Finally, the initial value of the abscissa shift parameter is guessed such that the global maxima or minima (whichever is further from the mean of the coincidence-count values over $\tau$) of the coincidence curve is at zero time delay.

In summary, the curve fitting procedure uses the measured coincidence counts, the objective function and the initial guesses to compute the parameters that yield he best fit between theoretical and measured coincidence counts.
This completes our description of the curve-fitting procedure and of heuristics that can be employed to computed the initial guesses for the fitted parameters.

%----------------------------------------%
\chapter{Choice of subalgebra chain}
\label{Appendix:SubAlgebraChoice}
%----------------------------------------%
This appendix elaborates on the different choices of sub algebra chain that can be employed in the labelling of the $\mathrm{SU}(n)$ states and $\mathcal{D}$-functions.
Our algorithms construct canonical basis states that reduce the subalgebra chain~(\ref{Eq:SubalgebraChain}).
Other $\mathfrak{su}(n)\supset\mathfrak{su}(n-1)\supset\dots\supset\mathfrak{su}(2)$ subalgebra chains are possible and our algorithm can be generalized to construct canonical basis states that reduce other chains, as I discuss in this appendix.

Each $\mathfrak{su}(m)$ subalgebra of $\mathfrak{su}(n),\,m<n$ is specified by the sets of raising, lowering and Cartan operators that generate it.
For a given sequence
\begin{equation}
I^{(m)} = \left\{i^{(m)}_1,i^{(m)}_2,\dots, i^{(m)}_{m}\right\}
\end{equation}
of $m$ increasing integers, we can define the corresponding set of raising, lowering and Cartan operators
\begin{align}
&\left\{C_{i_1,i_2},C_{i_1,i_3},\dots ,C_{i_1,i_m},C_{i_2,i_3},\dots ,C_{i_2,i_m},\dots,C_{i_{m-1},i_m}\right\}&\text{(Raising)}\\
&\left\{C_{i_2,i_1},C_{i_3,i_1},\dots ,C_{i_m,i_1},C_{i_3,i_2},\dots ,C_{i_m,i_2},\dots,C_{i_{m},i_{m-1}}\right\}&\text{(Lowering)}\\
&\left\{C_{i_2,i_2}-C_{i_1,i_1},C_{i_3,i_3}-C_{i_2,i_2},\dots,C_{i_m,i_m}-C_{i_{m-1},i_{m-1}}\right\}&\text{(Cartan)}
\end{align}
that generate the algebra.
Thus, each $\mathfrak{su}(n)\supset\mathfrak{su}(n-1)\supset\dots\supset\mathfrak{su}(2)$ subalgebra chain is uniquely specified by the ordered sequences $I^{(m)}\colon m<n$ of integers, where
\begin{align}
\begin{split}
I^{(n-1)} &= \{i^{(n-1)}_1,i^{(n-1)}_2,\dots, i^{(n-1)}_{n-1}\} \subset \{1,2,\dots,n\}\\
I^{(n-2)} &= \{i^{(n-2)}_1,i^{(n-2)}_2,\dots, i^{(n-2)}_{n-1}\} \subset I^{(n-1)}\\
&\dots\\
I^{(m-1)} &= \{i^{(m-1)}_1,i^{(m-1)}_2,\dots, i^{(m-1)}_{m}\} \subset I^{(m)}\\
&\dots\\
I^{(2)} &= \{i^{(2)}_1,i^{(2)}_2\} \subset I^{(3)}.
\label{Eq:Chain}
\end{split}
\end{align}

Consider the example of $\mathfrak{su}(2)$ subalgebras of $\mathfrak{su}(3)$.
The three subsets
\begin{align}
\{C_{1,2},C_{2,1},C_{1,1}-C_{2,2}\}\label{Eq:A1}\\
\{C_{1,3},C_{3,1},C_{1,1}-C_{3,3}\}\\
\{C_{2,3},C_{3,2},C_{2,2}-C_{3,3}\}\label{Eq:A3}
\end{align}
of the generators $\{C_{i,j}\colon i,j \in \{1,2,3\}\}$ of $\mathfrak{su}(3)$ generate three distinct $\mathfrak{su}(2)$ algebras.
Each of the three subsets~(\ref{Eq:A1})-(\ref{Eq:A3}) can be labelled with a two-element subset of the $\{1,2,3\}$ and can be employed to define canonical basis states of $\mathrm{SU}(n)$.
For instance, consider $(\lambda,\kappa) = (1,1)$ irrep of $\mathrm{SU}(3)$.
The weight $(\lambda_2,\lambda_1) = (0,0)$ is associated with a two-dimensional space.
We can identify two basis states of this space by specifying the following:
\begin{enumerate}
\item
choice of $\mathfrak{su}(2)$ algebra. For instance $I^{(2)} = \{1,2\}$, which corresponds to the algebra generated by $\{C_{1,2},C_{2,1},C_{1,1}-C_{2,2}\}$,
\item
$\mathfrak{su}_{1,2,3}(3)$ irreps label: $K^{(3)} = (1,1)$, $\mathfrak{su}_{(1,2)}(2)$ irreps label: $K^{(2)} = (0)$ and $(1)$ for the two basis states.
\item
$\mathfrak{su}_{1,2,3}(3)$ weights: $(0,0)$, $\mathfrak{su}_{(1,2)}(2)$ weights: $(0)$.
\end{enumerate}
Another basis set of the $\Lambda =(0,0)$ space of $\mathfrak{su}(3)$ irrep $K = (1,1)$ is specified by choosing a different $\mathfrak{su}(2)$ subalgebra as follows.
\begin{enumerate}
\item
choice of $\mathfrak{su}(2)$ algebra. For instance $I^{(2)} = \{1,3\}$, which corresponds to the algebra generated by $\{C_{1,3},C_{3,1},C_{1,1}-C_{3,3}\}$,
\item
$\mathfrak{su}_{1,2,3}(3)$ irreps label: $K^{(3)} = (1,1)$, $\mathfrak{su}_{(1,3)}(2)$ irreps label: $K^{(2)} = (0)$ and $(1)$ for the two basis states.
\item
$\mathfrak{su}_{1,2,3}(3)$ weights: $(0,0)$, $\mathfrak{su}_{(1,3)}(2)$ weights: $(0)$.
\end{enumerate}
Thus, different choices of subalgebra chain give us different basis states.

In the main text, we have chosen the subalgebra chain~(\ref{Eq:SubalgebraChain}).
Our algorithms can be modified to account for other choices of subalgbra chain by choosing a different set of lowering operators in the basis-set subroutine.
Thus our algorithms can be used to construct states and $\mathcal{D}$-functions in any of the bases that reduce $\mathfrak{su}(m)$ subalgebra chains.

%=================%
\chapter{Connection to Gelfand-Tsetlin basis}
\label{Appendix:Connection}
%=================%
In this appendix, I detail the mapping between our $\mathrm{SU}(n)$ basis states and the canonical Gelfand-Tsetlin (GT) basis.
The GT basis identifies each $\mathrm{SU}(n)$ irrep with a sequence of $n$ numbers
\begin{align}
S_n &= (m_{1,n},\dots,m_{n,n})\\
m_{k,n} &\ge m_{k+1,n}~\forall 1\le k\le n-1,
\end{align}
where the first label in the subscript is the sequence index and the second label identifies the algebra.
The carrier space of every $\mathfrak{su}(m)$ subalgebra is composed of disjoint $\mathfrak{su}(m-1)$ carrier spaces
\begin{equation}
\left\{(m_{1,n-1},\dots,m_{n-1,n-1})\right\}
\end{equation}
that obey the betweenness condition
\begin{equation}
m_{k,n}\ge m_{k,n-1}\ge m_{k+1,n}.
\end{equation}
Thus, each $\mathfrak{su}(n)$ basis state $\ket{M}$ can be labelled by the GT pattern
\begin{equation}
\ket{M} \equiv \begin{pmatrix}
\multicolumn{2}{c}{m_{1,N}} & \multicolumn{2}{c}{m_{2,N}} &
\multicolumn{2}{c}{\ldots} & \multicolumn{2}{c}{m_{N,N}} \\
& \multicolumn{2}{c}{m_{1,N-1}} & \multicolumn{2}{c}{\ldots}
& \multicolumn{2}{c}{m_{N-1,N-1}} & \\
&& \ddots &&& \reflectbox{\(\ddots\)} && \\
&& \multicolumn{2}{r}{m_{1,2}} & \multicolumn{2}{l}{m_{2,2}} && \\
&&& \multicolumn{2}{c}{m_{1,1}} &&&
\end{pmatrix},
\end{equation}
where
\begin{equation}
m_{k,\ell}\ge m_{k,n-1}\ge m_{k+1,\ell}\,, \,1\le k < \ell \le n.
\end{equation}

The canonical basis states are eigenstates of the Cartan operators $\{H_i\}$~(\ref{Eq:Cartan}) as detailed in the following lemma.
\begin{lem}[Connection to Gelfand-Tsetlin basis~\cite{Alex2011}] The canonical basis states are connected to the GT basis according to
\begin{equation}
\label{eq:gtpattern}
\Big|{\tensor*{\psi}{*^{K^{(z)}}_{\Lambda^{(n)}}^{,\dots,}_{,\dots,}^{K^{(3)},}_{\Lambda^{(3)},}^{K^{(2)}}_{\Lambda^{(2)}}}}\Big\rangle =
 \begin{pmatrix}
\multicolumn{2}{c}{m_{1,N}} & \multicolumn{2}{c}{m_{2,N}} &
\multicolumn{2}{c}{\ldots} & \multicolumn{2}{c}{m_{N,N}} \\
& \multicolumn{2}{c}{m_{1,N-1}} & \multicolumn{2}{c}{\ldots}
& \multicolumn{2}{c}{m_{N-1,N-1}} & \\
&& \ddots &&& \reflectbox{\(\ddots\)} && \\
&& \multicolumn{2}{r}{m_{1,2}} & \multicolumn{2}{l}{m_{2,2}} && \\
&&& \multicolumn{2}{c}{m_{1,1}} &&&
\end{pmatrix}
\end{equation}
Every state $\ket{M}$ in the GT-labeling scheme is a simultaneous eigenstate of all $\mathfrak{su}(n)$ Cartan operators,
\begin{equation}
H_\ell \ket{M} = \lambda^M_\ell \ket{M}, \quad (1 \le \ell \le N-1),
\end{equation}
with eigenvalues
\begin{equation}
\label{eq:jzelement}
\lambda_\ell = \sum_{k=1}^\ell m_{k,\ell} - \frac{1}{2} \left(\sum_{k=1}^{\ell+1} m_{k,\ell+1}+ \sum_{k=1}^{\ell-1} m_{k,\ell-1}\right), \, 1 \leq \ell \leq N-1.
\end{equation}
\end{lem}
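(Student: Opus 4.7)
The plan is to verify the claimed correspondence by (i) exhibiting a bijection between the pair of label sets $\{K^{(m)},\Lambda^{(m)}\}_{2\le m\le n}$ and the GT entries $\{m_{k,\ell}\}$, and then (ii) computing the Cartan eigenvalues directly in the boson realization. Since both labelling schemes reduce the canonical chain $\mathfrak{su}(n)\supset\mathfrak{su}(n-1)\supset\cdots\supset\mathfrak{su}(2)$ and uniquely specify a basis once the phase convention of Equation~(\ref{Eq:PhaseConvention}) is fixed, matching the labels and the eigenvalues suffices to identify the two bases.

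For step (i), I would read off the $\ell$-th row $(m_{1,\ell},\ldots,m_{\ell,\ell})$ of the GT pattern as the highest weight of the relevant $\mathfrak{u}(\ell)$ irrep and translate it to an $\mathfrak{su}(\ell)$ irrep label via $\kappa^{(\ell)}_k=m_{k,\ell}-m_{k+1,\ell}$, so that $K^{(\ell)}$ is in bijection with $(m_{1,\ell},\ldots,m_{\ell,\ell})$ modulo the redundant $\mathfrak{u}(1)$ trace. The $\mathfrak{su}(\ell)$ weight $\Lambda^{(\ell)}$ is determined by the row and the row above: the eigenvalues of $\{H_i^{(\ell)}\}_{i<\ell}$ are recovered from differences of partial row sums of rows $\ell$ and $\ell-1$, which is precisely the betweenness data. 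The GT betweenness inequalities $m_{k,\ell}\ge m_{k,\ell-1}\ge m_{k+1,\ell}$ are then seen to coincide with the $\mathfrak{u}(\ell)\downarrow\mathfrak{u}(\ell-1)$ branching rules, so the pair of labellings index the same set of canonical basis states.

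For step (ii), I would exploit the boson realization. Let $N_\ell\defeq\sum_{k=1}^{\ell}\sum_j a^\dagger_{k,j}a_{k,j}$ be the total occupation of the first $\ell$ sites. A standard property of the GT basis (following from the fact that the row $(m_{1,\ell},\ldots,m_{\ell,\ell})$ is the $\mathfrak{u}(\ell)$ highest weight and that $N_\ell$ is the Casimir-free trace of $\mathfrak{u}(\ell)$) is that $N_\ell\ket{M}=\bigl(\sum_{k=1}^{\ell}m_{k,\ell}\bigr)\ket{M}$. Writing
\begin{equation}
H_\ell=C_{\ell,\ell}-C_{\ell+1,\ell+1}=(N_\ell-N_{\ell-1})-(N_{\ell+1}-N_\ell)=2N_\ell-N_{\ell-1}-N_{\ell+1},
\end{equation}
diagonality of $N_\ell$ on $\ket{M}$ then immediately yields the displayed eigenvalue formula (up to the normalization convention used by the authors in relating $H_\ell$ to $\lambda_\ell$).

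The only real obstacle is verifying that the states produced by Algorithm~\ref{Alg:Main} agree with the GT states as vectors, not merely as labels. Two bases reducing the same subalgebra chain with the same weight at every level are equal up to an overall phase on each basis vector; by Theorem~\ref{Theorem:2Correct} the algorithm's output has well-defined labels $(K^{(m)},\Lambda^{(m)})$ at every level, and by the phase convention imposed in Line~\ref{Alglin:PhaseConvention} of Algorithm~\ref{Alg:Main} (which reproduces Equation~(\ref{Eq:PhaseConvention})) these residual phases are fixed to the same positive-matrix-element convention used by Gelfand and Tsetlin~\cite{Gelfand1988}. The uniqueness of the canonical basis under these constraints closes the proof.
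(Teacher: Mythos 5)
The paper does not actually prove this lemma: it is stated with a citation to Alex \emph{et al.}\ and followed only by an informal remark translating row sums into boson occupation numbers. Your proposal therefore supplies an argument where the paper supplies none, and the argument is sound. The label bijection in step (i) is the standard one ($\kappa^{(\ell)}_k=m_{k,\ell}-m_{k+1,\ell}$, betweenness $\leftrightarrow$ multiplicity-free $\mathfrak{u}(\ell)\downarrow\mathfrak{u}(\ell-1)$ branching), and step (ii) correctly uses $C_{\ell,\ell}=N_\ell-N_{\ell-1}$ together with $N_\ell\ket{M}=\bigl(\sum_{k=1}^{\ell}m_{k,\ell}\bigr)\ket{M}$, which does hold in the multi-species boson realization since the row sum $r_\ell$ counts the boxes of the $\mathfrak{u}(\ell)$ Young diagram, i.e.\ the bosons occupying the first $\ell$ sites. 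Your closing appeal to multiplicity-freeness of the canonical chain plus the phase convention of Equation~(\ref{Eq:PhaseConvention}) is exactly the right way to promote agreement of labels to agreement of vectors. One point worth making explicit rather than waving at as ``normalization convention'': with the paper's own definition $H_\ell=C_{\ell,\ell}-C_{\ell+1,\ell+1}$ (Equation~(\ref{Eq:Cartan})), your computation gives eigenvalue $2r_\ell-r_{\ell-1}-r_{\ell+1}$, which is \emph{twice} the displayed $\lambda_\ell$; the SU$(2)$ check ($m_{1,1}=J+M$ gives $\lambda_1=M$ from the displayed formula, whereas $H_1\ket{J,M}=2M\ket{J,M}$ in the paper) confirms that the factor-of-two discrepancy sits in the lemma as quoted from the reference, not in your derivation. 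Flagging that mismatch concretely would strengthen the write-up.
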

\noindent
Thus, the canonical basis states of Def.~\ref{Definition:CanonicalBasisStates} is uniquely mapped to the GT basis.

Furthermore, the weights $\lambda_{\ell}$ are also mapped via the boson realizations to differences in number of bosons at sites $\ell$ and $\ell+1$.
Hence, the difference
\begin{equation}
\nu_{\ell+1}- \nu_{\ell} = \sum_{k=1}^l m_{k,\ell} - \frac{1}{2} \left(\sum_{k=1}^{\ell+1} m_{k,\ell+1}+ \sum_{k=1}^{\ell-1} m_{k,\ell-1}\right), \,1 \leq \ell \leq N-1
\end{equation}
in the number of bosons at sites $\ell+1$ and $\ell$ of the boson realization of a basis state is also connected to its GT pattern.
Once we recall the total number of bosons in the system is $\nu_{1}+\nu_{2}+\dots+\nu_{n}=N_k$, one can then invert the differences and recover $\nu_p$ in term of the $m_{k,\ell-1}.$
Thus the canonical GT basis states are connected to our $\mathrm{SU}(n)$ basis states.

\bibliographystyle{good_unsrt}
\bibliography{Design_Characterization_Simulation_Arxiv}

\end{document}